\documentclass[11pt]{article}

\usepackage[margin=1in]{geometry}
\usepackage{amsmath,amssymb,amsthm,mathtools}
\usepackage{booktabs}
\usepackage{enumitem}
\usepackage{hyperref}
\usepackage[nameinlink,capitalise,noabbrev]{cleveref}

\usepackage{bbm}

\usepackage{complexity}

\usepackage{multirow}
\usepackage{tabularx}
\usepackage{makecell}
\usepackage{threeparttable}

\usepackage{tikz}
\usetikzlibrary{calc}
\usetikzlibrary{shapes.geometric}
\usetikzlibrary{arrows.meta}
\usetikzlibrary{decorations.pathreplacing}

\usepackage[table]{xcolor}
\definecolor{niceMagenta}{HTML}{C2185B}

\hypersetup{
  colorlinks=true,
  linkcolor=niceMagenta, 
  citecolor=niceMagenta, 
  urlcolor=niceMagenta
}

\theoremstyle{plain}
\newtheorem{theorem}{Theorem}[section]
\newtheorem{lemma}{Lemma}[section]
\newtheorem{claim}{Claim}[section]
\newtheorem{proposition}{Proposition}[section]
\newtheorem{corollary}{Corollary}[section]
\newtheorem{informaltheorem}{Main Result}

\theoremstyle{definition}
\newtheorem{definition}{Definition}[section]
\newtheorem{assumption}{Assumption}[section]
\newtheorem{example}{Example}[section]

\theoremstyle{remark}
\newtheorem{remark}{Remark}[section]

\usepackage{thm-restate}

\usepackage{MnSymbol}
\usepackage[mathlines]{lineno}

\newcommand{\OPT}{\operatorname{OPT}}
\newcommand{\PoA}{\operatorname{PoA}}
\newcommand{\PoAPNE}{\operatorname{PoA}_{\mathsf{PNE}}}
\newcommand{\PoACCE}{\operatorname{PoA}_{\mathsf{CCE}}}
\newcommand{\argmax}{\operatorname*{arg\,max}}
\renewcommand{\E}{\mathbb{E}}
\renewcommand{\R}{\mathbb{R}}

\newcommand{\cA}{\mathcal{A}}
\newcommand{\cG}{\mathcal{G}}

\renewcommand{\vec}[1]{\boldsymbol{#1}}

\newcommand{\optin}{\mathsf{in}}
\newcommand{\optout}{\mathsf{out}}

\newcommand{\defeq}{\coloneqq}

\newcommand{\solcon}{\mathsf{Eq}}

\newcommand{\pMC}{p^{\textrm{mc}}}
\newcommand{\pSH}{p^{\textrm{Sh}}}
\newcommand{\gameMC}{\Gamma^{\textrm{mc}}}
\newcommand{\costcon}{\kappa}
\newcommand{\cutvalue}{\operatorname{Cut}}

\newcommand{\N}{\mathbb{N}}

\usepackage{natbib}
\let\oldcitet\citet
\let\oldcitep\citep
\renewcommand{\citet}[1]{\oldcitet*{#1}}
\renewcommand{\citep}[1]{\oldcitep*{#1}}

\usepackage{xcolor}
\definecolor{niceRed}{RGB}{190,38,38}
\definecolor{blueGrotto}{HTML}{059DC0}
\definecolor{royalBlue}{HTML}{057DCD}
\definecolor{navyBlue}{HTML}{0B579C}
\definecolor{limeGreen}{HTML}{81B622}
\definecolor{nicePurple}{HTML}{9c27b0}
\definecolor{lightRoyalBlue}{HTML}{def2ff} 
\definecolor{gold}{HTML}{ffa300}

\usepackage{color-edits} 
\addauthor[Grigoris]{gv}{gold}
\addauthor[Ioannis]{ia}{limeGreen} 
\addauthor[Chris]{cl}{purple} 
\addauthor[Weiqiang]{wz}{blue} 
\addauthor[Kshipra]{kl}{nicePurple}

\title{Compensation Design\thanks{Authors are ordered alphabetically.}}

\author{
Ioannis Anagnostides%
\thanks{Corresponding authors:
\texttt{ianagnos@cmu.edu} and \texttt{weiqiang.zheng@yale.edu}. Work performed while at Google Research.}\\
CMU
\and
Kshipra Bhawalkar\\
Google Research
\and
Christopher Liaw\\
Google Research
\and
Aranyak Mehta\\
Google Research
\and
Renato Paes Leme\\
Google Research
\and
Yifeng Teng\\
Google Research
\and
Grigoris Velegkas\\
Google Research
\and
Weiqiang Zheng\footnotemark[2]\\
Yale University
}

\date{}

\begin{document}
\pagenumbering{roman}
\pagestyle{empty}
\maketitle

\begin{abstract}
Motivated by current developments in digital ecosystems, we introduce \emph{compensation design}, the problem of designing payment rules that incentivize high-quality contributions in decentralized environments.
Here, a budget-constrained principal (\emph{e.g.}, a platform) with a monotone submodular value function aims to design a payment rule, while agents (\emph{e.g.}, content creators) decide whether to opt in or out depending on their \emph{private cost}. We show that a simple \emph{cost-oblivious} and anonymous marginal-contribution payment rule guarantees that pure Nash equilibria always exist and attain a price of anarchy (PoA)---relative to the omniscient budget-feasible optimum---of at most $(2-\lambda)/(1-\lambda)=2+o_{\lambda}(1)$ in the large-market regime ($\lambda \to 0$) where each individual cost is at most a $\lambda$ fraction of the budget. We further show that the limiting factor $2$ for the PoA is unavoidable among deterministic cost-oblivious rules, even for additive values. In stark contrast, and surprisingly, we identify a counterexample showing that a payment rule based on the \emph{Shapley value}---which is ubiquitous in related profit sharing problems in game theory---may admit \emph{no} pure Nash equilibria.

We then extend our scope to coarse correlated equilibria---outcomes attained by no-regret learning algorithms. This is further motivated by our intractability result: although a pure Nash equilibrium always exists, computing one is \PLS-complete. In this context, we establish that coarse correlated equilibria also attain a PoA bound of at most $2+o_\lambda(1)$, and this guarantee in fact extends even under the payment rule induced by the Shapley value.

Moreover, we move beyond monotone submodular value functions and binary actions. First, for (monotone) XOS valuations, we show that no \emph{oracle-efficient} payment rule can attain a PoA bound of $O(n^{1/2 - \epsilon})$, for any constant $\epsilon > 0$. Second, for submodular but non-monotone valuations, we show that a broad class of natural payment rules fails to guarantee a bounded PoA. Finally, we extend compensation design to the setting where each agent has a \emph{combinatorial action set}. We provide randomized payment rules with logarithmic PoA guarantees for \emph{subadditive values}, and matching lower bounds that apply even in the single-agent additive-value setting.
\end{abstract}
\thispagestyle{empty}

\clearpage

\tableofcontents
\thispagestyle{empty}

\clearpage

\pagenumbering{arabic}
\pagestyle{plain}

\section{Introduction}

The creator economy constitutes one of the workhorses of the burgeoning digital ecosystem, with its value projected to exceed half a trillion dollars within the next few years~\citep{ResearchMarkets26:Creator}. Platforms such as YouTube and Spotify derive much of their value from content created by independent artists, and therefore invest considerable resources to encourage the production of high-quality and diverse content, thereby stimulating user engagement and attracting new users. The same design problem arises in many other digital domains, where platforms may seek to incentivize users or firms to share valuable datasets, contribute computational resources, or participate in federated learning protocols (\Cref{sec:furtherapps}).

Taken together, these applications bring to the fore a central incentive-design problem: how should a budget be distributed so as to encourage valuable participation by external contributors? In each of these highly decentralized settings, the system designer derives value from the collective set of contributions, but \emph{participation cannot be mandated}~\citep{Marden13:distributed,Marden14:generalized}. The challenge is to design payments that induce voluntary participation by a subset of contributors that optimizes system utility, subject to the budget constraint. Moreover, contributors invest considerable time, capital, and other resources when they decide to participate, thereby incurring a \emph{private cost}. Since the platform cannot force participation, each contributor independently decides whether to opt in or out. In particular, a contributor will be inclined to participate only if the compensation offered by the platform outweighs the private cost of opting in.

At the same time, mechanisms that rely on complex elicitation of private information or computationally intractable primitives---such as VCG-type schemes---are often impractical at scale, while highly discriminatory payment rules may lead to perceptions of unfairness and discourage participation. There is, therefore, an imperative to design payment rules that are \emph{anonymous}---equal contributors are treated equally---and \emph{cost oblivious}, in the sense that payments are determined solely by observable measures of utility rather than reported costs.

A natural assumption in this setting is that system utility exhibits \emph{diminishing marginal returns}: the addition of a new contributor can bring considerable value to a nascent platform, but provides less marginal benefit to a platform already saturated with similar content or data. As a result, the platform's objective is naturally modeled as a monotone submodular function. If the platform had full knowledge of contributors' private costs, the problem would reduce to a standard algorithmic problem---namely, submodular maximization subject to a budget constraint. In practice, however, these costs are private and participation decisions are decentralized. The platform must therefore design payment rules that steer the resulting strategic system toward near-optimal equilibria.

\subsection{Compensation design: our basic model}

We introduce \emph{compensation design}: the
problem of designing computable payment rules that induce high-value decentralized participation under a fixed budget and private participation costs. We begin by describing the key aspects of our basic model, and then highlight the differences between our framework and other related models in game theory (\Cref{sec:comparison}).

The central object in the paper is the following game. There is a set $N$ of agents, a budget $B$, and a normalized value function $v:2^N\to\R_{\geq0}$ for the principal. Each agent $i$ has a private cost $c_i$ and chooses whether to opt in or opt out. Denote $S \subseteq N$ the set of agents that opt in. A payment rule assigns nonnegative payments $p_i(S)$ to the agents
who opt in, subject to the budget constraint $\sum_{i\in S}p_i(S)\le B$ for every realized set $S$. Crucially, this payment depends only on the binary signals the principal receives from the agents, as well as the public value function. The payment rule induces a game among agents in which the utility of each agent $i$ is $p_i(S)-c_i$ if $i\in S$, and zero if $i\notin S$. The goal is to design a budget-feasible payment rule that is \emph{cost oblivious} (\textit{i.e.}, it cannot depend on the private
cost vector\footnote{Otherwise, the problem reduces to pure algorithmic optimization, namely submodular maximization subject to a budget constraint, as we explain in~\Cref{appendix:targeted-implementation}.}) and maximizes the principal's value. The benchmark is the full-information budgeted optimum $\OPT:=\max_{S \subseteq N}\{v(S): \sum_{i \in S} c_i \le B\}$, which is the maximum value of any set whose total cost does not exceed the budget $B$. Performance is measured through the price of anarchy (PoA) with respect to the equilibrium concept under consideration, mainly pure Nash equilibria and \emph{coarse correlated equilibria} (\Cref{def:CCE}). A key parameter is the large-market ratio $\lambda=\max_{i\in N} c_i/B$, which measures the highest individual cost as a fraction of the total budget~\citep{Mehta07:Adwords,Anari14:Mechanism}. In the sequel, we also consider compensation design without the large-market assumption and the \emph{combinatorial} setting where agents choose from a set of actions (\Cref{sec:intro-comb}).

\subsubsection{Comparison with existing models}
\label{sec:comparison}

Compensation design is closely related to several existing models. In what follows, we highlight the similarities and the key differences with those models, with a summary given in \Cref{tab:design-comparison}. A more detailed comparison with each of these lines of work is deferred to~\Cref{sec:related}.

\definecolor{cdblue}{HTML}{1F4E79}
\definecolor{cdlight}{HTML}{EAF2F8}
\definecolor{headergray}{HTML}{F3F4F6}

\begin{table}[t]
    \centering
    \small
    \renewcommand{\arraystretch}{1.35}
    \setlength{\tabcolsep}{6pt}

    \caption{Comparison of incentive-design problems by information and participation structure.}
    \label{tab:design-comparison}

    \begin{tabularx}{\linewidth}{
        >{\raggedright\arraybackslash}p{0.45\linewidth}
        >{\centering\arraybackslash}X
        >{\centering\arraybackslash}X
        >{\centering\arraybackslash}X
    }
        \toprule
        \rowcolor{headergray}
        \textbf{Problem setting}
        & \textbf{Actions}
        & \textbf{Costs}
        & \textbf{Participation} \\
        \midrule

        Procurement auction~\citep{Singer10:Budget}
        & Observed
        & Private
        & Centralized \\

        Multi-agent contracts
        & \multirow{2}{*}{Hidden}
        & \multirow{2}{*}{Public}
        & \multirow{2}{*}{Decentralized} \\
        \citep{Dutting26:Multi}
        & & & \\

        \rowcolor{cdlight}
        \textbf{\textcolor{cdblue}{Compensation design [This paper]}}
        & \textbf{\textcolor{cdblue}{Observed}}
        & \textbf{\textcolor{cdblue}{Private}}
        & \textbf{\textcolor{cdblue}{Decentralized}} \\

        \bottomrule
    \end{tabularx}
\end{table}

\begin{itemize}
    \item \emph{Budget-feasible mechanism design}~\citep{Singer10:Budget} is a classic framework for optimizing allocation under private costs: a principal designs a procurement auction such that the agents first report their private costs, and the mechanism then determines an allocation (\emph{i.e.,} which agents are selected) and payments (\emph{i.e.,} how to pay the agents). The key difference is that mechanism design rests on bidding and \emph{centralized allocation}, whereas compensation design promotes voluntary \emph{decentralized participation} without bidding or centralized allocation.
    \item Compensation design is closely related to the problem of \emph{multi-agent contract design}~\citep{Dutting26:Multi}, where a principal aims to incentivize efforts from agents with voluntary participation. Contract design assumes that \emph{the principal knows agents' costs but does not observe their actions}. In contrast, compensation design assumes that \emph{the principal observes agents' actions but must design cost-oblivious payment rules} that perform well for any unknown cost vector. Compensation design is more natural in content platforms where agents' actions (\emph{e.g.}, creating content) are publicly known, yet the costs (\emph{e.g.}, the time and monetary costs of creation) are private and not available to the platform.

    \item Compensation design can be understood through the general framework of \emph{utility design}: agents control actions that determine the social welfare; the principal designs a welfare-sharing mechanism that assigns a utility function to each agent, with the goal of ensuring that agents' local optimization behavior (\emph{i.e.,} the resulting pure Nash equilibria) leads to near-optimal global social welfare. There is a long line of work on utility design across various settings, including profit sharing~\citep{Vetta02:Nash,Augustine11:dynamics,Gollapudi17:profit}, distributed welfare games~\citep{Marden13:distributed}, and distributed resource allocation~\citep{Marden14:generalized}. However, the setting with private costs and a budget constraint has not been studied.
\end{itemize}

Taken together, there are fundamental differences between our setting and prior work in budgeted mechanism design, contract design, and utility design in the context of profit sharing. Our paper revolves around the following set of questions.

\begin{enumerate}
    \item First, can one design efficient, anonymous, budget-feasible, and cost-oblivious payment rules that guarantee equilibrium existence and near-optimal price of anarchy bounds?\label{item:first}
    \item Second, how robust are such guarantees to weaker equilibrium concepts such as coarse correlated equilibria and to broader valuation classes beyond monotone submodular?\label{item:second}
    \item Third, what changes when each agent has a combinatorial action space rather than a single, binary participation decision?\label{item:third}
\end{enumerate}

\subsection{Our results}

The high-level quantitative picture painted by our results is summarized in~\Cref{tab:valuation-action-bounds}. The remainder of this section describes our main results and the key ideas behind them.

\subsubsection{Binary participation under monotone submodular valuations}

Perhaps the most natural payment rule is based on the \emph{Shapley value}~\citep{Shapley53:Value}, a cornerstone of cooperative game theory that is ubiquitous in practice and theory alike. In particular, let $\phi_i(S)$ denote the \emph{Shapley value} of agent $i$ in the cooperative game obtained by restricting $v$ to $S$; that is,
\begin{equation}
    \label{eq:Shapleyval-intro}
\phi_i(S)
=
\sum_{T\subseteq S\setminus\{i\}}
\frac{|T|!(|S|-|T|-1)!}{|S|!}
\bigl(v(T\cup\{i\})-v(T)\bigr).
\end{equation}
The \emph{Shapley payment rule} is defined as
\begin{equation}
    \label{eq:intro-Shapley}
    \pSH_i(S)=
    \begin{cases}
    B\dfrac{\phi_i(S)}{v(S)} & \text{if } i\in S\text{ and }v(S)>0,\\[1.5ex]
    0 & \text{otherwise.}
    \end{cases}
\end{equation}
This payment rule is budget feasible, cost oblivious, and anonymous (\Cref{lem:shapley-efficiency-covering}). Surprisingly, we show that the induced game may admit no pure Nash equilibria.

\begin{table}[t]
\centering
\small
\renewcommand{\arraystretch}{1.35}
\setlength{\tabcolsep}{6pt}

\begin{threeparttable}
\caption{Summary of our main PoA results concerning compensation design.}
\label{tab:valuation-action-bounds}

\begin{tabular}{@{}lccc@{}}
\toprule
& \multicolumn{3}{c}{\textbf{Setting}} \\
\cmidrule(lr){2-4}
\textbf{Valuation class}
&
\makecell{\textbf{Multi-agent}\\\textbf{binary actions}}
&
\makecell{\textbf{Single-agent}\\\textbf{combinatorial actions}}
&
\makecell{\textbf{Multi-agent}\\\textbf{combinatorial actions}}
\\
\midrule

Additive
&
\multirow{2}{*}{\makecell{\(\frac{2-\lambda}{1-\lambda}=2+o_\lambda(1)^\dagger\)\\[-1pt]
{\scriptsize (\Cref{thm:poa-pne})}}}
&
\multirow{4}{*}{\makecell{\(\Theta(\log m)\)\\[-1pt]
{\scriptsize (\Cref{thm:single-agent-upper-basic})}\\[-1pt]
{\scriptsize (\Cref{thm:single-agent-lower-basic})}}}
&
\multirow{2}{*}{\makecell{\(\Theta(\log m)\)\\[-1pt]
{\scriptsize (\Cref{thm:multi-agent-upper})}}}
\\

Submodular
&
&
&
\\

\cmidrule(lr){1-2}
\cmidrule(lr){4-4}

XOS
&
\multirow{2}{*}{\makecell{\(\Omega(n^{1/2-\epsilon})\)\\[-1pt]
{\scriptsize (\Cref{thm:xos-cce-lb})}}}
&
&
\multirow{2}{*}{\makecell{\(\Omega(n^{1/2-\epsilon})\)\\[-1pt]
{\scriptsize (\Cref{thm:xos-cce-lb})}}}
\\

Subadditive
&
&
&
\\

\bottomrule
\end{tabular}

\begin{tablenotes}
\footnotesize
\item \textit{Notes: $n$ denotes the number of agents; in the combinatorial setting, $m$ denotes the total number of agents' actions; and $\lambda \defeq \max_{i \in N} c_i/B$ denotes the large-market parameter. \emph{$^\dagger$}: the $2 + o_\lambda(1)$ PoA bound for pure Nash equilibria can be extended to CCEs (\Cref{thm:cce-submodular-sharp}). Moreover, without the large-market assumption, there is a randomized rule that has a PoA of $3$ under monotone submodular value functions (\Cref{thm:no-large-market-randomized}).}
\end{tablenotes}

\end{threeparttable}
\end{table}

\begin{informaltheorem}
    \label{informal:Shapley}
    Even when the value function is monotone and submodular and the private costs satisfy $\max_{i \in N} c_i/B \ll 1$, there is an instance in which the game induced by the Shapley value does not have a pure Nash equilibrium.
\end{informaltheorem}

This is unexpected because in many related utility design problems, the Shapley value always guarantees the existence of pure Nash equilibria~\citep{HartMasColell89:Potential, Augustine11:dynamics, Marden13:distributed,Marden14:generalized,Bachrach13:Incentives}.
This shows that our setting is fundamentally different. The argument revolves around \emph{quadratic} submodular functions, in which case Shapley values~\eqref{eq:Shapleyval-intro} admit a more tractable characterization (\Cref{lem:sc-quadratic-shapley}).

\paragraph{Marginal contribution rule.} We then consider a simpler payment rule, whereby the compensation to agent $i$ under a realized set $S$ is proportional to its \emph{marginal contribution}:
\begin{equation}
    \label{eq:mc}
    \pMC_i(S) \defeq 
\begin{cases}
    B \frac{v(S) - v(S \setminus \{i\})}{v(S)} & \text{if } v(S) > 0,\\
    0 & \text{otherwise}.
\end{cases}
\end{equation}
If $v$ is monotone and submodular, marginals are nonnegative and their sum is at most $v(S)$, so the rule is nonnegative and budget feasible (\Cref{prop:budget}). It is also cost
oblivious and anonymous: it uses only the realized set and the value oracle, and treats agents with the same contribution symmetrically. Unlike the payment rule based on Shapley values, we show that this simple payment rule always guarantees the existence of a pure Nash equilibrium.

\begin{informaltheorem}
    \label{informalpure}
For any monotone submodular value function and every cost vector with $c_i<B$ for all agents, the game induced by the marginal contribution rule has a pure Nash equilibrium.
\end{informaltheorem}

In fact, we establish a stronger property: the induced game is an \emph{ordinal potential game}, so every sequence of strict better responses terminates (\Cref{thm:existence}). The argument establishes that the function $\Phi(S) = v(S)\prod_{i\in S}(1-c_i/B)$ serves as a potential. An opt-in deviation is profitable exactly when the gain in value is large enough to offset the multiplicative penalty from the entrant's cost; conversely, an opt-out deviation is profitable exactly when removing the agent increases the same potential.

\paragraph{Price of anarchy.} We then examine the \emph{price of anarchy (PoA)}\footnote{It is common to define the price of anarchy in terms of the social welfare $\sum_{i=1}^n u_i$, but here we define it with respect to the underlying value function $v$.} with respect to pure Nash equilibria of the game induced by the marginal contribution rule. 

\begin{informaltheorem}
    \label{informal:poa}
    Let $\lambda=\max_i c_i/B<1$. Every pure Nash equilibrium $S$ of the marginal-contribution game satisfies $\OPT / v(S)\le (2-\lambda)/(1-\lambda) = 2 + o_\lambda(1)$.
\end{informaltheorem}

This result shows that the principal can guarantee a constant fraction of the optimal value without knowing anything about the agents' costs. Moreover, this guarantee is tight, even for additive value functions (\Cref{thm:poa-pne,prop:tight}). We also remark that in the special case of \emph{additive} value, the same PoA bound has been established for a non-truthful first-price procurement auction~\citep{Bhawalkar25:Equilibrium}. In comparison, we show that a simple marginal-contribution payment rule achieves the same guarantee in the more general monotone submodular value setting.

The proof of~\Cref{informal:poa} is based on a relatively standard argument: if an agent $i$ is absent from equilibrium, then its decision not to enter upper bounds the relative marginal gain $(v(S\cup\{i\})-v(S))/v(S)$ by
$(c_i/B)/(1-c_i/B)$. Submodularity then allows summing these inequalities over agents present in an optimal set.

\paragraph{Optimality among cost-oblivious rules.} It turns out that the factor two is not merely an artifact of the marginal-contribution rule. We prove that no nonnegative,
budget-feasible, and cost-oblivious rule can guarantee PoA with respect to pure Nash equilibria below $2$, even for additive values, and even as $\lambda \to 0$. The rule here may depend arbitrarily on agent identities and on the value function; the only information it cannot use is the private cost vector (\Cref{thm:cost-oblivious-without-anonymity-lb}), for otherwise the problem becomes trivial from an information-theoretic standpoint.

\begin{informaltheorem}
    No deterministic cost-oblivious payment guarantees a PoA with respect to pure Nash equilibria below $2$.
\end{informaltheorem}

The obstacle is information-theoretic: by cost obliviousness, if one alters only the cost vector, the induced payment rule must remain the same. We identify an instance where one cost vector forces the rule to admit a certain equilibrium while the other cost vector makes that same equilibrium far from optimal.

\paragraph{Beyond the large-market assumption.} So far, our results focus on the large-market assumption, \emph{i.e., $c_i < B$}. We also investigate the case in which an agent's cost may be as large as or larger than the budget. We first show two negative results on deterministic payment rules: (1) there exists an instance where the PoA of any deterministic rule is $+\infty$ (\Cref{prop:lambda-one}); this lower bound is immediate: one agent with cost $c_i = B$ may be indifferent between opting in and opting out, and opting out can lead to unbounded PoA. (2) Taking a step further, even under principal-favoring tie-breaking---whereby an indifferent agent chooses the action that maximizes the principal's value---there exists an instance where any deterministic rule suffers a PoA of $\Omega(\sqrt{n})$ (\Cref{thm:lower-deterministic-no-large-market}). Both lower bounds hold even with additive values.

We then demonstrate the power of randomized payment rules, circumventing the aforementioned lower bounds for deterministic rules. Specifically, we show that a randomized rule that chooses between the marginal-contribution rule and a singleton-selection rule achieves \emph{constant} PoA for monotone submodular values without the large-market assumption (\Cref{thm:no-large-market-randomized}).

\begin{informaltheorem}
    \label{informal-poa3}
    For monotone submodular value functions, there exists a randomized payment rule with an expected PoA of $3$, without the large-market assumption.
\end{informaltheorem}

This randomization bypasses the deterministic lower bound in a way reminiscent of how the \emph{smart greedy} algorithm achieves a constant-factor approximation for maximizing a submodular function subject to a budget constraint~\citep{Khuller99:budgeted}. In fact, in the monotone submodular setting, the $3$-approximation obtained in~\Cref{informal-poa3} is better than the state-of-the-art (either randomized or deterministic) \emph{truthful} budget-feasible mechanism that achieves a $3.798$-approximation \citep{han25:improved}.

\paragraph{Computation of equilibria.} The existence of pure Nash equilibria notwithstanding (\Cref{informalpure}), we show that strict better-response dynamics can take exponentially many steps,
even under an arbitrarily strong large-market condition. What is more, computing a pure Nash equilibrium of the marginal-contribution game is
\PLS-complete~\citep{Johnson88:Easy} for monotone submodular value functions (\Cref{theorem:exp-br-lower,thm:pls-complete}). The reduction is from the usual
local version of~\textsc{MaxCut}: we encode the cut objective into a monotone submodular
valuation so that unilateral participation changes simulate \textsc{FLIP}
improvements. On the positive side, for
additive valuations with integral values, better-response paths have pseudo-polynomial length $O(W^2)$, where $W$ is an upper bound on the total value of all agents
(\Cref{cor:fastconvergence}).

\paragraph{Coarse correlated equilibria.} Motivated by the hardness of computing pure Nash equilibria and their non-existence under the Shapley payment rule, we examine the efficiency of \emph{coarse correlated
equilibria (CCE)} (\Cref{def:CCE}). CCE is a relaxed notion of Nash equilibrium that \emph{always exists} and can be approached \emph{efficiently} by no-regret learning algorithms~\citep{Cesa06:prediction, Blum08:Regret}. Without knowledge of the platform's value function and other participants' private costs, it is often impossible for an agent to compute and play a pure Nash equilibrium. However, the agents could employ simple, efficient learning algorithms that require only their own utilities, and the whole system would then converge to the set of CCE. It is thus important to understand the efficiency of outcomes produced by learning dynamics.

For monotone submodular values, we prove that every CCE of the marginal-contribution game attains an asymptotically tight large-market guarantee: the PoA of CCE is at most
$2/(1-2\lambda)$, and hence at most $2+o_{\lambda}(1)$ when $\lambda<1/2$ (\Cref{thm:cce-submodular-sharp}); the PoA of CCE is at most $2 + 1/(1-\lambda)$ when $\lambda < 1$ (\Cref{thm:cce-submodular}). These results in fact hold even for the game induced by the Shapley value (\Cref{cor:cce-marginal-covering}), where pure Nash equilibria may not exist. 

\begin{informaltheorem}
    Let $\lambda=\max_i c_i/B < 1$. For monotone submodular values, every coarse correlated equilibrium $\mu$ of the marginal-contribution game satisfies $\OPT / \E_{S\sim\mu}[v(S)] \leq 2/(1-2\lambda) = 2 + o_{\lambda}(1)$ if $\lambda < 1/2$ and $\OPT / \E_{S\sim\mu}[v(S)] \leq 2+1/(1-\lambda)$ if $\lambda < 1$. The same guarantees also hold for the Shapley payment rule.
\end{informaltheorem}

Notably, our argument does not go through the usual \emph{smoothness framework}~\citep{Roughgarden15:Intrinsic}, which has been the main engine behind many PoA bounds for CCEs~\citep{Syrgkanis13:Composable,Roughgarden17:Price}. 

\paragraph{Price of stability of CCEs.} A price of anarchy analysis revolves around worst-case equilibria. Another natural question concerns the best-case equilibrium, which is captured by the so-called \emph{price of stability}~\citep{Anshelevich08:Price}. This can be motivated when a platform can somehow \emph{steer} agents toward more desirable equilibria.

In this context, we point out that CCEs exhibit an interesting phenomenon that has no analogue for pure equilibria: there can exist a CCE whose expected value is higher than the full-information budget-feasible optimum! In other words, the price of stability with respect to CCEs can be strictly smaller than one. In particular, we find an example where the price of stability is approaching $1/2$, and this is in fact tight within the class of additive valuations (\Cref{sec:additive-cce-ratio-sharp}).

Attaining a price of stability strictly below one may seem at first glance impossible because $\OPT$ is the benchmark that an omniscient principal could achieve subject to the budget constraint. The explanation is that a CCE is a distribution over participation profiles, whereas budget feasibility constrains payments in each realized profile rather than the private cost of the selected agents. This means that a CCE may assign positive probability to coalitions whose realized cost exceeds $B$; the CCE condition only forces cost feasibility \emph{in expectation}. For additive values, this turns the comparison with $\OPT$ into the standard fractional relaxation of knapsack, whose value can exceed the integral optimum by at most a factor of two.

Taken together, these results give a sharp interpretation of the marginal-contribution rule in the binary, monotone-submodular model. It is simple, anonymous, cost oblivious, budget feasible, admits pure Nash equilibria, and achieves the best possible PoA among cost-oblivious rules.

\subsubsection{Beyond monotone submodular valuations}

We next extend our scope beyond monotone submodular functions. For monotone XOS valuations, we prove an oracle lower bound: no payment designer using only polynomially many value queries can guarantee $O(n^{1/2-\epsilon})$ price of anarchy with respect to CCEs for every XOS valuation, even with \emph{known} unit costs and $\lambda=1/\sqrt n$ (\Cref{thm:xos-cce-lb}). The construction is based on the standard information-theoretic lower bound for XOS maximization~\citep{Mirrokni08:Tight}. In particular, a polynomial-query payment designer cannot distinguish a base valuation, on which all approximate CCEs have low value by budget feasibility, from a planted valuation whose optimum is significantly higher. This shows that constant PoA guarantees for CCEs are impossible for oracle-efficient payment design over XOS valuations (\Cref{thm:xos-cce-lb}).

\begin{informaltheorem}
    For every constant $\epsilon>0$ and inverse-polynomial $\eta(n)\le 1/n$, no oracle-efficient payment-rule design algorithm can guarantee, for every monotone XOS valuation, that all $\eta(n)$-CCEs obtain expected value at least $\OPT/O(n^{1/2-\epsilon})$. In particular, a payment rule that guarantees $\PoACCE = O(n^{1/2 - \epsilon})$ requires exponentially many value queries. This holds even with known unit costs and $\lambda=1/\sqrt n$.
\end{informaltheorem}

For non-monotone submodular objectives, the obstruction is different. First, if costs are allowed to be zero, then no nonnegative budget-feasible rule can guarantee finite PoA (\Cref{prop:zero-cost-harmful-impossibility}); this is driven by the fact that certain harmful agents will be inclined to participate to the detriment of the efficiency of the system. Moreover, even when each cost is strictly positive, we show that a broad class of payment rules inevitably incurs unbounded pure price of anarchy (\Cref{thm:nm-equal-marginal-lb}).

\subsubsection{Combinatorial compensation design}
\label{sec:intro-comb}

Finally, we move beyond binary participation. This is motivated by many applications in which a single agent controls many possible actions. For example, a creator may produce any subset of videos, a data provider may contribute any subset of datasets, and a compute
provider may allocate any subset of jobs. The resulting \emph{combinatorial} compensation design problem is much harder than the binary-action case, since the payment rule now must 
incentivize high-quality contributions from an agent without knowing their private costs.

\paragraph{Single-agent combinatorial compensation design.} The combinatorial compensation design problem is already non-trivial in the single-agent setting. In this setting, there is one agent with $m$ actions $A$, where each action $e \in A$ has cost $c_e$. The challenge is that the principal needs to entice high effort with a carefully designed payment rule, while in the binary-action setting the principal could just pay the whole budget to the agent.

We establish that, even for additive value functions, any deterministic rule has price of anarchy at least $\Omega(m)$ (\Cref{thm:single-agent-lower-deterministic}), and any randomized payment rule has price of anarchy at least $\Omega(\log m)$ (\Cref{thm:single-agent-lower-basic}). The intuition for the randomized lower bound is that, given a value function, the payment rule must be fixed for any cost vector. Under uniform costs, the payment induces a \emph{demand curve} for the number of actions the agent chooses. If the cost is $t$, the offline optimum grows as $1/t$, then it can be shown that any single rule is bound to miss some scale by a logarithmic factor.

On the positive side, we show that it is possible to achieve a matching PoA upper bound of $O(\log m)$ using a simple randomized payment rule, even for subadditive value functions (\Cref{thm:single-agent-upper-basic}).

\begin{informaltheorem}
    In the single-agent combinatorial model with $m$ actions and monotone subadditive values, there is a randomized payment rule that has PoA of $O(\log m)$ whenever every action is individually feasible. This logarithmic dependence is tight even for additive values and uniform costs.
\end{informaltheorem}

The key observation is that if the principal knew the optimal value $\OPT$, there is a straightforward payment rule: pay the whole budget to the agent if the value is at least $\OPT$. In view of this, we propose a simple randomized threshold-prize rule that samples a target value on a geometric scale. We show that it attains a PoA of $O(\log m)$ even for monotone \emph{subadditive values}---under the mild assumption that every action is individually feasible (\Cref{thm:single-agent-upper-basic}).\footnote{In fact, logarithmic PoA continues to hold without the assumption that every action is individually feasible or the assumption that the value function is subadditive (we only need it to be monotone). Using the same idea, we can still guarantee a PoA of $O(\log V)$, where $V:= v(A)/\min_e v(\{e\})$ is the largest ratio between the upper and lower bounds on $\OPT$.}

\paragraph{Multi-agent combinatorial compensation design.} We then extend the model to multiple agents, which is the most general problem addressed in our paper. Here, there are $n$ agents and each agent controls a block of actions. We denote by $m$ the total number of actions. The principal has budget $B > 0$ and aims to maximize a monotone value function over the union of chosen actions. When the value function is XOS, we have a lower bound of $\Omega(n^{1/2-\epsilon})$ already in the multi-agent binary-action setting. Thus, we focus on the case of monotone submodular functions.

Since the multi-agent setting generalizes the single-agent setting, we know that it is impossible to beat $\Omega(\log m)$. Whether it is possible to achieve a PoA of $O(\log m)$ is not straightforward. The multi-agent setting is harder because the payment rule must not only incentivize high-quality actions within each agent's action set, but also manage competition among multiple agents, all without knowing the private costs. Nevertheless, we show that it is still possible to achieve $O(\log m)$ (\Cref{thm:multi-agent-upper}).

\begin{informaltheorem}
    In the multi-agent combinatorial setting with $m$ total actions and monotone submodular values, there is a randomized payment rule that has PoA of $O(\log m)$ whenever every action is individually feasible. 
\end{informaltheorem}

The randomized rule combines ideas from both the single-agent combinatorial and multi-agent binary-action settings. We first sample a target value on a geometric scale intended to approximate $\OPT$. Given such a target value $R$, the mechanism then uses the marginal contribution payment based on the \emph{capped payment potential} $F_R(S):= B \min \{\frac{v(S)}{R},1\}$. Here, the cap $R$ incentivizes agents to contribute value at least $R$, while the marginal contribution structure ensures sufficient competition among agents. We show that when $R$ is a constant-factor lower estimate of $\OPT$, this branch achieves constant PoA. Randomization over the geometric scale thus gives a PoA of $O(\log m)$.



\section{Further related work}
\label{sec:related}

This section elaborates on related lines of work.

\subsection{Contract design} 

There is a burgeoning line of work on algorithmic contract theory~\citep{Saig26:Adaptive,Dutting26:Multi,Feldman26:Equal,Dutting26:Contracts,Ezra26:Contract,Dutting25:Combinatorial,Feldman25:One,Dutting24:Ambiguous, Deo24:supermodular, Babaioff06:Combinatorial}. We refer to the excellent surveys of~\citet{Feldman25:Combinatorial} and~\citet{Dutting24:Algorithmic} for further pointers to that line of work.


In the multi-agent binary-action setting, compensation design is closest in spirit to the multi-agent contract problem~\citep{Dutting26:Multi}, in which the principal seeks to incentivize a group of agents to exert effort. More specifically, there is a set of agents $N$ with binary actions (exert effort or not exert effort) and each agent $i$ incurs cost $c_i$ for exerting effort and cost $0$ otherwise. The principal has a monotone utility function $f: 2^N \rightarrow [0,1]$ that maps the set of agents $S \subseteq N$ who exert effort to a reward.\footnote{Strictly speaking, the multi-agent contract model assumes a normalized reward $1$ for success, and $f$ denotes the probability of success. So $f$ denotes the expected reward for the principal.} The principal does not observe the agents' actions but knows the costs $c$ and incentivizes effort using a \emph{linear} contract $\vec{\alpha} \in \mathbb{R}^{N}_{\ge 0}$.\footnote{There are other models for contract design where agents have hidden types and their realized costs may be private~\citep{Guruganesh21:contracts, guruganesh23:power, Alon21:contracts, Alon22:approximate, Castiglioni22:designing, Castiglioni23:multi, castiglioni25:reduction,Gan22:generalized}. These models generally make assumptions that are incomparable to our model of compensation design. We refer to~\citeauthor[Section 6]{Dutting24:Algorithmic} for a detailed survey.}  If $S$ is the set of agents that exert effort, then agent $i$'s expected payment is $\alpha_i f(S)$. This induces a game among agents, where agent $i$'s utility is $\alpha_i f(S) - c_i$ if $i \in S$ and $\alpha_i f(S)$ otherwise. Given $(N, f, c)$, the goal of the principal is to design a linear contract that, under a pure Nash equilibrium $S$, maximizes her \emph{expected profit} $(1 - \sum_{i=1}^n \alpha_i) f(S)$. The multi-agent contract model has also been studied in the budget-feasible setting with more general objectives such as social welfare and principal's value~\citep{Feldman25:budget,Aharoni25:welfare}, under coarse correlated equilibria~\citep{Dutting26:black}, and in the combinatorial-action setting~\citep{Dutting2025:multi}.

The key differences between the multi-agent contract model and compensation design are:
\begin{itemize}
    \item[1.] In the multi-agent contract model, the agents' costs are provided as part of the input, and the principal can design a contract based on them. In compensation design, we assume that costs are private and that the payment rule is cost-oblivious.
    \item[2.] In the multi-agent contract model, it is assumed the principal cannot observe agents' actions, whereas compensation design assumes the principal can. In content platforms such as YouTube and TikTok, we can observe whether each provider creates content.
\end{itemize}
Compensation design is natural for settings such as content platforms, where each provider's action (\emph{e.g.}, creating content) is publicly observable, but the costs (\emph{e.g.}, the time and monetary costs of creation) are often private and not available to the platform.

\subsection{Budget-feasible mechanism design and procurement auctions}

Budget-feasible mechanism design and procurement auctions were first proposed by~\citet{Singer10:Budget}. Here, a buyer with a budget aims to buy from multiple sellers whose items have private costs. Most existing mechanisms work as sealed-bid procurement auctions that first let sellers report their costs and then, based on the reports, decide an allocation and payment. Budget-feasible mechanism design has been extensively studied, focusing on approximation to the full-information optimal value~\citep{Chen11:Approximability, Dobzinski11:Mechanisms, Bei12:budget, Anari18:Budget, Gravin20:Optimal, Jalaly21:Simple, Neogi24:Budget,Gkatzelis25:Procurement} and competitive ratios of sequential posted-price mechanisms~\citep{Badanidiyuru12:Learning,Charalampopoulos25:Competitive}. This line of work also includes budget-feasible mechanisms for non-monotone submodular objectives~\citep{Amanatidis19:Budget}. There are also budget-feasible mechanisms based on \emph{clock auctions}~\citep{Milgrom20:clock, Balkanski22:deterministic, Balkanski25:deterministic, han23:triple, han25:improved, han2025efficient} that have many appealing properties. With value oracle access, a randomized mechanism achieves a tight $2$ approximation for additive values~\citep{Gravin20:Optimal}; a recent work by~\citet{han25:improved} gives a deterministic clock auction with a $3.798$ approximation for monotone submodular values, a randomized auction with $9.742$ for non-monotone submodular values; both of these guarantees are state-of-the-art among all possibly randomized truthful budget-feasible mechanisms. In comparison, one of our contributions is to provide a simple alternative framework that achieves PoA of $2+o_\lambda(1)$ or $3$ (without the large-market assumption) in the monotone submodular value setting.

The underlying full-information problem of maximizing a monotone submodular function subject to a budget constraint is \NP-hard, but its approximability is well understood: the best polynomial-time approximation factor is $1-1/e$~\citep{Nemhauser78:Analysis,Sviridenko04:Note,Badanidiyuru19:Optimization}.

More closely related to our work, \citet{Bhawalkar25:Equilibrium} recently studied budget-feasible first-price (non-truthful) procurement auctions in the \emph{additive} value setting and established equilibrium existence and constant PoA bounds for a greedy allocation rule under the large-market assumption. It is perhaps surprising that, using only payments, we show that compensation design achieves guarantees similar to those of a first-price procurement auction, even in the more general monotone submodular value setting and without the large-market assumption. A more recent work by~\citet{Neogi25:multidimensional} initiates the study of multidimensional budget-feasible mechanism design, similar to the combinatorial-action extension . They show that constant approximation to the full-information benchmark is impossible; they then consider a weaker benchmark and develop constant-approximation mechanisms. It is an interesting future direction to identify meaningful weaker benchmarks that enable constant approximation in the combinatorial compensation design setting.



In principle, one can employ a procurement auction mechanism in our setting with the buyer being the principal and the sellers being the agents. The key difference is that mechanism design assumes the power of \emph{centralized allocation} to enforce the allocation rule, while compensation design promotes voluntary \emph{decentralized participation} without a bidding or allocation process. In particular, procurement auctions are not suitable for the applications that motivate our work. First, platforms cannot simply procure contributions on demand: participation is voluntary and decentralized, taking the form of an opt-in or opt-out decision rather than a bidding process. Moreover, the sheer scale of modern digital ecosystems makes repeated auction-based elicitation impractical. For example, running a fresh procurement auction for each use of content among millions of AI agents is clearly out of reach. These considerations make it clear that modern platforms need to design simple payment rules that can be applied on a continuous basis at scale, without requiring constant elicitation of private costs.



\subsection{Utility design and welfare-sharing games}

Compensation design is also connected to the \emph{utility design} literature, where a designer aims to design utilities so that agents' decentralized optimization leads to high welfare. A central reference point is the class of \emph{valid utility games} of~\citet{Vetta02:Nash}, which gives PoA guarantees for marginal-contribution-type utility sharing in submodular welfare settings. Subsequent work studies related welfare-sharing and profit-sharing models~\citep{Augustine11:dynamics, Gollapudi17:profit}, distributed welfare games~\citep{Marden13:distributed}, distributed resource allocation~\citep{Marden14:generalized}, and collaborative environments with Shapley-based incentives~\citep{Bachrach13:Incentives}. The main distinction is that those models typically treat utilities as shares of welfare chosen by the designer, without a budget constraint and without private participation costs. In our setting, the payment rule must be budget feasible for every realized set and cost oblivious across all possible private costs. Equilibrium behavior in our setting is fundamentally different. For example, the standard sharing rule based on the Shapley value may admit no pure Nash equilibria (\Cref{informal:Shapley}).

\subsection{Further applications}
\label{sec:furtherapps}

An additional real-world example related to compensation design is Parallel---a startup building a search engine for AI agents---which recently launched \emph{Index}~\citep{Fortune26,parallel26}. Index aims to compensate publishers, data providers, and independent creators when AI agents utilize their work, determining compensation by estimating each source's \emph{Shapley value} to reflect its marginal contribution. Further applications are discussed below.

\paragraph{Incentivizing data sharing in federated learning.} Incentivizing participation is also a central theme in \emph{federated learning}~\citep{Li20:Federated,Kairouz21:Advances}. A growing literature is examining incentive issues in such settings, where clients may withhold data due to privacy concerns, free-ride on others' contributions, or require compensation for participation~\citep{Karimireddy22:Mechanisms,Zhan21:Survey,Zeng21:Comprehensive,Kong22:Incentivizing,Yang23:Federated,Alaei26:Incentivizing}. This line of work is motivated by a similar participation friction: valuable contributions are supplied by decentralized clients with private participation costs or privacy costs. The focus, however, is typically on incentives within a learning protocol, data valuation, or privacy-aware participation. Our model abstracts away the learning protocol itself and shifts the focus to the design of payment rules for general submodular objectives.

\paragraph{Value-sharing in emerging AI platforms.} \citet{Wang24:Economic} propose a framework based on the Shapley value for compensating copyright owners whose data contributes to generative-AI outputs. More recently, \citet{Harris26:Incontext} study in-context credit assignment using the least core to reduce coalition-level undercompensation among creators whose intellectual property appears in a model's context window. These works focus on attribution and value division for realized AI-generated outputs or contexts. Our focus is complementary: we study payment rules as an \emph{ex ante} incentive device in decentralized systems, where agents decide whether to participate before the realized set of contributions is known.

\paragraph{Revenue sharing in music streaming.} Finally, our framework is also related to recent work on revenue sharing in music streaming platforms such as Spotify. A central question in this literature is how subscription revenue should be divided among artists. The two main rules are \emph{pro-rata} payments, where each artist is paid according to their share of total streams on the platform, and \emph{user-centric} payments, where each subscriber's payment is divided only among the artists that subscriber listens to. \citet{Alaei22:Revenue} compare these two rules and show that the \emph{pro-rata} rule can have advantages that are not captured by the usual criticism that it cross-subsidizes heavy users. In particular, they identify conditions under which \emph{pro-rata} sharing can be Pareto-superior to user-centric sharing. From a cooperative game-theoretic perspective, \citet{Bergantinos25:Revenue} give axiomatic foundations for these streaming payment rules, including characterizations of the \textit{pro-rata} and user-centric rules. These works divide realized subscription revenue among artists, whereas our model studies incentives for voluntary participation under a fixed budget and private costs.

\section{The basic model}

We now define our basic model in more detail. \Cref{fig:compensation-setting} contains an illustration. \Cref{sec:single-agent-combinatorial,sec:multi-agent-combinatorial} extend the scope of compensation design beyond the binary-action setting.

\subsection{Compensation design with binary participation}

We let $N=\{1,\ldots,n\}$ be the set of \emph{agents} (for example, content creators or data providers). The \emph{principal} (for example, the platform) has a budget $B>0$. The principal's goal is to maximize the value $v(S)$ of the participating set $S$. We use value-oracle access as the baseline oracle model: given a subset $S\subseteq N$, the oracle returns $v(S)$. This value can be interpreted as the aggregate utility generated by the participating agents, for example over a sequence of user queries. (Stronger oracle models, such as demand oracles, are discussed as a direction for future work in~\Cref{sec:conclusions}.)

Our blanket assumptions concerning the value function are gathered below; more general value functions are considered in \Cref{sec:beyond monotone submodular} and \Cref{sec:combinatorial}.

\begin{assumption}[Monotonicity and submodularity]
    \label{ass:blanket}
    We make the following assumptions concerning the value function $v: 2^N\to \R_{\geq 0}$. 
    \begin{itemize}
        \item \emph{normalization}: $v(\emptyset)=0$;
        \item \emph{monotonicity}: if $S \subseteq S' \subseteq N$, then $v(S) \leq v(S')$; and
        \item \emph{submodularity}: if $S \subseteq S' \subseteq N$ and $i \notin S'$, then
        \[
            v( S \cup \{i\}) - v(S) \geq v( S' \cup \{i\}) - v(S').
        \]
    \end{itemize}
\end{assumption}

In the basic version of our model, we assume that each agent $i \in [n]$ has a \emph{private cost} $c_i \geq 0$, and has two possible actions, which we denote by $\optin$ and $\optout$; we denote by $\cA_i \defeq \{\optin, \optout\}$ $i$'s action set. These modeling assumptions, which are in line with the targeted applications, are reversed relative to the contract design literature, where the cost is typically assumed to be known but the action unobserved. We generally operate in the \emph{large-market regime}~\citep{Mehta07:Adwords,Anari14:Mechanism}, wherein
\[
        \lambda \defeq \max_{i\in N}\frac{c_i}{B} \ll 1.
\]
\begin{center}
    

\begin{figure}[!t]
\centering
\resizebox{0.74\textwidth}{!}{%
\begin{tikzpicture}[
    font=\scriptsize,
    >=Latex,
    x=1cm,
    y=1cm,
    platformbox/.style={
        draw=navyBlue,
        thick,
        rounded corners=4pt,
        fill=lightRoyalBlue,
        minimum width=56mm,
        minimum height=10mm
    },
    contentcard/.style={
        draw=limeGreen!60!black,
        thick,
        rounded corners=2pt,
        fill=limeGreen!10,
        minimum width=16mm,
        minimum height=9mm,
        align=center
    },
    emptycard/.style={
        draw=gray!60,
        thick,
        dashed,
        rounded corners=2pt,
        fill=gray!7,
        minimum width=16mm,
        minimum height=9mm,
        align=center,
        text=gray!65!black
    },
    agentcircle/.style={
        draw=limeGreen!60!black,
        very thick,
        circle,
        fill=limeGreen!12,
        minimum size=8.5mm
    },
    outcircle/.style={
        draw=gray!60,
        very thick,
        circle,
        fill=gray!12,
        minimum size=8.5mm
    },
    agentlabel/.style={
        align=center,
        font=\scriptsize\bfseries,
        text=limeGreen!45!black
    },
    outlabel/.style={
        align=center,
        font=\scriptsize\bfseries,
        text=gray!65!black
    },
    costbox/.style={
        draw=gold!70!niceRed,
        thick,
        rounded corners=3pt,
        fill=white,
        text width=25mm,
        align=center,
        inner xsep=2pt,
        inner ysep=2pt
    },
    payment/.style={
        ->,
        thick,
        draw=navyBlue,
        shorten >=1pt
    },
    valuearrow/.style={
        ->,
        thick,
        draw=limeGreen!55!black,
        shorten >=1pt
    },
    privatearrow/.style={
        ->,
        thick,
        dashed,
        draw=gold!70!niceRed,
        shorten >=1pt
    },
    zeropay/.style={
        ->,
        thick,
        dashed,
        draw=gray!65,
        shorten >=1pt
    },
    noflow/.style={
        ->,
        thick,
        dashed,
        draw=gray!60,
        shorten >=1pt
    },
    separator/.style={
        draw=gray!65,
        semithick,
        dash pattern=on 5pt off 4pt
    },
    arrowlabel/.style={
        font=\scriptsize,
        fill=white,
        inner sep=1.1pt,
        rounded corners=1pt
    }
]
    \coordinate (col1) at (-4.55,0);
    \coordinate (col2) at (-1.05,0);
    \coordinate (dots) at (1.45,0);
    \coordinate (coln) at (4.35,0);

    \node[platformbox] (platform) at (-.55,4.48) {};
    \node[font=\scriptsize\bfseries, text=navyBlue, align=center] at (-1.58,4.52)
        {Platform};
    \node[font=\scriptsize, text=navyBlue, anchor=west] at (-.22,4.62)
        {budget $B$};
    \node[font=\scriptsize, text=navyBlue, anchor=west] at (-.72,4.36)
        {payment rule $p_i(S)$};

    \begin{scope}[shift={(-2.85,4.49)}, scale=.46]
        \draw[navyBlue, thick, fill=white, rounded corners=2pt]
            (-.90,-.48) rectangle (.90,.56);
        \draw[navyBlue, thick] (-.90,.28) -- (.90,.28);
        \fill[navyBlue] (-.68,.42) circle[radius=.035];
        \fill[navyBlue] (-.52,.42) circle[radius=.035];
        \fill[navyBlue] (-.36,.42) circle[radius=.035];
        \draw[navyBlue, thick, rounded corners=1pt]
            (-.58,-.25) rectangle (-.12,.08);
        \draw[navyBlue, thick, rounded corners=1pt]
            (.16,-.25) rectangle (.62,.08);
        \draw[navyBlue, thick] (-.42,-.06) -- (-.28,-.06);
        \draw[navyBlue, thick] (.32,-.06) -- (.46,-.06);
        \draw[navyBlue, thick] (0,-.48) -- (0,-.72);
        \draw[navyBlue, thick] (-.36,-.72) -- (.36,-.72);
    \end{scope}

    \draw[separator] (-5.55,3.50) -- (5.55,3.50);

    \node[contentcard] (content1) at ($(col1)+(0,2.43)$) {};
    \node[font=\scriptsize\bfseries, text=limeGreen!45!black] at ($(content1)+(0,.17)$)
        {content};
    \draw[limeGreen!55!black, thick] ($(content1)+(-.38,-.05)$) -- ($(content1)+(.22,-.05)$);
    \draw[limeGreen!55!black, thick] ($(content1)+(-.38,-.21)$) -- ($(content1)+(.34,-.21)$);
    \draw[limeGreen!55!black, thick] ($(content1)+(-.38,-.37)$) -- ($(content1)+(.03,-.37)$);

    \node[contentcard] (content2) at ($(col2)+(0,2.43)$) {};
    \node[font=\scriptsize\bfseries, text=limeGreen!45!black] at ($(content2)+(0,.17)$)
        {content};
    \draw[limeGreen!55!black, thick] ($(content2)+(-.38,-.05)$) -- ($(content2)+(.22,-.05)$);
    \draw[limeGreen!55!black, thick] ($(content2)+(-.38,-.21)$) -- ($(content2)+(.34,-.21)$);
    \draw[limeGreen!55!black, thick] ($(content2)+(-.38,-.37)$) -- ($(content2)+(.03,-.37)$);

    \node[font=\large, text=gray!70] at ($(dots)+(0,2.43)$) {$\cdots$};

    \node[emptycard] (contentn) at ($(coln)+(0,2.43)$) {no content};
    \draw[gray!65, thick] ($(contentn)+(-.38,-.30)$) -- ($(contentn)+(.38,.30)$);
    \draw[gray!65, thick] ($(contentn)+(.38,-.30)$) -- ($(contentn)+(-.38,.30)$);

    \node[agentcircle] (agent1) at ($(col1)+(0,1.02)$) {};
    \fill[limeGreen!55!black] ($(agent1)+(0,.08)$) circle[radius=.085];
    \draw[limeGreen!55!black, very thick, rounded corners=3pt]
        ($(agent1)+(-.24,-.23)$) .. controls ($(agent1)+(-.18,-.08)$) and
        ($(agent1)+(.18,-.08)$) .. ($(agent1)+(.24,-.23)$);
    \node[agentlabel, anchor=north] at ($(agent1.south)+(0,-.08)$)
        {Agent 1\\opt in};

    \node[agentcircle] (agent2) at ($(col2)+(0,1.02)$) {};
    \fill[limeGreen!55!black] ($(agent2)+(0,.08)$) circle[radius=.085];
    \draw[limeGreen!55!black, very thick, rounded corners=3pt]
        ($(agent2)+(-.24,-.23)$) .. controls ($(agent2)+(-.18,-.08)$) and
        ($(agent2)+(.18,-.08)$) .. ($(agent2)+(.24,-.23)$);
    \node[agentlabel, anchor=north] at ($(agent2.south)+(0,-.08)$)
        {Agent 2\\opt in};

    \node[font=\large, text=gray!70] at ($(dots)+(0,1.02)$) {$\cdots$};

    \node[outcircle] (agentn) at ($(coln)+(0,1.02)$) {};
    \fill[gray!65] ($(agentn)+(0,.08)$) circle[radius=.085];
    \draw[gray!65, very thick, rounded corners=3pt]
        ($(agentn)+(-.24,-.23)$) .. controls ($(agentn)+(-.18,-.08)$) and
        ($(agentn)+(.18,-.08)$) .. ($(agentn)+(.24,-.23)$);
    \node[outlabel, anchor=north] at ($(agentn.south)+(0,-.08)$)
        {Agent $n$\\opt out};

    \node[costbox] (cost1) at ($(col1)+(0,-.98)$)
        {private cost $c_1$\\[-1pt]
        {\tiny\textcolor{gold!65!niceRed}{incurred if opt in}}};
    \node[costbox] (cost2) at ($(col2)+(0,-.98)$)
        {private cost $c_2$\\[-1pt]
        {\tiny\textcolor{gold!65!niceRed}{incurred if opt in}}};
    \node[costbox] (costn) at ($(coln)+(0,-.98)$)
        {private cost $c_n$\\[-1pt]
        {\tiny\textcolor{gold!65!niceRed}{not incurred}}};

    \foreach \cost in {cost1,cost2,costn} {
        \draw[gold!70!niceRed, thick, fill=white, rounded corners=.5pt]
            ($(\cost.west)+(-.08,-.12)$) rectangle ($(\cost.west)+(.10,.10)$);
        \draw[gold!70!niceRed, thick]
            ($(\cost.west)+(-.05,.10)$) arc[start angle=180,end angle=0,radius=.075];
    }

    \draw[valuearrow] (agent1.north) -- (content1.south);
    \draw[valuearrow] (agent2.north) -- (content2.south);
    \draw[noflow] (agentn.north) -- (contentn.south);

    \coordinate (valuehub) at (-2.78,3.16);
    \draw[valuearrow] (content1.north) -- (valuehub);
    \draw[valuearrow] (content2.north) -- (valuehub);
    \draw[valuearrow] (valuehub)
        -- node[arrowlabel, text=limeGreen!45!black, xshift=-1.1cm,yshift=-0.2cm, pos=.56, right]
            {$v(S)$}
        ($(platform.south)+(-1.48,0)$);

    \draw[payment] ($(platform.west)+(0,-.16)$)
        .. controls (-5.96,3.94) and (-6.52,1.34) ..
        node[arrowlabel, text=navyBlue, xshift=-0.1cm, pos=.54, left] {$p_1(S)$}
        (agent1.west);
    \draw[payment] ($(platform.south)+(1.02,0)$)
        .. controls (.58,3.46) and (.30,1.34) ..
        node[arrowlabel, text=navyBlue, xshift=0.1cm, pos=.58, right] {$p_2(S)$}
        (agent2.east);
    \draw[zeropay] ($(platform.east)+(0,-.16)$)
        .. controls (5.66,3.96) and (6.74,1.35) ..
        node[arrowlabel, text=gray!65, xshift=0.2cm, pos=.44, right] {$0$ payment}
        (agentn.east);

    \draw[
        decorate,
        decoration={brace, mirror, amplitude=5pt},
        thick,
        draw=limeGreen!55!black
    ]
        ($(cost1.south west)+(.05,-.22)$) --
        node[
            below=5pt,
            font=\scriptsize\bfseries,
            text=limeGreen!45!black
        ]
            {$S=\mbox{opt-in agents}$}
        ($(cost2.south east)+(-.05,-.22)$);
\end{tikzpicture}%
}
\caption{Compensation design with binary participation. Agents independently
choose whether to opt in or opt out. Opt-in agents incur private costs, generate
content, form the realized set $S$, and receive payments $p_i(S)$ subject to
the budget $B$. Opt-out agents generate no content and receive payment zero. We later extend compensation design to combinatorial action sets (\Cref{sec:multi-agent-combinatorial}).}
\label{fig:compensation-setting}
\end{figure}

\end{center}

A joint action profile $(a_1, \dots, a_n)$ is in one-to-one correspondence with a set $S \defeq \{ i \in N : a_i = \optin \}$. The principal designs a \emph{payment rule} $p : 2^N \to \R_{\geq 0}^n$. This enforces \emph{limited liability}: the payment attached to each agent is nonnegative. The payment rule should be budget feasible such that $\sum_i p_i(S) \le B$ for all $S \subseteq N$. We typically assume that if $i \notin S$, it receives no payment, so its utility is zero. On the other hand, if $i \in S$, its utility is $p_i(S) - c_i$. Under a joint action $(a_1, \dots, a_n)$ associated with $S \subseteq N$, we denote by $u_i(S)$ the utility of agent $i$. Namely,
\[
    u_i(S) \defeq
    \begin{cases}
        0 & \text{if } i \notin S,\\
        p_i(S) - c_i & \text{otherwise}.
    \end{cases}
\]
In what follows, it will sometimes be convenient to identify a joint action profile by the associated subset $S$.

\paragraph{Equilibrium concepts.} Perhaps the most natural solution concept in this model is the \emph{pure} Nash equilibrium, formally defined below.

\begin{definition}[Pure Nash equilibrium]
    A \emph{pure Nash equilibrium} is a set $S \subseteq N$ such that no agent can strictly improve its utility by switching its participation decision:
    \[
        p_i(S) - c_i \geq 0 \text{ if } i \in S \text{ and } p_i(S \cup \{i\}) - c_i \leq 0 \text{ if } i \notin S.
    \]
\end{definition}
In general games, pure Nash equilibria may or may not exist (\emph{e.g.},~\citealp{Fabrikant04:Complexity}). Nevertheless, we shall show that the game induced by a suitable payment rule always admits a pure Nash equilibrium through a potential argument.

A more permissive solution concept is the \emph{coarse correlated equilibrium}~\citep{Moulin78:Strategically}, which is in turn a relaxation of \emph{correlated equilibria}~\citep{Aumann74:Subjectivity}.

\begin{definition}[Coarse correlated equilibrium]
    \label{def:CCE}
    An $\epsilon$-\emph{coarse correlated equilibrium} is a distribution $\mu \in \Delta(\cA_1 \times \dots \times \cA_n)$ such that for any agent $i \in [n]$ and action $a_i' \in \cA_i$,
    \[
        \E_{(a_1, \dots, a_n) \sim \mu} u_i(a_1, \dots, a_n) \geq  \E_{(a_1, \dots, a_n) \sim \mu} u_i(a_i', \vec{a}_{-i}) - \epsilon.
    \]
\end{definition}
Above, we used the standard notation $\vec{a}_{-i} \defeq (a_1, \dots, a_{i-1}, a_{i+1}, \dots, a_n)$. If $\mu$ in~\Cref{def:CCE} is a \emph{product} distribution, one obtains the usual notion of a mixed Nash equilibrium~\citep{Nash51:Non}.

\paragraph{Price of anarchy.} The goal of the principal is to maximize the value function subject to the budget constraint:
\begin{equation}
        \OPT = \max\left\{v(S): S \subseteq N,\ \sum_{i \in S} c_i \le B\right\}.        \label{eq:opt}
\end{equation}
For a game $\Gamma$, we let $\solcon(\Gamma) \neq \emptyset$ denote a set of solutions according to some equilibrium concept. The price of anarchy with respect to the solution concept $\solcon$ is defined as
\begin{equation}\label{eq:PoA}
     \PoA_{\solcon}(\cG) = \sup_{\Gamma \in \cG}
        \frac{\OPT(\Gamma)}{\inf \{ \E_{S \sim \mu}[v(S)] : \mu \in \solcon(\Gamma) \}},
\end{equation}
where $\cG$ is a set of instances. If $\OPT > 0$ and the denominator is zero, the ratio is to be interpreted as $+\infty$. We will write $\PoAPNE$ to denote the price of anarchy with respect to pure Nash equilibria (assuming existence of such equilibria) and $\PoACCE$ for the price of anarchy with respect to coarse correlated equilibria.

\subsection{Budget-share payment rules}
\label{sec:budget-share-rules}

The payment rules we consider can be described as shares of the total budget.

\begin{definition}[Budget-share payment rule]
A \emph{budget-share payment rule} assigns nonnegative shares $y_i(S)$ to selected agents and pays
\[
        p_i(S)=B y_i(S),
        \text{ where }
        \sum_{i\in S}y_i(S)\le 1
        \text{ for every } S\subseteq N.
\]
Agents outside $S$ receive no payment.
\end{definition}

\begin{definition}[Marginal-covering budget-share rule]
    \label{def:marg-cov}
For $\gamma \in (0,1]$, a budget-share rule is \emph{$\gamma$-marginal-covering} if, for every $S \subseteq N$ with $v(S)>0$ and every $i\in S$,
\[
        y_i(S)\ge
        \gamma \frac{v(S)-v(S\setminus\{i\})}{v(S)}.
\]
\end{definition}

The case $\gamma=1$ will be especially relevant, capturing the scenario where each selected agent's budget share covers their normalized marginal contribution.

\subsubsection{Shapley payments}

Perhaps the most natural payment rule is the one induced by the \emph{Shapley value}~\citep{Shapley53:Value}. We recall that for a nonempty coalition $S$ and an agent $i\in S$, $\phi_i(S)$ denotes the Shapley value of agent $i$ in the cooperative game obtained by restricting $v$ to $S$:
\begin{equation}
    \label{eq:Shapley}
    \phi_i(S)=
        \sum_{T\subseteq S\setminus\{i\}}
        \frac{|T|!(|S|-|T|-1)!}{|S|!}
        \bigl(v(T\cup\{i\})-v(T)\bigr).
\end{equation}
The Shapley value has many important applications, including profit sharing between internet service providers~\citep{Ma07:Internet}, measuring the influence of nodes in social networks~\citep{Narayanam10:Shapley}, identifying salient genes for biological functions~\citep{Moretti07:Class}, feature attribution in machine learning~\citep{Lundberg17:Unified}, and responsibility or importance measures in database queries and temporal logics~\citep{Mascle21:Responsibility,Khalil22:Complexity}.

In the context of compensation design, we define the \emph{Shapley payment rule} as
\begin{equation}
\label{eq:shapley-payment}
        \pSH_i(S)=
        \begin{cases}
        B\dfrac{\phi_i(S)}{v(S)} & \text{if } i\in S\text{ and }v(S)>0,\\[1.5ex]
        0 & \text{otherwise.}
        \end{cases}
\end{equation}
This payment rule is indeed budget feasible---in fact, it fully exhausts the budget---and satisfies the marginal-covering condition per~\Cref{def:marg-cov}.

\begin{restatable}[Efficiency and marginal-covering property of Shapley payments]{lemma}{efficShap}
\label{lem:shapley-efficiency-covering}
For every nonempty set $S$,
\[
        \sum_{i\in S}\phi_i(S)=v(S).
\]
Moreover, if $v$ is monotone and submodular, then for every $i\in S$,
\[
        \phi_i(S)\ge v(S)-v(S\setminus\{i\}).
\]
As a result, the Shapley payment rule~\eqref{eq:shapley-payment} is a $1$-marginal-covering budget-share rule.
\end{restatable}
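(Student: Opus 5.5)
The plan is to establish the two identities separately and then read off the budget-share consequence.

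\emph{Efficiency.} I will use the permutation form of the Shapley value. For a uniformly random ordering $\pi$ of $S$, let $P_i^\pi$ be the set of agents preceding $i$. The weight $|T|!(|S|-|T|-1)!/|S|!$ in~\eqref{eq:Shapley} is exactly the probability that $P_i^\pi = T$. Hence
\[
\phi_i(S)=\E_\pi\bigl[v(P_i^\pi\cup\{i\})-v(P_i^\pi)\bigr].
\]
For each fixed $\pi$, summing these marginals over $i \in S$ telescopes along the ordering to $v(S)-v(\emptyset)=v(S)$, using normalization. Taking expectations over $\pi$ gives $\sum_{i\in S}\phi_i(S)=v(S)$.

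\emph{Marginal covering.} Fix $i\in S$. Every term in~\eqref{eq:Shapley} has $T\subseteq S\setminus\{i\}$. Submodularity with $T\subseteq S\setminus\{i\}$ and $i\notin S\setminus\{i\}$ gives
\[
v(T\cup\{i\})-v(T)\ge v(S)-v(S\setminus\{i\}).
\]
The weights are nonnegative and sum to one, being a probability distribution over $P_i^\pi$. Therefore $\phi_i(S)$ is a convex combination of quantities each at least $v(S)-v(S\setminus\{i\})$, which yields the inequality. Monotonicity ensures that all marginals, and hence all $\phi_i(S)$, are nonnegative.

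\emph{Consequence.} Set $y_i(S)=\phi_i(S)/v(S)$ when $v(S)>0$, and $y_i(S)=0$ otherwise. Nonnegativity follows from monotonicity. Efficiency gives $\sum_{i\in S} y_i(S)=1\le 1$ when $v(S)>0$, and the sum is trivially $0$ when $v(S)=0$. The second inequality, divided by $v(S)>0$, is exactly the $\gamma=1$ condition of \Cref{def:marg-cov}.

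The only step requiring care is checking that the coefficients in~\eqref{eq:Shapley} form a probability distribution. Equivalently, one must verify that $\Pr[P_i^\pi=T]=|T|!(|S|-|T|-1)!/|S|!$: the agents of $T$ come first in any order, then $i$, then the rest in any order. This is a standard counting argument.
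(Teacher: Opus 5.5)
Your proposal is correct and follows essentially the same route as the paper: submodularity bounds every term in~\eqref{eq:Shapley} from below by the final marginal $v(S)-v(S\setminus\{i\})$, the coefficients sum to one, and the budget-share consequence follows directly. The only difference is that you prove efficiency explicitly by telescoping along a random ordering, whereas the paper simply cites it as the standard efficiency property of the Shapley value.
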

For completeness, we provide the proof in~\Cref{appendix:proofs}.

\subsubsection{Marginal contribution payments}
\label{secsec:marginalcontr}

We also consider the \emph{marginal contribution payment rule}. For a subset $S \subseteq N$ and an agent $i \in S$, we let
\[
        \Delta_i(S) \defeq v(S)-v(S\setminus\{i\}).
\]
The marginal contribution payment rule is defined as
\begin{equation}
        \pMC_i(S)=
        \begin{cases}
        B\dfrac{\Delta_i(S)}{v(S)} & \text{if } i\in S\text{ and }v(S)>0,\\[1.5ex]
        0 & \text{otherwise.}
        \end{cases}                                      \label{eq:mc-rule}
\end{equation}
By monotonicity (\Cref{ass:blanket}), this guarantees limited liability. Just like~\eqref{eq:shapley-payment}, this payment rule is \emph{anonymous} (or non-discriminatory) in the sense that it uses the same formula for every agent without taking into account provider identities. It is also clearly \emph{cost oblivious}, as it only depends on the set of agents that decided to opt in together with each agent's marginal contribution. We denote by $\gameMC$ the binary-action $n$-player game induced by the marginal contribution payment rule defined in~\eqref{eq:mc-rule}.

We now show that this payment rule is indeed budget feasible. We begin by establishing a simple upper bound on the sum of the marginal contributions, which is a direct consequence of submodularity.

\begin{lemma}
\label{lem:sum-marginals}
For every subset $S\subseteq N$, $
        \sum_{i\in S}\Delta_i(S) \le v(S)$.
\end{lemma}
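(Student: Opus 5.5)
The plan is a telescoping argument along an arbitrary ordering of $S$, using submodularity to compare each agent's marginal contribution to $S$ with its marginal contribution to a prefix. If $S=\emptyset$ the sum is empty and $v(\emptyset)=0$, so there is nothing to prove. Otherwise, fix any enumeration $S=\{i_1,\dots,i_k\}$ and set $S_j \defeq \{i_1,\dots,i_j\}$ for $j=0,\dots,k$, so that $S_0=\emptyset$ and $S_k=S$.

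The first step is to write $v(S)$ as a telescoping sum of prefix marginals, using normalization $v(\emptyset)=0$:
\[
v(S)=\sum_{j=1}^{k}\bigl(v(S_j)-v(S_{j-1})\bigr).
\]

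The second step compares each term with $\Delta_{i_j}(S)$. Since $S_{j-1}\subseteq S\setminus\{i_j\}$ and $i_j\notin S\setminus\{i_j\}$, submodularity from \Cref{ass:blanket} gives
\[
\Delta_{i_j}(S)=v\bigl((S\setminus\{i_j\})\cup\{i_j\}\bigr)-v(S\setminus\{i_j\})\le v(S_{j-1}\cup\{i_j\})-v(S_{j-1})=v(S_j)-v(S_{j-1}).
\]

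Summing this inequality over $j=1,\dots,k$ and applying the telescoping identity yields $\sum_{i\in S}\Delta_i(S)\le v(S)$.

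No step is a real obstacle. The only point needing care is that the submodularity inequality is applied with the smaller set $S_{j-1}$ and the larger set $S\setminus\{i_j\}$, neither of which contains $i_j$. Monotonicity is not needed for this inequality; it only ensures each $\Delta_i(S)\ge 0$, which is what later makes the payment rule nonnegative and, together with this lemma, budget feasible.
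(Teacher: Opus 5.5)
Your proof is correct and follows the paper's argument essentially line for line: you telescope $v(S)$ along an arbitrary enumeration of $S$, and you use submodularity with $S_{j-1}\subseteq S\setminus\{i_j\}$ to bound each final marginal $\Delta_{i_j}(S)$ by the corresponding prefix marginal. Your side remark is also accurate: only normalization and submodularity are used here, while monotonicity is what makes each $\Delta_i(S)\ge 0$.
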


\begin{proof}
If $S=\emptyset$, the claim is immediate. Otherwise, if $k \defeq |S|$, we write $S=\{i_1,\ldots,i_k\}$ and $S_t=\{i_1,\ldots,i_t\}$. By convention, we take $S_0 \defeq \emptyset$. Since $S_{t-1}\subseteq S\setminus\{i_t\}$, submodularity yields
\[
        v(S_t)-v(S_{t-1})
        = v(S_{t-1}\cup\{i_t\})- v(S_{t-1})
        \ge v(S)-v(S\setminus\{i_t\})
        =\Delta_{i_t}(S).
\]
Summing over $t=1,\ldots,k$, the telescoping sum gives
\[
        v(S)=\sum_{t=1}^k \bigl(v(S_t)-v(S_{t-1})\bigr)
        \ge \sum_{i\in S}\Delta_i(S). \qedhere
\]
\end{proof}

This lemma readily implies budget feasibility.

\begin{proposition}[Budget feasibility]\label{prop:budget}
The marginal-contribution rule is budget feasible:
\[
        \sum_{i\in S} \pMC_i(S)\le B
        \quad \forall S \subseteq N.
\]
\end{proposition}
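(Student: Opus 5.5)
The plan is to reduce the statement directly to \Cref{lem:sum-marginals} by a case split on whether $v(S)$ vanishes. Fix an arbitrary $S \subseteq N$.

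\emph{Degenerate case.} If $v(S)=0$ (which includes $S=\emptyset$, by normalization), then the definition~\eqref{eq:mc-rule} gives $\pMC_i(S)=0$ for every $i$. The sum is therefore $0\le B$.

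\emph{Main case.} If $v(S)>0$, then for $i\in S$ we have $\pMC_i(S)=B\,\Delta_i(S)/v(S)$, while agents outside $S$ receive nothing. Pulling out the common factor $B/v(S)>0$ gives
\[
\sum_{i\in S}\pMC_i(S)=\frac{B}{v(S)}\sum_{i\in S}\Delta_i(S)\le \frac{B}{v(S)}\,v(S)=B .
\]
The inequality is exactly \Cref{lem:sum-marginals}, and multiplying by the positive factor $B/v(S)$ preserves it.

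For completeness, I will also note that each payment is nonnegative, since $\Delta_i(S)\ge 0$ by monotonicity (\Cref{ass:blanket}). This nonnegativity is not needed for the budget inequality itself, but it confirms that the rule is a valid budget-share rule with shares $y_i(S)=\Delta_i(S)/v(S)$.

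There is no real obstacle here. The only point requiring care is the $v(S)=0$ case, where the formula involves division by zero. The rule's definition sidesteps this by setting all payments to zero, which is why the case split is needed.
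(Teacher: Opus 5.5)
Your proof is correct and matches the paper's argument: the same case split on whether $v(S)=0$, with the positive-value case reduced to \Cref{lem:sum-marginals} by factoring out $B/v(S)$. Your remark on nonnegativity via monotonicity is not needed for the budget inequality itself, and the paper makes the same point separately when it defines the rule.
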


\begin{proof}
Consider any subset $S \subseteq N$. If $v(S)=0$, all payments are zero, so the claim follows. If $v(S)>0$, then by~\Cref{lem:sum-marginals},
\[
        \sum_{i\in S} \pMC_i(S)
        = B \frac{\sum_{i\in S}\Delta_i(S)}{v(S)}
        \le B,
\]
as claimed.
\end{proof}

This shows that the marginal-contribution rule is a legitimate budget-share rule. Moreover, by definition, it is $1$-marginal-covering in the sense of~\Cref{def:marg-cov}.

We conclude this section by pointing out that any pure Nash equilibrium $S$ under a budget-feasible payment rule is cost feasible. Indeed, if $i \in S$, then $p_i(S)\ge c_i$, for otherwise agent $i$ would prefer to opt out; summing over participating agents and using budget feasibility gives $\sum_{i\in S} c_i \le \sum_{i\in S} p_i(S)\le B$. Interestingly, this is \emph{not} the case under weaker notions, such as coarse correlated equilibrium, as discussed further in~\Cref{sec:additive-cce-ratio-sharp}.

\section{Pure Nash equilibrium}

The first natural question in our model is whether a pure Nash equilibrium exists. This section answers this in the affirmative for the game induced by the marginal contribution rule. In contrast, we show that the answer is negative for the game induced by the Shapley payment rule. Beyond existence, we go on to examine the complexity of computing a pure Nash equilibrium.

\subsection{Existence in the marginal contribution game}

We first show that pure Nash equilibria exist under the marginal contribution payment rule.

\begin{theorem}[Existence of pure Nash equilibria]
\label{thm:existence}
Let $\max_{i \in N} c_i < B$. In the game $\gameMC$ induced by the marginal contribution payment rule~\eqref{eq:mc-rule}, every sequence of strict better responses terminates. As a result, $\gameMC$ admits a pure Nash equilibrium.
\end{theorem}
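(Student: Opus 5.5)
We will show that $\gameMC$ is an ordinal potential game with potential
\[
    \Phi(S) \defeq v(S)\prod_{i\in S}\Bigl(1-\frac{c_i}{B}\Bigr).
\]
Since $c_i<B$ for all $i$, each factor $1-c_i/B$ lies in $(0,1]$, so $\Phi(S)\ge 0$, and $\Phi(S)>0$ if and only if $v(S)>0$. The plan is to show that every strict better response strictly increases $\Phi$. Since $2^N$ is finite, no profile can then repeat along a better-response path, so every such path terminates. Its final profile admits no strict better response, which is exactly a pure Nash equilibrium.

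The key step is a direct comparison of $\Phi(T\cup\{i\})$ with $\Phi(T)$ for $i\notin T$. Write $r_i\defeq 1-c_i/B\in(0,1]$, so that $\Phi(T\cup\{i\})=r_i\,v(T\cup\{i\})\,\prod_{j\in T}r_j$.

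\emph{Entry by $i$ into $T$.} This is strictly profitable if and only if $\pMC_i(T\cup\{i\})>c_i$.
\begin{itemize}
    \item If $v(T\cup\{i\})>0$, this reads $B\bigl(v(T\cup\{i\})-v(T)\bigr)/v(T\cup\{i\})>c_i$. This rearranges to $(1-c_i/B)\,v(T\cup\{i\})>v(T)$, i.e., $r_i v(T\cup\{i\})>v(T)$. Multiplying by the positive quantity $\prod_{j\in T}r_j$ gives exactly $\Phi(T\cup\{i\})>\Phi(T)$.
    \item If $v(T\cup\{i\})=0$, the payment is $0$, and entry is never strictly profitable since $c_i\ge 0$.
\end{itemize}

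\emph{Exit by $i$ from $T\cup\{i\}$.} This is strictly profitable if and only if $\pMC_i(T\cup\{i\})<c_i$. When $v(T\cup\{i\})>0$, the same algebra with the inequality reversed gives $\Phi(T)>\Phi(T\cup\{i\})$. When $v(T\cup\{i\})=0$, monotonicity gives $v(T)=0$, so $\Phi$ is unchanged. In that case exit is strictly profitable only when $c_i>0$.

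The main obstacle is this degenerate zero-value region, where $\Phi$ does not move. I would handle it by a lexicographic refinement of the potential: use the pair $(\Phi(S),-|S|)$. On the set $\{S: v(S)=0\}$, every strict better response must be an exit. An entry yields payment $0$ and so is never strict. An exit strictly decreases $|S|$ and leaves $\Phi=0$ unchanged, so the pair still strictly increases in lexicographic order.

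One also needs that entry/exit moves between zero and positive value regions are covered by the generic case. They are: if $v(T\cup\{i\})>0$ and $v(T)=0$, the computation above applies verbatim, since it never required $v(T)>0$. Thus every strict better response strictly increases $(\Phi,-|S|)$ lexicographically over a finite set, which proves termination and hence existence.
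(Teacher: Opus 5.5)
Your proof is correct and follows the paper's argument. It uses the same multiplicative potential $v(S)\prod_{i\in S}(1-c_i/B)$, the same rearrangement of the entry and exit conditions, and a lexicographic second coordinate for exits from zero-value states; the only differences are cosmetic: you use $-|S|$ where the paper uses $-\sum_{i\in S}c_i$ (both strictly increase on such exits), and your claim that every strict better response on $\{S: v(S)=0\}$ is an exit holds only for moves that stay in that set, a case split your final paragraph already completes.
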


\begin{proof}
We define the potential
\[
        \Phi(S) = v(S) \prod_{i\in S} \left( 1 - \frac{c_i}{B} \right).
\]
To handle zero-value states, we use the lexicographic potential
\[
        \Phi'(S)=\left(\Phi(S),-\sum_{i\in S}c_i\right).
\]
We will show that every strict unilateral improvement strictly increases $\Phi'$. First, consider an agent $i \notin S$ who decides to opt in. We can assume that $v(S\cup\{ i \}) > 0$, for otherwise opting in cannot yield a strict utility improvement. This deviation is strictly profitable if and only if
\[
        B\frac{v(S\cup\{i\})-v(S)}{v(S\cup\{i\})} > c_i \Longleftrightarrow \left(1-\frac{c_i}{B} \right) v(S\cup\{i\}) > v(S).
\]
Multiplying both sides by $\prod_{i' \in S} (1 - c_{i'}/B) > 0$, we get $
        \Phi(S\cup\{i\})>\Phi(S)$,
thereby increasing the first coordinate of $\Phi'$. 

Conversely, consider an agent $i \in S$ who decides to opt out. We use the shorthand notation $S_{-i} = S \setminus\{i\}$. Suppose first that $v(S)>0$. Opting out is strictly profitable if and only if
\[
        B \frac{v(S)-v(S_{-i})}{v(S)}<c_i \Longleftrightarrow  v(S_{-i})>
        \left(1-\frac{c_i}{B}\right)v(S).
\]
Multiplying both sides by $\prod_{i' \in S_{-i}} (1 - c_{i'}/B) > 0$, we get $\Phi(S_{-i})>\Phi(S)$.
So, every strictly profitable opt-out deviation strictly increases the first coordinate of $\Phi'$.

It remains to consider the case where $v(S)=0$ and $i\in S$ opts out. By monotonicity, $v(S_{-i})=0$, so $\Phi(S_{-i})=\Phi(S)=0$. Such a deviation is strictly profitable only if $c_i>0$, in which case
\[
        \sum_{i' \in S_{-i}} c_{i'} < \sum_{i' \in S} c_{i'}.
\]
Thus, the second coordinate of $\Phi'$ strictly increases.

Since the game has finitely many states, namely $2^n$, it follows that every sequence of strict better responses terminates. Any terminal state is a pure Nash equilibrium.
\end{proof}

\begin{remark}
The proof of~\Cref{thm:existence} in fact shows that, restricted to positive-value states, $\gameMC$ is an \emph{ordinal potential game}~\citep{Monderer96:Potential}. The second coordinate in $\Phi'$ is only used to handle opt-out deviations from zero-value states.
\end{remark}

\subsection{Nonexistence under Shapley payments}
\label{sec:shapley-counter}

In stark contrast, we show that a pure Nash equilibrium need not exist under the normalized Shapley payment rule $\pSH$ defined in~\eqref{eq:shapley-payment}. This nonexistence persists even though $\pSH$ is a $1$-marginal-covering budget-share rule (\Cref{lem:shapley-efficiency-covering}), and even in the large-market regime.

\subsubsection{A quadratic submodular family}
\label{sec:quadraticsub}

In our construction, it is convenient to use value functions of the form
\begin{equation}
\label{eq:sc-quadratic-value}
    v(S)=\sum_{i\in S}a_i-
    \sum_{\{i,i' \}\subseteq S}b_{i i' },
\end{equation}
where $b_{i i' }=b_{i' i} \geq 0 $.

\begin{lemma}[Marginals and Shapley values of a quadratic game]
\label{lem:sc-quadratic-shapley}
For the value function defined in~\eqref{eq:sc-quadratic-value}, the following identities hold.
\begin{enumerate}
    \item For $i\notin S$,
    \begin{equation}
    \label{eq:sc-quadratic-marginal}
        v(S\cup\{i\}) - v(S)=a_i - \sum_{i' \in S} b_{i i'} .
    \end{equation}
    As a result, if $b_{i i'} \geq 0$ for every $i, i' \in N$, then $v$ is submodular. Moreover, $v$ is monotone if
    \begin{equation}
    \label{eq:sc-quadratic-monotonicity-condition}
        a_i - \sum_{i' \in N \setminus\{i\}} b_{i i'} \geq 0
        \qquad\text{for every } i\in N.
    \end{equation}
    \item For every nonempty $S$ and every $i\in S$,
    \begin{equation}
    \label{eq:sc-quadratic-shapley}
        \phi_i(S)=a_i- \frac12 \sum_{i' \in S\setminus\{i\}}b_{i i'}.
    \end{equation}
\end{enumerate}
\end{lemma}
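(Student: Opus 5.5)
The plan is a direct computation in two parts, following the order of the statement.

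For the marginal identity \eqref{eq:sc-quadratic-marginal}, I would expand $v(S\cup\{i\})$ using \eqref{eq:sc-quadratic-value} with $i\notin S$.
\begin{itemize}
    \item The linear part gains exactly the term $a_i$.
    \item The pairwise part gains exactly the pairs $\{i,i'\}$ with $i'\in S$.
    \item Subtracting $v(S)$ therefore leaves $a_i-\sum_{i'\in S}b_{ii'}$.
\end{itemize}
Submodularity then follows immediately. For $S\subseteq S'$ and $i\notin S'$, the difference of marginals is
\[
\sum_{i'\in S'\setminus S}b_{ii'},
\]
which is $\ge 0$ when all $b_{ii'}\ge 0$. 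For monotonicity, the marginal of $i$ at $S$ is decreasing in $S$, so its minimum over $S\subseteq N\setminus\{i\}$ is attained at $S=N\setminus\{i\}$. There it equals the left-hand side of \eqref{eq:sc-quadratic-monotonicity-condition}. Hence under that condition every marginal is nonnegative, and monotonicity follows by adding agents one at a time.

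For the Shapley formula \eqref{eq:sc-quadratic-shapley}, I would not manipulate the weighted sum \eqref{eq:Shapley} directly. Instead I would use the equivalent random-order description: $\phi_i(S)$ is the expected marginal contribution of $i$ to the set $T$ of its predecessors in a uniformly random ordering of $S$. That the weights $|T|!(|S|-|T|-1)!/|S|!$ are exactly the probability that the predecessor set equals $T$ is standard. The argument then runs as follows.
\begin{itemize}
    \item By \eqref{eq:sc-quadratic-marginal}, the marginal equals $a_i-\sum_{i'\in T}b_{ii'}$.
    \item By linearity of expectation,
    \[
    \phi_i(S)=a_i-\sum_{i'\in S\setminus\{i\}}b_{ii'}\Pr[i' \text{ precedes } i].
    \]
    \item By symmetry of a uniform ordering, each such probability is $1/2$, which gives \eqref{eq:sc-quadratic-shapley}.
\end{itemize}
As a sanity check, I would confirm efficiency: summing over $i\in S$ counts each $b_{ii'}$ twice with weight $1/2$, which recovers $v(S)$ and agrees with \Cref{lem:shapley-efficiency-covering}.

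The only mildly delicate step is justifying the random-order interpretation from \eqref{eq:Shapley}. Concretely, I would verify that the number of orderings of $S$ in which the predecessors of $i$ are exactly $T$ is $|T|!\,(|S|-|T|-1)!$. This is routine counting. As a fallback, one could instead compute $\sum_{T\ni i'} |T|!(|S|-|T|-1)!/|S|!$ over $T\subseteq S\setminus\{i\}$ and verify directly that it equals $1/2$.
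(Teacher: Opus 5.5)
Your proposal is correct and follows essentially the same route as the paper. The paper likewise derives the marginal identity by direct subtraction, gets submodularity from the difference $\sum_{i'\in S'\setminus S} b_{ii'}\ge 0$, and gets monotonicity from the minimum at $S=N\setminus\{i\}$ (which, as in your argument, uses the standing assumption $b_{ii'}\ge 0$ from~\eqref{eq:sc-quadratic-value}); it then obtains the Shapley formula through the uniformly random ordering, linearity of expectation, and $\Pr(i' \text{ precedes } i)=1/2$.
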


\begin{proof}
First, \eqref{eq:sc-quadratic-marginal} follows by subtracting the two expressions in~\eqref{eq:sc-quadratic-value}. If $S\subseteq S' \subseteq N \setminus\{i\}$, then
\[
\bigl(v(S\cup\{i\})-v(S)\bigr)
-
\bigl(v(S' \cup\{i\} )- v( S' ) \bigr)
=
\sum_{i' \in S' \setminus S} b_{i i' } \geq 0,
\]
which is precisely the diminishing marginal returns condition. Thus, $v$ is submodular. Moreover, the right-hand side of~\eqref{eq:sc-quadratic-marginal} is minimized at $S=N\setminus\{i\}$, so~\eqref{eq:sc-quadratic-monotonicity-condition} implies monotonicity.

For the Shapley value equation, we fix $S$ and $i\in S$. In a uniformly random permutation of $S$, let $P_i$ be the set of agents preceding $i$. The coefficient of a given $T\subseteq S\setminus\{i\}$ in~\eqref{eq:Shapley} is exactly the probability that $P_i=T$. By~\eqref{eq:sc-quadratic-marginal}, the marginal contribution of $i$ in that permutation is
\[
    a_i-\sum_{i' \in P_i} b_{i i'}.
\]
For each fixed $i' \in S\setminus\{i\}$, symmetry of a uniformly random ordering gives
\[
    \Pr(i' \in P_i) = \Pr(i' \text{ precedes }i)=\frac12.
\]
Taking expectations yields~\eqref{eq:sc-quadratic-shapley}.
\end{proof}

Our goal is to prove the following theorem.

\begin{theorem}[Nonexistence of pure Nash equilibria under Shapley payments]
\label{thm:shapley-no-pne}
For every $\bar\lambda>0$, there is an instance with four agents, a monotone submodular value function, and strictly positive costs satisfying $\max_{i \in N} c_i / B \leq \bar\lambda$ such that the game induced by $\pSH$ per~\eqref{eq:shapley-payment} has no pure Nash equilibrium.
\end{theorem}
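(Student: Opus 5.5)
The plan is to build an explicit four-agent instance from the quadratic family~\eqref{eq:sc-quadratic-value}. By \Cref{lem:sc-quadratic-shapley}, the Shapley payment in that family has the closed form
\[
\pSH_i(S)=B\,\frac{a_i-\tfrac12\sum_{i'\in S\setminus\{i\}}b_{ii'}}{\sum_{k\in S}a_k-\sum_{\{k,k'\}\subseteq S}b_{kk'}}.
\]
So agent $i\notin S$ strictly wants to enter iff $\pSH_i(S\cup\{i\})>c_i$, and agent $i\in S$ strictly wants to leave iff $\pSH_i(S)<c_i$. We will check these $2^4=16$ profiles directly and exhibit a profitable deviation from each. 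Monotonicity and submodularity then follow from~\eqref{eq:sc-quadratic-monotonicity-condition} once the $b$'s are small compared with the $a$'s.

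The large-market requirement shapes the construction. Shares over $S$ sum to one, so whenever $S\neq\emptyset$ some agent has share at least $1/|S|\ge 1/4\gg\bar\lambda$. We therefore use an \emph{anchor} agent~$1$ with $a_1=1$ and small cost. Its share is close to $1$ in every profile, so it is always in, and every profile without agent~$1$ has the profitable deviation of agent~$1$ entering. The other three agents get $a_j=\epsilon\alpha_j$, interactions $b_{jk}=\epsilon\beta_{jk}$ among themselves and $b_{1j}=\epsilon\beta_{1j}$ with the anchor, and costs $c_j=\epsilon B\gamma_j$. All their shares and costs are then $O(\epsilon)$, and choosing $\epsilon$ small gives $\max_i c_i/B\le\bar\lambda$.

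Expanding to first order in $\epsilon$, agent $j\in\{2,3,4\}$ is in iff
\[
\alpha_j-\tfrac12\beta_{1j}-\tfrac12\sum_{k\in S\setminus\{1,j\}}\beta_{jk}\;\gtrsim\;\gamma_j\,v(S).
\]
Here $v(S)=1+\epsilon\bigl(\sum_{k\in S\setminus\{1\}}(\alpha_k-\beta_{1k})-\sum\beta_{kk'}\bigr)$ enters only at the next order. This reduces the question to an eight-profile game among agents $2,3,4$ with thresholds, and we want it to have a best-response cycle such as $\{1\}\to\{1,2\}\to\{1,2,3\}\to\{1,3\}\to\{1,3,4\}\to\{1,4\}\to\{1,2,4\}\to\{1,2\}\to\cdots$, with the remaining profiles $\{1\}$ and $\{1,2,3,4\}$ also unstable. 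We will choose $\alpha,\beta,\gamma$ and verify each of the eight profiles by a strict inequality. Because all the inequalities are strict, the argument is robust to the $O(\epsilon^2)$ corrections.

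The main obstacle is that symmetric interactions with a constant denominator give a weighted potential game. If payments were exactly $\epsilon(\alpha_j-\tfrac12\sum\beta_{jk})$ against fixed thresholds, then $\sum_j(\alpha_j-\gamma_j)x_j-\tfrac12\sum\beta_{jk}x_jx_k$ would be an exact potential, and no cycle could exist. The asymmetry must therefore come from the denominator $v(S)$. The $\tfrac12$ factor in the Shapley value versus the full $b$ in $v(S)$ breaks the potential structure, as does the fact that $v(S)$ depends on the identity of the entrant. Choosing all parameters at the same scale $\epsilon$ would leave the denominator correction at second order, where it cannot create a cycle. My first attempt is instead to keep the first-order thresholds nearly tied and tune them so that the decisive comparisons are settled by the normalization $v(S)$. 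Concretely, I would make the $\alpha_j-\beta_{1j}$ differ substantially across $j$, so that $v(S)$ depends on who is present, while keeping each $\alpha_j-\tfrac12\beta_{1j}-\gamma_j$ near zero. A cleaner alternative, if needed, is to drop the anchor and let two comparable agents share the budget, with costs small but the relevant margins also small. I would settle the parameters by a small explicit linear-inequality search and then present the verified numbers in a table of the sixteen profiles.
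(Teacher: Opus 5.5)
Your architecture is the right one, and it is in fact the paper's. There is an anchor that always wants in: the paper's agent~$3$, with $a_3=1-t/4+O(t^2)$. There are three small agents whose first-order margins vanish, $a_j-\tfrac12 b_{3j}=c_j/B+O(t^2)$. Interactions among the small agents appear only at second order. The cycle is the same alternating singleton/pair six-cycle: your $\{2\}\to\{2,3\}\to\{3\}\to\{3,4\}\to\{4\}\to\{2,4\}$ on the small agents is the paper's $\{1\}\to\{1,2\}\to\{2\}\to\{2,4\}\to\{4\}\to\{1,4\}$.

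The genuine gap is that the proposal stops where the content of the theorem begins. No instance is exhibited, and the ``small explicit linear-inequality search'' is neither carried out nor shown to be feasible. You correctly observe that the first-order game is an exact potential game. It follows that every first-order margin along a better-response cycle must vanish exactly, so whether a cycle exists is decided entirely at order $\epsilon^2$. You gesture at this regime but never analyze it.

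Two of your side claims are wrong as stated.
\begin{itemize}
\item The claim ``all the inequalities are strict, so the argument is robust to the $O(\epsilon^2)$ corrections'' cannot hold, because the decisive margins are themselves $\Theta(\epsilon^2)$. Strictness must be established at second order, and robustness argued against $O(\epsilon^3)$ errors. The paper avoids asymptotics altogether: its coefficients are explicit rational functions of $t=\min\{1,3\bar\lambda\}$, and it checks all sixteen profiles exactly for every $t\in(0,1]$.
\item Monotonicity does not come for free from smallness of the $b$'s. For a small agent it is the constraint $a_j\ge b_{1j}+\sum_k b_{jk}$, i.e.\ $\delta_j\ge0$ in the notation below, with second-order slack needed when $\delta_j=0$. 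In the paper it holds with equality for agents~$2$ and~$4$.
\end{itemize}

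To close the gap, carry your expansion one order further. Take anchor value $a_1=1$, $c_j=\epsilon\gamma_jB$, $a_j-\tfrac12 b_{1j}=\epsilon\gamma_j+\epsilon^2\sigma_j$, $a_j-b_{1j}=\epsilon\delta_j+O(\epsilon^2)$ with $\delta_j\ge0$, and $b_{jk}=\epsilon^2\beta_{jk}$ among small agents. Then for every $S\ni 1,j$,
\[
\pSH_j(S)/B-c_j/B=\epsilon^2\bigl(s_j-\textstyle\sum_{k\in S\setminus\{1,j\}}w_{jk}\bigr)+O(\epsilon^3),
\]
where $s_j=\sigma_j-\gamma_j\delta_j$ and $w_{jk}=\gamma_j\delta_k+\tfrac12\beta_{jk}$.

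The cycle $\{A\}\to\{A,B\}\to\{B\}\to\{B,C\}\to\{C\}\to\{A,C\}\to\{A\}$ then amounts to three conditions: $w_{BA}<s_B<w_{BC}$, $w_{AC}<s_A<w_{AB}$, and $w_{CB}<s_C<w_{CA}$. The anchor-only and full profiles are then automatically unstable via agent $A$, since $0\le w_{AC}<s_A<w_{AB}\le w_{AB}+w_{AC}$.

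These inequalities are infeasible whenever $w_{jk}=\mu_jW_{jk}$ with $\mu_j>0$ and $W$ symmetric, since the reduced game is then a weighted potential game. In particular they fail when $\delta$ is proportional to $\gamma$, i.e.\ when the anchor interactions are proportional to the costs, because then $w$ is symmetric. This is the precise sense in which $v(S)$ must depend on who is present.

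The paper's instance is a witness. Rescale $v$ by $1/a_3$, which leaves all shares unchanged, and read it with anchor $3$ and $(A,B,C)=(1,2,4)$. Its parameters are
\begin{itemize}
\item $\gamma=(\tfrac14,\tfrac16,\tfrac13)$ and $\delta=(\tfrac14,0,0)$;
\item $b_{12}\approx t^2/36$, $b_{24}\approx 5t^2/36$, $b_{14}=0$.
\end{itemize}
Hence $s=(1,9,11)/144$, $w_{12}=2/144$, $w_{21}=8/144$, $w_{24}=w_{42}=10/144$, $w_{14}=0$, and $w_{41}=12/144$. Every decisive margin is therefore $\pm t^2/144$ to leading order, matching the paper's $\pm t^2/(12(t+12))$.
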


In the proof, we will make use of the following simple lemma.

\begin{lemma}
\label{lem:sc-deviation-sign}
Suppose that $v(T)>0$ for every nonempty $T\subseteq N$.
\begin{enumerate}
    \item If $i\notin S$, then opting in is a strict improvement for $i$ at $S$ if and only if
    \begin{equation}
        \label{eq:first-ineq}
        \frac{\phi_i(S\cup\{i\})}{v(S\cup\{i\})}>\frac{c_i}{B}.
    \end{equation}
    \item If $i\in S$, then opting out is a strict improvement for $i$ at $S$ if and only if
    \begin{equation}
        \label{eq:second-ineq}
        \frac{\phi_i(S)}{v(S)}<\frac{c_i}{B}.
    \end{equation}
\end{enumerate}
\end{lemma}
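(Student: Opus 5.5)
The plan is to unpack the definitions of the Shapley payment rule~\eqref{eq:shapley-payment} and of the agent utilities, using the standing hypothesis that $v(T)>0$ for every nonempty $T\subseteq N$ to remove the degenerate branch of the rule. Both parts are direct comparisons between the utility from opting in ($\pSH_i(\cdot)-c_i$) and the utility from opting out, which is always $0$.

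For the first part, I fix $i\notin S$. At the profile $S$, agent $i$ opts out and so has utility $u_i(S)=0$. If $i$ deviates to opt in, the realized set becomes $S\cup\{i\}$. This set is nonempty, so by hypothesis $v(S\cup\{i\})>0$, and $i$'s utility is
\[
B\,\frac{\phi_i(S\cup\{i\})}{v(S\cup\{i\})}-c_i .
\]
The deviation is a strict improvement exactly when this quantity is positive. Dividing by $B>0$ gives~\eqref{eq:first-ineq}.

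For the second part, I fix $i\in S$. Then $S$ is nonempty, so $v(S)>0$ and the current utility is $B\phi_i(S)/v(S)-c_i$. Opting out changes the realized set to $S\setminus\{i\}$ and gives $i$ utility $0$. This holds regardless of whether $S\setminus\{i\}$ is empty, because agents outside the realized set receive no payment. The deviation is strict exactly when $0>B\phi_i(S)/v(S)-c_i$, which after dividing by $B$ is~\eqref{eq:second-ineq}.

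I do not expect any genuine obstacle. The only point needing care is to check that the positivity hypothesis is what rules out the ``otherwise'' branch of~\eqref{eq:shapley-payment} in both directions of each equivalence, and to note explicitly that $B>0$, so dividing by $B$ preserves strict inequalities.
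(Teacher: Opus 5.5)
Your proposal is correct and matches the paper's proof: in each case you compare the zero utility of opting out with $B\phi_i(\cdot)/v(\cdot)-c_i$ at the relevant coalition and divide by $B>0$. You also state explicitly that the positivity hypothesis rules out the ``otherwise'' branch of the Shapley rule, which the paper uses only implicitly.
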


\begin{proof}
If $i\notin S$, its current utility is zero. Its utility after opting in reads
\[
    B\left(
        \frac{\phi_i(S\cup\{i\})}{v(S\cup\{i\})}
        -\frac{c_i}{B}
    \right).
\]
This is strictly positive exactly when~\eqref{eq:first-ineq} holds. Similarly, if $i\in S$, its current utility is
\[
    B\left(\frac{\phi_i(S)}{v(S)}-\frac{c_i}{B}\right),
\]
whereas opting out gives zero utility. This means that opting out is strictly profitable exactly when~\eqref{eq:second-ineq} holds.
\end{proof}

\begin{proof}[Proof of~\Cref{thm:shapley-no-pne}]
We fix $\bar\lambda>0$ and set
\[
    t\coloneqq \min\{1,3\bar\lambda\},
\]
so $0 < t \leq 1$. Let $N=\{1,2,3,4\}$. The costs are chosen such that
\begin{equation}
\label{eq:sc-costs-parametric}
    \frac{c_1}{B}=\frac{t}{4},
    \quad
    \frac{c_2}{B}=\frac{t}{6},
    \quad
    \frac{c_3}{B}=\frac{t}{5},
    \quad
    \frac{c_4}{B}=\frac{t}{3}.
\end{equation}
Then $\max_{i \in N} c_i / B = t / 3 \leq \bar\lambda$, as claimed. We define the value function $v$ by~\eqref{eq:sc-quadratic-value} with the following coefficients. For the singleton terms,
\begin{equation}
\label{eq:sc-a-coefficients}
    a_1=\frac{t(t+9)}{3(t+12)}, \quad
    a_2=\frac{t(24-7t)}{6(t+12)}, \quad
    a_3=\frac{2(6-t-t^2)}{t+12}, \quad
    a_4=\frac{t(48-7t)}{6(t+12)}.
\end{equation}
For the pair coefficients,
\begin{equation}
\label{eq:sc-b-coefficients}
    b_{12}=\frac{t^2}{3(t+12)}, \quad b_{13}=0, \quad
    b_{14} =0, \quad
    b_{23} =\frac{t(24-19t)}{6(t+12)}, \quad
    b_{24} =\frac{5t^2}{3(t+12)}, \quad
    b_{34} = \frac{t(48-17t)}{6(t+12)}.
\end{equation}
By construction, all coefficients $a_i$ are strictly positive and all $b_{ij}$ are nonnegative for $0 <t \leq1$. Moreover, $
    v(N)=\sum_{i=1}^4 a_i -\sum_{1\leq i< i' \leq 4} b_{i i'} = 1$. We now verify that $a_i - \sum_{i' \in N \setminus \{i \} } b_{i i'} \geq 0$ for every $i \in N$. Using the definition of these coefficients in~\eqref{eq:sc-a-coefficients} and~\eqref{eq:sc-b-coefficients}, we have
\begin{equation*}
\begin{aligned}
    a_1 - b_{12} - b_{1 3} - b_{1 4}
        &=\frac{3t}{t+12},\\
    a_2-b_{12}-b_{23}-b_{24}
        &=0,\\
    a_3- b_{1 3} - b_{23} -b_{34}
        &=\frac{2(2t-3)(t-2)}{t+12},\\
    a_4 - b_{1 4} - b_{24} - b_{34}
        &=0.
\end{aligned}
\end{equation*}
All these quantities are nonnegative when $t \in (0, 1]$. \Cref{lem:sc-quadratic-shapley} therefore implies that $v$ is monotone and submodular. By construction, $v(\emptyset)=0$, so $v$ is also normalized. Moreover, every nonempty coalition has positive value: if $i\in S$, monotonicity and the fact that $a_i>0$ give
$v(S)\geq v(\{i\})=a_i>0$.

The analysis is split into two cases, depending on whether agent $3$ is part of the coalition or not.

\paragraph{Coalitions that omit agent $3$.}
First, we consider a coalition $S \subseteq N$ such that $3 \notin S$. We let $T=S\cup\{3\}$. By~\Cref{lem:sc-quadratic-shapley} and the nonnegativity of $b_{i i'}$ for any $i, i' \in N$, we have $\phi_3(T)\geq \phi_3(N)$. Moreover,
monotonicity and the fact that $v(N) = 1$ imply that $v(T) \leq 1$. As a result,
\begin{align*}
    \frac{\phi_3(T)}{v(T)}-\frac{c_3}{B}
    &\geq \phi_3(N)-\frac{t}{5}\\
    &=\frac{t^2-8t+12}{t+12}-\frac{t}{5}\\
    &=\frac{4(t^2-13t+15)}{5(t+12)}>0.
\end{align*}
By~\Cref{lem:sc-deviation-sign}, agent $3$ has a strict incentive to opt in at every coalition that omits it, so no such coalition is a pure Nash equilibrium.

\paragraph{Coalitions that contain agent $3$.} We now consider coalitions that contain agent $3$. We set
\[
    \epsilon\coloneqq \frac{t^2}{12(t+12)}>0.
\]
We claim that every one of the eight coalitions containing agent $3$ has a strict unilateral deviation. The following table lists a deviating agent together with the deviation sign per~\Cref{lem:sc-deviation-sign}. For an opt-in deviation, the displayed quantity is evaluated at the coalition after the agent joins, while for an opt-out deviation, it is evaluated at the current coalition.

\begin{center}
\renewcommand{\arraystretch}{1.85}
\setlength{\extrarowheight}{1.5pt}
\begin{tabular}{@{}lll@{}}
\toprule
Current coalition $S$ & Strict deviation & Deviation sign (\Cref{lem:sc-deviation-sign}) \\
\midrule
$\{3\}$
    & agent $1$ opts in
    & $\dfrac{\phi_1(\{1,3\})}{v(\{1,3\})}-\dfrac{c_1}{B}
        =\dfrac{t^2(5t+1)}{4(36+3t-5t^2)}>0$ \\
$\{1,3\}$
    & agent $2$ opts in
    & $\dfrac{\phi_2(\{1,2,3\})}{v(\{1,2,3\})}-\dfrac{c_2}{B}
        =\epsilon>0$ \\
$\{2,3\}$
    & agent $4$ opts in
    & $\dfrac{\phi_4(\{2,3,4\})}{v(\{2,3,4\})}-\dfrac{c_4}{B}
        =\dfrac{t^2}{24(6-t)}>0$ \\
$\{3,4\}$
    & agent $1$ opts in
    & $\dfrac{\phi_1(\{1,3,4\})}{v(\{1,3,4\})}-\dfrac{c_1}{B}
        =\epsilon>0$ \\
$\{1,2,3\}$
    & agent $1$ opts out
    & $\dfrac{\phi_1(\{1,2,3\})}{v(\{1,2,3\})}-\dfrac{c_1}{B}
        =-\epsilon<0$ \\
$\{1,3,4\}$
    & agent $4$ opts out
    & $\dfrac{\phi_4(\{1,3,4\})}{v(\{1,3,4\})}-\dfrac{c_4}{B}
        =-\epsilon<0$ \\
$\{2,3,4\}$
    & agent $2$ opts out
    & $\dfrac{\phi_2(\{2,3,4\})}{v(\{2,3,4\})}-\dfrac{c_2}{B}
        =-\dfrac{t^2}{24(6-t)}<0$ \\
$N$
    & agent $1$ opts out
    & $\dfrac{\phi_1(N)}{v(N)}-\dfrac{c_1}{B}
        =-\epsilon<0$ \\
\bottomrule
\end{tabular}
\end{center}

We include below the intermediate calculations that led to the above table. \Cref{lem:sc-quadratic-shapley} and direct substitution give the following values.
\begin{equation*}
    v(\{1,3\}) =\frac{36+3t-5t^2}{3(t+12)}, \quad
v(\{1,2,3\}) =1, \quad
v(\{2,3,4\}) =\frac{2(6-t)}{t+12}, \quad
v(\{1,3,4\}) =v(N)=1.
\end{equation*}
For the Shapley values, we have
{\small
\begin{align*}
\phi_1(\{1,3\}) &= \frac{t(t+9)}{3(t+12)}, \;
\phi_1(\{1,2,3\}) =\frac{t(t+18)}{6(t+12)}, \;
\phi_1(\{1,3,4\}) = \frac{t(t+9)}{3(t+12)}, \;
\phi_1(N) =\frac{t(t+18)}{6(t+12)}, \\
\phi_2(\{1,2,3\}) &= \frac{t(t+8)}{4(t+12)}, \;
\phi_2(\{2,3,4\}) =\frac{t(24-5t)}{12(t+12)}, \;
\phi_4(\{2,3,4\}) =\frac{t(48-7t)}{12(t+12)}, \;
\phi_4(\{1,3,4\}) =\frac{t(t+16)}{4(t+12)}. \;
\end{align*}
}
We conclude that~\Cref{lem:sc-deviation-sign} rules out every coalition containing agent $3$. As a result, we have excluded all possible $2^4$ action profiles, so the game has no pure Nash equilibrium.
\end{proof}

\begin{corollary}[A strict better-response cycle]
\label{cor:sc-cycle}
Every instance in the family of instances constructed in~\Cref{thm:shapley-no-pne} contains the strict better-response cycle
\[
\{1,3\}
\xrightarrow{\;2\text{ in}\;}
\{1,2,3\}
\xrightarrow{\;1\text{ out}\;}
\{2,3\}
\xrightarrow{\;4\text{ in}\;}
\{2,3,4\}
\xrightarrow{\;2\text{ out}\;}
\{3,4\}
\xrightarrow{\;1\text{ in}\;}
\{1,3,4\}
\xrightarrow{\;4\text{ out}\;}
\{1,3\}.
\]
In particular, the game admits no ordinal potential.
\end{corollary}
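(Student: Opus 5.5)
The plan is to read each arrow of the cycle off the table in the proof of \Cref{thm:shapley-no-pne}, via \Cref{lem:sc-deviation-sign}, and then use the definition of an ordinal potential.

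The six transitions pair with table rows as follows.
\begin{itemize}
\item $\{1,3\}\to\{1,2,3\}$, agent $2$ opts in: the table shows $\phi_2(\{1,2,3\})/v(\{1,2,3\})-c_2/B=\epsilon>0$.
\item $\{1,2,3\}\to\{2,3\}$, agent $1$ opts out: the row for $\{1,2,3\}$ gives sign $-\epsilon<0$.
\item $\{2,3\}\to\{2,3,4\}$, agent $4$ opts in: the sign is $t^2/(24(6-t))>0$.
\item $\{2,3,4\}\to\{3,4\}$, agent $2$ opts out: the sign is $-t^2/(24(6-t))<0$.
\item $\{3,4\}\to\{1,3,4\}$, agent $1$ opts in: the sign is $\epsilon>0$.
\item $\{1,3,4\}\to\{1,3\}$, agent $4$ opts out: the sign is $-\epsilon<0$.
\end{itemize}
Every nonempty coalition has positive value, as verified in the proof of the theorem, so \Cref{lem:sc-deviation-sign} applies. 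It turns each sign into the statement that the corresponding unilateral switch strictly increases the deviating agent's utility. The deviator's utility before an opt-in, and after an opt-out, is zero. Hence each arrow is a strict better response, and the six arrows return to $\{1,3\}$, closing the cycle. This holds for every $t\in(0,1]$ because $\epsilon>0$ and $6-t>0$.

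For the second claim, suppose an ordinal potential $\Psi$ existed. By definition, every strict unilateral improvement strictly increases $\Psi$. Going around the cycle would then give $\Psi(\{1,3\})<\Psi(\{1,3\})$, a contradiction.

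The only real work is confirming that the table entries used are correct. They were already computed in the theorem's proof from the closed forms in \Cref{lem:sc-quadratic-shapley}, so I expect no further obstacle. If any entry needed rechecking, I would recompute it by substituting the coefficients in \eqref{eq:sc-a-coefficients} and \eqref{eq:sc-b-coefficients} into \eqref{eq:sc-quadratic-shapley}.
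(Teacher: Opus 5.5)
Your proposal is correct and matches the paper's proof: both read the six transitions off the table in the proof of \Cref{thm:shapley-no-pne} and invoke \Cref{lem:sc-deviation-sign}, which applies because every nonempty coalition has positive value. Your explicit contradiction $\Psi(\{1,3\})<\Psi(\{1,3\})$ for a putative ordinal potential is the standard step that the paper leaves implicit in ``in particular.''
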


\begin{proof}
The six better-response transitions are verified (due to~\Cref{lem:sc-deviation-sign}), in order, by
\begin{align*}
\frac{\phi_2(\{1,2,3\})}{v(\{1,2,3\})}&>\frac{c_2}{B},
&\frac{\phi_1(\{1,2,3\})}{v(\{1,2,3\})}&<\frac{c_1}{B},
&\frac{\phi_4(\{2,3,4\})}{v(\{2,3,4\})}&>\frac{c_4}{B},\\
\frac{\phi_2(\{2,3,4\})}{v(\{2,3,4\})}&<\frac{c_2}{B},
&\frac{\phi_1(\{1,3,4\})}{v(\{1,3,4\})}&>\frac{c_1}{B},
&\frac{\phi_4(\{1,3,4\})}{v(\{1,3,4\})}&<\frac{c_4}{B}.
\end{align*}
These inequalities were established in the proof of~\Cref{thm:shapley-no-pne}, so the claim follows.
\end{proof}



\subsection{Complexity of pure Nash equilibria}

We next examine the complexity of computing pure Nash equilibria in $\gameMC$.

\subsubsection{Exponential better-response paths}

\label{sec:convergence-time}

\Cref{thm:existence} shows that every sequence of strict better responses terminates at a pure Nash equilibrium. Here, we show that the number of steps required for convergence is exponential in the number of agents.

The proof reduces from the standard \textsc{FLIP} dynamics---whereby exactly one node changes side---for weighted \textsc{MaxCut}. Specifically, here we are given a weighted graph $G=(V,E,w)$ with nonnegative integer edge weights. We let
\begin{equation}
    \label{eq:cutvalue}
    \cutvalue(X)=\sum_{\{i,j\}\in E}w_{ij}\mathbbm{1}\{|\{i,j\}\cap X|=1\}
\end{equation}
be the total weight of a cut $(X, V \setminus X)$, where $X \subseteq V$. A \textsc{FLIP} step flips one vertex if doing so strictly increases the cut weight. We will use the fact that weighted \textsc{MaxCut} admits instances and initial cuts with exponentially long \textsc{FLIP} improvement paths~\citep{Monien10:Power,Schaffer91:Simple}.

Our reduction proceeds as follows. For every vertex $u \in V$, we let $d_u = \sum_{v: \{u, v\} \in E} w_{u v}$, and choose $L \geq \max_{u \in V} d_u$. We create one agent for each vertex in $V$ and one additional agent $a$. For every $S \subseteq \{a \} \cup V$, the value function is defined as
\begin{equation}
    \label{eq:valuefun}
    v(S) = \frac{L}{\gamma}\mathbbm{1}\{a\in S\}
        + L|S\cap V|+\cutvalue(S\cap V),
\end{equation}
where $\gamma \in (0, 1)$. In terms of the costs, we set $c_a = 0$ and $c_i = \gamma$ for all $i \in V$. We first formalize that the value function defined in~\eqref{eq:valuefun} is indeed monotone and submodular. (Of course, $v(\emptyset) = 0$.)

\begin{lemma}
\label{lem:cut-valuation}
$v : 2^{ V \cup \{ a \} } \to \R_{\geq 0}$ defined in~\eqref{eq:valuefun} is monotone and submodular.
\end{lemma}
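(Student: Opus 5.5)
The plan is to compute the marginal contribution of each element directly and check that it is nonnegative (monotonicity) and nonincreasing in the base set (submodularity). The function in~\eqref{eq:valuefun} is a sum of three parts: the term $\frac{L}{\gamma}\mathbbm{1}\{a\in S\}$, the modular term $L|S\cap V|$, and the cut term $\cutvalue(S\cap V)$.

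\textbf{Agent $a$.} For $a\notin S$, adding $a$ changes only the first term. So the marginal is exactly $L/\gamma>0$, independent of $S$. This is both nonnegative and constant in $S$, so $a$ causes no trouble.

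\textbf{A vertex agent $u\in V$.} Let $X=S\cap V$ and suppose $u\notin X$. Adding $u$ puts every edge $\{u,w\}$ with $w\notin X$ into the cut and removes every edge $\{u,w\}$ with $w\in X$. Hence
\[
v(S\cup\{u\})-v(S)=L+\sum_{w\notin X,\,\{u,w\}\in E}w_{uw}-\sum_{w\in X,\,\{u,w\}\in E}w_{uw}=L+d_u-2\sum_{w\in X,\,\{u,w\}\in E}w_{uw}.
\]

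\textbf{Monotonicity.} The subtracted sum is at most $d_u$, so the marginal is at least $L-d_u\ge 0$ by the choice $L\ge\max_u d_u$.

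\textbf{Submodularity.} The expression depends on $S$ only through $X$. As $X$ grows, the sum $\sum_{w\in X}w_{uw}$ weakly increases because the weights are nonnegative, so the marginal weakly decreases. Since $a$'s marginal is constant, enlarging $S$ by $a$ or by vertices never increases any element's marginal. Together with $v(\emptyset)=0$, this establishes both properties.

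The only step needing care is the sign bookkeeping in the cut-change formula, which is where the choice of $L$ enters. Equivalently, one could observe that the cut function is submodular and that $L|X|+\cutvalue(X)$ is monotone because $L$ dominates all degrees.
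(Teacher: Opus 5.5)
Your proof is correct and is essentially the paper's argument. The paper checks diminishing marginals for each edge indicator $h_e$ separately and then uses that nonnegative sums of submodular functions (plus the modular terms) are submodular. You instead aggregate over the edges into the single closed-form marginal $L + d_u - 2\sum_{w\in X,\,\{u,w\}\in E} w_{uw}$ and read off both properties from it, with the same monotonicity bound $L - d_u \ge 0$.
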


\begin{proof}
For a single edge $e=\{u, v\}$, we define
\[
        h_e(X)=\mathbbm{1}\{|\{u, v\}\cap X|=1\}.
\]
We verify that $h_e$ is submodular. Let $X \subseteq X' \subseteq V$ and $i \notin X'$. If $i \notin e$, then $h_e( X \cup \{i\} ) = h_e(X)$ and $h_e( X' \cup \{i\} ) = h_e(X')$. If $i = u$, then
\[
    h_e( X \cup \{i\} ) - h_e(X) =
    \begin{cases}
    1 & \text{if } v \notin X,\\
    -1 & \text{if } v \in X.
    \end{cases}
\]
and
\[
    h_e( X' \cup \{i\} ) - h_e(X') =
    \begin{cases}
    1 & \text{if } v \notin X',\\
    -1 & \text{if } v \in X'.
    \end{cases}
\]
If $h_e( X \cup \{i\} ) - h_e(X) = -1$, then $v \in X$, which in turn implies $v \in X'$. So, $h_e( X' \cup \{i\} ) - h_e(X') = -1 = h_e( X \cup \{i\} ) - h_e(X)$. In other words, the marginal gain can only decrease as $X$ grows. The case where $i = v$ is symmetric. We conclude that $h_e$ is submodular. As a result, the cut function $\cutvalue$ is also submodular since it is a nonnegative weighted sum of submodular functions. Moreover, the terms $ \frac{L}{\gamma} \mathbbm{1}\{a\in S\}$ and $L|S\cap V|$ are modular, so $v$ is indeed submodular as a sum of submodular functions.

It remains to verify monotonicity. Adding the agent $a$ can clearly never decrease $v$ since $L \geq 0$ and $\gamma > 0$. If $i \in V \setminus X$, then
\[
        \cutvalue(X\cup\{i\})-\cutvalue(X) 
        \ge -d_i,
\]
so adding $i$ changes $v$ by at least $L - d_i \ge 0$. This shows that $v$ is monotone, concluding the proof.
\end{proof}

We are now ready to establish the existence of exponential better-response paths.

\begin{theorem}[Exponential better-response paths]
\label{theorem:exp-br-lower}
There are instances of $\gameMC$ with a monotone submodular valuation function $v$ and $c_i<B$ for every $i\in N$ for which there exists a strict better-response path (from some initial state) that has length $2^{\Omega(n)}$.
\end{theorem}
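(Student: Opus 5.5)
The plan is to use the reduction set up just before the statement: embed \textsc{FLIP} dynamics for weighted \textsc{MaxCut} into strict better responses of $\gameMC$ under the valuation~\eqref{eq:valuefun}. Let $(G,X_0)$ be a weighted graph with integer weights, together with an initial cut, that admits a \textsc{FLIP} improvement path of length $2^{\Omega(|V|)}$~\citep{Monien10:Power,Schaffer91:Simple}. I take $B=1$, $c_a=0$, $c_i=\gamma$ for $i\in V$, and $L\ge\max_u d_u$. By \Cref{lem:cut-valuation}, $v$ is monotone and submodular, and $c_i<B$ holds because $\gamma<1$. The initial state is $S_0=\{a\}\cup X_0$. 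Every state on the path will contain $a$, and each vertex agent's participation encodes its side of the cut.

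The core step is to show that, once $\gamma$ is small enough, every strictly improving flip corresponds to a strict better response in the game. Fix $S=\{a\}\cup X$ and $i\in V\setminus X$, and let $\delta=\cutvalue(X\cup\{i\})-\cutvalue(X)\in\mathbb{Z}$. The marginal contribution of $i$ is $L+\delta$. By the criterion in the proof of \Cref{thm:existence}, opting in is strictly profitable iff $L+\delta>\gamma\, v(S\cup\{i\})$. Since $\gamma v(S\cup\{i\}) = L+\gamma\bigl(L(|X|+1)+\cutvalue(X\cup\{i\})\bigr)$, the condition becomes
\[
\delta>\gamma\bigl(L(|X|+1)+\cutvalue(X\cup\{i\})\bigr).
\]
Now choose $\gamma$ so that $\gamma\bigl(L(|V|+1)+\sum_e w_e\bigr)<1$. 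The right-hand side then lies in $(0,1)$, and because $\delta$ is an integer the condition is equivalent to $\delta\ge 1$, i.e.\ to the flip strictly increasing the cut.

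The opt-out case is symmetric. For $i\in X$, let $\delta'=\cutvalue(X)-\cutvalue(X\setminus\{i\})$. Opting out is strictly profitable iff $L+\delta'<\gamma v(S)=L+\gamma\bigl(L|X|+\cutvalue(X)\bigr)$, which is equivalent to $\delta'\le 0$. In particular it holds whenever removing $i$ strictly increases the cut. So every strictly improving \textsc{FLIP} step, whether moving $i$ into or out of $X$, is a strict better response of the vertex agent $i$, and the state keeps the form $\{a\}\cup X$. The exponential \textsc{FLIP} path therefore lifts to a strict better-response path of the same length, which is $2^{\Omega(n)}$ since $n=|V|+1$.

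The main obstacle is calibrating $\gamma$ so that the "noise" term $\gamma(\cdots)$ lies strictly inside $(0,1)$ for every state. This is where integrality of the weights and the $L/\gamma$ weight on agent $a$ are used: the $L/\gamma$ term makes the $L$ parts cancel exactly, leaving only the cut change against a sub-unit threshold. I also need to check that $\gamma$ stays polynomially describable, which is not required for the length bound itself. Finally, it is worth noting that agent $a$ never needs to move: with $c_a=0$ and nonnegative payments it has no strict incentive to opt out, although the path does not depend on this.
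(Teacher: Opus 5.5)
Your proposal is correct and follows essentially the paper's argument: the same \textsc{FLIP} reduction with the valuation~\eqref{eq:valuefun}, $c_a=0$, $c_i=\gamma$, and $\gamma$ small enough that integrality of cut changes turns every improving flip into a strict better response. The only difference is cosmetic. You check the agent's profitability inequality directly, with threshold $\gamma\bigl(L(|X|+1)+\cutvalue(X\cup\{i\})\bigr)$, whereas the paper computes the change in the potential $\Phi$, with threshold $\frac{\gamma}{1-\gamma}\bigl(L(k+1)+\cutvalue(X)\bigr)$; these are algebraically equivalent by the ``if and only if'' in the proof of \Cref{thm:existence}.
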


In fact, this result holds even when $\lambda = \max_{i \in N} c_i/B \ll 1$, so the large-market assumption does not preclude this lower bound.

\begin{proof}[Proof of~\Cref{theorem:exp-br-lower}]
We let $B=1$. Let $G=(V,E,w)$ be such a weighted \textsc{MaxCut} instance with a strict \textsc{FLIP} path
\[
        X^{(0)},X^{(1)},\ldots,X^{(T)},
\]
where $T=2^{\Omega(|V|)}$ and each $X^{(t)}\subseteq V$ is one side of the cut. In particular, consecutive sets differ in exactly one vertex.

For a set $S=\{a\}\cup X$, with $k=|X|$, the first coordinate of the potential from the proof of~\Cref{thm:existence} is
\[
        v(\{a\}\cup X)\prod_{i \in X}(1-\gamma)
        =
        \left(\frac{L}{\gamma}+Lk+\cutvalue(X)\right)(1-\gamma)^k.
\]

Now, consider a \textsc{FLIP} step from $X$ to $X'=X \triangle\{i\}$ such that
\[
        \cutvalue(X')=\cutvalue(X)+\delta \text{ for some } \delta\ge 1.
\]
If $i\notin X$, then the corresponding agent opts in. The change in the first coordinate of the potential is
\begin{align*}
(1-\gamma)^{k+1} \left(\frac{L}{\gamma}+L(k+1)+\cutvalue(X)+\delta\right)- &(1 - \gamma)^k \left(\frac{L}{\gamma}+Lk+\cutvalue(X)\right) \\
&= (1 - \gamma)^k \left( (1-\gamma)\delta-\gamma\bigl(L(k+1)+\cutvalue(X)\bigr) \right).
\end{align*}
Therefore, this opt-in deviation strictly increases the potential---equivalently, strictly increases that agent's utility---whenever
\[
        \delta>\frac{\gamma}{1-\gamma}\bigl(L(k+1)+\cutvalue(X)\bigr).
\]
Conversely, if $i\in X$, then the corresponding agent opts out. The change in the first coordinate of the potential is
\begin{align*}
(1-\gamma)^{k-1} \left(\frac{L}{\gamma}+L(k-1) + \cutvalue(X)+\delta\right)
&-(1-\gamma)^{k} \left(\frac{L}{\gamma}+L k +\cutvalue(X)\right) \\
&= (1 - \gamma)^{k-1} \left( \delta+\gamma\bigl(Lk+\cutvalue(X)\bigr) \right) > 0.
\end{align*}
As a result, every cut-improving opt-out deviation strictly increases the potential function. Combining, if we choose $\gamma$ small enough so that
\[
        \frac{\gamma}{1-\gamma}
        \left(L(|V|+1)+\max_{X\subseteq V}\cutvalue(X)\right)<1,
\]
then every opt-in and opt-out deviation along the \textsc{FLIP} path strictly increases the potential. By the proof of~\Cref{thm:existence}, each such strict potential increase corresponds to a strict profitable switch in $\gameMC$. That is,
\[
        S^{(t)}=\{a\}\cup X^{(t)},
        \text{ for } t=0,1,\ldots,T,
\]
is a strict better-response path in $\gameMC$. The constructed game has $n=|V|+1$ agents, so the path length is $T=2^{\Omega(n)}$. Since $\lambda = \gamma$ and $\gamma$ can be chosen arbitrarily small, the lower bound holds even under an arbitrarily strong large-market condition.
\end{proof}

\subsubsection{\PLS-completeness}

Taking a step further, we show that computing a pure Nash equilibrium in $\gameMC$ is \PLS-complete. \PLS, which stands for polynomial local search, was introduced by~\citet{Johnson88:Easy}. 

\begin{theorem}[\PLS-completeness]
\label{thm:pls-complete}
Computing a pure Nash equilibrium of $\gameMC$ is \PLS-complete, even for monotone submodular valuation functions and costs satisfying $c_i<B$ for every $i\in N$.
\end{theorem}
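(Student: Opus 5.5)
Membership and hardness are proved separately. Hardness reuses the \textsc{MaxCut} construction from the proof of~\Cref{theorem:exp-br-lower}, upgraded from a path-preserving embedding to a PLS reduction.

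\emph{Membership in \PLS.} The search space is all profiles $S\subseteq N$; any $S$ (say $\emptyset$) is a valid initial solution. The neighborhood of $S$ is all unilateral switches $S\triangle\{i\}$, and the objective is the lexicographic potential $\Phi'(S)=(\Phi(S),-\sum_{i\in S}c_i)$ from~\Cref{thm:existence}. With value-oracle access (or an explicitly given polynomial-time computable $v$, as in the reduction) and rational inputs, $\Phi'$ is polynomial-time computable. We can also test in polynomial time whether some neighbor strictly improves it. The proof of~\Cref{thm:existence} shows that a profile has no $\Phi'$-improving neighbor exactly when it has no strictly profitable deviation, i.e., when it is a pure Nash equilibrium. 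One cleanup is needed: compare by the utility test directly (opt-in profitable iff $(1-c_i/B)v(S\cup\{i\})>v(S)$, and so on). This avoids the degenerate zero-value cases and keeps local optima exactly the equilibria. The potential involves a product of $n$ rational factors, but its bit length is polynomial, so it fits the \PLS{} framework.

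\emph{Hardness.} Reduce from \textsc{LocalMaxCut} under \textsc{FLIP}, which is \PLS-complete~\citep{Schaffer91:Simple}. Given $G=(V,E,w)$ with integer weights, build the instance of~\eqref{eq:valuefun}: $B=1$, $c_a=0$, $c_i=\gamma$ for $i\in V$, and $L\ge\max_u d_u$. Choose a rational $\gamma$ with $\frac{\gamma}{1-\gamma}\bigl(L(|V|+1)+W\bigr)<1$, where $W=\sum_e w_e$; this has polynomial bit length. By~\Cref{lem:cut-valuation}, $v$ is monotone submodular and evaluable in polynomial time, and $c_i<B$ for every agent. The solution map sends an equilibrium $S$ to the cut $X=S\cap V$.

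The key claim is that $X$ is \textsc{FLIP}-locally optimal whenever $S$ is a pure Nash equilibrium. Since $c_a=0$ and $a$ has positive marginal $L/\gamma$, opting in is strictly profitable for $a$ whenever it is out, so $a\in S$. Now suppose some flip of vertex $i$ improves the cut by $\delta\ge1$ (weights are integral). The computation in the proof of~\Cref{theorem:exp-br-lower} then shows that the corresponding switch of agent $i$ strictly increases $\Phi$, with $\Phi(S)>0$ because $a\in S$. By the equivalences in the proof of~\Cref{thm:existence}, that switch is a strictly profitable deviation, which contradicts equilibrium. Hence no improving flip exists.

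The main obstacle is making the gap argument airtight in the opt-in direction. It uses the bound $\frac{\gamma}{1-\gamma}\bigl(L(k+1)+\cutvalue(X)\bigr)<1\le\delta$, which relies on integrality. We must also check that $\gamma$ and $L$ have polynomial encoding size and that the reduction and solution maps are polynomial. Both points follow from the explicit choice of $\gamma$ above.
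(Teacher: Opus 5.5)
Your proposal is correct and follows the paper's proof essentially step for step. Membership comes from the potential in \Cref{thm:existence}, and hardness from the anchored \textsc{MaxCut} embedding~\eqref{eq:valuefun}, with $\gamma$ chosen so that $\frac{\gamma}{1-\gamma}\bigl(L(|V|+1)+\sum_e w_e\bigr)<1$ and integrality of cut gains forcing equilibria to be \textsc{FLIP}-local optima; your contrapositive phrasing and your use of strict profitability to define improving neighbors (so local optima coincide exactly with equilibria) are only minor refinements of details the paper leaves implicit.
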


\begin{proof}
Membership in $\PLS$ follows readily from the potential argument in the proof of~\Cref{thm:existence}.\footnote{This hinges on standard complexity assumptions concerning the representation of the value function, which we do not spell out here.} The feasible solutions are the subsets of $N$, the neighborhood corresponds to unilateral agent deviations, and a local optimum is a pure Nash equilibrium.

For the hardness, we reduce from the \PLS-complete problem of local \textsc{MaxCut} under the \textsc{FLIP} neighborhood~\citep{Schaffer91:Simple}. Given such an instance $G=(V,E,w)$, we use the value function in~\eqref{eq:valuefun} with $B=1$ and $\gamma>0$ small enough so that
\[
        \frac{\gamma}{1-\gamma}
        \left(L(|V|+1)+\max_{X\subseteq V}\cutvalue(X)\right)<1.
\]
For example, it suffices to use the upper bound $\max_{X \subseteq V} \cutvalue(X)\le \sum_{e\in E} w_e$ when choosing $\gamma$, so it can be represented with polynomially many bits.

We show that pure Nash equilibria of $\gameMC = \gameMC(G)$ induce locally optimal cuts. First, every pure Nash equilibrium contains the anchor $a$. Indeed, if $a \notin S$, then adding $a$ strictly increases the value, so opting in is strictly profitable for agent $a$ since $c_a = 0$. Next, we consider any set $S=\{a\}\cup X$ and a vertex $i\in V$. In the proof of~\Cref{theorem:exp-br-lower}, we established the following conditions. If $i \notin X$, then opting in is strictly profitable if and only if
\[
         \cutvalue(X\triangle\{i\})-\cutvalue(X) >
        \frac{\gamma}{1-\gamma}\bigl(L(|X|+1)+\cutvalue(X)\bigr).
\]
If $i \in X$, then opting out is strictly profitable if and only if
\[
        \cutvalue(X\triangle\{i\})-\cutvalue(X) + \gamma\bigl(L|X|+\cutvalue(X)\bigr) > 0.
\]
By our choice of $\gamma$, together with the fact that we have a pure Nash equilibrium of $\gameMC$, we have
\[
    \cutvalue(X\triangle\{i\})-\cutvalue(X) < 1.
\]
Since the weights are assumed to be integers, $\cutvalue(X\triangle\{i\})-\cutvalue(X)$ is also an integer, thereby implying $\cutvalue(X\triangle\{i\})-\cutvalue(X) \leq 0$ for any $i \in V$. So, flipping any vertex cannot increase the weight of the cut.
\end{proof}

\subsubsection{Pseudo-polynomial bound for additive valuation functions}
\label{subsec:additive-convergence}

We complement the foregoing hardness results by establishing a positive result when the value function is additive, meaning that $v(S)=\sum_{i\in S} v_i$, where $v_i \ge 0$. In this case, assuming $v(S)>0$, the marginal contribution payment rule becomes
\[
        \pMC_i(S) = B \frac{v_i}{v(S)}.
\]
In what follows, it will be convenient to define
\[
        \theta_i=\frac{Bv_i}{c_i}-v_i
        =
        v_i\frac{B-c_i}{c_i}>0,
\]
where we can assume that $c_i \in (0, B)$. If $i \notin S$, then $i$ strictly prefers to opt in if and only if
\[
        B \frac{v_i}{v(S)+v_i} > c_i
        \iff
        v(S) < \theta_i.
\]
Conversely, if $i\in S$, then $i$ strictly prefers to opt out if and only if
\[
        B\frac{v_i}{v(S)} < c_i
        \iff
        v(S) - v_i > \theta_i.
\]
Therefore, the pure Nash equilibria are exactly the sets $S$ satisfying
\[
        v(S)\ge \theta_i \quad \forall i\notin S
        \text{ and }
        v(S)-v_i\le \theta_i \quad \forall i\in S.
\]
For the case of additive valuations, we can derive the following potential function.

\begin{proposition}[Potential for additive valuations]
\label{prop:additive-potential}
For additive valuation functions, strict better-response dynamics strictly decreases the potential function
\[
        \Phi(S)
        =
        \sum_{i < i'} v_i v_{i'} \mathbbm{1}\{a_i = a_{i'} \}
        +
        \sum_{i=1}^n v_i ( W_{-i}-2\theta_i)a_i,
\]
where $a_i=\mathbbm{1}\{i\in S\}$ and $W_{-i}=\sum_{i' \ne i}v_{i'}$.
\end{proposition}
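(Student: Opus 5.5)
The plan is a direct computation: flip a single coordinate $a_i$, holding all other actions fixed, and compare the change in $\Phi$ with the deviation criteria already derived above. As in the text preceding the statement, I take $c_i\in(0,B)$ and $v_i>0$, so that $\theta_i>0$ is well defined. Fix an agent $i$ and the actions $\vec a_{-i}$ of the others. Write $X \defeq \sum_{i'\neq i} v_{i'} a_{i'}$ for the value of the other participants, so that $W_{-i}-X$ is the value of the other non-participants.

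First, I isolate the terms of $\Phi$ that involve $i$. In the pairwise sum, only pairs $\{i,i'\}$ depend on $a_i$. If $a_i=1$, these pairs contribute $v_i\sum_{i'\ne i: a_{i'}=1}v_{i'}=v_iX$. If $a_i=0$, they contribute $v_i(W_{-i}-X)$. In the linear sum, only the $i$-th term depends on $a_i$, and it contributes $v_i(W_{-i}-2\theta_i)a_i$. Therefore
\[
\Phi(a_i=1,\vec a_{-i})-\Phi(a_i=0,\vec a_{-i}) = v_i(2X-W_{-i}) + v_i(W_{-i}-2\theta_i) = 2v_i\,(X-\theta_i).
\]

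Second, I match this with the characterization of strict deviations established just before the proposition. Suppose $i\notin S$, so that $X=v(S)$. Opting in is strictly profitable iff $v(S)<\theta_i$, i.e.\ iff $X<\theta_i$, and in that case the transition $a_i:0\to1$ changes $\Phi$ by $2v_i(X-\theta_i)<0$. Suppose instead $i\in S$, so that $X=v(S)-v_i$. Opting out is strictly profitable iff $v(S)-v_i>\theta_i$, i.e.\ iff $X>\theta_i$, and in that case the transition $a_i:1\to0$ changes $\Phi$ by $-2v_i(X-\theta_i)<0$. In both cases $\Phi$ strictly decreases, which proves the proposition.

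The only delicate points are the boundary cases; the algebra itself is mechanical. One boundary case is when $v(S)=0$ before $i$ opts in. There the payment formula is still $B v_i/v_i=B>c_i$, and this is consistent with $X=0<\theta_i$. The other is agents with $v_i=0$. The potential is blind to such agents, and an agent with $v_i=0$ and $c_i>0$ could strictly profit by opting out while $\Phi$ stays unchanged. I would therefore state explicitly that the claim assumes $v_i>0$, as already implicit in the definition $\theta_i>0$.
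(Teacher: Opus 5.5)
Your proof is correct and follows the paper's argument essentially verbatim. You compute $\Phi(a_i=1,\vec a_{-i})-\Phi(a_i=0,\vec a_{-i})=2v_i(X-\theta_i)$ and match its sign against the threshold characterizations $v(S)<\theta_i$ (opt in) and $v(S)-v_i>\theta_i$ (opt out). Your caveat about agents with $v_i=0$ is well taken; the paper handles it only implicitly, through the standing assumption $\theta_i>0$ and the later remark that such agents are strategically irrelevant.
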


\begin{proof}
First, let us suppose that $i$ opts in. The first part of $\Phi$ changes by
\[
        v_i\left(
         \sum_{i' \neq i} v_{i'} \mathbbm{1} \{ i' \in S \} - \sum_{i' \neq i} v_{i'} \mathbbm{1} \{ i' \notin S \}
        \right)
        =
        v_i \left(2 \sum_{i' \ne i} v_{i'} \mathbbm{1}\{ i' \in S\} - W_{-i} \right),
\]
while the second part of $\Phi$ changes by $v_i(W_{-i} - 2 \theta_i)$. Thus,
\[
        \Phi(S\cup\{i\})-\Phi(S)
        =
        2v_i\left(\sum_{j\ne i}v_j\mathbbm{1}\{j\in S\}-\theta_i\right).
\]
A strict opt-in deviation occurs if and only if $\sum_{i' \ne i} v_{i'} \mathbbm{1} \{ i' \in S\} < \theta_i$, so the potential strictly decreases. Symmetrically, if $i$ decides to opt out, we have
\[
        \Phi(S\setminus\{i\})-\Phi(S)
        =
        -2v_i \left(\sum_{ i' \ne i} v_{i'} \mathbbm{1}\{i' \in S\}-\theta_i \right).
\]
A strict opt-out deviation occurs if and only if $\sum_{i' \ne i} v_{i'} \mathbbm{1}\{i' \in S\} > \theta_i$, so the potential again decreases.
\end{proof}

Now, let us assume that each value $v_i$ is integral and $W \defeq \sum_{i \in N} v_i$. We claim that~\Cref{prop:additive-potential} implies the following pseudo-polynomial upper bound on the length of any strict better-response path.

\begin{corollary}
    \label{cor:fastconvergence}
    Let $v$ be additive. If $v_i \in \N$ and $c_i \in (0, B)$ for each $i \in N$, every strict better-response path has length $O(W^2)$, where $W \defeq \sum_{i \in N} v_i$.
\end{corollary}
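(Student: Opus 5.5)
The plan is to use the potential $\Phi$ of \Cref{prop:additive-potential} directly: show that (i) after a harmless modification of the thresholds, every strict better response decreases $\Phi$ by at least a fixed constant (namely $1$), and (ii) $\Phi$ takes values in an interval of length $O(W^2)$. The number of steps is then at most the range of $\Phi$ divided by the minimum decrease.

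\emph{Step 1: make the thresholds discrete.} Here we use that $v_i>0$, which is implicit in the definition $\theta_i>0$; since $v_i\in\N$, this gives $v_i\ge 1$. Agents with $v_i=0$ never strictly want to opt in, so each of them can move at most once, contributing only $O(n)$ extra steps. Write $x_i(S)\defeq\sum_{i'\neq i}v_{i'}\mathbbm{1}\{i'\in S\}$, which is an integer in $[0,W]$. By the analysis preceding \Cref{prop:additive-potential}:
\begin{itemize}
\item a strict opt-in of $i$ happens iff $x_i(S)<\theta_i$;
\item a strict opt-out of $i$ happens iff $x_i(S)>\theta_i$.
\end{itemize}
Because $x_i(S)$ is an integer, replacing $\theta_i$ by $\theta_i'$ does not change which deviations are strict, where $\theta_i'\defeq\theta_i$ if $\theta_i\in\mathbb{Z}$ and $\theta_i'\defeq\lfloor\theta_i\rfloor+\tfrac12$ otherwise. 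We further clip, $\theta_i'\leftarrow\min\{\theta_i',\,W+\tfrac12\}$, which is again harmless since $x_i(S)\le W$. The proof of \Cref{prop:additive-potential} only used the equivalence ``strict deviation iff $x_i(S)\lessgtr\theta_i$''. It therefore goes through verbatim with the modified potential
\[
\Phi'(S)=\sum_{i<i'}v_iv_{i'}\mathbbm{1}\{a_i=a_{i'}\}+\sum_i v_i(W_{-i}-2\theta_i')a_i .
\]
Under $\Phi'$, a strict move by $i$ changes the potential by exactly $-2v_i\,|x_i(S)-\theta_i'|$.

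\emph{Step 2: lower bound on each decrease.} The quantity $|x_i(S)-\theta_i'|$ is strictly positive, and it is either a positive integer or an odd multiple of $\tfrac12$. In both cases it is at least $\tfrac12$. Hence every strict better response decreases $\Phi'$ by at least $v_i\ge 1$.

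\emph{Step 3: range of $\Phi'$.} Bound each term:
\begin{itemize}
\item the first sum lies in $[0,W^2]$;
\item each coefficient satisfies $|W_{-i}-2\theta_i'|\le W+2W+1$, so the second sum has absolute value at most $(3W+1)W$.
\end{itemize}
Thus $\Phi'$ ranges over an interval of length $O(W^2)$, and combining with Step 2 gives at most $O(W^2)$ strict steps among positive-value agents.

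The only delicate point is Step 1. The original $\theta_i$ may be irrational (for instance, when the costs $c_i$ are irrational), in which case a single step could decrease $\Phi$ by an arbitrarily small amount. Rounding thresholds to half-integers while preserving the set of strict deviations resolves this, and the rest is bookkeeping.
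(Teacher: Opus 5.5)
Your proposal is correct and is essentially the paper's proof. Like the paper, you replace each $\theta_i$ by an equivalent integer or half-integer threshold in a bounded range, so that the potential of \Cref{prop:additive-potential} drops by at least $v_i \ge 1$ on every strict step while its range stays $O(W^2)$. Your explicit rounding-and-clipping and your aside on zero-value agents are more careful than the paper: such agents genuinely can add up to $n$ extra steps, so the $O(W^2)$ bound should be read with $v_i\ge 1$, as the paper implicitly assumes when it asserts $\theta_i>0$.
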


As a result, while general monotone submodular valuation functions can admit exponentially long better-response paths, additivity with polynomially bounded values yields a polynomial bound. A key aspect of~\Cref{cor:fastconvergence} is that the bound on the length of the better-response path does not depend on the costs.

\begin{proof}[Proof of~\Cref{cor:fastconvergence}]
    Every term $\sum_{i' \ne i}v_{i'} \mathbbm{1}\{i' \in S\}$ is an integer in $[0,W]$. Further, each threshold $\theta_i$ can be replaced by an equivalent threshold $\tilde \theta_i\in[-1,W+1]$ that is either an integer or a half-integer and preserves the sign of $\sum_{i' \ne i} v_{i'} \mathbbm{1} \{i' \in S\} -\theta_i$ over all integer values of $\sum_{i' \ne i} v_{i'} \mathbbm{1}\{ i' \in S\}$. This replacement does not change any strict better-response decision. With these equivalent thresholds, every strict step decreases $\Phi$ by at least $1$. Moreover, $\max_{S \subseteq N} \Phi(S)-\min_{S \subseteq N} \Phi(S)=O(W^2)$, so the claim follows.
\end{proof}

The case where $c_i = 0$ is not interesting: if $v_i > 0$, then agent $i$ has a strict dominant strategy to opt in, whereas if $v_i = 0$ that agent is strategically irrelevant.

\section{Price of anarchy bounds}

In this section, we derive price of anarchy bounds for both pure Nash equilibria and coarse correlated equilibria, primarily under the large-market condition $\lambda = \max_{i \in N} c_i / B \ll 1$.

\subsection{Pure Nash equilibria}

To begin with, we focus on pure Nash equilibria. Our bound will depend on the following \emph{cost concentration} term:
\begin{equation}
        \costcon(\lambda)=
        \max\left\{
        \sum_{i \in S } \frac{x_i }{1 - x_i}:
        0 \le x_i \le \lambda \text{ and } \sum_i x_i \le 1 \text{ for a finite } S
        \right\}.                                      \label{eq:costcon-def}
\end{equation}
An immediate upper bound is $\costcon(\lambda) \leq \frac{1}{1 - \lambda}$. More precisely, we can determine $\costcon(\lambda)$ exactly as follows. For $\lambda\in(0,1)$, let
\[
        q = \left\lfloor\frac{1}{\lambda}\right\rfloor
        \text{ and }
        r=1-q\lambda\in[0,\lambda).
\]
It is easy to see that $\costcon(\lambda)$ can be expressed in the following form.
\begin{restatable}{claim}{lamsplit}
    For $\costcon(\lambda)$ defined in~\eqref{eq:costcon-def}, we have
    \begin{equation}
        \costcon(\lambda)=q\frac{\lambda}{1-
        \lambda}+\mathbbm{1} \{r>0\} \frac{r}{1-r}.            \label{eq:costcon-formula}
\end{equation}
\end{restatable}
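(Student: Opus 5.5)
The function $g(x) = x/(1-x)$ is convex on $[0,1)$ with $g(0)=0$. So the claim is a convex maximization over the polytope $P = \{x : 0 \le x_i \le \lambda,\ \sum_i x_i \le 1\}$. The plan is to show that the maximum is attained at the ``greedy'' vertex: $q$ coordinates equal to $\lambda$, one coordinate equal to $r$, and the rest zero. Two preliminary remarks set this up.

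\begin{itemize}
\item \emph{Finiteness of the support.} For any feasible $x$, at most finitely many coordinates are positive in any useful sense, and zero coordinates contribute nothing. So we may fix a finite $S$ and enlarge it to have at least $q+1$ elements, which is harmless.
\item \emph{Monotonicity.} Since $g$ is increasing, any optimum may be assumed to satisfy $\sum_i x_i = 1$. Indeed, if the sum is below $1$ and $S$ is large enough, some coordinate is below $\lambda$ and can be increased. The degenerate case $|S| \le q$ is covered by taking $S$ larger.
\end{itemize}

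The key step is a smoothing (exchange) argument. Take two coordinates with $0 < x_j \le x_k < \lambda$. Replace them by $x_j - \delta$ and $x_k + \delta$ with $\delta = \min\{x_j,\ \lambda - x_k\}$. The sum is preserved, and by convexity of $g$ the objective does not decrease. This is the standard majorization fact: moving mass apart along a line with fixed sum increases a convex sum, because $g(x_k+\delta) - g(x_k) \ge g(x_j) - g(x_j - \delta)$ when $x_k \ge x_j$ (slopes of $g$ are nondecreasing).

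Each such move either sets a coordinate to $0$ or raises one to $\lambda$. It therefore strictly decreases the number of coordinates lying strictly inside $(0,\lambda)$. After finitely many moves at most one coordinate lies strictly inside $(0,\lambda)$. With $\sum_i x_i = 1$, such a vector has exactly $q'$ coordinates equal to $\lambda$ and one coordinate equal to $1 - q'\lambda \in [0,\lambda)$, which forces $q' = \lfloor 1/\lambda \rfloor = q$ and the remaining coordinate to equal $r$. If $r = 0$, that coordinate is simply absent.

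Evaluating the objective at this vector gives $q\lambda/(1-\lambda) + \mathbbm{1}\{r>0\}\, r/(1-r)$, which is an upper bound. The bound is attained by exhibiting this very vector, which is feasible, so the maximum equals the formula. A side remark is that the maximum is indeed attained rather than merely a supremum: for a fixed finite support the feasible set is compact and $g$ is continuous since $\lambda < 1$. In any case, the reduction shows that every feasible value is at most the formula.

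The only delicate step is the exchange inequality. It follows from the monotonicity of the slopes $(g(b)-g(a))/(b-a)$ of a convex function. Specifically, the interval $[x_j-\delta,\ x_j]$ lies weakly to the left of $[x_k,\ x_k+\delta]$, and both have the same length $\delta$. Everything else is bookkeeping.
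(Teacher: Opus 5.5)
Your proof is correct and follows the paper's argument: both make the sum constraint tight and then use the same pairwise exchange $(a,b)\mapsto(a-\epsilon,b+\epsilon)$ with $\epsilon=\min\{a,\lambda-b\}$ to leave at most one coordinate strictly inside $(0,\lambda)$, which forces the shape $(\lambda,\ldots,\lambda,r,0,\ldots)$. The only difference is framing: the paper argues by contradiction at a maximizer, using strict convexity and tacitly assuming a maximizer exists, whereas you apply the exchange iteratively to an arbitrary feasible vector using only weak convexity, which gives the upper bound directly and avoids the existence question.
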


Indeed, the function $x\mapsto x/(1-x)$ is increasing and convex on $[0,1)$, so the maximum in~\eqref{eq:costcon-def} is attained by filling as many coordinates as possible (namely, $q$) at the upper bound $\lambda$, plus one remainder coordinate when $q < \frac{1}{\lambda}$. We formalize this in~\Cref{appendix:proofs}.

\begin{theorem}[Pure Nash equilibrium PoA bound]
\label{thm:poa-pne}
Let $\lambda = \max_{i \in N} c_i / B$. The game $\gameMC$ induced by the marginal contribution payment rule~\eqref{eq:mc-rule} satisfies
\[
        \PoAPNE \leq 1 + \costcon(\lambda) \leq \frac{2-
        \lambda}{1-
        \lambda}.
\]
In particular, as $\lambda\to 0$, every pure Nash equilibrium attains value at least $(1/2-o(1))\OPT$.\footnote{As we point out in~\Cref{appendix:marginalcovering}, this PoA bound can be extended to the class of $\gamma$-marginal-covering rules (\Cref{def:marg-cov}); however, a pure Nash equilibrium is not guaranteed to exist for that broader class of payment rules (\emph{cf.}~\Cref{thm:shapley-no-pne}).}
\end{theorem}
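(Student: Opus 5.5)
Let $S$ be any pure Nash equilibrium of $\gameMC$ and let $O$ be an optimal budget-feasible set, so $v(O)=\OPT$ and $\sum_{i\in O}c_i\le B$. The plan is the standard one: derive, from the equilibrium condition of each agent of $O$ absent from $S$, an upper bound on its marginal gain relative to $v(S)$. Submodularity then lets us sum these bounds, and the resulting sum of cost-dependent factors is controlled by $\costcon(\lambda)$.

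\textbf{Step 1 (no profitable entry).} Fix $i\in O\setminus S$. Since $S$ is an equilibrium, opting in is not strictly profitable. In the case $v(S\cup\{i\})>0$, this reads $B\,\frac{v(S\cup\{i\})-v(S)}{v(S\cup\{i\})}\le c_i$, which is equivalent to
\[
(1-x_i)\,v(S\cup\{i\})\le v(S), \qquad x_i\defeq c_i/B\le\lambda<1 .
\]
Rearranging gives
\[
v(S\cup\{i\})-v(S)\le \frac{x_i}{1-x_i}\,v(S).
\]
This is exactly the opt-in inequality already used in the proof of \Cref{thm:existence}. In the case $v(S\cup\{i\})=0$, the marginal gain is $0$ and the same bound holds trivially. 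Note that if $v(S)=0$, the bound forces every such marginal gain to be $0$, so the argument below also gives $\OPT=0$ and the ratio is harmless.

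\textbf{Step 2 (submodular summation).} By monotonicity and submodularity,
\[
\OPT\le v(O\cup S)\le v(S)+\sum_{i\in O\setminus S}\bigl(v(S\cup\{i\})-v(S)\bigr)\le v(S)\Bigl(1+\sum_{i\in O\setminus S}\frac{x_i}{1-x_i}\Bigr).
\]

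\textbf{Step 3 (bounding the cost sum).} The numbers $x_i$ for $i\in O\setminus S$ satisfy $0\le x_i\le\lambda$ and $\sum_i x_i\le \sum_{i\in O}c_i/B\le 1$, because $O$ is budget feasible. They are therefore feasible for the maximization in~\eqref{eq:costcon-def}, so the bracketed sum is at most $\costcon(\lambda)$, giving $\OPT\le (1+\costcon(\lambda))\,v(S)$. For the second inequality of the theorem, observe that $x/(1-x)\le x/(1-\lambda)$ for $x\le\lambda$. Summing, $\costcon(\lambda)\le \frac{1}{1-\lambda}$, and hence $1+\costcon(\lambda)\le\frac{2-\lambda}{1-\lambda}$. (Alternatively one could use the exact formula~\eqref{eq:costcon-formula}, but the crude bound suffices.) Letting $\lambda\to0$ yields the asymptotic $1/2$ statement.

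\textbf{Main obstacle.} There is essentially none: the only delicate points are the degenerate cases $v(S)=0$ and $v(S\cup\{i\})=0$ in Step 1, together with the remark that the budget feasibility of $O$ is what makes the $x_i$ admissible in the definition of $\costcon$.
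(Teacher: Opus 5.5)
Your proposal is correct and follows essentially the same argument as the paper: the no-entry condition gives $v(S\cup\{i\})-v(S)\le \frac{x_i}{1-x_i}\,v(S)$ for each absent optimal agent, submodularity sums these bounds, and budget feasibility of the optimal set makes the $x_i$ admissible in the definition of $\costcon(\lambda)$. Two steps the paper only asserts are spelled out in your version. You handle the degenerate case $v(S)=0$ uniformly through $(1-x_i)\,v(S\cup\{i\})\le v(S)$, and you give a one-line justification of $\costcon(\lambda)\le 1/(1-\lambda)$.
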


\begin{proof}
Let $S$ be a pure Nash equilibrium of $\gameMC$ and let $S^* \in\argmax\{v(T) : \sum_{i \in T} c_i \le B\}$. If $S = N$, it follows that $v(S) = \OPT$. Otherwise, let $i \notin S$. If $v(S) = 0$, it follows that $\OPT = 0$, so we can assume $v(S) > 0$. Since $i$ does not prefer to opt in, we have
\[
B\frac{ v(S\cup\{i\}) - v(S) }{v(S \cup \{i \} )} \le c_i.
\]
Rearranging and using the fact that $c_i < B$, we get 
\begin{equation}
    \label{eq:marg-value}
     v( S \cup \{i \} ) \leq v(S) \frac{B}{B - c_i} \Rightarrow v( S \cup \{i \} ) - v(S) \leq v(S) \frac{c_i}{B - c_i}.
\end{equation}
Now, let $x_i \defeq c_i / B$. Continuing from~\eqref{eq:marg-value},
\begin{equation}
    \label{eq:xbound}
    v( S \cup \{i \} ) - v(S) \leq \frac{x_i}{1 - x_i} v(S).
\end{equation}
By monotonicity and submodularity,
\[
        v(S^*) \le v(S \cup S^*)
        \le v(S) + \sum_{i \in S^* \setminus S} \bigl(v(S \cup \{i\}) - v(S) \bigr).
\]
Using~\eqref{eq:xbound},
\begin{equation}
    \label{eq:xcombbound}
    v(S^*)
        \le
        v(S)\left(1+
        \sum_{i \in S^* \setminus S}\frac{x_i}{1-x_i}\right).
\end{equation}
The vector $(x_i)_{i \in S^* \setminus S}$ satisfies $0 \le x_i \le \lambda$ and $\sum_{i \in S^* \setminus S} x_i \le \sum_{i \in S^* } x_i \le 1$ because $S^*$ is budget feasible (by definition). Therefore, combining~\eqref{eq:xcombbound} with the definition of $\costcon(\lambda)$ in~\eqref{eq:costcon-def}, we conclude that
\[
        v(S^*) \le \bigl(1+\costcon(\lambda)\bigr) v(S),
\]
and the claim follows.
\end{proof}

As a result, under a monotone and submodular valuation function, the marginal contribution payment rule guarantees a PoA of $1 + \costcon(\lambda) \leq \frac{2 - \lambda}{1 - \lambda}$, thereby approaching $2$ when $\lambda \to 0$.

\subsubsection{Tightness of the analysis}

We now show that~\Cref{thm:poa-pne} is tight even when the value function is additive: for any $\lambda \in (0, 1)$, there exists an instance where $\PoAPNE \geq 1 + \costcon(\lambda)$.

\begin{proposition}[Tightness of~\Cref{thm:poa-pne}]
\label{prop:tight}
For every $\lambda\in(0,1)$, there is an additive instance with $\max_{i \in N} c_i/B \le \lambda$ and a pure Nash equilibrium $S$ such that
\[
        \OPT = (1 + \costcon(\lambda)) v(S).
\]
\end{proposition}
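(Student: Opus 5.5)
The plan is to construct an additive instance that makes every inequality in the proof of \Cref{thm:poa-pne} tight. That proof loses in three places. First, monotonicity and submodularity; additivity makes this step exact. Second, the non-entry condition \eqref{eq:marg-value} for agents outside $S$; we make it hold with equality. Third, the bound by $\costcon(\lambda)$; we choose the costs of the excluded optimal agents to be the maximizer identified in \eqref{eq:costcon-formula}. So we take $q$ agents with $x_i=\lambda$ and, if $r>0$, one more agent with $x_i=r$. These are the agents of $S^*$, and none of them is in $S$. Their total cost is $q\lambda+r=1$, so $S^*$ is budget feasible. The equilibrium $S$ consists of a single agent $0$ with value $v_0=1$ and cost $c_0=0$ (or any cost in $[0,B)$).

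Next we fix the values of the outside agents. Normalize $B=1$ and $v(S)=v_0=1$. For an outside agent $i$ with $x_i=c_i$, set
\[
v_i=\frac{x_i}{1-x_i}.
\]
Then $v(S\cup\{i\})=1/(1-x_i)$, and opting in pays $v_i/v(S\cup\{i\})=x_i=c_i$. Agent $i$ is therefore exactly indifferent, and the weak inequality in the equilibrium definition holds. Agent $0$ stays in: with $S=\{0\}$ it receives the whole budget, $p_0=B\ge c_0$. Hence $S=\{0\}$ is a pure Nash equilibrium.

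Finally we compute $\OPT$. The set $S^*\cup\{0\}$ costs $1+c_0$, which exceeds the budget unless $c_0=0$; with $c_0=0$ it is feasible. Its value is
\[
1+\sum_{i\in S^*}\frac{x_i}{1-x_i}=1+\costcon(\lambda)
\]
by \Cref{claim} \eqref{eq:costcon-formula}. Since \Cref{thm:poa-pne} already gives $\OPT\le(1+\costcon(\lambda))v(S)$, equality follows.

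Two points need care. The first is the stated cost condition: the proposition only requires $\max_i c_i/B\le\lambda$, so allowing $c_0=0$ is fine. If strictly positive costs were wanted instead, one could give agent $0$ a tiny cost and, if needed, rescale, but this is not necessary here. The second is the tie-breaking issue. The equilibrium uses exact indifference of the outside agents, which the definition permits through its weak inequality $p_i(S\cup\{i\})-c_i\le 0$. If a strict version were preferred, one would perturb the values downward and obtain the bound only in the limit. Because the statement claims exact equality, we keep the indifferent construction.
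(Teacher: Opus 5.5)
Your proposal is correct and is essentially the paper's construction: a zero-cost anchor of value $1$, $q=\lfloor 1/\lambda\rfloor$ agents of cost $\lambda$ and value $\lambda/(1-\lambda)$, and, when $r>0$, a remainder agent of cost $r$ and value $r/(1-r)$. All outside agents are exactly indifferent, so the anchor singleton is a pure Nash equilibrium, while the full set is budget feasible with value $1+\costcon(\lambda)$. The only differences are cosmetic: you close the equality through the upper bound of \Cref{thm:poa-pne}, whereas the paper reads off $\OPT=v(N)$ directly from feasibility of $N$ and monotonicity; and, as you note yourself, the parenthetical allowing $c_0\in[0,B)$ has to be dropped, since exact equality needs $c_0=0$.
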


\begin{proof}
Fix any $\lambda \in (0, 1)$. We let $B=1$, $q=\lfloor 1/\lambda\rfloor$, and $r=1-q\lambda$. We construct the following instance. First, there is an agent $a$ with cost $c_a = 0$ and value $v_a = 1$. Moreover, there are $q$ additional agents each of whom has a cost $\lambda$ and additive value $\lambda/(1 - \lambda)$. Similarly,
if $r>0$, create one additional agent with cost $r$ and additive value $r/(1 - r)$. We let $N$ denote the set of agents in this instance. The value function is additive, so that $v(S) = \sum_{i \in S} v_i$ for any $S \subseteq N$.

We first claim that the optimal solution selects all agents. This is indeed budget feasible since $\sum_{i \in N} c_i = q \lambda + r = 1 = B$, by definition of $r$. Moreover,
\begin{equation}
    \label{eq:opt-everyone}
     \OPT = v(N) = 1 + q \frac{\lambda}{1 - \lambda} + \mathbbm{1} \{ r > 0 \} \frac{r}{1 - r} = 1 + \costcon(\lambda).
\end{equation}
As a result,
\[
    \pMC_i (S \cup \{i\}) = B \frac{ v(S \cup \{i\}) - v(S)}{ v(S \cup \{i\})} = \frac{x/(1 - x)}{1/(1 - x)} = x = c_i,
\]
which means that opting in yields a utility of zero. That is, opting in is not a strict improvement. Similar reasoning applies with respect to the agent with cost $r$ (if that agent is part of the instance). We conclude that $S = \{a \}$ is a pure Nash equilibrium, and $v(S) = 1$. Combining with~\eqref{eq:opt-everyone}, this completes the proof.
\end{proof}

In the construction above outside agents are indifferent between opting in and opting out. One can make this lower bound agnostic to how ties are handled by making the value $x/(1 - x) - \eta$ for some arbitrarily small $\eta > 0$, where $x \in \{\lambda, r \}$.

\subsection{Lower bounds}

We complement our positive results with several lower bounds. First, we show that any cost-oblivious payment rule must incur a PoA of at least $2$ even when $\lambda \to 0$.

\subsubsection{Anonymous and cost-oblivious rules}

As a warm-up, we first treat the case where the payment rule is anonymous. We then extend the lower bound even when the payment rule is not anonymous (\Cref{thm:cost-oblivious-without-anonymity-lb}).

\begin{proposition}[Lower bound for anonymous and cost-oblivious rules]\label{thm:anonymous-lb}
No payment rule that is i) anonymous, ii) nonnegative, iii) budget feasible, and iv) cost oblivious can guarantee $\PoAPNE$ below $2$ as $\lambda \to 0$. This holds even when the value function is additive.
\end{proposition}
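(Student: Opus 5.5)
We construct a single additive instance family with many identical agents. Anonymity forces symmetric payments, and cost obliviousness lets us play two cost vectors against each other. Take $B=1$ and $n=k$ identical agents, each with value $v_i=1$, so $v(S)=|S|$. By anonymity, the payment to each member of $S$ depends only on $s=|S|$; denote it $p(s)\ge 0$. Budget feasibility gives $s\,p(s)\le 1$, i.e.\ $p(s)\le 1/s$. Because the rule is cost oblivious, the same function $p$ is used for every cost vector we choose. The plan is to choose a common cost $c$ (with $c\le\lambda$ small) so that $\OPT=\lfloor 1/c\rfloor\approx 1/c$. We then exhibit a pure Nash equilibrium whose size is at most about half of that, which gives $\PoAPNE\ge 2-o(1)$.

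\textbf{Key step.} With uniform cost $c$, a set of size $s$ is a pure Nash equilibrium if and only if $p(s)\ge c$ (insiders stay) and $p(s+1)\le c$ (outsiders do not enter); the case $s=k$ needs only the first condition. Define $s^*(c)$ as the largest $s$ with $p(s)\ge c$ and $p(s+1)<c$, or note that $s=0$ is an equilibrium when $p(1)\le c$. We want some $c$ for which every equilibrium is small relative to $1/c$. We would proceed as follows.

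1. Consider the thresholds $c_s := p(s)$. Since $p(s)\le 1/s$, a size-$s$ equilibrium at cost $c$ requires $c\le p(s)\le 1/s$, so $s\le 1/c$. This alone only gives ratio $1$.

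2. To get a factor $2$, modify the instance. Add one ``anchor'' agent with value $1$ and cost $0$, as in \Cref{prop:tight}, and make the remaining agents tiny and identical. Then compare two cost levels $c$ and $c/2$ (or a continuum of them), using the fact that $p$ is fixed. Because the total payment to $s$ small agents is at most $1$, the per-agent payment at size $s$ is below $1/s$. If we pick the cost $c$ just above $p(s+1)$ for a suitable $s$, the set of size $s$ is an equilibrium, while $\OPT$ includes about $1/c$ agents.

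3. The heart of the argument is to find, for any nonincreasing-in-budget function $p$ with $\sum$-constraint $s\,p(s)\le 1$, a cost $c$ at which the minimal equilibrium $s$ satisfies $s\lesssim \tfrac{1}{2c}$. If no such $c$ exists, then $p(s+1)\ge$ roughly $\tfrac{1}{2s}$ for all $s$, and also $p(s)\ge c$ throughout. We would then use the \emph{anchor} value to force a tradeoff. With the anchor's value $A$ on the scale of the small agents' total, budget feasibility must also cover the anchor's payment. For the anchor to participate at zero cost nothing is needed, but OPT counts the anchor plus about $1/c$ small agents. We expect the $2$ to come out of a knapsack-type averaging: over a geometric range of costs, $\int p$-type sums are bounded by the budget, so the rule cannot hold $p(s)\approx 1/s$ at every scale while also leaving equilibria near the optimum.

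\textbf{Main obstacle.} Pinning down the exact instance that yields the constant $2$, rather than $1$, is the step I expect to be hardest. My first attempt would mirror \Cref{prop:tight}, with an anchor of value $1$ and small agents of total potential value about $1$. I would show that if the rule makes $\{a\}$ a non-equilibrium at every cost scale, then budget feasibility at the full set fails. Concretely, the payments offered to a single entrant joining $\{a\}$ must exceed every possible cost $c\le\lambda$ at the relevant scale, and summing these requirements via anonymity gives a total exceeding $B$. Taking $\lambda\to0$ then leaves ratio $(1+1)/1=2$.
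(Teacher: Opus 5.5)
There is a genuine gap: the proposal never produces an instance, and the routes you sketch cannot produce the factor $2$. Your step 1 already shows that identical agents with a common cost give ratio $1$. Adding a single cost-$0$ anchor of a \emph{different} value does not help, for two reasons. First, anonymity places no constraint between the anchor and a small entrant. Second, budget feasibility only binds payments within one realized set. So requiring that a lone entrant into $\{a\}$ (or into $\{a\}$ plus some small agents) be paid more than $c$ never sums to anything exceeding $B$, because those requirements live in different coalitions.

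Concretely, consider the anonymous rule that pays the anchor a tiny $\delta>0$ and splits $1-\delta$ equally among the small agents present. With a common cost $c$ on the small agents, every pure equilibrium of your instance contains the anchor and at least $\min\{K,\lceil (1-\delta)/c\rceil-1\}$ small agents, where $K$ is the number of small agents. The optimum contains at most $\min\{K,\lfloor 1/c\rfloor\}$ of them, so the ratio is at most about $1/(1-\delta-c)\to 1$. The geometric-scale ``knapsack averaging'' you invoke is the mechanism behind the logarithmic single-agent combinatorial lower bound; it plays no role here.

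The missing idea is to use \emph{heterogeneous} private costs among agents the rule cannot tell apart. This is what cost obliviousness buys: not two cost vectors played against each other, but zero-cost and costly agents looking identical, so anonymity binds. Take $B=1$ and $2k$ agents, each of additive value $1$. Split them into $A$, with cost $0$, and $H$, with cost $\frac{1}{k+1}+\eta$, where $\eta<\frac{1}{k(k+1)}$. Then $A$ is a pure Nash equilibrium. Its members have zero cost and nonnegative payments, so they stay. If some $h\in H$ joins, the resulting $k+1$ agents are indistinguishable to an anonymous cost-oblivious rule, so each receives at most $\frac{1}{k+1}<c_h$. Meanwhile $\sum_{h\in H}c_h=\frac{k}{k+1}+k\eta<1$, so $\OPT=2k=2v(A)$, and $\lambda=\frac{1}{k+1}+\eta\to 0$ as $k\to\infty$. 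The zero-cost agents play the role of your anchor. Because they have the same value as the entrant, anonymity forces an equal split within the single set $A\cup\{h\}$, which is exactly where budget feasibility bites.
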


\begin{proof}
Consider any $k \in \N$ and any sufficiently small $\eta > 0$. Let $\lambda = \frac{1}{k+1} + \eta$. We construct an instance as follows. The budget is $B = 1$. There are $2k$ agents each with an additive value of $1$. We partition the $2k$ agents into
\[
        A=\{a_1,\ldots,a_k\}
        \text{ and }
        H=\{h_1,\ldots,h_k\}.
\]
For any $i \in [k]$, we define the costs as
\[
        c_{a_i}=0
        \text{ and }
        c_{h_i}=\frac{1}{k+1}+\eta.
\]
When $\eta < 1/(k(k+1))$, we have
\[
        \sum_{i=1}^k c_{h_i} = \frac{k}{k+1}+k\eta<1.
\]
So, the principal can afford to include all agents, for a total value of $2k$ (the value function is additive).

On the other hand, consider the set $S = A$. Each agent in $S$ has zero cost and incurs a nonnegative payment, so there is no incentive to opt out. If we consider the set $A \cup \{h_i \}$ that results from a deviation, each of the $k+1$ agents is identical from the perspective of a cost-oblivious rule. As a result, by anonymity, they must all receive an equal payment. By budget feasibility, each will receive at most $1/(k+1)$. Given that $c_{h_i} > 1/(k+1)$ for any $i \in [k]$, opting out is a best response for every such agent. As a result, $A$ is a pure Nash equilibrium, and $v(A) = k$. This concludes the proof.
\end{proof}

\subsubsection{Cost-oblivious rules without anonymity}
\label{sec:lower-bound-without-anonymity}

The preceding lower bound makes crucial use of anonymity. We now show that the barrier of $2$ also applies without anonymity.

\begin{theorem}[Lower bound without anonymity]
\label{thm:cost-oblivious-without-anonymity-lb}
Fix any $\lambda\in(0,1)$. Consider any deterministic payment rule that is nonnegative, budget feasible, cost oblivious, and guarantees the existence of a pure Nash equilibrium for every cost vector satisfying $\max_{i \in N} c_i/B \leq \lambda$. Then there is an instance with a normalized monotone additive value function $v$ and a cost vector satisfying $\max_{i \in N} c_i/B \leq \lambda$ for which the induced game has a pure Nash equilibrium $S$ such that $\OPT / v(S) \geq 2$. As a result, no such payment rule can guarantee $\PoAPNE < 2$.
\end{theorem}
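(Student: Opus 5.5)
The plan is to exploit cost obliviousness directly. We first fix a cost vector under which the rule's guaranteed pure Nash equilibrium is forced to be small. We then change only the costs of the agents inside that equilibrium, which leaves the payment rule, and the equilibrium itself, unchanged while making $\OPT$ large.

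\textbf{Instance.} Fix $\lambda\in(0,1)$ and pick an integer $k\ge 1$ and $\eta>0$ such that
\[
c \defeq \frac{1}{k+1}+\eta \le \lambda
\quad\text{and}\quad
kc<1
\]
(for instance $\eta<1/(k(k+1))$). Set $B=1$ and take $2k$ agents with the additive value $v(S)=|S|$. Let $p$ be any deterministic rule satisfying the hypotheses. Because $p$ is cost oblivious, it depends only on $v$ (and on agent identities), so it is one fixed function $p:2^N\to\R^n_{\ge 0}$ across every cost vector we consider.

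\textbf{Step 1: uniform costs force a small equilibrium.} Take the cost vector $c^{(1)}$ with $c_i=c$ for every agent; it respects the large-market bound $\lambda$. By hypothesis the induced game has a pure Nash equilibrium $S$. As noted at the end of \Cref{secsec:marginalcontr}, every pure Nash equilibrium of a budget-feasible rule is cost feasible, so $|S|\,c\le 1$. Since $(k+1)c>1$, this gives $|S|\le k$. The equilibrium conditions also give $p_h(S\cup\{h\})\le c$ for every $h\notin S$.

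\textbf{Step 2: lower the insiders' costs.} Define $c^{(2)}$ by $c_i=0$ for $i\in S$ and $c_i=c$ for $i\notin S$; it still satisfies $\max_i c_i/B\le\lambda$. The payment rule is the same function $p$, so we check that $S$ remains an equilibrium.
\begin{itemize}
\item Each $i\in S$ now has utility $p_i(S)-0\ge 0$, so it has no incentive to opt out.
\item Each $h\notin S$ faces exactly the same deviation condition $p_h(S\cup\{h\})\le c=c_h$ as before, so it has no incentive to opt in.
\end{itemize}
Hence $S$ is a pure Nash equilibrium under $c^{(2)}$, with $v(S)=|S|\le k$.

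\textbf{Step 3: bound $\OPT$.} Under $c^{(2)}$, consider $S$ together with any $k$ agents outside $S$. These exist because $2k-|S|\ge k$. This set costs $kc<1=B$ and has value $|S|+k$. Therefore
\[
\OPT\ge |S|+k\ge 2|S|=2v(S).
\]
If $S=\emptyset$, then $\OPT\ge k>0=v(S)$, and the ratio is $+\infty$ under the convention stated after~\eqref{eq:PoA}.

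The only delicate point is Step 2. We must confirm that changing costs cannot alter the outsiders' deviation incentives, and this is exactly where cost obliviousness and the existence hypothesis are used together. Everything else is arithmetic in the choice of $k$ and $\eta$.
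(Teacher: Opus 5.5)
Your proof is correct and follows essentially the same route as the paper: a uniform cost vector forces an equilibrium $S$ with $|S|\le k$, and cost obliviousness then lets you lower only the insiders' costs without changing any equilibrium condition, so $S$ plus $k$ outsiders becomes affordable and $\OPT\ge |S|+k\ge 2v(S)$. The difference is cosmetic. The paper takes cost exactly $\lambda B$ with $k=\lfloor 1/\lambda\rfloor$ and also lowers a block $T$ of $k$ outsiders' costs to their deviation payments, whereas you pick $c$ slightly above $1/(k+1)$ with $kc<1$, which builds in the slack so the outsiders' costs can stay unchanged.
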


\begin{proof}
We let
\[
    k=\left\lfloor\frac{1}{\lambda}\right\rfloor
    \text{ and }
    n=2k+1.
\]
We consider an instance with an agent set $N$ of cardinality $n$. The (additive) value function is defined as
\[
    v(S)=\frac{|S|}{n}
    \qquad\forall S \subseteq N.
\]
This function is clearly normalized, monotone,
and additive. First, we consider the cost vector
\[
    \bar c_i=\lambda B
    \quad \forall i\in N.
\]
By the assumed existence guarantee, the induced game has a pure Nash equilibrium $\bar S$. Every selected agent $i \in \bar S$ must satisfy $p_i(\bar S)\geq \lambda B$. Budget feasibility then gives
\[
    |\bar S|\lambda B
    \leq
    \sum_{i\in\bar S}p_i(\bar S)
    \leq B,
\]
so $|\bar S|\leq k$. For every agent $i \notin \bar{S}$, $p_i(\bar S\cup\{i \}) \leq \lambda B$. Since $|\bar S| \leq k$, at least $k+1$ agents lie outside $\bar S$. Let $T \subseteq N\setminus\bar S$ be any set such that $|T| = k$. We have
\[
    \sum_{i \in T} p_i(\bar{S} \cup \{i \}) \leq k \lambda B \leq B.
\]
We let
\[
    \eta = 
    \min\left\{
        \frac{B - \sum_{i \in T} p_i(\bar{S} \cup \{i \}) }{|\bar S|+k},
        \min_{i \in T}(\lambda B- p_i(\bar{S} \cup \{i \}) ),
        \lambda B
    \right\} \geq 0,
\]
and define a new cost vector $c$ by
\[
    c_i =
    \begin{cases}
        \eta & \text{if } i\in\bar S,\\
        p_i(\bar{S} \cup \{i \}) + \eta & \text{if } i\in T,\\
        \lambda B & \text{if } i\in N\setminus(\bar S\cup T).
    \end{cases}
\]
All costs are nonnegative and $\max_{i \in N} c_i/B=\lambda$. Moreover, the set $\bar S\cup T$ is budget feasible under $c$:
\[
    \sum_{i\in\bar S\cup T} c_i
    =
    |\bar S|\eta+\sum_{i\in T}( p_i(\bar{S} \cup \{i \}) +\eta)
    =
    \sum_{i \in T} p_i(\bar{S} \cup \{i \}) + (|\bar S|+k) \eta
    \leq B.
\]
Because the rule is cost oblivious, all payments under $c$ are exactly the same as under $\bar c$. We claim that $\bar S$ is a pure Nash equilibrium under the new costs. Indeed, for every $i\in\bar S$,
\[
    p_i(\bar S)-c_i
    =
    p_i(\bar S)-\eta
    \geq
    \lambda B-\eta
    \geq0,
\]
so selected agents do not prefer to opt out. For every $i \in T$,
\[
    p_i(\bar S\cup\{i\})-c_i
    =
    -\eta
    \leq 0,
\]
so agents in $T$ prefer to remain out. Finally, if
$i \in N \setminus(\bar S \cup T)$, then $c_i=\lambda B$ and
\[
    p_i(\bar S\cup\{ i \})- c_i = p_i(\bar S\cup\{ i \}) - \lambda B \leq 0.
\]
This shows that $\bar S$ is indeed a pure Nash equilibrium. What remains is to compare the value of $\bar S$ to the optimal one. Since $\bar S\cup T$ is feasible,
\[
    \OPT\geq v(\bar S\cup T)=\frac{|\bar S|+k}{n},
    \text{ whereas }
    v(\bar S)=\frac{|\bar S|}{n}.
\]
If $\bar S=\emptyset$, then $\OPT\geq k/n>0$ while $v(\bar S)=0$, so the ratio is $+\infty$. Otherwise, given that $1 \leq|\bar S|\leq k$, we have
\[
    \frac{\OPT}{v(\bar S)}
    \geq
    \frac{|\bar S|+k}{|\bar S|}
    =
    1+\frac{k}{|\bar S|}
    \geq2.
\]
This completes the proof.
\end{proof}

This result implies the following dichotomy. For every deterministic payment rule that is nonnegative, budget feasible, and cost oblivious, either it does not always guarantee existence of a pure Nash equilibrium, or there exists an instance with a pure Nash equilibrium whose welfare ratio is at least $2$.

\subsubsection{Beyond the large-market assumption}

We now turn our attention to settings beyond the large-market condition. \Cref{thm:poa-pne} provides a PoA bound that diverges when $\lambda \uparrow 1$. We observe that this is, in some sense, fundamental:

\begin{proposition}[Infinite PoA when $\lambda=1$]
\label{prop:lambda-one}
If $\lambda = \max_{i \in N} c_i/B = 1$, then every budget-feasible payment rule can have $\PoAPNE = +\infty$.
\end{proposition}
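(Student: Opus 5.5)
The plan is to exhibit a single-agent instance in which the agent is exactly indifferent and opting out is a pure Nash equilibrium with zero value. Consider $N=\{1\}$, any budget $B>0$, a value function with $v(\{1\})>0$ (say $v(\{1\})=1$), and cost $c_1=B$, so that $\lambda=c_1/B=1$. The statement should be read as: for every budget-feasible payment rule there is an instance with $\lambda=1$ on which the induced game has a pure Nash equilibrium of value zero while $\OPT>0$. The PoA, per its definition, is then $+\infty$.

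First, I will compute $\OPT$. Since $c_1=B$, the set $\{1\}$ is budget feasible, so $\OPT=v(\{1\})=1>0$.

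Next, I will show that $S=\emptyset$ is a pure Nash equilibrium under any budget-feasible rule $p$. Budget feasibility gives $p_1(\{1\})\le B=c_1$, so opting in yields utility $p_1(\{1\})-c_1\le 0$, which is not a strict improvement over the utility $0$ from opting out. Hence no agent has a strictly profitable deviation at $\emptyset$, and $\emptyset$ is a pure Nash equilibrium with $v(\emptyset)=0$. Since the denominator in~\eqref{eq:PoA} is the infimum over equilibria, it equals $0$ while $\OPT>0$, so the ratio is $+\infty$ by convention.

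The argument holds regardless of how the rule depends on identities or the value function, and it also holds for additive $v$. The only point needing care is interpretive rather than technical. Indifference is resolved against the principal, since the definition of pure Nash equilibrium only excludes strict improvements. So the claim is that every rule admits some $\lambda=1$ instance with infinite PoA, not that every instance is bad. I will remark that this is exactly why principal-favoring tie-breaking is considered separately afterwards. If one wants $n$ agents, the same construction extends by adding agents with zero value; it is unnecessary for the statement.
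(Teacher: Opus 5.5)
Your proposal is correct and follows the paper's proof: a single agent with $c_1 = B$ and $v(\{1\}) = 1$, so $\OPT = 1$. Budget feasibility caps the opt-in payment at $B = c_1$, so $\emptyset$ is a zero-value pure Nash equilibrium and the ratio is $+\infty$. Your optional $n$-agent extension also works if the added zero-value agents are given cost $B$, since then no agent strictly gains by entering from $\emptyset$.
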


\begin{proof}
We consider a single agent with $c_1 = B$ and $v(\{1\})=1$. In this case, $\OPT = 1$. However, $S = \emptyset$ is in fact a pure Nash equilibrium. Indeed, if the agent were to join, budget feasibility dictates that the payment to that agent is at most $B = c_1$, so the utility of the agent in that case is at most $0$.
\end{proof}

This lower bound hinges on a high-value agent being indifferent between opting in and opting out. We next show that even if ties are broken in favor of the principal (all else being equal, the agent selects the action maximizing the principal's value), the price of anarchy of any deterministic payment rule is $\Omega(\sqrt{n})$.

\paragraph{An $\Omega(\sqrt{n})$ lower bound for principal-favoring tie breaking.} Unlike~\Cref{prop:lambda-one}, the lower bound below does not rely on adversarial tie breaking.

\begin{theorem}[Deterministic cost-oblivious lower bound]\label{thm:lower-deterministic-no-large-market}
For every $n \ge 3$ and every deterministic cost-oblivious budget-feasible payment rule $p$, there is an instance with $n$ agents and an additive value function $v$ such that the induced game has a strict pure Nash equilibrium $S$ satisfying
\[
  \frac{\OPT}{v(S)} \geq \sqrt{n - 1}.
\]
\end{theorem}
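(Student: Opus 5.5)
The plan is to exploit cost obliviousness exactly as in \Cref{thm:cost-oblivious-without-anonymity-lb}. Fix the value function first, read off the (now fixed) payments $p$, and then choose a cost vector adapted to $p$. Unlike there, costs may now exceed $B$, since no large-market assumption is imposed. I would use $m \defeq n-1$ ``small'' agents $1,\dots,m$ with additive value $1$ each and one ``big'' agent $0$ with value $\sqrt{m}$. The ratio $\sqrt{m}$ is then realized in one of two ways: an equilibrium $\{0\}$ against an optimum containing all small agents (value $m$), or an equilibrium consisting of a single small agent (value $1$) against an optimum containing the big agent (value $\sqrt m$). The fixed rule must fall into one of these traps, or else a degenerate trap with ratio $+\infty$. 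Strictness is obtained by always placing costs strictly above the relevant deviation payment for outsiders and strictly below the current payment for insiders. This is possible whenever the relevant inequalities on $p$ are strict.

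\emph{Case 1: $\sum_{i=1}^m p_i(\{0,i\}) < B$.} Set $c_i = p_i(\{0,i\}) + \eta$ for each small agent, with $\eta>0$ small enough that $\sum_i c_i \le B$. Then $\OPT \ge m$, and no small agent strictly wants to join $\{0\}$. If $p_0(\{0\})>0$, take $c_0 \in (0, p_0(\{0\}))$. Then $\{0\}$ is a strict pure Nash equilibrium of value $\sqrt m$, and the ratio is at least $m/\sqrt m=\sqrt{m}$. If instead $p_0(\{0\})=0$, I would switch to $S=\emptyset$ with $c_0\in(0,B]$ and $c_i> p_i(\{i\})$. Since $c_0 \le B$, the set $\{0\}$ is feasible, so $\OPT\ge\sqrt m>0=v(\emptyset)$ and the ratio is infinite. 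Here the small agents' costs may exceed $B$, which is allowed.

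\emph{Case 2: $\sum_i p_i(\{0,i\}) \ge B$.} Then some small agent $i$ has $p_i(\{0,i\})>0$. Budget feasibility at $\{0,i\}$ gives $p_0(\{0,i\}) < B$. If $p_i(\{i\})>0$, set $c_i\in(0,p_i(\{i\}))$ and $c_0\in(p_0(\{0,i\}),B]$, and give every other small agent $j$ a cost $c_j > p_j(\{i,j\})$. Then $\{i\}$ is a strict equilibrium of value $1$, while $\{0\}$ is feasible, so $\OPT\ge\sqrt m$. If $p_i(\{i\})=0$, I would use $S=\emptyset$ with $c_i>0$ small and all other costs strictly above their singleton payments. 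Then $\OPT\ge v(\{i\})=1>0$ and the ratio is infinite.

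The main obstacle I expect is exactly this bookkeeping of degenerate payments. Zero payments or payments equal to $B$ threaten either strictness of the equilibrium or feasibility of the intended optimum, and I would handle them as above. I would also double-check that $\OPT$ is computed under the same cost vector, so that the big agent with $c_0\le B$ really is affordable. If the constant needs to match $\sqrt{n-1}$ exactly, the values $1$ and $\sqrt m$ balance the two traps.
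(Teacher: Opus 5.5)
Your proof is correct and is essentially the paper's argument. It uses the same additive instance (one agent of value $\sqrt{n-1}$, the others of value $1$) and the same three traps: $\{h\}$ against all small agents, a single small agent against $\{h\}$, and $\emptyset$ with infinite ratio. The only difference is the case split: you branch on whether $\sum_i p_i(\{0,i\})<B$ and set each small agent's cost just above its pair payment, whereas the paper first forces $p_i(\{i\})=B$ for all $i$ (otherwise the $\emptyset$ trap applies) and then branches on whether $p_h(\{h,i\})<B$ for some $i$, so its $\{h\}$ trap is the special case $p_i(\{h,i\})=0$ with uniform costs $\epsilon\le 1/n$.
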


\begin{proof}
We fix $n\ge 3$. We write the agent set as
\[
  N=\{h,1,2,\ldots, n-1 \}.
\]
In what follows, $h$ will be the high-value agent. In particular, the value function is defined as
\[
  v(S)= \sqrt{n-1} \cdot \mathbbm{1} \{h\in S\}+|S \cap [n-1]|.
\]
This is an additive function with $v_h = \sqrt{n-1}$ and $v_i = 1$ for $i \in N \setminus \{h\}$. We further define $B = 1$.

We split the analysis into the following cases.

\paragraph{Case 1: some singleton is not paid the full budget.}
First, suppose that there exists an agent $i\in N$ such that $p_i(\{i\})<1$. Let $c_i\in (p_i(\{i\}),1]$ and set $c_{i'} = 2$ for each $i' \neq i$. Now, consider the set $S = \emptyset$. Since $p_i(\{i\}) - c_i < 0$, $i$ has no incentive to deviate. Moreover, if an agent $i' \neq i$ were to opt in, budget feasibility implies $p_{i'}( \{ i' \} ) \leq 1 < 2 = c_{i'}$, so this again results in a strictly worse outcome for that agent. This shows that $S = \emptyset$ is a pure Nash equilibrium, no matter how ties are handled.

However, $c_i \leq 1 = B$, so $\OPT > 0$ and the price of anarchy is $+\infty$. In other words, any deterministic cost-oblivious payment rule with finite price of anarchy must satisfy
\[
  p_i(\{i\})=1 \quad \forall i \in N.
\]
\paragraph{Case 2: every singleton is paid the full budget.} In view of the previous argument, for the rest of the proof, we can assume that
\[
  p_i(\{i\})=1\qquad\forall i\in N.
\]
We analyze outcomes that contain two agents.

\subparagraph{Case 2a: the high-value agent is not paid the full budget in some pair.}
Suppose that there exists $i \in [n-1]$ such that $p_h( \{h, i \} ) < 1$. Let $c_h\in (p_h(\{h,i\}),1]$ and $c_{i} = \epsilon$ for some $\epsilon \in (0, 1)$. Moreover, $c_{i'} = 2$ for all $i' \in [n-1] \setminus \{i \}$.

We claim that $S = \{ i \}$ is a pure Nash equilibrium, no matter how ties are broken. Indeed, $p_i( \{ i \} ) - c_i = 1 - \epsilon > 0$, so $i$ strictly prefers opting in. Moreover, agent $h$ does not have an incentive to join because $p_h( \{h, i \} ) - c_h < 0$, by definition of $c_h$. Finally, every other agent $i' \in [n-1] \setminus \{i \}$ does not want to join because its payment in any outcome is at most $1$ (the budget), whereas $c_{i'} = 2$.

The value in this equilibrium is $v( \{i \} ) = 1$. At the same time, $\{h\}$ is feasible because $c_h \leq 1$, and $v( \{h\} ) = \sqrt{n-1}$. This shows a lower bound of $\sqrt{n-1}$ on the price of anarchy.

\subparagraph{Case 2b: the high-value agent is paid the full budget in every pair.} In the remaining case, we assume that
\[
  p_h(\{h, i \}) = 1 \quad \forall i \in [n-1].
\]
By budget feasibility and nonnegativity of payments, this implies
\[
  p_i (\{h,i\})=0 \quad \forall i \in [n-1].
\]
We pick $\epsilon \in\left(0,\frac{1}{n}\right]$ and set
\[
  c_h=c_1=\cdots=c_{n-1}= \epsilon.
\]
This means that the entire set $N$ maintains budget feasibility since $ \sum_{i \in N} c_i = n \epsilon \leq 1 = B$. Yet, we claim that $S=\{h\}$ is a Nash equilibrium no matter how ties are broken. Agent $h$ prefers to opt in since $p_h( \{ h \} ) - c_h = 1 - \epsilon > 0$. For any other agent $i \in [n-1]$, we have
\[
  p_i(\{h,i\})-c_i = 0 - \epsilon < 0.
\]
This proves that $\{ h \}$ is indeed a pure Nash equilibrium, and has value $v( \{h\} ) = \sqrt{n-1}$. On the other hand, the entire set $N$, which is budget feasible, has value $v(N) = \sqrt{n-1} + n-1$. Therefore,
\[
    \frac{\OPT}{v(\{h\})} = \frac{ n - 1 + \sqrt{n-1} }{\sqrt{n-1}} \geq \sqrt{n - 1}.
\]
This completes the proof.
\end{proof}

\subsection{Constant PoA beyond the large-market assumption via randomization}
\label{sec:randomized-no-large-market}
In this subsection, we assume platform-favoring tie-breaking but work without any large-market assumption: agents may have costs as large as or larger than the budget. Despite the $\Omega(\sqrt{n})$ lower bound for deterministic payment rules (\Cref{thm:lower-deterministic-no-large-market}), we show that one can recover a constant PoA using a randomized payment rule. A randomized rule is a distribution over deterministic payment rules, and a deterministic payment rule is sampled and announced
before the agents choose whether to opt in. The performance guarantee is with
respect to the expectation over randomization and the worst
principal-favoring pure Nash equilibrium in each realized branch.


We consider a randomized rule over two deterministic branches. The first branch is a singleton-prize rule. For every nonempty $S\subseteq N$,
let $w(S)\in\argmax_{i\in S} v(\{i\})$ be selected according to a fixed public priority order. We define
\[
        p^{\mathrm{sing}}_i(S)
        =
        \begin{cases}
        B & \text{if }S\neq\emptyset\text{ and }i=w(S),\\
        0 & \text{otherwise.}
        \end{cases}
\]
The second branch is the marginal-contribution rule $\pMC$ from
\eqref{eq:mc-rule}. Both branches are budget feasible; the
budget feasibility of $\pMC$ is exactly \Cref{prop:budget}.

We first show that pure Nash equilibria exist in both branches. We note that the potential function argument in \Cref{thm:existence} requires $c_i < B$ for every agent and does not directly generalize. Nevertheless, we use a refined argument to show the existence of a pure Nash equilibrium.
\begin{proposition}[Existence of equilibria in the two branches]
\label{prop:no-large-market-existence}
Both branches above admit a principal-favoring pure Nash equilibrium.
\end{proposition}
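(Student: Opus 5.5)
\emph{Notation.} I will read ``principal-favoring pure Nash equilibrium'' as follows. $S$ is such an equilibrium if every $i\in S$ has $p_i(S)\ge c_i$, and no indifferent member would lower $v$ by leaving. Every $i\notin S$ has either $p_i(S\cup\{i\})<c_i$, or $p_i(S\cup\{i\})=c_i$ and $v(S\cup\{i\})=v(S)$. The plan is to exhibit such a set directly in each branch. Agents with $c_i>B$ can be set aside in both branches: by budget feasibility they strictly prefer to stay out of any profile.

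\textbf{Singleton branch.} Let $F=\{i: c_i\le B\}$. If $F=\emptyset$, then $S=\emptyset$ works, since every opt-in is strictly unprofitable. Otherwise, let $i^\star$ be the maximizer of $v(\{i\})$ over $F$ under the fixed public priority order. Let $Z=\{j\in F: c_j=0\}$ and set $S=\{i^\star\}\cup Z$. The key observation is $w(S)=i^\star$, because $S\subseteq F$ and $i^\star$ beats every member of $F$. Hence $i^\star$ receives $B\ge c_{i^\star}$, and the members of $Z$ receive $0=c_j$. By monotonicity, removing a member of $Z$ cannot raise $v$, so no member leaves. Now consider an outside $j$ with $c_j>0$. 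If $j\in F$, then $j$ does not beat $i^\star$, so $w(S\cup\{j\})=i^\star$ still. Thus $j$ is paid $0<c_j$, and staying out is strict. If $j\notin F$, then $c_j>B$ and again staying out is strict. No zero-cost agent remains outside $S$, so $S$ is an equilibrium even under principal-favoring tie-breaking.

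\textbf{Marginal-contribution branch.} Let $N_{<}=\{i:c_i<B\}$. On subsets of $N_<$, I will use the potential $\Phi(S)=v(S)\prod_{i\in S}(1-c_i/B)$ from \Cref{thm:existence}. Choose $S$ maximizing, lexicographically, $\bigl(\Phi(S),\,v(S),\,-\sum_{i\in S}c_i\bigr)$ over $S\subseteq N_<$. The proof of \Cref{thm:existence} gives the needed comparisons: a strict opt-in or opt-out improvement strictly raises $\Phi$ when values are positive, and the opt-out from a zero-value state raises the third coordinate. So no agent in $N_<$ has a strict deviation.

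For ties, the same algebra shows $p_i(S\cup\{i\})=c_i$ iff $\Phi(S\cup\{i\})=\Phi(S)$. If such an indifferent $i$ had $v(S\cup\{i\})>v(S)$, the second coordinate would increase, contradicting maximality. Likewise, an indifferent member whose departure keeps $\Phi$ unchanged must have $v(S\setminus\{i\})\le v(S)$, so leaving is not principal-favoring. The degenerate case $v(S)=0$ is resolved by the third coordinate, exactly as in \Cref{thm:existence}.

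\textbf{Main obstacle: agents with $c_i=B$ exactly.} Their payment is at most $B$, with equality only if $\Delta_i(S\cup\{i\})=v(S\cup\{i\})$, i.e.\ $v(S)=0<v(S\cup\{i\})$. So they can only be tied, and only when $v(S)=0$.

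\emph{Case $\max\Phi>0$.} Then $v(S)>0$, so every such agent strictly prefers staying out, and the set above works.

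\emph{Case $\max\Phi=0$.} Then every agent of $N_<$ has zero singleton value, hence by submodularity $v$ vanishes on $N_<$. If some agent $h$ with $c_h=B$ has $v(\{h\})>0$, I would switch to $S=\{h\}$. Agent $h$ is paid $B=c_h$, and leaving lowers $v$, so $h$ stays. Any other agent $j$ joining gets payment $B\,\Delta_j/v<B$ whenever $v(\{h\})>0$ is shared with a nonzero remainder. Here the remaining check is that agents of $N_<$ have zero marginal on $\{h\}$ only up to submodularity. If some $j$ has a positive marginal, I would instead iterate: grow the set greedily with the lexicographic potential restricted to profiles containing $h$, treating $h$ as a zero-cost anchor. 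This last case is the step I expect to require the most care.
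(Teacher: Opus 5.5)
Your marginal-branch argument follows the paper's route. You restrict to agents with $c_i<B$, use a lexicographic refinement of $\Phi$, and split on whether the resulting value is positive. In the zero-value case you fall back to $\{h\}$ for an agent with $c_h=B$ and $v(\{h\})>0$. Taking a global lexicographic maximizer, rather than a terminal state of dynamics, lets you read off $v(\{i\})=0$ for all $i\in N_{<}$ directly from $\max\Phi=0$; this is slightly cleaner than the paper's detour through $S_L=\emptyset$.

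In the singleton branch, your witness $\{i^\star\}\cup Z$ is more careful than the paper's $\{h\}$. Suppose a zero-cost agent $j$ has $v(\{h,j\})>v(\{h\})$. Then $j$ is indifferent about joining $\{h\}$, and principal-favoring tie-breaking makes it join, so $\{h\}$ alone need not be a principal-favoring equilibrium. Adding $Z$ removes exactly these ties. Your opening definition should say that no indifferent member would \emph{raise} $v$ by leaving; your later uses have it the right way round.

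The step you flag as the main obstacle is not one, and the proposed fallback should be deleted. You have already established $v(\{j\})=0$ for every $j\in N_{<}$. Submodularity then gives $v(\{h,j\})-v(\{h\})\le v(\{j\})-v(\emptyset)=0$, so such a $j$ is paid $0$ upon joining $\{h\}$. That is strictly unprofitable if $c_j>0$, and a tie with no value gain if $c_j=0$. Any other $j$ with $c_j=B$ is paid $B\bigl(1-v(\{h\})/v(\{h,j\})\bigr)<B$, and agents with $c_j>B$ stay out strictly. So $\{h\}$ is an equilibrium, exactly as in the paper.

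The ``iterate with $h$ as a zero-cost anchor'' plan would in fact fail. Since $c_h=B$, agent $h$ is paid strictly less than $B$ in any profile with $v(S\setminus\{h\})>0$, and would strictly leave.

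Finally, state the complementary subcase explicitly: no agent with $c_h=B$ has $v(\{h\})>0$. Then your lexicographic maximizer $S$ (with $v(S)=0$) already works. Any $j$ with $c_j=B$ joining it has $v(S\cup\{j\})\le v(S)+v(\{j\})=0$ and so receives zero payment, while agents with $c_j>B$ stay out strictly.
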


\begin{proof}
For the singleton branch, if no agent is affordable (\emph{i.e.,} $c_i > B$ for all $i \in N$), then $\emptyset$ is an
equilibrium. Otherwise, let us choose among affordable agents with maximum singleton
value, the one with the highest priority, and call it $h$. The set $\{h\}$ is an
equilibrium: agent $h$ receives $B$ and does not want to leave, every affordable
agent who opts in loses the prize and receives zero, and every unaffordable agent
who opts in either receives zero or receives $B<c_i$.

For the marginal branch, first restrict the game to agents with $c_i<B$. The potential argument from \Cref{thm:existence}
with the following refinement
\[
        \Phi''(S):= \left(
        v(S)\prod_{i\in S}\left(1-\frac{c_i}{B}\right),
        v(S),
        -|S|
        \right),
\]
strictly increases along every strict better response and every
principal-favoring zero-utility move in this restricted game. Hence the
restricted game has a terminal state $S_L$.

If $v(S_L)>0$, then no agent with $c_i\geq B$ wants to enter the full game:
agents with $c_i>B$ cannot receive enough payment, while agents with $c_i=B$ receive strictly less than $B$ when joining a positive-value state. Thus $S_L$
is an equilibrium of the full marginal branch.

It remains to consider the case $v(S_L)=0$. We claim that $S_L = \emptyset$ since otherwise we can remove an agent from $S_L$ and strictly improve the third coordinate of $\Phi''$ while keeping the first two coordinates unchanged by monotonicity, contradicting the fact that $S_L$ is a termination state. Then we also have $v(\{i\}) = 0$ for every agent $i$ with $c_i < B$ since otherwise they could join $S_L = \emptyset$ and strictly improve the potential.
If there is an agent
$h$ with $c_h=B$ and $v(\{h\})>0$, then $\{h\}$ is a full-game equilibrium:
agent $h$ receives $B$ and stays by principal-favoring tie-breaking; every other
agent with cost at least $B$ has no profitable entry; and every agent with
$c_i<B$ has $v(\{i\})=0$, hence has zero marginal contribution to $\{h\}$ by
submodularity. If no such agent exists, then $S_L = \emptyset$ itself is a full-game
equilibrium.
\end{proof}

We then analyze the performance of the pure Nash equilibrium under the two deterministic rules.
Define the best affordable singleton value $M \defeq \max\{v(\{i\}): c_i\leq B\}$ with the convention that $M=0$ if no agent is individually affordable.

\begin{lemma}[Singleton branch]
\label{lem:no-large-market-singleton}
Every principal-favoring pure Nash equilibrium $S$ of the singleton-prize branch satisfies $v(S)\geq M$.
\end{lemma}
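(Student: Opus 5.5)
The plan is to argue by contradiction on the winner of the prize. If $M=0$ the claim is trivial, so assume $M>0$. Let $h$ be an affordable agent ($c_h\le B$) attaining $v(\{h\})=M$, taking among all such agents the one of highest priority in the fixed public order. Let $S$ be a principal-favoring pure Nash equilibrium.

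First I rule out $S=\emptyset$. At $\emptyset$, if $h$ opts in it receives $p^{\mathrm{sing}}_h(\{h\})=B\ge c_h$, so its utility is $B-c_h\ge 0$. If $c_h<B$ this is a strict improvement. If $c_h=B$ the agent is indifferent, but opting in raises the principal's value from $0$ to $M>0$, so principal-favoring tie-breaking forces entry. Either way $\emptyset$ is not an equilibrium, and hence $S\neq\emptyset$.

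Next, let $w=w(S)$. Monotonicity gives $v(S)\ge v(\{w\})$, and by the choice of $w(S)$ we have $v(\{w\})=\max_{i\in S}v(\{i\})$. So it suffices to show $v(\{w\})\ge M$, i.e.\ that the winner of $S$ has singleton value at least $M$. Two cases:
\begin{itemize}
\item If $h\in S$, then $v(\{w\})\ge v(\{h\})=M$ directly.
\item If $h\notin S$, suppose for contradiction that $v(\{w\})<M$. Then every $i\in S$ has $v(\{i\})<M=v(\{h\})$, so $w(S\cup\{h\})=h$. Entering gives $h$ payment $B$ and utility $B-c_h\ge 0$, while $v(S\cup\{h\})\ge v(S)$. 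When $c_h<B$ entry is strictly profitable, a contradiction.
\end{itemize}

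The main obstacle is the tie case $c_h=B$ in the second bullet, because entry need not strictly raise $v$. If $v(S\cup\{h\})>v(S)$, principal-favoring tie-breaking still forces entry, a contradiction. Otherwise $v(S\cup\{h\})=v(S)$, and then monotonicity gives $v(S)=v(S\cup\{h\})\ge v(\{h\})=M$, which is exactly the desired conclusion anyway. So in every case $v(S)\ge M$.

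I would also double-check the role of priority ties among agents with singleton value exactly $M$. It is harmless: if the current winner has value $\ge M$ we are already done, so only the strict-inequality case $v(\{w\})<M$ needs the deviation argument, and there $h$ strictly dominates every member of $S$ in singleton value, so priority does not matter.
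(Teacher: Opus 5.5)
Your proof is correct and takes the same route as the paper: a contradiction via the entry deviation of an affordable agent with top singleton value, who wins the full prize $B$, with principal-favoring tie-breaking covering the case $c_h=B$. The paper assumes $v(S)<M$ directly, so $v(S\cup\{h\})\geq M>v(S)$ holds automatically; this makes your separate treatment of $S=\emptyset$ and your subcase $v(S\cup\{h\})=v(S)$ unnecessary.
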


\begin{proof}
Suppose for contradiction that $v(S)<M$, and let $i^*$ be an affordable agent
with $v(\{i^*\})=M$. Then $i^*\notin S$, since otherwise monotonicity would give
$v(S)\geq M$. Moreover, for every $j\in S$,
\[
        v(\{j\})\leq v(S)<M=v(\{i^*\}).
\]
Thus, if $i^*$ opts in, it wins the singleton prize and receives $B$. Since
$c_{i^*}\leq B$, opting in gives nonnegative utility. If the utility is strictly
positive, this is a profitable deviation. If the utility is zero, then opting in
strictly increases the principal's value because
\[
        v(S\cup\{i^*\})\geq v(\{i^*\})=M>v(S),
\]
so principal-favoring tie-breaking again makes $i^*$ opt in. This contradicts the equilibrium condition.
\end{proof}

\begin{lemma}[Marginal branch]
\label{lem:no-large-market-marginal}
Let $S$ be a principal-favoring pure Nash equilibrium of the marginal branch,
and let $U\defeq v(S)$. Then
\[
        \OPT\leq M+2U.
\]
\end{lemma}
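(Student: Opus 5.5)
The plan is to mimic the proof of \Cref{thm:poa-pne}, charging the single most expensive outside agent of the optimum to $M$ rather than to $U$.

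\textbf{Setup.} Let $S^*$ be a budget-feasible optimal set, so $\sum_{i\in S^*}c_i\le B$. By monotonicity and submodularity,
\[
\OPT=v(S^*)\le v(S\cup S^*)\le U+\sum_{i\in S^*\setminus S}\bigl(v(S\cup\{i\})-v(S)\bigr).
\]
If $S^*\setminus S=\emptyset$ we are done. Otherwise, let $i_1\in S^*\setminus S$ be an agent of maximum cost in $S^*\setminus S$, and write $x_i\defeq c_i/B$.

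\textbf{The top agent.} Since $i_1\in S^*$, we have $c_{i_1}\le B$, so $i_1$ is individually affordable. Submodularity together with $v(\emptyset)=0$ gives
\[
v(S\cup\{i_1\})-v(S)\le v(\{i_1\})\le M.
\]

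\textbf{The remaining agents.} Every other $i\in S^*\setminus S$ satisfies $x_i\le x_{i_1}$ and $\sum_{i\ne i_1}x_i\le 1-x_{i_1}$. I would use the equilibrium condition exactly as in \eqref{eq:marg-value}. Agent $i$ does not strictly prefer to opt in. Under principal-favoring tie-breaking it also does not opt in at zero utility when this would raise the value. Either way,
\[
B\,\frac{v(S\cup\{i\})-v(S)}{v(S\cup\{i\})}\le c_i
\]
whenever $v(S\cup\{i\})>0$; if instead $v(S\cup\{i\})=0$, the marginal is trivially $0$. I treat the subcases as follows.
\begin{itemize}
\item If $x_{i_1}<1$, then every other $i$ has $c_i<B$, and rearranging gives $v(S\cup\{i\})-v(S)\le U\,x_i/(1-x_i)$. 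This also covers $U=0$, where it forces a zero marginal.
\item If $x_{i_1}=1$, then all other $i\in S^*\setminus S$ have $c_i=0$. A positive marginal would then give such an agent strictly positive utility from entering, contradicting equilibrium. So their marginals vanish, and the bound below holds trivially.
\end{itemize}

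\textbf{Summing.} The key step is to bound the sum using the maximality of $i_1$ rather than the cruder $\costcon$ estimate, which would only give $2U$ here. Since $x_i\le x_{i_1}$,
\[
\sum_{i\in S^*\setminus (S\cup\{i_1\})}\frac{x_i}{1-x_i}\le \frac{1}{1-x_{i_1}}\sum_{i\ne i_1}x_i\le\frac{1-x_{i_1}}{1-x_{i_1}}=1.
\]
Hence the remaining agents contribute at most $U$ in total. Combining the three pieces gives $\OPT\le U+M+U=M+2U$.

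\textbf{Main obstacle.} The main point needing care is the equilibrium inequality for agents with cost possibly equal to $B$ or equal to $0$, and for zero-value states; the case analysis above handles these. Beyond that, the only real idea is choosing to single out the maximum-cost agent.
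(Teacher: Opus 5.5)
Your proof is correct, but it splits the sum differently from the paper.

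\emph{The paper's route.} It proves a uniform per-agent bound $d_i\le x_i(M+U)$ for every $i\in S^*\setminus S$. This interpolates between $d_i\le M$ and $d_i\le \frac{x_i}{1-x_i}U$, with a case split on whether $M\le x_i(M+U)$. It then sums using $\sum_{i\in S^*}x_i\le 1$, after first ruling out $U=0$ via principal-favoring tie-breaking.

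\emph{Your route.} You charge only the most expensive outside member $i_1$ of $S^*$ to $M$. All the others are absorbed into a single $U$, using $1-x_i\ge 1-x_{i_1}$ and the residual budget $\sum_{i\ne i_1}x_i\le 1-x_{i_1}$.

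\emph{What each buys.} The paper's argument is symmetric and needs no ordering. It also gives the intermediate inequality $\OPT\le U+(M+U)\sum_{i\in S^*\setminus S}c_i/B$, which charges $M$ only in proportion to the budget used outside $S$. Yours shows that at most one agent ever needs to be charged to $M$. If you keep $\min\{M,\frac{x_{i_1}}{1-x_{i_1}}U\}$ for that agent, you get $\OPT\le 2U+\min\{M,\frac{\lambda}{1-\lambda}U\}$ whenever $\lambda<1$. This also recovers the $\frac{2-\lambda}{1-\lambda}$ bound of \Cref{thm:poa-pne}.

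Your argument also shows that the lemma holds for every pure Nash equilibrium of the marginal branch. You only use the weak no-entry inequality and, when $x_{i_1}=1$, the strict profitability of a zero-cost entrant. So tie-breaking plays no role; your sentence describing it is garbled but harmless. For the same reason, the paper's preliminary $U>0$ step is not needed either.

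Finally, your aside that the $\costcon$ estimate ``would only give $2U$'' is inaccurate. It gives $\costcon(\lambda)U$, which blows up as $\lambda\uparrow 1$. Nothing in your proof depends on this remark.
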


\begin{proof}
Let $S^*$ be an optimal budget-feasible set. If $\OPT=0$, the claim is immediate,
so assume $\OPT>0$.

We first note that $U>0$. Indeed, if $U=0$, then submodularity and $v(S^*)>0$
imply that some $i\in S^*$ has $v(\{i\})>0$. Since $i\in S^*$, we have $c_i\leq B$.
If $i$ opts in from $S$, it receives the full budget under $\pMC$ and the
principal's value strictly increases. This is either a strictly profitable
deviation or a principal-favoring tie, contradicting equilibrium.

For each $i\in S^*\setminus S$, write
\[
        d_i \defeq v(S\cup\{i\})-v(S)
        \text{ and }
        x_i \defeq \frac{c_i}{B}.
\]
Since $S^*$ is budget feasible, $x_i\in[0,1]$. The fact that $i$ does not opt in
at equilibrium gives
\[
        B\frac{d_i}{U+d_i}\leq c_i.
\]
When $x_i<1$, rearranging yields
\[
        d_i\leq \frac{x_i}{1-x_i}U.
\]
On the other hand, submodularity gives
\[
        d_i\leq v(\{i\})\leq M,
\]
where the last inequality uses $c_i\leq B$. Combining these two bounds gives,
for every $x_i\in[0,1]$,
\begin{equation}
        d_i\leq x_i(M+U).                         \label{eq:no-large-market-di}
\end{equation}
Indeed, for $x_i=1$ this follows from $d_i\leq M$. For $x_i<1$, if
$M\leq x_i(M+U)$ we again use $d_i\leq M$; otherwise
$x_i< M/(M+U)$, which implies
\[
        d_i \le \frac{x_i}{1-x_i}U\leq x_i(M+U).
\]
Finally, by monotonicity and submodularity,
\[
        v(S^*)
        \leq v(S\cup S^*)
        \leq v(S)+\sum_{i\in S^*\setminus S}\bigl(v(S\cup\{i\})-v(S)\bigr).
\]
Using \eqref{eq:no-large-market-di} and the budget feasibility of $S^*$,
\[
        v(S^*)
        \leq
        U+(M+U)\sum_{i\in S^*\setminus S}x_i
        \leq U+(M+U)
        =
        M+2U.
\]
This completes the proof.
\end{proof}

The randomized rule mixes the two branches as follows:
\[
        \begin{array}{ccl}
        \text{singleton-prize branch} & \text{with probability} & 1/3,\\[1mm]
        \text{marginal-contribution branch} & \text{with probability} & 2/3.
        \end{array}
\]

\begin{theorem}[Constant PoA without a large-market assumption]
\label{thm:no-large-market-randomized}
For every normalized monotone submodular value function $v$ and every budget
$B>0$, the randomized rule
above satisfies
\[
        \E[v(S)]\geq \frac{1}{3}\OPT
\]
under every branch-contingent principal-favoring pure Nash equilibrium.
The expected pure price of anarchy of this rule is therefore at most $3$.
\end{theorem}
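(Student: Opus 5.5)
The plan is to combine the two branch lemmas linearly. \Cref{prop:no-large-market-existence} guarantees that each branch has at least one principal-favoring pure Nash equilibrium, so the expectation is well defined. Fix an arbitrary choice of such an equilibrium in each branch: $S_1$ for the singleton-prize branch and $S_2$ for the marginal-contribution branch. Then
\[
\E[v(S)] = \tfrac13 v(S_1) + \tfrac23 v(S_2).
\]
The goal is to show that this quantity is at least $\OPT/3$. Because $S_1$ and $S_2$ are arbitrary, this bound then covers the worst principal-favoring equilibrium in each branch.

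The two bounds come directly from earlier results.
\begin{itemize}
\item \Cref{lem:no-large-market-singleton} gives $v(S_1)\ge M$, where $M$ is the best affordable singleton value.
\item \Cref{lem:no-large-market-marginal} gives $\OPT\le M+2U$ with $U=v(S_2)$, that is, $v(S_2)\ge (\OPT-M)/2$.
\end{itemize}
Substituting both,
\[
\tfrac13 v(S_1)+\tfrac23 v(S_2)\;\ge\; \tfrac13 M+\tfrac23\cdot\tfrac{\OPT-M}{2} \;=\; \tfrac13 M+\tfrac13\OPT-\tfrac13 M \;=\; \tfrac13\OPT.
\]
The mixing weights $1/3$ and $2/3$ are chosen precisely so that the $M$ terms cancel. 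The same calculation shows that no other split of probability works uniformly in $M$.

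A few edge cases need checking. If $\OPT=0$ the claim is trivial. Lemma~\ref{lem:no-large-market-marginal} already handles the case $U=0$ internally. If no agent is affordable, then $M=0$, so $\OPT\le 2U$ and the bound still holds. The PoA statement then follows by taking the ratio $\OPT/\E[v(S)]\le 3$, with the convention $0/0$ treated as $1$ when $\OPT=0$.

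There is essentially no obstacle here, since all the work sits in the two lemmas. The only point requiring care is to state the bound for every selection of equilibria in each branch. This works because the branch is sampled and announced before the agents play, so the equilibrium in one branch can be chosen independently of the other.
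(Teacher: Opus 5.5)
Your proposal is correct and is the same argument as the paper's proof. You fix arbitrary principal-favoring equilibria in the two branches, invoke $v(S^{\mathrm{sing}})\ge M$ and $\OPT\le M+2U$ from the two branch lemmas, and note that the $\tfrac13$/$\tfrac23$ mixture yields $\tfrac13(M+2U)\ge\tfrac13\OPT$.
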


\begin{proof}
Let $S^{\mathrm{sing}}$ be any principal-favoring pure Nash equilibrium in the
singleton branch, and let $S^{\mathrm{mc}}$ be any principal-favoring pure Nash
equilibrium in the marginal branch. Write $U=v(S^{\mathrm{mc}})$. By
\Cref{lem:no-large-market-singleton},
\[
        v(S^{\mathrm{sing}})\geq M,
\]
while \Cref{lem:no-large-market-marginal} gives
\[
        \OPT\leq M+2U.
\]
Therefore, the expected value of the randomized rule is at least
\[
        \frac{1}{3}v(S^{\mathrm{sing}})
        +
        \frac{2}{3}v(S^{\mathrm{mc}})
        \geq
        \frac{1}{3}M+\frac{2}{3}U
        =
        \frac{1}{3}(M+2U)
        \geq
        \frac{1}{3}\OPT. \qedhere
\]
\end{proof}





\subsection{Coarse correlated equilibria}
\label{sec:cce}

In light of the intractability of pure Nash equilibria (\Cref{thm:pls-complete}), an important question is whether coarse correlated equilibria also attain a constant fraction of the optimal value. We first give a simple constant-factor argument and then sharpen it to the asymptotically tight large-market bound.

In what follows, for $i \notin S$, it will be convenient to introduce the notation
\[
        x_i(S)=
        \begin{cases}
        \dfrac{v(S\cup\{i\})-v(S)}{v(S\cup\{i\})} & \text{if } v(S\cup\{i\})>0,\\
        0  & \text{if } v(S\cup\{i\})=0.
        \end{cases}
\]
In particular, if an agent $i$ opts in and the selected set is $S$, its payment under our marginal contribution rule will be $B x_i(S)$. We are now ready to obtain a PoA bound for CCEs.

\begin{theorem}[PoA bound for CCEs]
\label{thm:cce-submodular}
Let $\lambda=\max_{i \in N} c_i/B<1$. For any coarse correlated equilibrium $\mu \in \Delta(2^N)$ of $\gameMC$,
\[
        \OPT
        \le
        \left(2+\frac{1}{1-\lambda}\right)\mathbb E_{S\sim\mu}[v(S)].
\]
As a result,
\[
        \PoACCE
        \le
        2+\frac{1}{1-\lambda}.
\]
\end{theorem}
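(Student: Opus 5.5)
The plan is to use only the CCE deviation ``always opt in'' for each agent of a fixed optimal set $S^*$, in the spirit of \Cref{thm:poa-pne}, but summed in expectation over $S\sim\mu$. Fix $S^*\in\argmax\{v(T):\sum_{i\in T}c_i\le B\}$ and write $d_i(S)\defeq v(S\cup\{i\})-v(S)$ for $i\notin S$.

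\textbf{Step 1: the equilibrium inequality.} For $i\in S^*$, the CCE condition (\Cref{def:CCE}) with $a_i'=\optin$ gives $\E[u_i(S)]\ge\E[u_i(S\cup\{i\})]$. On realizations with $i\in S$ the two sides coincide, so those terms cancel. What remains is
\[
\E\bigl[\mathbbm{1}\{i\notin S\}\,(Bx_i(S)-c_i)\bigr]\le 0 .
\]
Summing over $i\in S^*$ and using $\sum_{i\in S^*}c_i\le B$ yields
\begin{equation}
\E\Bigl[\sum_{i\in S^*\setminus S}x_i(S)\Bigr]\le \E\Bigl[\sum_{i\in S^*\setminus S}\frac{c_i}{B}\Bigr]\le 1. \label{eq:plan-cce-key}
\end{equation}
This is the only place equilibrium is used. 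The slack in it, namely the cost mass of optimal agents already inside $S$, is something I may need to retain rather than discard.

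\textbf{Step 2: a pointwise lower bound on $\sum_{i\in S^*\setminus S}x_i(S)$ in terms of $v(S)$ and $\OPT$.} For each realization, $x_i(S)=d_i/(v(S)+d_i)$. By monotonicity and submodularity (as in \Cref{thm:poa-pne}) we have $\sum_{i\in S^*\setminus S}d_i\ge \OPT-v(S)$, and $d_i\le v(\{i\})\le \OPT$. The map $d\mapsto d/(v+d)$ is concave and increasing, so on $[0,\OPT]$ it lies above its chord: $x_i(S)\ge d_i/(v(S)+\OPT)$. This gives
\[
\sum_{i\in S^*\setminus S}x_i(S)\ge \frac{\OPT-v(S)}{\OPT+v(S)}.
\]
Plugging this alone into \eqref{eq:plan-cce-key} is vacuous, since the right-hand side is always at most $1$. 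So the argument must be sharpened using $\lambda$.

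\textbf{Step 3 (main obstacle): bringing in $\lambda$.} I would split $S^*\setminus S$ pointwise into two groups.
\begin{itemize}[leftmargin=*]
\item \emph{Large} agents have $d_i\ge\frac{\lambda}{1-\lambda}v(S)$, equivalently $x_i(S)\ge\lambda\ge c_i/B$. Each of these contributes a nonnegative amount $x_i(S)-c_i/B$ to the gap in the refined form of Step 1.
\item \emph{Small} agents have $d_i<\frac{\lambda}{1-\lambda}v(S)$. Their total contribution to $\OPT-v(S)$ is then controlled much as in \Cref{thm:poa-pne}.
\end{itemize}
For large agents I would use the chord bound of Step 2. For small agents I would use the linearization $x_i\ge(1-\lambda)d_i/v(S)$ together with $\sum c_i/B\le1$. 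Summing and taking expectations should give an inequality of the form
\[
\OPT\le \Bigl(1+\tfrac{1}{1-\lambda}\Bigr)\E[v(S)]+\E\bigl[\text{large-agent marginals}\bigr].
\]
The last term is then bounded by $\E[v(S)]+{}$(a term absorbed via \eqref{eq:plan-cce-key}), which produces the extra additive $1$ and hence the constant $2+\frac{1}{1-\lambda}$.

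The delicate point is that costs appear only in expectation, so I cannot use $\sum c_i\le B$ per realization against a per-realization $v(S)$. Wherever the bound is nonlinear in $v(S)$ (the chord expression), I would finish by applying Jensen to the convex function $v\mapsto(\OPT-v)/(\OPT+v)$, moving from $\E[f(v(S))]$ to $f(\E[v(S)])$. I expect matching constants exactly to require some tuning of this split; Step 1 and the submodular covering inequality are routine.
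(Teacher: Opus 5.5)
Your Steps 1 and 2 are correct and match the skeleton of the paper's proof. Step 3, however, is not yet an argument, and it cannot be completed from the ingredients you allow yourself. You use equilibrium only through the Step 1 inequality summed over $S^*$, and that inequality is too weak even if you keep its slack. Take $B=1$, any $\lambda\in(1/2,1)$, and agents $h,g,\ell$ with $c_h=c_g=\lambda$ and $c_\ell=1-\lambda$. Use the coverage function in which $h$ covers an element of weight $1$, while $g$ and $\ell$ both cover one common element of weight $\delta$. Then $\{h,g\}$ is infeasible, and the unique optimum is $S^*=\{h,\ell\}$ with $\OPT=1+\delta$. The point mass on $S=\{g\}$ satisfies your Step 1 inequality, because $x_h(S)+x_\ell(S)=\frac{1}{1+\delta}+0<1=c_h+c_\ell$. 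Yet $\OPT/v(S)=(1+\delta)/\delta$ is unbounded.

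Summing lets the high-value agent $h$, who has $x_h(S)\approx 1$, be ``paid for'' by the cost of $\ell$, whose marginal is zero. This is also what breaks your split. Nothing caps the large-agent marginals below $\OPT$, so the chord bound leaves a term $\OPT\cdot\E[\sum_{i\in S^*\setminus S}x_i(S)]$, which Step 1 only bounds by $\OPT$. For small agents, the bound $\frac{v(S)}{1-\lambda}\sum x_i(S)$ is a product of correlated random quantities, and Step 1 does not control its expectation either.

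The missing idea is to use the always-opt-in constraint agent by agent, in order to cap singleton values in terms of $\bar v\defeq\E_{S\sim\mu}[v(S)]$. Fix $i$ with $v(\{i\})>0$. On $\{i\notin S\}$, monotonicity gives $x_i(S)\ge 1-v(S)/v(\{i\})$. So the constraint $\E[\mathbbm{1}\{i\notin S\}x_i(S)]\le\frac{c_i}{B}\Pr[i\notin S]$ becomes $\E[\mathbbm{1}\{i\notin S\}v(S)]\ge(1-\frac{c_i}{B})\Pr[i\notin S]\,v(\{i\})$. Adding $\E[\mathbbm{1}\{i\in S\}v(S)]\ge\Pr[i\in S]\,v(\{i\})$ yields $v(\{i\})\le\bar v/(1-\lambda)\eqqcolon M$. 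By submodularity, $d_i(S)\le M$ for every $S$. This also settles the case $\bar v=0$, where it forces $\OPT=0$.

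With this cap, your Step 2 goes through with $\OPT$ replaced by $M$. Pointwise, $\sum_{i\in S^*\setminus S}x_i(S)\ge[\OPT-v(S)]_+/(v(S)+M)$, and the right-hand side is a convex function of $v(S)$. Jensen together with Step 1 then gives $\OPT-\bar v\le\bar v+M$, which is exactly $\OPT\le(2+\frac{1}{1-\lambda})\bar v$. No large/small split is needed; this is precisely the paper's argument.
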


\begin{proof}
Let
\[
        \bar{v}=\mathbb E_{S\sim\mu}[v(S)].
\]
First, suppose that $\bar{v}=0$. Then $v(S)=0$ for any set $S$ in the support of $\mu$. If there is an agent $i$ with $v(\{i\}) > 0$, then $i \notin S$ for any set $S$ in the support of $\mu$. Moreover, opting in would give a payment of $B$ since $x_i(S) = 1$ for any $S$ in the support of $\mu$. Given that $B > c_i$, this contradicts the CCE condition. As a result, $v(\{i\})=0$ for every $i\in N$, which in turn implies that $\OPT = 0$. In the remainder of the proof, we can assume $\bar{v} > 0$.

For every agent $i\in N$, the CCE condition for the deviation in which $i$ always opts in gives
\begin{equation}
    \label{eq:optinginCCE}
    \mathbb E_{S \sim \mu} \left[
        \mathbbm{1}\{i\notin S\}
        \bigl(Bx_i(S)-c_i\bigr)
        \right]
        \le 0.
\end{equation}
Equivalently,
\begin{equation}
        \mathbb E_{S \sim \mu} \left[
        \mathbbm{1}\{i\notin S\}x_i(S)
        \right]
        \le
        \frac{c_i}{B}\Pr_{ S \sim \mu}[i\notin S].
        \label{eq:cce-no-entry-y}
\end{equation}

We next upper bound the singleton values. For an agent $i\in N$ let $v_i=v(\{i\})$. We claim that $v_i \le \bar{v}/(1-\lambda)$. The claim is trivial when $v_i = 0$. Now suppose $v_i>0$. In the event $i \notin S$, monotonicity gives $v(S\cup\{i\})\ge v_i$, so
\[
        x_i(S)
        =1-\frac{v(S)}{v(S\cup\{i\})}
        \ge
        1-\frac{v(S)}{v_i}.
\]
Combining this with~\eqref{eq:cce-no-entry-y}, we obtain
\[
        \Pr_{S \sim \mu}[i\notin S]-\frac{\mathbb E_{S \sim \mu} [\mathbbm{1}\{i\notin S\}v(S)]}{v_i}
        \le
        \frac{c_i}{B}\Pr_{S \sim \mu}[i\notin S].
\]
Therefore,
\begin{equation}
        \mathbb E_{S \sim \mu}[\mathbbm{1}\{i\notin S\}v(S)]
        \ge
        \left(1-\frac{c_i}{B}\right)\Pr_{ S \sim \mu }[i\notin S]v_i.
        \label{eq:absent-utility-lower}
\end{equation}
On the other hand, if $i\in S$, monotonicity gives $v(S) \ge v_i$, so
\begin{equation}
        \mathbb E_{S \sim \mu}[\mathbbm{1}\{i\in S\}v(S)]
        \ge
        \Pr_{ S \sim \mu }[i\in S]v_i.
        \label{eq:present-utility-lower}
\end{equation}
Adding~\eqref{eq:absent-utility-lower} and~\eqref{eq:present-utility-lower}, we get
\[
        \bar{v}
        \ge
        \left(\left(1-\frac{c_i}{B}\right)\Pr_{ S \sim \mu }[i\notin S]+\Pr_{ S \sim \mu }[i\in S]\right)v_i
        =
        \left(1-\frac{c_i}{B}\Pr_{S \sim \mu}[i\notin S]\right)v_i
        \ge
        \left(1-\frac{c_i}{B}\right)v_i.
\]
Rearranging,
\[
v_i
        \le
        \frac{\bar{v}}{1-c_i/B}
        \le
        \frac{\bar{v}}{1-\lambda}.
\]
By submodularity, it follows that for every set $S$ with $i\notin S$,
\[
        v(S\cup\{i\})-v(S)\le v(\{i\})=v_i,
\]
so
\begin{equation}
        v(S\cup\{i\})-v(S)\le \frac{\bar{v}}{1-\lambda}
        \quad\text{for every } i\notin S.
        \label{eq:uniform-marginal-bound}
\end{equation}
Now, let $S^* \in \argmax\{v(S) : \sum_{i\in S} c_i/B \le 1\}$. Summing~\eqref{eq:cce-no-entry-y} over $i\in S^*$ gives
\begin{equation}
        \mathbb E_{S \sim \mu} \left[
        \sum_{i\in S^*\setminus S}x_i(S)
        \right]
        \le
        \sum_{i\in S^*} \frac{c_i}{B}\Pr_{S \sim \mu}[i\notin S]
        \le
        \sum_{i\in S^*}\frac{c_i}{B}
        \le 1.
        \label{eq:sum-y-upper}
\end{equation}
For every set $S$, monotonicity and submodularity imply
\[
        \OPT
        =v(S^*)
        \le
        v(S\cup S^*)
        \le
        v(S)+\sum_{i\in S^*\setminus S}\bigl(v(S\cup\{i\})-v(S)\bigr).
\]
Therefore, for any $S$,
\begin{equation}
        \sum_{i\in S^*\setminus S}\bigl(v(S\cup\{i\})-v(S)\bigr)
        \ge \max\{0, \OPT - v(S)\} \eqqcolon
        [\OPT-v(S)]_+.
        \label{eq:deficit-lower}
\end{equation}
Using~\eqref{eq:uniform-marginal-bound}, for each $i\in S^*\setminus S$ we have
\[
        x_i(S)
        =
        \frac{v(S\cup\{i\})-v(S)}{v(S\cup\{i\})}
        \ge
        \frac{v(S\cup\{i\})-v(S)}{v(S)+M},
\]
where we defined $M \defeq \bar{v}/(1 - \lambda)$. Combining with~\eqref{eq:deficit-lower},
\begin{equation}
        \sum_{i\in S^*\setminus S}x_i(S)
        \ge
        \frac{[\OPT-v(S)]_+}{v(S)+M}.
        \label{eq:y-deficit-lower}
\end{equation}
Taking the expectation over $S \sim \mu$ in~\eqref{eq:y-deficit-lower} and using~\eqref{eq:sum-y-upper}, we obtain
\begin{equation}
        \mathbb E_{S \sim \mu} \left[
        \frac{[\OPT-v(S)]_+}{v(S)+M}
        \right]
        \le 1.
        \label{eq:key-cce-expectation}
\end{equation}
We now consider the function
\[
        h(t)=\frac{[\OPT - t]_+}{t + M}
        \quad t \ge 0.
\]
This function is convex on $[0, +\infty)$, so Jensen's inequality yields
\[
    \E_{S \sim \mu} [h( v(S) )] \geq h\left( \E_{S \sim \mu} v(S) \right) = h(\bar{v}).
\]
Combining this with~\eqref{eq:key-cce-expectation},
\[
    \frac{[ \OPT - \bar{v} ]_+}{ \bar{v} + M } = h(\bar{v}) \leq \E_{S \sim \mu} [h( v(S) )] = \mathbb E_{S \sim \mu} \left[
        \frac{[\OPT-v(S)]_+}{v(S)+M}
        \right] \leq 1.
\]
If $\bar{v} > \OPT$, the claimed bound is immediate. Otherwise, using the fact that $M = \bar{v}/(1 - \lambda)$,
\[
    \OPT \leq 2 \bar{v} + M \le \left( 2 + \frac{1}{1 - \lambda} \right) \bar{v} = \left( 2 + \frac{1}{1 - \lambda} \right) \E_{S \sim \mu} [ v(S) ],
\]
as claimed.
\end{proof}

\begin{corollary}[CCE PoA bound for $1$-marginal-covering rules]
\label{cor:cce-marginal-covering}
Let $y$ be the shares of a $1$-marginal-covering budget-share rule, so the induced payments are $p_i(S)=B y_i(S)$.
Let $\lambda=\max_{i\in N}c_i/B<1$. For any coarse correlated equilibrium $\mu$ of the game induced by this payment rule,
\[
        \OPT
        \le
        \left(2+\frac{1}{1-\lambda}\right)\mathbb E_{S\sim\mu}[v(S)].
\]
In particular, the same bound holds for the normalized Shapley payment rule.
\end{corollary}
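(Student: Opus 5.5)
The plan is to rerun the proof of \Cref{thm:cce-submodular} with the marginal-contribution share $x_i(S)$ replaced by a general $1$-marginal-covering share. Only two properties of the payment rule are used there: the payment an entering agent would receive, and budget feasibility. I will check that each step goes through once the exact share is replaced by a lower bound on it.

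For $i\notin S$, write $y_i(S\cup\{i\})$ for the share agent $i$ would receive upon entering. The CCE condition for the deviation ``always opt in'' gives
\[
\E_{S\sim\mu}\bigl[\mathbbm{1}\{i\notin S\}\bigl(B y_i(S\cup\{i\})-c_i\bigr)\bigr]\le 0.
\]
By \Cref{def:marg-cov} with $\gamma=1$, whenever $v(S\cup\{i\})>0$ we have $y_i(S\cup\{i\})\ge x_i(S)$, since $x_i(S)$ is exactly the normalized marginal contribution of $i$ to $S\cup\{i\}$. When $v(S\cup\{i\})=0$ we have $x_i(S)=0$ and $y_i\ge 0$, so the inequality holds there too. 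Substituting $x_i(S)$ for $y_i(S\cup\{i\})$ therefore only weakens the left-hand side in the correct direction, and \eqref{eq:cce-no-entry-y} follows verbatim.

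The rest of the argument of \Cref{thm:cce-submodular} uses only \eqref{eq:cce-no-entry-y}, monotonicity, and submodularity. In particular:
\begin{itemize}[leftmargin=*]
\item The singleton bound $v_i\le \bar v/(1-\lambda)$ follows from \eqref{eq:cce-no-entry-y} and $x_i(S)\ge 1-v(S)/v_i$.
\item The summed bound \eqref{eq:sum-y-upper} over $i \in S^*$ needs only \eqref{eq:cce-no-entry-y} and the cost feasibility of $S^*$.
\item The deficit inequality \eqref{eq:y-deficit-lower} and the convexity/Jensen step for $h$ are independent of the payment rule.
\end{itemize}

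The degenerate case $\bar v=0$ needs a separate check. Suppose some agent $i$ has $v_i>0$. For every $S$ in the support of $\mu$, $i\notin S$ and $v(S)=0$, so the covering property gives $y_i(S\cup\{i\})\ge 1$. The share therefore equals $1$, the payment is $B>c_i$, and this contradicts the CCE condition. Hence every $v_i=0$, and so $\OPT=0$.

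The only point needing care is that the covering inequality is stated for the coalition $S\cup\{i\}$, the set actually realized after the deviation. That is exactly where $x_i(S)$ is evaluated, so no submodularity-based comparison across different sets is required.

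For the Shapley rule, \Cref{lem:shapley-efficiency-covering} states that $\pSH$ is a $1$-marginal-covering budget-share rule, so the bound applies to it directly.
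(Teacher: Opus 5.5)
Your proposal is correct and follows the paper's proof essentially step for step. You dispose of the degenerate case $\bar v=0$ by showing the covering condition forces $y_i(S\cup\{i\})\ge 1$, and you use $y_i(S\cup\{i\})\ge x_i(S)$ to recover \eqref{eq:cce-no-entry-y}, after which the argument of \Cref{thm:cce-submodular} goes through unchanged and the Shapley case follows from \Cref{lem:shapley-efficiency-covering}.
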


\begin{proof}
We use the notation $x_i(S)$ from the proof of~\Cref{thm:cce-submodular}. If $\bar v=\mathbb E_{S\sim\mu}[v(S)]=0$ and $v(\{i\})>0$ for some agent $i$, then $i\notin S$ for every $S$ in the support of $\mu$ and $x_i(S)=1$. The marginal-covering condition implies $y_i(S\cup\{i\})\ge1$, so the deviation in which $i$ always opts in gives utility at least $B-c_i>0$, contradicting the CCE condition. Thus, $\OPT=0$ in the zero-value case.

Now suppose $\bar v>0$. For every $i$, the CCE condition for the deviation in which $i$ always opts in gives
\[
        \mathbb E_{S\sim\mu}\left[
        \mathbbm{1}\{i\notin S\}
        \bigl(B y_i(S\cup\{i\})-c_i\bigr)
        \right]
        \le 0.
\]
The marginal-covering condition gives $y_i(S\cup\{i\})\ge x_i(S)$ for every $S$ with $i\notin S$. Thus, \eqref{eq:cce-no-entry-y} continues to hold, and the remainder of the proof of~\Cref{thm:cce-submodular} applies directly. Finally, the normalized Shapley rule is $1$-marginal-covering due to~\Cref{lem:shapley-efficiency-covering}.
\end{proof}

\paragraph{A sharp large-market bound.} We next sharpen the argument by using a fixed marginal decomposition of the optimal set rather than bounding every optimal singleton. The result applies to the marginal-contribution rule, and more generally to any $1$-marginal-covering budget-share rule in the sense of~\Cref{def:marg-cov}.

\begin{theorem}[Improved CCE PoA bound]
\label{thm:cce-submodular-sharp}
Let $v$ be monotone and submodular, and let $\mu$ be a coarse correlated equilibrium of a $1$-marginal-covering budget-share rule. Then, if $\lambda < 1/2$,
\[
        \OPT
        \le
        \frac{2}{1-2\lambda}\mathbb E_{S\sim\mu}[v(S)].
\]
As a result,
\[
        \PoACCE
        \le
        \frac{2}{1-2\lambda}=2+o_\lambda(1).
\]
In particular, this bound applies to $\gameMC$ and to normalized Shapley payments.
\end{theorem}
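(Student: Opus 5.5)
The plan is to redo the argument of \Cref{thm:cce-submodular}, but to charge each optimal agent against a fixed telescoping decomposition of $\OPT$ rather than against its singleton value. That singleton step is what produced the additive $M=\bar v/(1-\lambda)$ term. Write $\bar v=\E_{S\sim\mu}[v(S)]$. The degenerate case $\bar v=0$ is handled exactly as in \Cref{cor:cce-marginal-covering}, so assume $\bar v>0$.

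\textbf{Step 1: decomposition.} Fix an optimal budget-feasible $S^*=\{1,\dots,k\}$ in an arbitrary order, and set $S^*_j=\{1,\dots,j\}$ and $g_j=v(S^*_j)-v(S^*_{j-1})$, so that $\sum_j g_j=\OPT$. For a realized $S$, define
\[
\Delta_j(S)=v(S\cup S^*_j)-v(S\cup S^*_{j-1})\ge 0 .
\]
Submodularity gives three facts:
\begin{itemize}
\item $\Delta_j(S)\le v(S\cup\{j\})-v(S)$;
\item $\Delta_j(S)\le g_j$;
\item $\sum_j \Delta_j(S)=v(S\cup S^*)-v(S)\ge[\OPT-v(S)]_+$.
\end{itemize}
Also $\Delta_j(S)=0$ whenever $j\in S$. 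The map $d\mapsto d/(v(S)+d)$ is increasing, and $1$-marginal covering gives $y_j(S\cup\{j\})\ge x_j(S)$. Together these yield, for $j\notin S$,
\[
y_j(S\cup\{j\})\ \ge\ \frac{\Delta_j(S)}{v(S)+\Delta_j(S)}\ \ge\ \frac{\Delta_j(S)}{v(S)+g_j}.
\]

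\textbf{Step 2: per-agent CCE inequality.} The always-opt-in deviation for $j$ then gives $\E_\mu\bigl[\Delta_j(S)/(v(S)+g_j)\bigr]\le c_j/B$. In particular $\E[\Delta_j(S)]\le (c_j/B)(\bar v+g_j)\le\lambda(\bar v+g_j)$, since $v(S)$ enters linearly once multiplied out. Summing over $j$ and using budget feasibility of $S^*$ ($\sum_j c_j/B\le 1$), I will aim for a bound of the form
\[
\E\bigl[[\OPT-v(S)]_+\bigr]\le \E\Bigl[\textstyle\sum_j\Delta_j(S)\Bigr]\le \bar v+\textstyle\sum_j (c_j/B)\,g_j .
\]
To reach this cleanly, I will apply the tangent-line inequality $1/y\ge (2\tau-y)/\tau^2$ at a common point $\tau$ proportional to $\bar v$. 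Alternatively, I multiply through by $v(S)+g_j$ and control the cross term $\E[\Delta_j(S)\,v(S)]$ using $\Delta_j(S)\le g_j$.

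\textbf{Step 3: closing.} Since $\sum_j (c_j/B)g_j\le\lambda\sum_j g_j=\lambda\OPT$ and $\sum_j c_j/B\le 1$, the above gives $\OPT-\bar v\le \bar v+\lambda\OPT+\lambda\OPT$. This is the source of the $2\lambda$ factor: one $\lambda$ comes from the $g_j$ in the denominator and one from converting the $v(S)$-weighted term. Rearranging yields $(1-2\lambda)\OPT\le 2\bar v$, which is the claim.

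The main obstacle is Step 2, namely turning $\E[\Delta_j/(v(S)+g_j)]\le c_j/B$ into a linear bound whose $v(S)$-dependence sums to $\bar v$ with coefficient $\sum_j c_j/B\le1$. The issue is that the randomness of $v(S)$ sits in the denominator. I would first try the tangent-line bound with $\tau=\bar v+g_j$. If that step is too lossy, the fallback is to use $\Delta_j(S)\le g_j\le\lambda$-weighted terms together with Jensen applied to the convex function $t\mapsto [\OPT-t]_+/(t+\cdot)$, as in the proof of \Cref{thm:cce-submodular}.
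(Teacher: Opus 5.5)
\textbf{The gap is in Step 2.} Your Step 1, and the inequality $b_j(S)\defeq \Delta_j(S)/(v(S)+g_j)\le y_j(S\cup\{j\})$, are correct. Hence $\E[b_j(S)]\le c_j/B$, which matches the paper's setup. Your $\Delta_j(S)$ is dominated by the paper's $t_{i_j}(S)=\min\{w_{i_j},\,v(S\cup\{i_j\})-v(S)\}$ and can replace it.

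What does not follow is the next sentence, $\E[\Delta_j(S)]\le (c_j/B)(\bar v+g_j)$. Multiplying out gives
\[
\E[\Delta_j(S)]=\E[b_j(S)\,v(S)]+g_j\,\E[b_j(S)].
\]
Here $\E[b_j(S)\,v(S)]$ equals $\E[b_j(S)]\,\bar v$ only if $b_j$ and $v(S)$ are uncorrelated. They need not be: for example, $j$'s contribution may already be supplied by a substitute in low-value profiles and be missing only in high-value ones.

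Neither of your repairs closes this.
\begin{itemize}
\item For any tangent point, expanding $\E[\Delta_j(2\tau-v(S)-g_j)]$ leaves $\E[v(S)\Delta_j(S)]$ with no factor $c_j/B$ attached. For a common $\tau$, the term $\E[g_j\Delta_j(S)]$ is also left uncontrolled. The only available control is $\Delta_j\le g_j$, which turns these into terms such as $\sum_j g_j\bar v=\bar v\,\OPT$. Such terms do not vanish with $\lambda$, so the constant $2$ is lost.
\item Jensen with a deterministic shift $t\mapsto[\OPT-t]_+/(t+M)$ needs $M$ to dominate every $\Delta_j(S)$. The natural choices are $M=\max_j g_j$, which can be comparable to $\OPT$, or $M=\bar v/(1-\lambda)$. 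The latter merely reproduces the $2+1/(1-\lambda)$ bound of \Cref{thm:cce-submodular}.
\end{itemize}
Step 3 also does not follow from Step 2, even granting it. Your Step 2 target would give $\OPT-\bar v\le \bar v+\lambda\OPT$; the second $\lambda\OPT$ is inserted to match the statement rather than derived.

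\textbf{The missing idea.} Do not linearize agent by agent. Aggregate pointwise in $S$, keeping the $g_j$ inside the random argument of a convex function. Set $\beta(S)\defeq\sum_j b_j(S)$ and $Z(S)\defeq v(S)+\sum_j g_j b_j(S)$. For every realized $S$,
\[
\OPT-v(S)\le\sum_j\Delta_j(S)=v(S)\,\beta(S)+\sum_j g_j b_j(S).
\]
Hence $\OPT-Z(S)\le v(S)\beta(S)\le Z(S)\beta(S)$, that is, $\beta(S)\ge h(Z(S))$ with $h(t)=[\OPT-t]_+/t$. This $h$ is convex and nonincreasing on $(0,\infty)$, and $Z(S)>0$ whenever $\OPT>0$.

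Jensen then gives $1\ge\E[\beta]\ge h(\E[Z])\ge h(\bar v+\lambda\OPT)$, using $\E[\sum_j g_j b_j]\le\sum_j g_j c_j/B\le\lambda\OPT$. Since $h(t)\le 1$ iff $t\ge\OPT/2$, you get $\bar v+\lambda\OPT\ge\OPT/2$, which is the theorem. This is the paper's argument.

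The factor $2\lambda$ is therefore not two separate $\lambda$-losses, as your Step 3 suggests. There is a single $\lambda\OPT$ term, coming from the $g_j$'s weighted by $b_j$, and it is doubled by the final step $h(t)\le 1\iff t\ge \OPT/2$.
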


\begin{proof}
In the proof, it will be convenient to extend the notation $x_i(S)$ by setting $x_i(S)=0$ whenever
$i\in S$. For a budget-share rule satisfying the $1$-marginal-covering condition, the CCE constraint with respect to the deviation in which
agent $i$ always opts in yields
\begin{equation}
        \E_{S\sim\mu}[x_i(S)]
        \le
        \frac{c_i}{B}\Pr_{S\sim\mu}[i\notin S]
        \le
        \frac{c_i}{B}.
        \label{eq:cce-entry-improved}
\end{equation}

Let
$S^*\in \argmax\{ v(S): \sum_{i \in S} c_i \le B\}$. We order the agents in $S^*$ in an arbitrary order as
$i_1,\ldots,i_m$, where $m = |S^*|$, and define
\begin{equation}
        w_{i_k}
        \defeq
        v(\{i_1,\ldots,i_k \}) - v(\{i_1,\ldots,i_{k-1}\}),
        \quad k = 1, \ldots, m.
        \label{eq:cce-greedy-weights}
\end{equation}
We write $\bar v\defeq \E_{S\sim\mu}[v(S)]$. By monotonicity, $w_i \ge0$ for every $i\in S^*$. Moreover, telescoping gives
\begin{equation}
        \sum_{i\in S^*} w_i
        = v(S^*) = \OPT.
        \label{eq:cce-weights-sum}
\end{equation}
For every $S\subseteq N$ and $i\in S^*$, we let $t_i(S)\defeq \min\{w_i, v(S\cup\{i\})-v(S) \}$. We first claim that, for every $S\subseteq N$,
\begin{equation}
        \sum_{i\in S^*}t_i(S)
        \ge
        \OPT-v(S).
        \label{eq:cce-greedy-mass}
\end{equation}
Let $P_k = \{i_1,\ldots,i_k \}$ and $P_0=\emptyset$. If $i_k \in S$, then $v(S\cup P_k)-v(S\cup P_{k-1}) = 0 = v(S\cup\{i_k\})-v(S)$. Otherwise,
submodularity gives
\[
        v(S\cup P_k)-v(S\cup P_{k-1})
        \le
        v(P_k)-v(P_{k-1})
        =
        w_{i_k}
\]
and
\[
        v(S\cup P_k)-v(S\cup P_{k-1})
        \le
        v(S\cup\{i_k\})-v(S).
\]
Combining, this means that $v(S\cup P_k)-v(S\cup P_{k-1}) \leq t_{i_k}(S)$ for every $k$. Summing over $k$ and telescoping,
\[
        \sum_{i\in S^*} t_i(S)
        =
        \sum_{k=1}^m t_{i_k}(S)
        \ge
        \sum_{k=1}^m \bigl(v(S\cup P_k)-v(S\cup P_{k-1})\bigr)
        =
        v(S\cup S^*)-v(S)
        \ge
        v(S^*)-v(S),
\]
which proves~\eqref{eq:cce-greedy-mass}. Next, we define
\[
        b_i(S)
        \defeq
        \begin{cases}
        \dfrac{t_i(S)}{v(S)+w_i} & \text{if } v(S)+w_i>0,\\[1.5ex]
        0 & \text{if } v(S)+w_i=0.
        \end{cases}
\]
We now claim that
\begin{equation}
        0\le b_i(S)\le x_i(S)
        \text{ and }
        t_i(S)=\bigl(v(S)+w_i\bigr)b_i(S).
        \label{eq:cce-truncated-share}
\end{equation}
The only non-immediate claim is the fact that $b_i(S) \leq x_i(S)$. Since $t_i(S) \le v(S \cup \{i \}) - v(S)$ and $t_i(S) \le w_i$, we have
\[
    v(S) (v(S \cup \{i \}) - v(S) - t_i(S) ) + (v(S \cup \{i \}) - v(S)) (w_i - t_i(S) ) \ge 0.
\]
Rearranging,
\[
        t_i(S) (v(S) + v(S \cup \{i \}) - v(S) )
        \le
        (v(S \cup \{i \}) - v(S)) ( v(S) + w_i ),
\]
which shows that $b_i(S) \leq x_i(S)$. Using~\eqref{eq:cce-entry-improved}, \eqref{eq:cce-truncated-share}, and the
budget feasibility of $S^*$,
\begin{equation}
        \E_{S \sim \mu} \left[ \sum_{i\in S^*}b_i(S) \right]
        =
        \sum_{i\in S^*} \E_{S \sim \mu}[b_i(S)]
        \le
        \sum_{i\in S^*} \E_{S \sim \mu}[x_i(S)]
        \le
        \sum_{i\in S^*}\frac{c_i}{B}
        \le
        1.
        \label{eq:cce-beta-bound}
\end{equation}
Similarly,
\begin{equation}
        \E_{S \sim \mu} \left[ \sum_{i\in S^*}w_i b_i(S) \right]
        =
        \sum_{i\in S^*}w_i\E_{S \sim \mu}[b_i(S)]
        \le 
        \sum_{i\in S^*}w_i\E_{S \sim \mu}[x_i(S)]
        \le
        \sum_{i\in S^*}w_i\frac{c_i}{B}.
        \label{eq:cce-rho-bound}
\end{equation}
Combining~\eqref{eq:cce-greedy-mass} and~\eqref{eq:cce-truncated-share}, every
realized set $S$ satisfies
\begin{equation}
        \OPT - v(S)
        \le
        \sum_{i\in S^*} t_i(S)
        =
        v(S) \sum_{i\in S^*}b_i(S) + \sum_{i\in S^*}w_i b_i(S).
        \label{eq:cce-pointwise-core}
\end{equation}
From~\eqref{eq:cce-pointwise-core},
\[
        \OPT - \left( v(S) + \sum_{i\in S^*}w_i b_i(S) \right)
        \le
        v(S) \sum_{i\in S^*}b_i(S)
        \le
        \left( v(S) + \sum_{i\in S^*} w_i b_i(S) \right) \sum_{i\in S^*}b_i(S).
\]
If $\OPT=0$, the theorem is immediate. Otherwise, it follows from~\eqref{eq:cce-pointwise-core} that $v(S) + \sum_{i\in S^*}w_i b_i(S) > 0$ for every realized $S$. As a result,
\[
        \sum_{i\in S^*}b_i(S)
        \ge
        h\left( v(S) + \sum_{i\in S^*}w_i b_i(S) \right),
        \text{ where }
        h(t) \defeq \frac{[\OPT-t]_+}{t}.
\]
The function $h$ is nonincreasing and convex on $(0,\infty)$. So, Jensen's inequality together with~\eqref{eq:cce-beta-bound} imply
\[
        1
        \ge
        \E\left[ \sum_{i\in S^*}b_i(S) \right]
        \ge
        \E\left[h\left( v(S) + \sum_{i\in S^*}w_i b_i(S) \right)\right]
        \ge
        h\left(\E\left[v(S) + \sum_{i\in S^*}w_i b_i(S)\right]\right).
\]
Moreover, by~\eqref{eq:cce-rho-bound},
\[
        \E\left[v(S) + \sum_{i\in S^*}w_i b_i(S)\right]
        \le
        \bar v + \sum_{i\in S^*} w_i\frac{c_i}{B}.
\]
Since $h$ is nonincreasing,
\[
        1
        \ge
        h\left(\bar v+\sum_{i\in S^*}w_i\frac{c_i}{B}\right).
\]
For $t>0$, the inequality $h(t)\le1$ is equivalent to $t \ge \OPT/2$. Hence, we conclude that
\[
        \bar v+\sum_{i\in S^*}w_i\frac{c_i}{B}
        \ge
        \frac{\OPT}{2},
\]
Finally, $c_i/B\le\lambda$ for every $i$, so
\[
        \sum_{i\in S^*}w_i\frac{c_i}{B}
        \le
        \lambda\sum_{i\in S^*}w_i
        =
        \lambda\OPT
\]
by~\eqref{eq:cce-weights-sum}. Rearranging yields the claim.
\end{proof}

\subsubsection{Additive valuation functions}

Finally, for the special case of additive valuation functions, we can recover the exact bound established for pure Nash equilibria.

\begin{theorem}[CCE PoA bound for additive valuations]
\label{thm:cce-additive}
Suppose that $v$ is additive and $\lambda<1$. Any coarse correlated equilibrium $\mu$ of $\gameMC$ satisfies
\[
        \OPT
        \le
        \bigl(1+\costcon(\lambda)\bigr)
        \mathbb E_{S\sim\mu}[v(S)].
\]
As a result,
\[
        \PoACCE
        \le
        1+\costcon(\lambda).
\]
\end{theorem}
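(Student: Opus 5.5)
The plan is to follow the structure of the pure-equilibrium argument in \Cref{thm:poa-pne}. Additivity lets us replace the pointwise equilibrium inequality by a conditional-expectation version obtained from Jensen's inequality. Fix a CCE $\mu$, write $\bar v=\E_{S\sim\mu}[v(S)]$, let $S^*$ be an optimal budget-feasible set, and set $x_i=c_i/B$. Since $v$ is additive,
\[
\OPT=\sum_{i\in S^*}v_i=\sum_{i\in S^*}v_i\Pr[i\in S]+\sum_{i\in S^*}v_i\Pr[i\notin S].
\]
We bound the two sums separately.

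The first sum is easy. By additivity and nonnegativity, $\sum_{i\in S^*}v_i\mathbbm{1}\{i\in S\}\le v(S)$ pointwise, so taking expectations gives $\sum_{i\in S^*}v_i\Pr[i\in S]\le\bar v$.

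For the second sum, fix $i\in S^*$ with $v_i>0$ and $q_i\defeq\Pr[i\notin S]>0$; the other cases contribute nothing. For $i\notin S$ the payment upon entry is $Bx_i(S)=Bv_i/(v(S)+v_i)$. The CCE condition for the deviation ``always opt in'' gives \eqref{eq:cce-no-entry-y}:
\[
\E\bigl[\mathbbm{1}\{i\notin S\}\,v_i/(v(S)+v_i)\bigr]\le x_i q_i.
\]
Dividing by $q_i$ turns this into an expectation under the conditional law given $i\notin S$. The map $t\mapsto v_i/(t+v_i)$ is convex on $[0,\infty)$, so Jensen's inequality gives
\[
\frac{v_i}{A_i/q_i+v_i}\le x_i,\qquad\text{where } A_i\defeq\E\bigl[\mathbbm{1}\{i\notin S\}v(S)\bigr].
\]
Since $x_i\le\lambda<1$, rearranging yields $q_iv_i\le\frac{x_i}{1-x_i}A_i\le\frac{x_i}{1-x_i}\bar v$, using $A_i\le\bar v$.

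Summing over $i\in S^*$ gives $\sum_{i\in S^*}v_iq_i\le\bar v\sum_{i\in S^*}\frac{x_i}{1-x_i}$. Because $S^*$ is budget feasible, the vector $(x_i)_{i\in S^*}$ satisfies $0\le x_i\le\lambda$ and $\sum_i x_i\le1$. By the definition \eqref{eq:costcon-def} this sum is therefore at most $\costcon(\lambda)\bar v$. Adding the two bounds gives $\OPT\le(1+\costcon(\lambda))\bar v$, as claimed.

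The only delicate step is the Jensen step, and it needs a few edge cases checked. When $x_i=0$, the inequality forces $q_i v_i\cdot(\text{something positive})\le0$, so the contribution vanishes; alternatively, the rearranged inequality $q_iv_i\le 0$ holds directly. When $A_i=0$, the left-hand side equals $1>x_i$, which is a contradiction unless $q_i=0$ or $v_i=0$. The case $\bar v=0$ is covered by the same inequalities and gives $\OPT=0$. The potential pitfall is the conditioning: it is the conditional mean $A_i/q_i$, not $\bar v$, that appears inside Jensen. That is why we multiply back by $q_i$ before bounding $A_i\le\bar v$, which is what produces exactly the factor $q_i$ needed in the second sum.
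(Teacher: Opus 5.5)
Your proposal is correct and follows the paper's proof essentially step for step. The shared steps are: the split of $\OPT$ according to whether $i\in S$; the Jensen step applied to the conditional law given $i\notin S$; multiplying back by $\Pr[i\notin S]$ before bounding $\E[\mathbbm{1}\{i\notin S\}v(S)]\le\bar v$; and the final appeal to $\costcon(\lambda)$ via budget feasibility of $S^*$. Your handling of the edge cases $c_i=0$, $A_i=0$ and $\bar v=0$ is also sound.
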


This bound is tight under the marginal contribution payment rule in view of the tight pure Nash equilibrium instance constructed in~\Cref{prop:tight} (pure Nash equilibria are, of course, coarse correlated equilibria).

\begin{proof}[Proof of~\Cref{thm:cce-additive}]
As before, let $\bar{v}=\mathbb E_{S\sim\mu}[v(S)]$ and $S^*$ be an optimal budget-feasible set. We claim that, for every agent $i\in N$,
\begin{equation}
        \Pr_{S \sim \mu}[i\notin S]v_i
        \le
        \frac{c_i/B}{1-c_i/B}
        \mathbb E_{ S \sim \mu }[\mathbbm{1}\{i\notin S\}v(S)].
        \label{eq:additive-missing-value-bound}
\end{equation}
If $v_i=0$ or $\Pr_{S \sim \mu}[i\notin S]=0$, the claim is obvious, so we can assume that $v_i>0$ and $\Pr_{S \sim \mu}[i\notin S] > 0$. For $i\notin S$, additivity gives
\[
        x_i(S)=\frac{v_i}{v(S)+v_i}.
\]
As a result, by~\eqref{eq:optinginCCE},
\begin{equation}
        \mathbb E_{S \sim \mu} \left[
        \mathbbm{1}\{i\notin S\}
        \left(B\frac{v_i}{v(S)+v_i}-c_i\right)
        \right]
        \le 0.
        \label{eq:additive-cce-no-entry}
\end{equation}
Conditioning~\eqref{eq:additive-cce-no-entry} on the event $i\notin S$, we get
\[
        \mathbb E_{S \sim \mu} \left[
        \left.\frac{v_i}{v(S)+v_i}\,\right|\,i\notin S
        \right]
        \le \frac{c_i}{B}.
\]
The function $z\mapsto v_i/(z+v_i)$ is convex for $z\ge0$. By Jensen's inequality,
\[
        \frac{v_i}{\mathbb E_{S \sim \mu} [v(S)\mid i\notin S]+v_i}
        \le
        \mathbb E_{S \sim \mu} \left[
        \left.\frac{v_i}{v(S)+v_i}\,\right|\,i\notin S
        \right]
        \le \frac{c_i}{B}.
\]
Rearranging gives
\[
        v_i
        \le
        \frac{c_i/B}{1-c_i/B}\mathbb E_{S \sim \mu}[v(S)\mid i\notin S],
\]
and multiplying by $\Pr_{S \sim \mu}[i\notin S]$ proves~\eqref{eq:additive-missing-value-bound}. Now, using additivity,
\[
        \OPT
        =\sum_{i\in S^*}v_i = \E_{S \sim \mu} \left[ \sum_{i\in S^*}v_i \right]
        =
        \mathbb E\left[\sum_{i\in S^*\cap S}v_i\right]
        +
        \sum_{i\in S^*}\Pr[i\notin S]v_i.
\]
The first term in the right-hand side is at most $\bar{v}$. By~\eqref{eq:additive-missing-value-bound}, the second term is at most
\[
        \sum_{i\in S^*}\frac{c_i/B}{1-c_i/B}
        \mathbb E[\mathbbm{1}\{i\notin S\}v(S)]
        \le
        \bar{v}\sum_{i\in S^*}\frac{c_i/B}{1-c_i/B}.
\]
Since $S^*$ is feasible, $0\le c_i/B\le\lambda$ and $\sum_{i\in S^*}c_i/B\le1$. By the definition of $\costcon(\lambda)$,
\[
        \sum_{i\in S^*}\frac{c_i/B}{1-c_i/B}
        \le
        \costcon(\lambda).
\]
Therefore,
\[
        \OPT
        \le
        \bigl(1+\costcon(\lambda)\bigr)\bar{v}.
\]
\end{proof}

\subsection{On the price of stability of CCEs}
\label{sec:additive-cce-ratio-sharp}

We conclude this section by making a counter-intuitive observation: the \emph{price of stability}---defined similarly to the price of anarchy but with respect to the best equilibrium---of CCEs can be strictly smaller than 1. This means the best CCE can have better performance than the full-information benchmark. In particular, we provide a tight characterization for additive value functions. The reason why this happens is that CCEs may place positive probability on a coalition $S$ that violates the constraint $\sum_{i \in S} c_i \leq B$, even though the payment is budget feasible. Thus, a CCE can approximately implement a \emph{fractional} solution of the budget-constraint optimization problem and beats the full-information integral solution benchmark.

We begin with an explicit construction showing that the price of stability with respect to CCEs can be arbitrarily close to $1/2$!

\begin{theorem}[Two-agent sharpness construction]
\label{thm:additive-cce-sharp-example}
For every $\epsilon>0$, there is a two-agent instance with an additive value function and a coarse
correlated equilibrium $\mu$ of the marginal-contribution game such that
\begin{equation}
        \frac{\OPT}{\E_{S \sim \mu} v(S)} \leq
        \frac12+\epsilon.
        \label{eq:additive-cce-sharp-ratio}
\end{equation}
\end{theorem}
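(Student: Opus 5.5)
The plan is to build the instance so that the benchmark $\OPT$ can include only one of the two agents, while a CCE puts almost all of its mass on the profile where both agents opt in. This is possible because, as noted before \Cref{sec:additive-cce-ratio-sharp}, CCEs need only be cost feasible in expectation. Concretely, take $B=1$ and two agents with additive values $v_1=v_2=1$. Give both agents the common cost $c=\tfrac12+\delta$ for a small $\delta>0$ to be chosen as a function of $\epsilon$. Then $2c>1$, so the grand coalition is not budget feasible, while each singleton is feasible since $c\le 1$. Hence $\OPT=1$.

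Under the marginal-contribution rule with additive values, the payments are $\pMC_i(\{i\})=1$ and $\pMC_i(\{1,2\})=\tfrac12$. I propose the symmetric distribution
\[
\mu(\{1,2\})=1-2p,\qquad \mu(\{1\})=\mu(\{2\})=p,\qquad \mu(\emptyset)=0,
\]
for a parameter $p\in(0,\tfrac12)$. Its expected value is $2(1-2p)+2p=2-2p$, so the ratio is $\OPT/\E_{S\sim\mu}[v(S)]=1/(2-2p)$. This tends to $\tfrac12$ as $p\to0$, so the remaining work is to show that small $p$ is compatible with the CCE constraints.

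By symmetry it suffices to check agent $1$, and with binary actions there are only two fixed deviations to consider.
\begin{itemize}
\item The equilibrium utility of agent $1$ is $(1-2p)(\tfrac12-c)+p(1-c)=-(1-2p)\delta+p(\tfrac12-\delta)$. For the deviation ``always opt out'', which yields $0$, we need this to be nonnegative. The condition is $p(\tfrac12+\delta)\ge\delta$, and it suffices to take $p=2\delta$.
\item The deviation ``always opt in'' changes the outcome only on $\{2\}$, which becomes $\{1,2\}$. It therefore yields the equilibrium utility plus $p(\tfrac12-c)=-p\delta<0$, so it is never profitable.
\end{itemize}

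Thus $\mu$ is an exact CCE for $p=2\delta$, and the ratio equals $1/(2-4\delta)$. Choosing $\delta$ small enough that $1/(2-4\delta)\le\tfrac12+\epsilon$ establishes \eqref{eq:additive-cce-sharp-ratio}. The only delicate step is the opt-out constraint: the singleton profiles must carry enough mass to subsidize the $\delta$-loss each agent suffers in $\{1,2\}$, and we must confirm that this mass, of order $\delta$, vanishes with $\delta$. The computation in the first item above shows that it does.
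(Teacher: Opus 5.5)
Your proposal is correct and follows the paper's construction: the same two-agent instance with $B=1$ and unit values, a common cost just above $1/2$, and a symmetric distribution on $\{1\}$, $\{2\}$, $\{1,2\}$, checked against the two fixed deviations. The only difference is the parameter choice. The paper sets the singleton mass to exactly $(2\theta-1)/(2\theta)$, so that each agent's expected utility is zero and the ratio equals $\theta$. Your choice $p=2\delta$ leaves slack in the opt-out constraint and implicitly requires $\delta<1/4$ so that $p<1/2$.
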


\begin{proof}
We let $B=1$ and choose a parameter $\theta\in\left(\frac12,\min\left\{1,\frac12+\epsilon\right\}\right)$. The instance contains two agents, $N=\{1,2\}$, with
\[
        v_1 = v_2 = 1
        \text{ and }
        c_1=c_2 = \theta.
\]
In other words, $v(S)=|S|$. Under the marginal-contribution rule, an agent $i$ that has opted in
receives payment $1$ if $S = \{i\}$ and payment $1/2$ if $S = N$. Now, we define a distribution $\mu \in \Delta(2^{N})$ as
\[
        \mu(\{1,2\})= \frac{1-\theta}{\theta} ,
        \quad
        \mu(\{1\})=\mu(\{2\})= \frac{2\theta-1}{2\theta},
        \quad
        \mu(\emptyset)=0.
\]
Since $\theta\in(1/2,1)$, this is indeed a valid distribution. We verify the CCE inequalities for agent $1$; the argument for agent $2$ is
symmetric. Agent $1$ obtains utility $1-\theta$ if $S = \{1\}$, utility
$1/2-\theta$ if $S = \{1,2\}$, and utility zero if $S = \{2\}$. Thus,
\begin{align*}
        \E_{S\sim\mu}[u_1(S)]
        =
        \frac{2\theta-1}{2\theta} (1-\theta)+ \frac{1-\theta}{\theta} \left(\frac12-\theta\right) = 0.
\end{align*}
Now, the deviation to always opt out also gives zero utility, so it is not profitable. Moreover, for the deviation to always opt in, the utility is the same in states where the agent is already selected, while for $S = \{2 \}$ the deviation yields a utility of $1/2 - \theta < 0$. Thus, this deviation is not profitable either. We conclude that $\mu$ is a coarse correlated equilibrium.

Each singleton is budget feasible and has value one, while the pair of agents is infeasible because the total cost is $2\theta > 1$. In other words, $\OPT = 1$. The expected value is
\[
        \E_{S\sim\mu}[v(S)] = \mu(\{1, 2\}) v(\{1,2\}) + \mu(\{1\}) v(\{1\}) + \mu(\{2\}) v(\{2\}) =
        2 \frac{1-\theta}{\theta} + 2 \frac{2\theta-1}{2\theta}
        =
        \frac1\theta.
\]
We conclude that
\[
        \frac{\OPT}{\E_{S\sim\mu}[v(S)]} = \theta,
\]
and the choice of $\theta \leq \frac{1}{2} + \epsilon$ gives~\eqref{eq:additive-cce-sharp-ratio}.
\end{proof}

In fact, we shall now show that the construction in~\Cref{thm:additive-cce-sharp-example} is tight for additive valuations. To begin with, we prove the following auxiliary lemma, pointing out that CCEs satisfy cost feasibility \emph{in expectation}.

\begin{lemma}[CCE implies cost feasibility in expectation]
\label{lem:additive-cce-expected-cost}
Let $p$ be any nonnegative budget-feasible payment rule and $\mu$ a
coarse correlated equilibrium of the induced game. Then
\[
        \E_{S\sim\mu}\left[\sum_{i\in S}c_i\right]\le B.
\]
\end{lemma}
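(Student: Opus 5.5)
We will use only the CCE constraint for the unilateral deviation in which each agent always opts out. That deviation gives agent $i$ utility zero in every realized profile. Summing the resulting inequalities over agents and invoking budget feasibility pointwise then gives the claim.

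\emph{Step 1 (per-agent bound).} Fix $i \in N$ and apply \Cref{def:CCE} with $a_i' = \optout$. The right-hand side is $\E_{S\sim\mu}[u_i(S\setminus\{i\})] = 0$, since an agent that opts out receives no payment and incurs no cost. Hence
\[
\E_{S\sim\mu}\bigl[\mathbbm{1}\{i\in S\}\bigl(p_i(S)-c_i\bigr)\bigr] \geq 0,
\qquad\text{that is,}\qquad
\E_{S\sim\mu}\bigl[\mathbbm{1}\{i\in S\}\,c_i\bigr] \le \E_{S\sim\mu}\bigl[\mathbbm{1}\{i\in S\}\,p_i(S)\bigr].
\]

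\emph{Step 2 (summing and budget feasibility).} Sum the inequality of Step 1 over $i\in N$ and exchange the finite sum with the expectation. This gives
\[
\E_{S\sim\mu}\Bigl[\sum_{i\in S}c_i\Bigr] \le \E_{S\sim\mu}\Bigl[\sum_{i\in S}p_i(S)\Bigr].
\]
Budget feasibility says $\sum_{i\in S}p_i(S)\le B$ for every realized $S$, so the right-hand side is at most $B$, which is the claim. Nonnegativity of payments is not really needed here beyond ensuring that the payment rule is well defined. If exact CCEs are replaced by $\epsilon$-CCEs, the same argument yields the bound $B+n\epsilon$.

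\emph{Main obstacle.} There is essentially no obstacle. The only point requiring care is to use the opt-out deviation rather than the opt-in deviation used in the PoA proofs. The opt-in deviation controls agents who are absent, whereas here we need to control the costs of agents who are present. The conceptual content is that this constraint binds only in expectation. This is exactly why the example in \Cref{thm:additive-cce-sharp-example} can place mass on the infeasible coalition $\{1,2\}$. It also suggests the next step, comparing $\E[v(S)]$ with the fractional knapsack relaxation.
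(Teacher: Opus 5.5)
Your proof is correct and essentially identical to the paper's. Both apply the CCE condition to the always-opt-out deviation to get $\E_{S\sim\mu}[u_i(S)]\ge 0$, sum over agents, and invoke budget feasibility pointwise; your side remarks (nonnegativity is not needed, and an $\epsilon$-CCE yields $B+n\epsilon$, matching \Cref{lem:cce-size}) are also accurate.
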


\begin{proof}
For every agent $i \in N$, the fixed deviation in which $i$ always opts out gives
zero utility. Therefore, the CCE condition implies
\[
        \E_{S\sim\mu}[u_i(S)]\ge0.
\]
Summing over all agents,
\[
        \E_{S\sim\mu}\left[
        \sum_{i\in S}\bigl(p_i(S)-c_i\bigr)
        \right] \geq 0.
\]
As a result,
\[
        \E_{S\sim\mu}\left[\sum_{i\in S}c_i\right]
        \le
        \E_{S\sim\mu}\left[\sum_{i\in S}p_i(S)\right]
        \le B,
\]
where the last inequality follows from budget feasibility.
\end{proof}

We next point out that individually infeasible agents can never be in the support of a CCE.

\begin{lemma}[Individually infeasible agents are absent]
\label{lem:additive-cce-infeasible-absent}
In the setting of~\Cref{lem:additive-cce-expected-cost}, if $c_i>B$,
then
\[
        \Pr_{S\sim\mu}[i\in S]=0.
\]
\end{lemma}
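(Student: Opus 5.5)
The plan is to use the CCE condition for the fixed deviation in which agent $i$ always opts out, in the same way as in the proof of \Cref{lem:additive-cce-expected-cost}. That deviation yields utility zero in every realization. The CCE condition therefore gives
\[
        \E_{S\sim\mu}[u_i(S)] \ge 0 .
\]
Since $u_i(S)=0$ whenever $i\notin S$, this rewrites as
\[
        \E_{S\sim\mu}\bigl[\mathbbm{1}\{i\in S\}\bigl(p_i(S)-c_i\bigr)\bigr]\ge 0 .
\]

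Next, bound the payment pointwise. Nonnegativity and budget feasibility of $p$ give $p_i(S)\le \sum_{j\in S}p_j(S)\le B$ for every $S$ containing $i$. Consequently, on the event $i\in S$ we have $p_i(S)-c_i\le B-c_i<0$, and so
\[
        0\le \E_{S\sim\mu}\bigl[\mathbbm{1}\{i\in S\}\bigl(p_i(S)-c_i\bigr)\bigr]\le (B-c_i)\Pr_{S\sim\mu}[i\in S].
\]
Because $B-c_i<0$, this inequality forces $\Pr_{S\sim\mu}[i\in S]=0$.

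The only subtlety is the approximation slack. For an exact CCE, as in \Cref{def:CCE} with $\epsilon=0$, the argument is complete. For an $\epsilon$-CCE, the same computation gives $\Pr[i\in S]\le \epsilon/(c_i-B)$ instead. In the exact setting of \Cref{lem:additive-cce-expected-cost} this issue does not arise, so I expect no real obstacle in the proof.
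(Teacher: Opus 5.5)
Your proof is correct and follows essentially the same route as the paper: invoke the CCE condition against the fixed ``always opt out'' deviation, use nonnegativity and budget feasibility to get $p_i(S)-c_i\le B-c_i<0$ whenever $i\in S$, and conclude that $i$ must be selected with probability zero. Your remark about the $\epsilon$-CCE slack goes slightly beyond the paper, which states the lemma only for exact CCEs, but it does not affect the argument here.
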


\begin{proof}
Nonnegativity and budget feasibility imply $p_i(S)\le B$ whenever $i\in S$.
Thus, if $c_i>B$, then $u_i(S)=p_i(S)-c_i\le B-c_i<0$ whenever $i \in S$. If $i$ was selected with positive probability, its expected utility would be strictly negative, contradicting the CCE condition since that agent can always opt out.
\end{proof}

The key idea in the price of stability lower bound is that cost feasibility in expectation corresponds to a \emph{fractional relaxation} of knapsack, which can only be larger by a factor of $2$. For completeness, we include the simple proof below.

\begin{lemma}[Fractional knapsack is at most twice integral knapsack]
\label{lem:additive-cce-knapsack}
Let $I$ be a finite set of elements with values $v_i\ge0$ and costs
$0\le c_i\le B$. If
\[
        F
        =
        \max\left\{
        \sum_{i\in I} v_i x_i:
        \sum_{i\in I} c_i x_i \le B,\ 0\le x_i \le1
        \right\} \text{ and } \OPT
        =
        \max\left\{
        \sum_{i \in S } v_i:
        S \subseteq I,\ \sum_{i\in S} c_i \le B
        \right\},
\]
then $F \le2 \OPT$.
\end{lemma}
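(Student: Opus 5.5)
The plan is to use the standard structure of fractional knapsack optima. Fix an optimal fractional solution $x^*$ attaining $F$; the LP is feasible (take $x=0$) and bounded, so one exists. Elements with $v_i=0$ contribute nothing to either side, so we may drop them. If $\sum_{i\in I} c_i\le B$, the integral solution $S=I$ is feasible, so $F\le \sum_i v_i=\OPT$ and we are done. Otherwise we invoke the classical greedy characterization. Zero-cost elements have $x^*_i=1$ without loss of generality. Sort the remaining elements by nonincreasing ratio $v_i/c_i$. Then an optimal LP solution sets $x_i=1$ on a prefix $P=\{i_1,\dots,i_{k-1}\}$ (together with all zero-cost elements), takes a fractional amount $x_{i_k}\in[0,1)$ of a single ``critical'' element $i_k$, and sets all later elements to $0$.

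I would justify this characterization by an exchange argument. If some $x_j>0$ while $x_i<1$ for an $i$ with strictly higher ratio, shift cost from $j$ to $i$. This keeps the solution feasible and does not decrease the objective. Alternatively, one can read it off from LP duality with a single budget constraint: at most one variable is strictly fractional at a vertex.

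Given this structure,
\[
F=\sum_{i\in P}v_i+x_{i_k}v_{i_k}\le \sum_{i\in P}v_i+v_{i_k}.
\]
The set $P$ (including the zero-cost elements) satisfies $\sum_{i\in P}c_i\le B$, so it is integrally feasible and $\sum_{i\in P}v_i\le\OPT$. The singleton $\{i_k\}$ is feasible by the hypothesis $c_{i_k}\le B$, so $v_{i_k}\le\OPT$. Adding the two bounds gives $F\le 2\OPT$. Note that the hypothesis $c_i\le B$ is used exactly at this last step; it is where individual feasibility enters.

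The only step requiring any care is the greedy/vertex structure of the LP optimum. I would handle it most cleanly via the vertex argument. The feasible polytope is defined by box constraints plus one extra inequality, so any vertex has at most one coordinate strictly between $0$ and $1$. Since a linear objective is maximized at a vertex, there is an optimal $x^*$ with at most one fractional coordinate. In that case the integral support $P=\{i:x^*_i=1\}$ is feasible because $\sum_{i\in P}c_i\le\sum_i c_ix^*_i\le B$, and the ratio ordering is not even needed. This makes the exchange argument unnecessary and the proof essentially immediate.
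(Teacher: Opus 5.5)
Your proof is correct, and the vertex argument you settle on at the end is essentially the paper's proof: take an optimal extreme point of the LP, note it has at most one fractional coordinate, bound the integral part $\{i : x^*_i = 1\}$ by $\OPT$ via feasibility, and bound the lone fractional element by $\OPT$ via $c_i \le B$. As you yourself note, the greedy/exchange characterization is unnecessary. Your vertex observation also covers zero-cost elements automatically, since a fractional zero-cost coordinate cannot occur at a vertex, so the paper's separate ``choose $x^*_i=1$ for zero-cost $i$'' step is not needed either.
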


\begin{proof}
Let $x^*$ be an optimal extreme point of the fractional program. We claim that among elements
with positive cost, at most one coordinate of $x^*$ can be strictly between zero and
one. Indeed, if $i \ne i'$ satisfy $c_i, c_{i'} > 0$ and
$ 0 < x^*_{i}, x^*_{i'} < 1$, then $x^*$ would be (strictly) inside a line segment of cost-feasible points, contradicting extremality. Furthermore, we can choose $x^*$ so that every element $i$ with zero cost satisfies $x_i^* = 1$. Now, let $S^*=\{i \in I : x_i^*=1\}$. The set $S^*$ is budget feasible, so $\sum_{i\in S^*} v_i \le \OPT$. By virtue of our previous argument, there is at
most one additional element $i$ with $0 < x_i^* < 1$. If such an element exists, then $\{i\}$ is budget feasible because $c_i \le B$, so $v_i \le \OPT$. Combining, we have
\[
        F
        =
        \sum_{i' \in S^*} v_{i'} + x_i^*v_i
        \le
        2 \OPT,
\]
as claimed.
\end{proof}

We are now ready to show that~\Cref{thm:additive-cce-sharp-example} is tight.

\begin{theorem}
\label{thm:additive-cce-half-bound}
Let $v$ be additive and $p$ any nonnegative budget-feasible payment rule. If $\mu$ is a coarse correlated equilibrium of the induced game, then
\[
        \E_{S \sim \mu}[v(S)]
        \le
        2\OPT.
\]
\end{theorem}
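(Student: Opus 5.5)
The plan is to reduce the statement to the fractional-knapsack comparison of \Cref{lem:additive-cce-knapsack}, using the marginal inclusion probabilities under $\mu$ as the fractional solution. For each agent $i$, define $x_i \defeq \Pr_{S\sim\mu}[i\in S]\in[0,1]$. By additivity and linearity of expectation,
\[
        \E_{S\sim\mu}[v(S)] = \sum_{i\in N} v_i x_i .
\]
It therefore suffices to show that $(x_i)$ is a feasible point of the fractional knapsack relaxation over a suitable ground set. We then bound its value by the fractional optimum $F$ and apply $F\le 2\OPT$.

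The first step is the budget constraint. By \Cref{lem:additive-cce-expected-cost},
\[
        \sum_{i\in N} c_i x_i = \E_{S\sim\mu}\Bigl[\sum_{i\in S} c_i\Bigr]\le B .
\]
The second step is the hypothesis $c_i\le B$ required by \Cref{lem:additive-cce-knapsack}. Let $I\defeq\{i\in N: c_i\le B\}$. By \Cref{lem:additive-cce-infeasible-absent}, every agent with $c_i>B$ has $x_i=0$. Hence
\[
        \sum_{i\in N} v_i x_i = \sum_{i\in I} v_i x_i,
        \qquad
        \sum_{i\in I} c_i x_i\le B .
\]
So $(x_i)_{i\in I}$ is feasible for the fractional program on $I$, and $\sum_{i\in I} v_i x_i\le F_I$.

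The third step is to compare $F_I$ with the integral optimum. \Cref{lem:additive-cce-knapsack} gives $F_I\le 2\OPT_I$, where $\OPT_I$ is the integral knapsack optimum over $I$. Any budget-feasible set contains only agents with $c_i\le B$, since costs are nonnegative. Therefore $\OPT_I=\OPT$ for the full instance. Chaining the three steps yields $\E_{S\sim\mu}[v(S)]\le 2\OPT$.

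No step is a real obstacle, because the earlier lemmas do the work. The only points to check are the restriction to $I$, so that the hypothesis $c_i\le B$ of \Cref{lem:additive-cce-knapsack} holds, and the observation that this restriction does not change $\OPT$. Nonnegativity of the $v_i$ is part of the additive model.
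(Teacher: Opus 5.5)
Your proposal is correct and follows the paper's proof essentially step for step. You take the marginals $x_i=\Pr_{S\sim\mu}[i\in S]$, bound their expected cost via \Cref{lem:additive-cce-expected-cost}, discard agents with $c_i>B$ via \Cref{lem:additive-cce-infeasible-absent}, and finish with \Cref{lem:additive-cce-knapsack}. Your explicit remark that restricting to $I=\{i: c_i\le B\}$ leaves $\OPT$ unchanged makes visible a step the paper only uses implicitly.
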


\begin{proof}
For every agent $i$, we let $x_i \defeq \Pr_{S\sim\mu}[i\in S]$. By additivity, we have
\[
        \E_{S\sim\mu}[v(S)]
        =
        \sum_{i\in N}v_i x_i.
\]
Moreover, by~\Cref{lem:additive-cce-expected-cost},
\[
        \sum_{i\in N} c_i x_i = \E_{S \sim \mu} \left[ \sum_{i \in S} c_i \right] \le B.
\]
By~\Cref{lem:additive-cce-infeasible-absent}, $x_i=0$ whenever $c_i>B$. Thus,
the vector $(x_i)_{c_i\le B}$ is feasible for the fractional knapsack
relaxation over the individually feasible agents. Combining with~\Cref{lem:additive-cce-knapsack},
\[
        \E_{S \sim \mu}[v(S)]
        \le F
        \le 2 \OPT.
\]
This completes the proof.
\end{proof}

\section{Beyond monotone submodular}
\label{sec:beyond monotone submodular}
A natural question arising from our positive results so far is whether they can be extended to more general valuation functions beyond monotone submodular. Here, we consider XOS and subadditive value functions, both of which are general and include the submodular case as a special case.

\begin{definition}[XOS value function]
    \label{def:XOS}
    A function $v : 2^N \to \R_{\geq 0}$ is XOS (or, equivalently, fractionally subadditive) if there exists a finite set of additive functions $\{a^{(1)}, \dots, a^{(\ell)} \}$ such that $v(S) = \max_{1 \leq j \leq \ell} a^{(j)}(S)$ for any $S \subseteq N$.
\end{definition}
\begin{definition}[Subadditive value function]
    A function $v : 2^N \to \R_{\geq 0}$ is subadditive if $v(S\cup T) \le v(S) + v(T)$ for all $S ,T \subseteq N$.
\end{definition}
We recall the relationship between value functions in the complement-free hierarchy~\citep{Lehmann06:Combinatorial}:
\[
\mathrm{Additive}\; \subseteq \; \mathrm{Submodular}\; \subseteq \; \mathrm{XOS} \; \subseteq \; \mathrm{Subadditive}. 
\]

\subsection{PoA lower bounds for XOS valuations}

This subsection presents an information-theoretic obstruction for fractionally subadditive value functions (equivalently, XOS): even under a large-market condition, no payment rule that uses only polynomially many value queries can guarantee an $O(n^{1/2-\epsilon})$ price of anarchy with respect to CCEs.


As before, we assume a set $N$ comprising $n$ agents. The value function $v$ can be accessed through a value oracle. In what follows, for our lower bound construction, we take all costs to be equal to one; that is, $c_i = 1$ for any $i \in N$. Moreover, the budget is $B = \sqrt{n}$. For convenience, we assume that $n$ is a perfect square. This instance satisfies the large-market assumption. We can write the (offline) optimum as
\[
        \OPT(v) = \max\{v(S): |S|\le B \}.
\]
(Here we write $\OPT(v)$ to make explicit the dependence on $v$, as this will be relevant in our construction.) A payment rule $p$ assigns nonnegative payments $p_i(S)$ to agents within $S$ and zero to agents outside it. It is budget feasible if $\sum_{i\in S} p_i (S)\le B$ for any $S \subseteq N$. In this context, since each cost is taken to be one, the utility of agent $i$ can be expressed as
\[
        u_i(S)=
        \begin{cases}
        p_i(S)-1 & \text{if } i \in S,\\
        0 & \text{if } i\notin S.
        \end{cases}
\]

Our lower bound concerns payment rules that are \emph{oracle efficient}, in the sense of making only $\poly(n)$ value queries.

\begin{definition}[Oracle-efficient payment rule]
    \label{def:efficientpaymentrule}
A payment rule $\mathcal A$ is \emph{oracle efficient} if
\begin{enumerate}[label=(\roman*)]
    \item it produces a payment rule $p : 2^N \to \R^n_{\geq 0}$ by making at most $\poly(n)$ value queries;
    \item for every set $S \subseteq N$ and agent $i\in N$, the payment $p_i(S)$ can be determined using at most $\poly(n)$ additional value queries to $v$; and
    \item the resulting payment rule is budget feasible.
\end{enumerate}
The payment rule is allowed arbitrary computation.
\end{definition}

In fact, our lower bound applies even if the rule is allowed to depend on the private costs (all costs in the lower-bound instance are equal to $1$).

We say that $\mathcal A$ guarantees $\PoACCE \leq \rho(n)$ if any $\eta$-CCE $\mu$ of the induced game satisfies
\[
        \E_{S\sim\mu}[v(S)]\ge \frac{1}{\rho(n)} \OPT
\]
for any small enough $\eta = \poly(1/n)$.

\subsubsection{The hard XOS instance}

We rely on the lower bound of~\citet{Mirrokni08:Tight} concerning XOS valuations in the value-query model. For a small enough $\delta > 0$, we let 
\[
    h = \left\lfloor (1+\delta)n^{2\delta}\right\rfloor \text{ and } \beta=(1+\delta)n^{-1/2+\delta}.
\]
The base value function is defined as
\[
        v_0 (S) \defeq
        \max\left\{
        \max_{S' : |S'| \le h} |S \cap S'|,\ \beta |S|
        \right\}
        =
        \max\{\min\{|S|,h\},\ \beta |S|\}.
\]
Next, for a \emph{hidden} set $T\subset N$---which will be drawn uniformly at random among all sets of size $B$---we define the value function
\[
        v_T(S) \defeq \max\{v_0(S), |S \cap T| \}.
\]

It is clear that both $v_0$ and $v_T$ are XOS (\Cref{def:XOS}). The optimal solution (subject to the budget constraint) differs substantially depending on whether the underlying value function is $v_0$ or $v_T$.

\begin{claim}[Optimum gap]
\label{claim:opt-gap}
Let $B = \sqrt n$ and $c_i = 1$ for all $i \in N$. Then $\OPT(v_0)=O(n^{2\delta})$ whereas for any hidden set $T$ of size $B$, $\OPT(v_T)\ge \sqrt n$. As a result,
\[
        \frac{\OPT(v_T)}{\OPT(v_0)}
        \ge
        \Omega(n^{1/2-2\delta}).
\]
\end{claim}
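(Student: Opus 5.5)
The plan is to compute both optima directly from the closed forms. With unit costs and $B=\sqrt n$, a set is feasible exactly when $|S|\le \sqrt n$, so
\[
\OPT(v)=\max\{v(S): |S|\le \sqrt n\}.
\]

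\textbf{Upper bound on $\OPT(v_0)$.} Use the closed form $v_0(S)=\max\{\min\{|S|,h\},\beta|S|\}$ and bound each branch separately for $|S|\le\sqrt n$. The first branch gives $\min\{|S|,h\}\le h\le (1+\delta)n^{2\delta}$. The second gives
\[
\beta|S|\le (1+\delta)n^{-1/2+\delta}\cdot\sqrt n=(1+\delta)n^{\delta}\le(1+\delta)n^{2\delta}.
\]
Hence $\OPT(v_0)\le (1+\delta)n^{2\delta}=O(n^{2\delta})$ for fixed small $\delta$.

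The closed form itself needs a short justification, which I would include for completeness. The inner maximum $\max_{|S'|\le h}|S\cap S'|$ is attained by taking $S'\subseteq S$ when $|S|\le h$, or $S'$ any $h$-subset of $S$ otherwise. This yields $\min\{|S|,h\}$. (Note $h\le n$ for small $\delta$.)

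\textbf{Lower bound on $\OPT(v_T)$.} Observe that $T$ itself is feasible since $|T|=B=\sqrt n$. Then
\[
v_T(T)\ge |T\cap T|=\sqrt n,
\]
so $\OPT(v_T)\ge\sqrt n$.

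\textbf{Ratio.} Dividing the two bounds gives $\OPT(v_T)/\OPT(v_0)\ge \sqrt n/\bigl((1+\delta)n^{2\delta}\bigr)=\Omega(n^{1/2-2\delta})$.

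There is no real obstacle here. The only points needing care are verifying the closed form for $v_0$ and noting that the constants depend only on the fixed $\delta$. One could also mention that $B$ is an integer since $n$ is a perfect square, so $T$ of size $B$ is well defined.
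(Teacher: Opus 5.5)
Your proof is correct and matches the paper's argument: on sets of size at most $\sqrt n$ you bound both branches of $v_0$ by $O(n^{2\delta})$, and you use the feasibility of $T$ to get $\OPT(v_T)\ge\sqrt n$. Comparing against $(1+\delta)n^{2\delta}$ instead of $h$ is a minor tidying that removes the paper's ``for sufficiently large $n$'' step; your aside that $h\le n$ is not actually needed for the closed form of $v_0$.
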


\begin{proof}
Since every feasible set has size at most $B$, we have
\[
        v_0(S)\le \max\{h,\beta B \}.
\]
Moreover,
\[
        \beta B = (1+\delta)n^{-1/2+\delta}\cdot n^{1/2}
        =
        (1+\delta)n^\delta
        \le h
\]
for any sufficiently large $n$. This shows that $\OPT(v_0) \le h = O(n^{2\delta})$. On the other hand, for $v_T$, the hidden set $T$ is feasible since $|T| = B$ (by construction). Thus,
\[
        v_T(T) \ge |T\cap T|= B = \sqrt n.
\]
This completes the proof.
\end{proof}

\subsubsection{Indistinguishability by value queries}

The next lemma establishes the key hardness statement in the oracle setting. A query distinguishes $v_T$ from $v_0$ only if it has an unusually large intersection with the random hidden set $T$.

\begin{lemma}[One-query indistinguishability]
\label{lem:one-query}
For every fixed query set $Q\subseteq N$,
\[
        \Pr_T\bigl[v_T(Q)\ne v_0(Q)\bigr]
        \le
        \exp(-n^{\Omega(\delta)}).
\]
\end{lemma}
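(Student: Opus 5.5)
Since $v_T(Q)=\max\{v_0(Q),|Q\cap T|\}$, the two values differ exactly when $|Q\cap T|>v_0(Q)=\max\{\min\{|Q|,h\},\beta|Q|\}$. The plan is therefore to show that, for every fixed $Q$, the intersection $|Q\cap T|$ exceeds this threshold only with probability $\exp(-n^{\Omega(\delta)})$ when $T$ is a uniformly random $B$-subset. We split into cases according to $|Q|$.

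If $|Q|\le h$, then $v_0(Q)\ge|Q|\ge|Q\cap T|$, so the event is empty. Now suppose $|Q|>h$. The expected intersection is $\E|Q\cap T|=|Q|B/n=|Q|n^{-1/2}$. We compare this mean against both branches of the threshold. Against the $\beta$-branch, the threshold satisfies
\[
\beta|Q|=(1+\delta)n^{\delta}\cdot\frac{|Q|}{\sqrt n}=(1+\delta)n^{\delta}\,\E|Q\cap T|,
\]
which is a factor $(1+\delta)n^{\delta}$ above the mean. Against the $h$-branch, the threshold $h\approx(1+\delta)n^{2\delta}$ is at least $n^{2\delta}/2$ for large $n$.

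Since $|Q\cap T|$ is hypergeometric, it is negatively associated, so Chernoff upper-tail bounds for sums of independent Bernoulli$(B/n)$ variables apply. We use the large-deviation form: for $t\ge 2e\,\E X$,
\[
\Pr[X\ge t]\le 2^{-t}.
\]
Let $t=\max\{h,\beta|Q|\}$. Because $\beta|Q|\ge(1+\delta)n^{\delta}\E X$ and $n^{\delta}\gg 2e$, we have $t\ge 2e\,\E X$ for large $n$. Moreover $t\ge h=\Omega(n^{2\delta})$. Hence
\[
\Pr\bigl[v_T(Q)\ne v_0(Q)\bigr]\le\Pr\bigl[|Q\cap T|>t\bigr]\le 2^{-\Omega(n^{2\delta})}=\exp(-n^{\Omega(\delta)}).
\]

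The only delicate step is justifying the Chernoff bound for sampling without replacement. This can be handled by Hoeffding's classical result that hypergeometric tails are dominated by the corresponding binomial tails, or equivalently by negative association. We also need the mild check that $\beta|Q|$ is large relative to the mean uniformly over all $|Q|$; the ratio is always $(1+\delta)n^\delta$, independent of $|Q|$, so this holds. The remaining bookkeeping, such as taking $n$ large enough that $h\ge n^{2\delta}/2$ and $(1+\delta)n^\delta\ge 2e$, is routine.
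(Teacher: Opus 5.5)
Your proof is correct and follows essentially the same route as the paper's: the case $|Q|\le h$ is trivial, and otherwise a Chernoff bound for the hypergeometric variable $|Q\cap T|$ is applied against a threshold at least $n^{\delta}$ times the mean. The one difference is that you merge the paper's two nontrivial subcases ($h\le |Q|\le n^{1/2+\delta}$ and $|Q|>n^{1/2+\delta}$) into a single step, using $t=\max\{h,\beta|Q|\}$ and the $2^{-t}$ form of the tail bound; this is slightly cleaner and states explicitly the binomial-domination justification that the paper leaves as ``a standard Chernoff bound.''
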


\begin{proof}
Let $X= |Q\cap T|$. Since $T$ is a uniformly random $B$-subset of $N$, $|Q\cap T|$ is a hypergeometric random variable with mean
\[
        \mu = \E[ | Q\cap T| ] = \frac{|Q| B}{n}=\frac{|Q|}{\sqrt n}.
\]
By the definition of $v_T$, the values differ only if
\[
        | Q\cap T| > v_0(Q)=\max\{\min\{|Q|,h\},\beta |Q|\}.
\]
If $|Q|<h$, then $v_0(Q)=|Q|$ and $| Q\cap T| \le |Q|$, so the probability is zero. If $h \le |Q|\le n^{1/2+\delta}$, then $\mu\le n^\delta$ and $v_0(Q) \ge h = \lfloor (1+\delta) n^{2 \delta} \rfloor $. As a result, the threshold is at least a factor $n^\delta$ larger than the mean. A standard Chernoff bound for hypergeometric random variables gives
\[
        \Pr[| Q\cap T| \ge v_0(Q)] \leq \Pr[| Q\cap T| \ge h] \leq
        \exp(-n^{\Omega(\delta)}).
\]
Finally, let $|Q|> n^{1/2+\delta}$. In this case,
\[
        v_0(Q)
        \ge
        \beta |Q|
        =
        (1+\delta)n^{-1/2+\delta}|Q|
        =
        (1+\delta)n^\delta\mu.
\]
A Chernoff bound again yields
\[
        \Pr[| Q\cap T| \ge v_0(Q)] \leq \Pr[| Q\cap T| > \beta |Q|]\le \exp(-n^{\Omega(\delta)}).
\]
Combining the three cases proves the lemma.
\end{proof}

\begin{corollary}[Adaptive indistinguishability]
\label{cor:adaptive-indist}
Let $\mathcal Q$ be any adaptive algorithm that makes at most $q(n)=\poly(n)$ value queries. When $T$ is drawn uniformly at random among all $B$-subsets of $N$, the probability that $\mathcal Q$ receives different transcripts from the oracles $v_0$ and $v_T$ is at most
\[
        q(n)\exp(-n^{\Omega(\delta)})=o(1).
\]
In particular, no algorithm that submits polynomially many value queries can distinguish $v_0$ from a random planted $v_T$ with constant probability.
\end{corollary}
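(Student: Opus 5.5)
The plan is a hybrid (coupling) argument over the adaptive query sequence, with Lemma~\ref{lem:one-query} applied once per step and a union bound over the steps. Fix a deterministic algorithm $\mathcal Q$; randomized algorithms are handled by fixing their coins and averaging at the end. Run $\mathcal Q$ against the base oracle $v_0$. Because $v_0$ is fixed and does not depend on $T$, this run produces a fixed sequence of query sets $Q_1,\dots,Q_{q}$ with $q \le q(n)$, and this sequence is independent of $T$.

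The key observation is that the transcripts under $v_0$ and under $v_T$ coincide whenever $v_T(Q_k)=v_0(Q_k)$ for every $k$. I would prove this by induction on $k$. Suppose the first $k-1$ answers agree. Then $\mathcal Q$ is in the same state under both oracles, so its $k$-th query is the same set $Q_k$. If additionally $v_T(Q_k)=v_0(Q_k)$, the $k$-th answers also agree.

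It follows that the event ``transcripts differ'' is contained in $\bigcup_{k\le q}\{v_T(Q_k)\ne v_0(Q_k)\}$. Each $Q_k$ is a fixed set determined without reference to $T$, so Lemma~\ref{lem:one-query} applies to it directly and bounds each event by $\exp(-n^{\Omega(\delta)})$. A union bound then gives $q(n)\exp(-n^{\Omega(\delta)})$.

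Since $q(n)$ is polynomial and $\exp(-n^{\Omega(\delta)})$ decays faster than any inverse polynomial for fixed $\delta>0$, this bound is $o(1)$. For a randomized $\mathcal Q$, the same bound holds for every fixing of its internal coins, and hence also on average over them.

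The "in particular" clause then follows from a total-variation argument. Any test based on the transcript distinguishes $v_0$ from $v_T$ with advantage at most the probability that the transcripts differ, which is $o(1)$ by the above.

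The main subtlety, though a mild one, is the adaptivity. One cannot apply Lemma~\ref{lem:one-query} to queries chosen in response to $v_T$'s answers, because such queries depend on $T$. Running $\mathcal Q$ against $v_0$ and coupling the two runs sidesteps this, since all the queries are then $T$-independent.
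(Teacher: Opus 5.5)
Your proposal is correct and takes the same route as the paper, which proves the corollary in one line: apply \Cref{lem:one-query} to each query, then take a union bound over the $q(n)$ queries. Your coupling step supplies the detail that line leaves implicit: you run the algorithm against $v_0$ so that the query sets are independent of $T$, and you note that the transcripts can differ only if some such query separates $v_0$ from $v_T$. That is what justifies applying the one-query lemma to an adaptive algorithm.
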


\begin{proof}
The claim follows by~\Cref{lem:one-query} and a union bound over the $q(n)$ queries.
\end{proof}

\subsubsection{Implications for approximate CCE}

We next prove an auxiliary lemma concerning the expected cost in an approximate CCE\@. It is independent of the hard distribution; it only uses budget feasibility and the deviation in which an agent always opts out.

\begin{lemma}[Expected selected cost in an approximate CCE]
\label{lem:cce-size}
Let $p$ be any budget-feasible payment rule with unit costs, and let $\mu$ be an $\eta$-CCE of the induced game. Then
\[
        \E_{S\sim\mu}[|S|]
        \le B+n\eta.
\]
\end{lemma}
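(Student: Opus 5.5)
This is the $\eta$-approximate analogue of \Cref{lem:additive-cce-expected-cost}, and the plan is to repeat that argument with an additive slack of $\eta$ per agent.

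First, for each agent $i \in N$, apply the $\eta$-CCE condition (\Cref{def:CCE}) to the fixed deviation $a_i' = \optout$. Under this deviation agent $i$ always receives zero utility, so the condition gives
\[
    \E_{S\sim\mu}[u_i(S)] \ge -\eta .
\]
With unit costs we have $u_i(S) = \mathbbm{1}\{i\in S\}\bigl(p_i(S)-1\bigr)$, since agents outside $S$ receive no payment.

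Next, sum this inequality over all $n$ agents and use linearity of expectation:
\[
    \E_{S\sim\mu}\Bigl[\sum_{i\in S} p_i(S)\Bigr] - \E_{S\sim\mu}\bigl[|S|\bigr] \ge -n\eta .
\]
Budget feasibility gives $\sum_{i\in S}p_i(S)\le B$ for every realized $S$, so the first expectation is at most $B$. Rearranging then yields $\E_{S\sim\mu}[|S|]\le B+n\eta$.

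Every step is routine. The only points that need care are that the $\eta$ slack enters once for each agent, which produces the $n\eta$ term, and that the bound holds for every realized set, so no zero-value cases need separate treatment.
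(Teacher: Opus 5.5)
Your proof is correct and follows the paper's argument step for step. You apply the fixed opt-out deviation to get $\E_{S\sim\mu}[u_i(S)]\ge-\eta$ for each agent, sum over the $n$ agents, and then use budget feasibility pointwise to conclude $\E_{S\sim\mu}[|S|]\le B+n\eta$.
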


\begin{proof}
For each agent $i\in N$, the fixed deviation to opt out yields zero utility. Thus, the $\eta$-CCE condition yields
\[
        \E_{S\sim\mu}[u_i(S)]\ge -\eta.
\]
Under unit costs, we have
\[
        u_i(S)=\mathbbm{1}\{i\in S\}\bigl(p_i(S)-1\bigr).
\]
Summing over all $i\in N$ yields
\[
        \E_{S \sim \mu} \left[\sum_{i\in S}p_i(S)-|S|\right]\ge -n\eta.
\]
By budget feasibility, $\sum_{i\in S} p_i(S)\le B$ for every $S$. Therefore,
\[
        \E_{S \sim \mu}[|S|]\le B+n\eta.
\]
\end{proof}

This lemma allows us to upper bound the value of approximate CCEs on the base instance.

\begin{corollary}[Value of approximate CCEs on the base instance]
\label{cor:base-low-value}
Let $\eta\le 1/n$. For every $\eta$-CCE $\mu$ of any budget-feasible payment rule on the base instance $v_0$,
\[
        \E_{S\sim\mu}[v_0(S)]\le O(n^{2\delta}).
\]
\end{corollary}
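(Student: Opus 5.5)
The plan is to bound $v_0(S)$ pointwise by a function of $|S|$ and then apply \Cref{lem:cce-size}. By definition, $v_0(S)=\max\{\min\{|S|,h\},\beta|S|\}\le h+\beta|S|$ for every $S\subseteq N$. This upper bound is linear in $|S|$. Taking expectations therefore gives
\[
\E_{S\sim\mu}[v_0(S)]\le h+\beta\,\E_{S\sim\mu}[|S|].
\]
The bound is deliberately crude but sufficient. We never need to know which sets $\mu$ places mass on; we only need a handle on their expected size.

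Next, we invoke \Cref{lem:cce-size}. It applies because costs are unit and the rule is budget feasible, and it yields $\E_{S\sim\mu}[|S|]\le B+n\eta\le \sqrt n+1$, using $\eta\le 1/n$. Substituting this, we get
\[
\E_{S\sim\mu}[v_0(S)]\le h+\beta(\sqrt n+1)=\lfloor(1+\delta)n^{2\delta}\rfloor+(1+\delta)n^{\delta}+(1+\delta)n^{-1/2+\delta}.
\]
This is $O(n^{2\delta})$ for fixed small $\delta>0$.

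The only point needing care is avoiding Jensen or concavity arguments on $v_0$. The additive bound $\max\{a,b\}\le a+b$ sidesteps this: the argument uses linearity of expectation rather than convexity. I do not expect a genuine obstacle here. The main substance lies in \Cref{lem:cce-size}, which is already established. One should also note that the $\eta$-slack contributes only $\beta n\eta\le\beta=o(1)$.
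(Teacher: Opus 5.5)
Your proposal is correct and follows the paper's own proof: it bounds $v_0(S)\le h+\beta|S|$ pointwise, applies \Cref{lem:cce-size} to get $\E_{S\sim\mu}[|S|]\le B+n\eta$, and uses $B=\sqrt n$, $\beta B=(1+\delta)n^{\delta}$, and $n\eta\le 1$ to conclude the bound is $O(n^{2\delta})$.
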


\begin{proof}
For every set $S\subseteq N$, we have
\[
        v_0(S)=\max\{\min\{|S|,h\},\beta |S|\}\le h+\beta |S|.
\]
Taking expectations and using~\Cref{lem:cce-size},
\[
        \E[v_0(S)]
        \le
        h+\beta(B+n\eta).
\]
Since $B=\sqrt n$, $\beta B=(1+\delta)n^\delta$, and $n\eta\le1$, the right-hand side is $O(n^{2\delta})$.
\end{proof}

\subsubsection{Putting everything together}

We are now ready to prove our main theorem. The proof is based on a reduction: a sufficiently strong PoA guarantee for CCEs would enable a payment designer to distinguish between $v_0$ and $v_T$ with polynomially many queries.

\begin{theorem}[PoA lower bound for CCEs under XOS valuations]
\label{thm:xos-cce-lb}
Let $\epsilon > 0$ be a constant and $\eta(n)\le 1/n$. There is no oracle-efficient payment rule per~\Cref{def:efficientpaymentrule} that for every monotone XOS value function $v$, outputs a budget-feasible payment rule such that any $\eta(n)$-CCE $\mu$ of the induced binary-action game satisfies
\[
        \E_{S\sim\mu}[v(S)]
        \ge
        \frac{1}{C n^{1/2-\epsilon}}\OPT(v)
\]
for a universal constant $C>0$.

Equivalently, any payment-rule design algorithm that guarantees a PoA of $O(n^{1/2-\epsilon})$ with respect to CCEs for monotone XOS value functions must use exponentially many value queries. This holds even with known unit costs and $\lambda = 1 / \sqrt n \to 0$.
\end{theorem}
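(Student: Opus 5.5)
The plan is to argue by contradiction through a distinguishing reduction. Suppose an oracle-efficient designer $\mathcal A$ existed with the stated guarantee for some constant $\epsilon>0$. Fix $\delta$ with $0<\delta<\epsilon/2$ and build the instance with $B=\sqrt n$, unit costs, the base valuation $v_0$, and the planted valuation $v_T$ for a uniformly random $B$-subset $T$. Both are monotone XOS. The large-market condition $\lambda=1/\sqrt n$ holds automatically. The goal is a single deterministic set profile and distribution that is an $\eta$-CCE under both valuations. Its value is small under $v_0$ by \Cref{cor:base-low-value}, while the guarantee for $v_T$ forces it to be large.

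First, run $\mathcal A$ on the oracle $v_0$. Then consider the total query process consisting of the $\poly(n)$ design queries together with the queries needed to evaluate $p_i(S)$ for the relevant $(i,S)$. The subtlety is that a CCE may have exponential support, so we cannot union-bound over all evaluations. To handle this, I would fix one pure profile: take $\mu$ to be a pure Nash equilibrium, or more robustly a specific $\eta$-CCE $\mu_0$ of the game under $v_0$, chosen with polynomial support. Such a $\mu_0$ can be taken as the empirical distribution of $\poly(n)$ rounds of no-regret dynamics, e.g.\ Hedge over binary actions, with $\poly(1/\eta)$ rounds. Each agent's Hedge update only needs $p_i(S^t)$ and $p_i(S^t\triangle\{i\})$. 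Hence the whole procedure, namely designing $p$, running the dynamics, and certifying the CCE inequalities, is an adaptive algorithm making $q(n)=\poly(n)$ value queries.

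Next, apply \Cref{cor:adaptive-indist}. With probability $1-o(1)$ over $T$, this algorithm receives identical transcripts under $v_0$ and $v_T$. On that event, $\mathcal A$ produces payments that agree on all queried points. The dynamics therefore produce the same $\mu_0$, and every CCE deviation inequality for $\mu_0$ involves only payments already computed. So $\mu_0$ is also an $\eta$-CCE of the game induced under $v_T$. Moreover, its support sets were effectively queried, so that $v_T(S)=v_0(S)$ on the support. Hence $\E_{\mu_0}[v_T(S)]=\E_{\mu_0}[v_0(S)]=O(n^{2\delta})$ by \Cref{cor:base-low-value}.

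On the other hand, \Cref{claim:opt-gap} gives $\OPT(v_T)\ge\sqrt n$. The assumed guarantee would then force $\E_{\mu_0}[v_T(S)]\ge \sqrt n/(Cn^{1/2-\epsilon})=n^{\epsilon}/C$. This contradicts $O(n^{2\delta})$ for $2\delta<\epsilon$ and large $n$. Since the failure event has probability $o(1)$, some $T$ yields the contradiction, so no such $\mathcal A$ exists. The main obstacle is the step ensuring the CCE witnessing the bad value is determined by polynomially many queries. Either adding support sets to the query transcript or evaluating $v_0$ on them must be accounted for, and the guarantee is quantified over all $\eta$-CCEs, which is what lets us use this specific one. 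If the Hedge-based construction is awkward, a fallback is to note that \Cref{cor:base-low-value} holds for every $\eta$-CCE. Then one only needs the existence of some $\eta$-CCE under $v_T$ that coincides with one under $v_0$, and the no-regret simulation supplies exactly that.
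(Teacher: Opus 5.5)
Your proposal is correct and follows essentially the paper's route: the same hard pair $v_0$ versus a random planted $v_T$, a polynomial-query simulation of no-regret dynamics yielding a polynomially supported $\eta$-CCE, and a contradiction obtained from \Cref{cor:adaptive-indist} together with \Cref{cor:base-low-value} and \Cref{claim:opt-gap}. The difference is only in packaging: the paper builds a thresholding distinguisher, whereas you fix one good run under $v_0$ and transfer that specific CCE to the $v_T$-game, by putting into the query transcript every payment that enters its deviation inequalities together with the values of its support sets; this means you need only one successful run of the dynamics, rather than a success probability high enough to give a distinguishing advantage.
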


\begin{proof}
For the sake of contradiction, suppose that such a payment rule algorithm $\mathcal A$ exists. We will show how to construct a polynomial-query distinguisher $\mathcal D$ for the hard pair $(v_0,v_T)$.

The distinguisher is given value-oracle access to an unknown function $v$, promised to be either $v_0$ or $v_T$ for a uniformly random hidden set $T$ of size $B$. It invokes $\mathcal A$ with oracle access to $v$---using polynomially many preprocessing value queries---and obtains a budget-feasible payment rule $p^v$.

Next, $\mathcal D$ simulates repeated play in the induced binary-action game for $\poly(n,1/\eta)$ rounds using any standard bandit-feedback external-regret algorithm for each agent. In each round, to provide agent $i$ with the payoff of opting in against the realized actions of the other agents, the simulator computes $p_i^v(S_{-i}\cup\{i\})-1$, which requires polynomially many value queries (by assumption). The payoff of opting out is zero. Since there are $n$ agents, $R=\poly(n,1/\eta)$ rounds, and every payment computation uses $\poly(n)$ value queries, the total number of value queries remains polynomial.

By the external-regret guarantee, the empirical distribution $\mu$ of play is an $\eta$-CCE with constant probability. The distinguisher then computes
\[
        \overline v=\E_{S\sim \mu}[v(S)],
\]
which is the average of $v(S)$ over the realized outcomes, using one value query per outcome. It then outputs ``planted'' if $\overline v \ge n^{\epsilon/2}$, and ``base'' otherwise.

We analyze the two cases separately.

\paragraph{Base case.} First, suppose $v=v_0$. With constant probability, the empirical distribution $\mu$ is an $\eta$-CCE of a budget-feasible payment game. Thus, by~\Cref{cor:base-low-value},
\[
        \overline v
        =
        \E_{S\sim\mu}[v_0(S)]
        \le
        O(n^{2\delta}).
\]
Setting $\delta < \epsilon/4$, we have $\overline v < n^{\epsilon/2}$ for all sufficiently large $n$. As a result, the output of the distinguisher is correct.

\paragraph{Planted case.} In the planted case, let $v=v_T$. As before, $\mu$ is an $\eta$-CCE of the induced game with constant probability. By the assumed guarantee of $\mathcal A$, we have
\[
        \overline v
        =
        \E_{S \sim \mu}[v_T(S)]
        \ge
        \frac{1}{C n^{1/2-\epsilon}}\OPT(v_T).
\]
Moreover, by~\Cref{claim:opt-gap}, $\OPT(v_T)\ge \sqrt n$. Thus,
\[
        \overline v \ge \frac{1}{C}n^\epsilon.
\]
For all sufficiently large $n$, this is larger than $n^{\epsilon/2}$. As a result, the output of the distinguisher is again correct.

We conclude that $\mathcal D$ distinguishes $v_0$ from a random $v_T$ with a constant probability using only polynomially many value queries. This contradicts~\Cref{cor:adaptive-indist}. This completes the proof.
\end{proof}

\subsection{Non-monotone value functions}
\label{sec:non-monotone}

We next consider what happens when we relax monotonicity. Our previous analysis made crucial use of monotonicity. First, it guarantees that the marginal contribution rule is nonnegative, guaranteeing limited liability. Moreover, monotonicity implies that if a payment rule ensures that an equilibrium $S$ contains a target set $T$, then $v(S) \geq v(T)$; this can fail in the non-monotone setting, which means that a payment rule here must do more than attract a good set of agents---it must also exclude harmful ones.

We begin by pointing out a simple example showing that marginal payments can be negative without monotonicity.

\begin{example}
Let $N=\{a,b\}$ and define
\[
        v(\emptyset)=0,
        \quad
        v(\{a\})=1,
        \quad
        v(\{b\})=1,
        \quad
        v(\{a,b\})=\frac12.
\]
This function is nonnegative and submodular, since $v(\{a\})+v(\{b\})=2\ge \frac12=v(\{a,b\})+v(\emptyset)$. On the other hand, it is not monotone since $v(\{a,b\}) < v(\{b\})$. As a result, the payment prescribed by the marginal contribution rule~\eqref{eq:mc-rule} for agent $a$ is negative. In fact, as we shall see in~\Cref{remark:rawrule}, even if we allow negative payments, the marginal contribution payment rule can still have unbounded PoA.
\end{example}

A natural way to repair this issue is to pay only the positive part of the marginal contribution. Namely, if $\Delta_i(S) \defeq v(S) - v(S \setminus \{i \})$,
\begin{equation}
        p_i^+(S)
        \defeq
        \begin{cases}
        B\dfrac{[\Delta_i(S)]_+}{\sum_{i' \in S}[\Delta_{i'}(S)]_+}
        &\text{if }i\in S\text{ and }\sum_{i' \in S}[\Delta_{i'}(S)]_+>0,\\
        0
        &\text{otherwise.}
        \end{cases}
        \label{eq:positive-part-rule}
\end{equation}
This rule is nonnegative and budget feasible (because of submodularity). However, it does not guarantee bounded PoA. 

\begin{proposition}
\label{thm:positive-marginal-nonmonotone-lb}
For any $\lambda \in(0,1)$, there are instances with $\max_{i \in N} c_i / B \leq \lambda$ such that the payment rule~\eqref{eq:positive-part-rule} has a
pure Nash equilibrium of arbitrarily small value while $\OPT = 1$. As a result, $\PoAPNE = +\infty$.
\end{proposition}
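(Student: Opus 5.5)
The plan is to build a two-agent instance: one ``good'' agent $g$ whose singleton value equals $1$, and one ``harmful'' agent $h$ whose singleton value is a tiny $\varepsilon>0$. The harmful agent should neutralize $g$'s contribution, so that $v(\{g,h\})$ is also $\varepsilon$. The point is that under~\eqref{eq:positive-part-rule}, $h$ alone collects the whole budget, while $g$, upon entering, has zero positive marginal contribution and is therefore paid nothing.

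Concretely, I take $N=\{g,h\}$, $B=1$, $c_g=c_h=\lambda$ (strictly positive and at most $\lambda B$), and
\[
v(\emptyset)=0,\quad v(\{g\})=1,\quad v(\{h\})=\varepsilon,\quad v(\{g,h\})=\varepsilon .
\]
With two elements, submodularity reduces to the single inequality $v(\{g\})+v(\{h\})\ge v(\{g,h\})+v(\emptyset)$, that is, $1+\varepsilon\ge\varepsilon$, which holds. The function is nonnegative and normalized but not monotone, as intended. Both singletons are budget feasible, so $\OPT\ge v(\{g\})=1$. All values are at most $1$, so $\OPT=1$.

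Next I verify that $S=\{h\}$ is a pure Nash equilibrium.
\begin{itemize}
\item Agent $h$ stays in. Its marginal is $\Delta_h(\{h\})=\varepsilon>0$, so it receives the full budget $p_h^+(\{h\})=1\ge\lambda=c_h$.
\item Agent $g$ stays out. In $\{g,h\}$ we have $\Delta_g=\varepsilon-\varepsilon=0$ and $\Delta_h=\varepsilon-1<0$. Hence $\sum_{i'}[\Delta_{i'}]_+=0$, and by the second case of~\eqref{eq:positive-part-rule} the payment to $g$ is $0<c_g$. Joining strictly hurts $g$.
\end{itemize}
Therefore $\{h\}$ is an equilibrium of value $\varepsilon$ while $\OPT=1$. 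Letting $\varepsilon\to0$ makes the value arbitrarily small, and with $\varepsilon$ allowed to be $0$ the ratio is even $+\infty$ outright. This gives $\PoAPNE=+\infty$.

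The only delicate point is the zero-denominator convention. This is exactly what kills $g$'s incentive, so I will cite it explicitly. If one prefers not to rely on that corner case, I would perturb to $v(\{g,h\})=\varepsilon-\eta$ for a small $\eta\in(0,\varepsilon)$. Then $\Delta_g=-\eta<0$ and $\Delta_h=\varepsilon-\eta-1<0$, so both positive parts are still zero. Submodularity and the other checks are unchanged.
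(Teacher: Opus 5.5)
Your proof is correct and is essentially the paper's. You use the same two-agent value function ($v(\{g\})=1$, $v(\{h\})=v(\{g,h\})=\varepsilon$) and the same equilibrium $\{h\}$. The reason $g$ stays out is also the same: $\sum_{i'}[\Delta_{i'}(\{g,h\})]_+=0$, so the second branch of~\eqref{eq:positive-part-rule} pays $g$ nothing. The only difference is that the paper gives the harmful agent cost $0$, whereas you give it cost $\lambda$. This is fine because $h$ collects the whole budget and $\lambda<1$, and it shows the failure does not depend on zero-cost agents.

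Two of your side remarks are wrong, although neither is needed, since $\PoAPNE$ is a supremum over instances and letting $\varepsilon\to 0$ suffices.

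First, $\varepsilon=0$ does not work in your instance. Then $\Delta_h(\{h\})=0$, so $p_h^+(\{h\})=0<\lambda=c_h$, and $h$ strictly prefers to opt out. In fact, the unique pure Nash equilibrium becomes the optimal set $\{g\}$. The $\varepsilon=0$ version survives only with the paper's choice of zero cost for the harmful agent.

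Second, the perturbation $v(\{g,h\})=\varepsilon-\eta$ does not avoid the zero-denominator branch, because both positive parts at $\{g,h\}$ are still zero. This reliance is harmless, however: that branch is part of the definition of the rule, and the paper's proof uses it in exactly the same way.
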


\begin{proof}
We let $B=1$ and choose any $\eta \in (0,1)$. We consider an instance with two agents, $a$ and $b$, and costs $c_a = \lambda$ and $c_b = 0$, respectively. The value function is defined as
\[
        v (\emptyset) = 0,
        \quad
        v(\{a\})=1,
        \quad
        v(\{b\})=\eta,
        \quad
        v(\{a,b\})=\eta.
\]
This function is submodular because $v(\{a\})+v(\{b\})=1+\eta\ge \eta=v(\{a,b\})+v(\emptyset)$. (It is not monotone since adding $b$ to $\{a\}$ reduces the value from $1$ to $\eta$.)

The optimal solution selects $\{a\}$, which is budget feasible and has value $1$, so
$\OPT=1$. On the other hand, we claim that $S = \{b \}$ is a pure Nash equilibrium under the payment rule~\eqref{eq:positive-part-rule}. Since $\Delta_b(\{b\})=\eta$, agent
$b$ receives the entire budget under~\eqref{eq:positive-part-rule} while incurring zero cost, so it does not want to opt out. Furthermore, if agent $a$ were to join, the new set would be $\{a,b\}$. At that set, $\Delta_a(\{a,b\}) = v(\{a,b\})-v(\{b\}) = 0$ and $\Delta_b(\{a,b\}) = v(\{a,b\})-v(\{a\}) = \eta - 1 < 0$, in turn implying that both players receive zero payment. Since $c_a > 0$, this is not a profitable deviation for $a$. We conclude that $\{b\}$ is a pure Nash equilibrium. The value at that equilibrium is $\eta$, so the claim follows in view of the fact that $\OPT = 1$.
\end{proof}

\begin{remark}
    \label{remark:rawrule}
    It is interesting to point out that this same example shows that the original marginal contribution payment rule also suffers from the same issue, in that $S = \{b\}$ is still an equilibrium. Agent $b$ certainly does not have an incentive to opt out as it receives the entire budget. Further, if agent $a$ were to opt in, it would receive zero payment. The difference now is that, without taking the positive part, the payment can be negative.
\end{remark}

\subsubsection{Impossibility for a broader class}
\label{sec:nm-equal-marginal-lb}

The previous lower bound concerns a very specific payment rule. We next show that the obstruction is broader. In particular, we consider deterministic cost-oblivious payment rules that satisfy the following natural conditions. (We recall that $\Delta_i(S) = v(S) - v(S \setminus \{i \})$.)
\begin{enumerate}[label=(A\arabic*)]
    \item \emph{Equal treatment of equal marginal contributions:} for every value function
    $v$, subset $S\subseteq N$, and agents $i, i' \in S$,
    \[
            \Delta_i(S)=\Delta_{i'}(S)
            \quad\Longrightarrow\quad
            p_i(S)=p_{i'}(S).
    \]
    \item \emph{Positive compensation on uniformly productive coalitions:} for
    every nonempty $S\subseteq N$, if there is a $q>0$ such that
    $\Delta_i(S) = q$ for every $i\in S$, then
    \[
            \sum_{i\in S} p_i(S)>0.
    \]
\end{enumerate}
The last condition only rules out paying a total of zero to a coalition whose members all have the same strictly positive marginal contribution.

For an integer $m\ge2$, we consider $m$ agents $L=\{\ell_1,\ldots,\ell_m\}$ and one high-value agent $h$, so $N=L\cup\{h\}$. For every $S \subseteq L$, we define
\begin{equation}
        v(S)=\frac{|S|}{m^2}
        \text{ and }
        v(S\cup\{h\})=1-\frac{|S|}{m}.
        \label{eq:nm-equal-marginal-value}
\end{equation}

\begin{lemma}
\label{lem:nm-equal-marginal-value-properties}
The function in~\eqref{eq:nm-equal-marginal-value} is normalized,
nonnegative, submodular, and non-monotone. Moreover, $0\le v(S) \le 1$ for every $S \subseteq N$.
\end{lemma}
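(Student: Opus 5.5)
The plan is to verify each property directly from the two-piece formula in~\eqref{eq:nm-equal-marginal-value}.

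\textbf{Normalization and range.} $v(\emptyset)=0$ is immediate. For $S\subseteq L$ we have $0\le |S|/m^2\le 1/m\le 1$. For sets containing $h$, $0\le 1-|S|/m\le 1$ since $|S|\le m$. This gives nonnegativity and the bound $0\le v\le 1$.

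\textbf{Non-monotonicity.} It suffices to exhibit one decreasing step. We have $v(\{h\})=1$ but $v(\{h\}\cup L)=0$, or more simply $v(\{h,\ell_1\})=1-1/m<1=v(\{h\})$.

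\textbf{Submodularity.} We use the diminishing-marginal-returns form in~\Cref{ass:blanket}: for $X\subseteq Y$ and $i\notin Y$, we need $v(X\cup\{i\})-v(X)\ge v(Y\cup\{i\})-v(Y)$. First we compute all marginals in closed form, writing $k=|X\cap L|$:
\begin{itemize}
    \item adding $\ell\in L$ to a set without $h$ gains $1/m^2$;
    \item adding $\ell\in L$ to a set containing $h$ gains $-1/m$;
    \item adding $h$ to a set $S\subseteq L$ with $|S|=k$ gains $1-k/m-k/m^2$.
\end{itemize}
The check then splits into cases.
\begin{itemize}
    \item $i=\ell\in L$. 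If $h\in X$, then $h\in Y$ and both marginals equal $-1/m$. If $h\notin X$, the left side is $1/m^2$ and the right side is either $1/m^2$ or $-1/m$, so the inequality holds in both subcases.
    \item $i=h$. Then $h\notin Y$, and the marginal $1-k(1/m+1/m^2)$ is decreasing in $k$. Since $|X\cap L|\le|Y\cap L|$, the inequality follows.
\end{itemize}

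No step is a real obstacle. The only care needed is to compute the marginal of adding $h$ correctly, including the $-k/m^2$ term, and to note that it is monotone decreasing in $k$.
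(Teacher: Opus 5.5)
Your proposal is correct and follows essentially the same route as the paper: you compute the three marginals ($1/m^2$, $-1/m$, and $1-k/m-k/m^2$), note that each is nonincreasing as the base set grows, and exhibit $v(\{h,\ell_1\})<v(\{h\})$ for non-monotonicity. Your version is slightly more complete, since you write out the case analysis for diminishing returns and the range check $0\le v\le 1$, which the paper leaves implicit.
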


\begin{proof}
We verify diminishing marginal returns. For any agent $\ell \in L$, the marginal value of adding $\ell$ to a set not containing $h$ is $1/m^2$, whereas the marginal value of adding $\ell$ to a set containing $h$ is $-1/m$. Next, the marginal value of adding $h$ to a set $S \subseteq L$ is
\[
        v(S \cup\{h\})-v(S)
        =
        1-\frac{|S|}{m}-\frac{|S|}{m^2},
\]
which is decreasing in $|S|$. This shows that $v$ is submodular. Finally, $v(\{h\})=1$ but $v(\{h,\ell_1\})=1-1/m<1$, so $v$ is not monotone.
\end{proof}

In the following proof, two coalitions will be important. First,
\begin{equation}
        v(L)=\frac1m
        \text{ and }
        \Delta_\ell(L)
        =
        v(L)-v(L\setminus\{\ell\})
        =
        \frac1{m^2}
        \text{ for all } \ell\in L.
        \label{eq:nm-equal-marginal-at-L}
\end{equation}
Second, at the full coalition $N$,
\begin{equation*}
        v(N)=0
        \text{ and }
        v(N\setminus\{h\})=v(L)=\frac1m,
\end{equation*}
and, for every $\ell\in L$,
\begin{equation*}
        v(N\setminus\{\ell\})
        =
        v((L\setminus\{\ell\})\cup\{h\})
        =
        \frac1m.
\end{equation*}
As a result,
\begin{equation}
        \Delta_i(N)=-\frac1m
        \text{ for every }i\in N.
        \label{eq:nm-equal-marginal-at-N}
\end{equation}

\begin{theorem}[Lower bound under equal-marginal treatment]
\label{thm:nm-equal-marginal-lb}
Fix any $\lambda\in(0,1)$, and let $p$ be any deterministic cost-oblivious, nonnegative, and budget-feasible payment rule satisfying \emph{(A1)--(A2)}. For every integer $m\ge \max\left\{2,\left\lceil\frac1\lambda\right\rceil\right\}$, there is an instance with $n=m+1$ agents, a normalized nonnegative non-monotone submodular value function, and strictly positive costs satisfying $0 < c_i \leq \lambda B$ for every $i \in N$, where $\max_{i \in N} c_i / B = \lambda$, such that the induced game has a strict pure Nash equilibrium $L$ and
\[
        \frac{\OPT}{v(L)}=m=n-1.
\]
\end{theorem}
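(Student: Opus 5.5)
The plan is to use the instance of~\eqref{eq:nm-equal-marginal-value} with $B=1$ (or general $B$). The payment rule $p$ depends only on $v$ and not on costs, so I will first read off the payments it must assign at the two relevant coalitions $L$ and $N$. I will then choose a cost vector adapted to those payments so that $L$ is a strict pure Nash equilibrium. This is the same ``fix the rule, then tailor the costs'' strategy as in~\Cref{thm:cost-oblivious-without-anonymity-lb}.

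\emph{Payments at $L$.} By~\eqref{eq:nm-equal-marginal-at-L}, every $\ell\in L$ has $\Delta_\ell(L)=1/m^2>0$. Condition (A1) then forces a common payment $p_\ell(L)=\pi$ for all $\ell\in L$. Condition (A2), with $q=1/m^2$, gives $m\pi=\sum_{\ell\in L}p_\ell(L)>0$, so $\pi>0$. I set $c_\ell\defeq \tfrac12\min\{\pi,\lambda B\}$ for every $\ell\in L$. Then $0<c_\ell\le\lambda B$ and $p_\ell(L)-c_\ell>0$, so each $\ell$ strictly prefers staying in to opting out, which gives utility $0$.

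\emph{Payments at $N$.} By~\eqref{eq:nm-equal-marginal-at-N}, all $m+1$ agents have the same marginal $-1/m$. By (A1) they receive equal payments. Budget feasibility and nonnegativity then give $p_h(N)\le B/(m+1)$. I set $c_h\defeq\lambda B$. Since $m\ge\lceil 1/\lambda\rceil$, we have $\lambda B\ge B/m>B/(m+1)\ge p_h(N)$, so the deviation of $h$ into $L$ yields strictly negative utility. Together with the previous step, $L$ is a strict pure Nash equilibrium, all costs lie in $(0,\lambda B]$, and $\max_i c_i/B=\lambda$ is attained by $h$.

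\emph{Value comparison.} By~\Cref{lem:nm-equal-marginal-value-properties}, $v\le 1$ everywhere. The singleton $\{h\}$ has cost $\lambda B\le B$ and value $1$, so $\OPT=1$. Meanwhile $v(L)=1/m$, giving $\OPT/v(L)=m=n-1$. I do not expect a real obstacle here. The only delicate point is the strictness and positivity of $\pi$ at $L$, which is exactly why (A2) is needed. The other point to check is that the inequality $B/(m+1)<\lambda B$ uses the hypothesis $m\ge 1/\lambda$ strictly, and it does, since $m+1>m\ge 1/\lambda$.
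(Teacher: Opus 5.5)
Your proposal is correct and follows the paper's proof essentially step for step: you use (A1)--(A2) to get a common positive payment at $L$, and (A1) with budget feasibility to get a common payment of at most $B/(m+1)$ at $N$; you then set $c_h=\lambda B$ and charge each $\ell\in L$ half of its payment at $L$. The only cosmetic difference is your $\min\{\pi,\lambda B\}$, which is redundant because budget feasibility gives $\pi\le B/m\le\lambda B$; this is how the paper verifies $c_\ell\le\lambda B$ directly with $c_\ell=\pi/2$.
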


\begin{proof}
We consider the value function in~\eqref{eq:nm-equal-marginal-value}. By~\eqref{eq:nm-equal-marginal-at-L}, (A1), and (A2), there is a common number $q > 0$ such that
\[
        p_\ell(L) = q
        \text{ for every } \ell \in L.
\]
Budget feasibility gives $m q \leq B$, so $q \leq B/m$. Next, by~\eqref{eq:nm-equal-marginal-at-N} and (A1), there is a common number $q' \ge 0$ such that
\[
        p_i(N) = q'
        \qquad\text{ for every } i \in N.
\]
Using budget feasibility again, $(m+1) q' \le B$, so $q' \leq B/(m+1)$. Since $m \ge \lceil1/\lambda\rceil$, we have $q' \leq B/(m+1) < \lambda B$. Now, we choose the private costs as follows.
\begin{equation*}
        c_\ell=\frac q2 \text{ for } \ell\in L
        \text{ and }
        c_h=\lambda B.
\end{equation*}
We have
\[
        c_\ell=\frac q2
        \le \frac{B}{2m}
        \le \frac{\lambda B}{2}
        <\lambda B.
\]
We claim that $L$ is a strict pure Nash equilibrium. Indeed, each agent $\ell\in L$ obtains
\[
        p_\ell(L)-c_\ell
        =
        q-\frac q2
        =
        \frac q2
        >
        0.
\]
If $h$ joins, the resulting coalition is
$N$, so
\[
        p_h(N) - c_h
        =
        q' - \lambda B
        <
        0.
\]
This shows that $L$ is a strict pure Nash equilibrium. However, the singleton $\{h\}$ is feasible because $c_h = \lambda B < B$, and $v(\{h\})=1$. By~\Cref{lem:nm-equal-marginal-value-properties}, no coalition has value exceeding
$1$, so $\OPT=1$. On the other hand, $v(L)=1/m$ by
\eqref{eq:nm-equal-marginal-at-L}, so
\[
        \frac{\OPT}{v(L)}
        =
        m
        =
        n-1. \qedhere
\]
\end{proof}

\subsubsection{A broader impossibility with zero-cost harmful agents}

The preceding lower bounds still impose structure on the payment rule. We next point out that harmful zero-cost agents can always make the price of anarchy arbitrarily large. In other words, the principal needs a way to exclude harmful agents in order to guarantee a finite price of anarchy.

\begin{proposition}[No finite PoA without exclusion of zero-cost harmful agents]
\label{prop:zero-cost-harmful-impossibility}
For any $\lambda \in (0,1)$ and  nonnegative budget-feasible payment rule, there exists a two-agent instance with a submodular value function and $\max_{i \in N} c_i / B \leq \lambda$ such that there is a pure Nash equilibrium with zero value whereas $\OPT = 1$. As a result, the price of anarchy is unbounded.
\end{proposition}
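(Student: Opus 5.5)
The plan is to build a two-agent instance with a ``good'' agent $g$ and a ``harmful'' agent $h$. We set $B=1$, $c_h=0$, and $c_g=\lambda$. The value function will be
\[
    v(\emptyset)=0,\quad v(\{g\})=1,\quad v(\{h\})=0,\quad v(\{g,h\})=0.
\]
Submodularity only needs checking on the one nontrivial pair: $v(\{g\})+v(\{h\})=1\ge 0=v(\{g,h\})+v(\emptyset)$. The marginal of $g$ also shrinks, from $1$ at $\emptyset$ to $0$ at $\{h\}$. This function is non-monotone, which is permitted in this subsection. Since $\{g\}$ is feasible ($c_g=\lambda\le B$), we get $\OPT=1$, and $\max_i c_i/B=\lambda$.

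The candidate equilibrium is $S=\{h\}$, whose value is $0$. Agent $h$ has zero cost and nonnegative payment, so opting out never strictly helps it; its utility $p_h(\{h\})\ge 0$ is weakly better than $0$. For agent $g$ to stay out, we need $p_g(\{g,h\})\le c_g=\lambda$, and this is the only place the fixed payment rule matters. A rule could pay $g$ more than $\lambda$ at $\{g,h\}$, so I would make the construction depend on the rule (the statement is ``for any rule there exists an instance''). By cost-obliviousness, the rule's payments are determined by $v$ alone, so we may choose $v$ first and then pick costs.

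Concretely, with $v$ fixed as above, read off $\pi\defeq p_g(\{g,h\})\in[0,1]$. If $\pi<1$, set $c_g\defeq\max\{\pi,\text{small}\}$. But we also need $c_g\le\lambda B$, which fails if $\pi>\lambda$. To handle this I would add more harmful copies. Take $k$ zero-cost harmful agents $h_1,\dots,h_k$ with $v(S)=0$ whenever $S$ contains any $h_j$, and $v(\{g\})=1$. This is still submodular, because adding $h_j$ has marginal $\le 0$ and the marginal of $g$ drops from $1$ to $0$. The statement insists on two agents, though, so I would instead use the freedom of scaling. Keep $v$, and note that the equilibrium profile can be $\{h\}$ or $\emptyset$. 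If $p_g(\{g,h\})\le\lambda$, set $c_g=\lambda$ and $S=\{h\}$ is an equilibrium. Otherwise $p_g(\{g,h\})>\lambda$, and budget feasibility gives $p_h(\{g,h\})<1-\lambda$.

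The main obstacle is exactly this case, where the rule overpays $g$ at $\{g,h\}$. My fallback is to consider the profile $\{g,h\}$ itself, which has value $0$. Agent $g$ stays in since $p_g(\{g,h\})>\lambda=c_g$. Agent $h$, with zero cost, weakly prefers staying in since its payment is $\ge 0$. So $\{g,h\}$ is a pure Nash equilibrium of value $0$. Thus in either case, $\{h\}$ or $\{g,h\}$ is a zero-value equilibrium while $\OPT=1$, which proves the proposition with the instance fixed in advance and $c_g=\lambda$, $c_h=0$, and no case-dependent costs needed.
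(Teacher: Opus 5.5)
Your final argument is correct and is exactly the paper's proof. The paper uses the same two-agent instance ($c_g=\lambda$, $c_h=0$, $v(\{g\})=1$, all other values $0$) and the same case split: if $p_g(\{g,h\})\le\lambda$ then $\{h\}$ is a zero-value equilibrium, and otherwise $\{g,h\}$ is. The intermediate detours (rule-dependent costs, extra harmful copies, and the appeal to cost-obliviousness, which the proposition does not assume) are unnecessary and can be deleted.
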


\begin{proof}
We let $B=1$. There are two agents, $a$ and $b$, with costs $c_a = \lambda$ and $c_b = 0$. The submodular value function is defined as
\[
        v(\emptyset)=0,
        \quad
        v(\{a\})=1,
        \quad
        v(\{b\})=0,
        \quad
        v(\{a,b\})=0.
\]
The optimal solution is $\{a\}$, so $\OPT=1$. Now, let $p$ be any nonnegative budget-feasible payment rule. We consider two cases.

\paragraph{Case 1: $p_a(\{a,b\})\le c_a$.}
Consider $S=\{b\}$. In this case, $S = \{b\}$ is a Nash equilibrium. Agent $b$ has no incentive to opt out since its payment is nonnegative and incurs zero cost. Moreover, since $p_a(\{a,b\})\le c_a$ (by assumption), $a$ has no profitable deviation. The value of this Nash equilibrium is $v(\{b\})=0$.

\paragraph{Case 2: $p_a(\{a,b\})>c_a$.}
In this case, $S = \{ a, b\}$ is a Nash equilibrium. Agent $a$ has strictly positive utility since $p_a(\{a,b\})>c_a$ (by assumption), as does agent $b$ since it incurs zero cost and the payment is nonnegative. The value of this Nash equilibrium is $v(\{a,b\})=0$.

In either case, there is a pure Nash equilibrium with zero value, whereas
$\OPT=1$.
\end{proof}

For completeness, \Cref{appendix:targeted-implementation} contains a complementary positive result beyond cost-oblivious payment rules, based on what we refer to as targeted implementation.

\section{Combinatorial compensation design}\label{sec:combinatorial}

The key premise so far is that each agent has \emph{binary} actions: they can either opt in or opt out. 
In many cases, however, the agent's action set is more complex. For example, a YouTube content creator could create multiple videos, and each video has a private cost. In this section, we introduce the \emph{combinatorial compensation design} problem, in which each agent chooses a subset from multiple actions. In contrast to the binary-action setting, we show that the combinatorial compensation design problem is non-trivial even in the single-agent setting in \Cref{sec:single-agent-combinatorial}. We further study the even more general problem of multi-agent combinatorial compensation design in \Cref{sec:multi-agent-combinatorial}.


\subsection{Single-agent}
\label{sec:single-agent-combinatorial}


We first consider the setting where there is a single agent with a finite set of actions $A=\{1,\ldots,m\}$. The
agent chooses an arbitrary subset $S\subseteq A$. Each action $e\in A$ has a \emph{private} cost $c_e>0$, and costs are additive:
$
  c(S)=\sum_{e\in S}c_e.
$
The principal has a budget $B > 0$ and a normalized monotone value function
$v:2^A\to \R_{\geq 0}$ with $v(\emptyset)=0$. The principal designs a compensation rule $p: 2^{A} \rightarrow \mathbb{R}_{\ge 0}$ that pays $p(S)$ to the agent if they choose $S \subseteq A$. We require the
compensation rule to be \emph{budget feasible}, meaning that
\[
  0\leq p(S)\leq B \qquad \forall S\subseteq A.
\]
Given the compensation rule $p$ and the cost vector $c$, the agent chooses a subset
$S_p(c)$ satisfying
\[
  S_p(c)\in\argmax_{S \subseteq A}\{p(S)-c(S)\}.
\]

The principal aims to design a compensation rule that maximizes its value. The full-information benchmark is
\[
  \OPT \defeq \max\{v(T): T\subseteq A,\ c(T)\leq B\}.
\]


We assume \emph{platform-favoring tie-breaking}: among all subsets maximizing
the agent's payoff $p(S)-c(S)$, the agent chooses one with maximum principal
utility $v(S)$. This assumption matters only at indifference points, for example,
when a feasible set has payoff exactly zero.

\subsubsection{Deterministic rules}
Unlike the binary-action setting, where a deterministic compensation rule achieves a constant PoA even in the multi-agent setting with a submodular value function, any deterministic rule incurs $\Omega(m)$ PoA even in single-agent combinatorial compensation design, and even for additive values. The lower bound holds even when $c_e \le B$ for all actions $e \in A$. This lower bound is tight: under the same assumption, one can achieve a PoA of $m$ even for subadditive values by paying the agent $B$ only when they choose an action $e^* \in \argmax_{e \in A} v(\{e\})$, which guarantees $v(\{e^*\})\ge v(A)/m \ge \OPT/m$.

\begin{theorem}[$\Omega(m)$ lower bound for deterministic rules]
\label{thm:single-agent-lower-deterministic}
Fix \(m\ge 2\), \(B>0\), and the additive value function $v(S)=\frac{|S|}{m}$. The price of anarchy for any deterministic rule is at least $m$ even when the costs satisfy $c_e < B$ for any $e \in A$.
\end{theorem}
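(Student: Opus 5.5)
The idea is to mirror Case 1 / Case 2 of \Cref{thm:lower-deterministic-no-large-market}. The case split is on whether some singleton action is paid the full budget. Fix a deterministic rule $p$ with $0\le p(S)\le B$. The rule is cost oblivious, so $p$ is fixed before the costs are chosen. The plan is to exhibit, in each case, a cost vector with $c_e<B$ for all $e$ under which the agent's \emph{unique} best response is either $\emptyset$ or a singleton, while $\OPT\ge 1/m$ in the first case and $\OPT=v(A)=1$ in the second. Making the best response unique means platform-favoring tie-breaking never comes into play.

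\emph{Case 1: $p(\{e\})<B$ for every $e\in A$.} We set each cost $c_e$ strictly between $\max\{p(\{e\}),B/2\}$ and $B$; this interval is nonempty because $p(\{e\})<B$. Then:
\begin{itemize}
  \item every singleton gives payoff $p(\{e\})-c_e<0$;
  \item every set of size at least two costs more than $B\ge p(S)$, so its payoff is also negative;
  \item $\emptyset$ gives payoff $0$.
\end{itemize}
So $\emptyset$ is the unique maximizer and $v(\emptyset)=0$. Every singleton is affordable, so $\OPT\ge 1/m>0$ and the ratio is $+\infty$.

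\emph{Case 2: some $f\in A$ has $p(\{f\})=B$.} We set $c_f=\epsilon$ and $c_g=\frac{B-\epsilon}{m-1}$ for every $g\neq f$. Here $\epsilon>0$ is small enough that $\frac{B-\epsilon}{m-1}>\epsilon$; all costs are then below $B$. The total cost is exactly $B$, so $A$ is feasible and $\OPT=v(A)=1$. The singleton $\{f\}$ yields payoff $B-\epsilon$. Consider any other set $S$ that contains $j\ge1$ actions other than $f$. Using $p(S)\le B$, its payoff is at most
\[
  B - j\,\tfrac{B-\epsilon}{m-1} - \epsilon\,\mathbbm{1}\{f\in S\}.
\]
If $f\in S$, this is at most $B-\epsilon-j\frac{B-\epsilon}{m-1}<B-\epsilon$. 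If $f\notin S$, it is at most $B-\frac{B-\epsilon}{m-1}<B-\epsilon$, by the choice of $\epsilon$. The empty set yields payoff $0<B-\epsilon$. Hence $\{f\}$ is the unique best response, and $\OPT/v(\{f\})=1/(1/m)=m$.

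Putting the two cases together, every deterministic rule admits a cost vector with $c_e<B$ for all $e$ and price of anarchy at least $m$. The main point of care is Case 2. There we must ensure simultaneously that (i) the cheap-element costs sum to at most $B$, so $\OPT=1$, and (ii) every competing set is strictly worse than $\{f\}$. Both follow from the single bound $p(S)\le B$ and the choice $\frac{B-\epsilon}{m-1}>\epsilon$, which is possible for every $m\ge2$.
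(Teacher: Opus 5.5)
Your proof reaches the right conclusion by a different route from the paper, but it contains one slip. The model only requires $0\le p(S)\le B$, so the rule may pay $p(\emptyset)>0$, and the empty set's payoff is $p(\emptyset)$, not $0$. In Case 1 this is harmless: every nonempty set has strictly negative payoff while $p(\emptyset)\ge 0$. In Case 2 it breaks your uniqueness claim. If $p(\emptyset)=B-\epsilon$, then $\{f\}$ ties with $\emptyset$, and if $p(\emptyset)>B-\epsilon$, then $\{f\}$ is not a best response at all. The repair is immediate. Your computation already shows that every set other than $\emptyset$ and $\{f\}$ has payoff strictly below $u(\{f\})=B-\epsilon$. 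So any best response lies in $\{\emptyset,\{f\}\}$ and has value at most $1/m$, which gives ratio $m$ or $+\infty$ under any tie-breaking. Alternatively, pick $\epsilon$ with $B-\epsilon\neq p(\emptyset)$ to keep the best response strict.

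The paper avoids any case split on the rule and uses two fixed uniform cost vectors, $\bar c_e=(1-\eta)B$ and $\underline{c}_e=B/m$ with $\eta<1/m$.
\begin{itemize}
\item Under $\bar c$, only $\emptyset$ (infinite ratio) or a singleton $\{e\}$ can be chosen. Choosing $\{e\}$ forces $p(\{e\})\ge p(\emptyset)+(1-\eta)B$.
\item Under $\underline{c}$, this near-full singleton payment gives utility above $B-2B/m$. That beats every set of size at least two and also beats $\emptyset$, so a singleton of value $1/m$ is chosen while $\OPT=v(A)=1$.
\end{itemize}
Roughly, the high-cost instance plays the role of your Case 1 and the low-cost instance your Case 2. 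The difference is that the paper thresholds on ``paid at least $(1-\eta)B$'' rather than ``paid exactly $B$''.

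The paper's version has uniform hard instances that are, up to $\eta$, independent of $p$: a two-point adversary in line with the uniform-cost demand-curve argument behind the randomized $\Omega(\log m)$ bound. It also accounts for $p(\emptyset)$ automatically. Your version mirrors the Case 1/Case 2 structure of the $\Omega(\sqrt{n})$ multi-agent bound and exploits the strict gap $p(\{e\})<B$ by choosing costs just above each singleton payment. It can yield strict equilibria, at the price of non-uniform cost vectors that depend on the rule.
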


\begin{proof}
Let \(p:2^A\to[0,B]\) be any deterministic budget-feasible cost-oblivious compensation rule.
Fix any $\eta\in\left(0,\frac{1}{m}\right)$. We consider the following two uniform cost vectors:
\[
        \bar{c}_e=(1-\eta)B
        \quad\forall e\in A,
\]
and
\[
        \underbar{c}_e =\frac{B}{m}
        \quad\forall e\in A.
\]
We denote by $\PoAPNE(p; c)$ the price of anarchy with respect to pure Nash equilibria induced by the payment rule $p$ and costs $\{ c_e \}_{e \in A}$. We will prove that
\[
        \max\{\PoAPNE(p;\bar{c}),\PoAPNE(p; \underbar{c})\}\ge m.
\]

We write the agent's utility under a cost vector \(c\) as $u_c(S)=p(S)-c(S)$. We first consider the high-cost instance \(\bar{c}\). Every singleton is budget feasible, and
no set of size at least \(2\) is budget feasible. Hence,
\[
        \OPT(\bar{c})=\frac{1}{m}.
\]
Because \(\eta<1/m\le 1/2\), every set \(S\) with \(|S|\ge 2\) has negative utility, so it cannot be chosen in equilibrium in the high-cost instance. The only possible sets in equilibrium are \(\emptyset\) and singletons. If the agent chooses \(\emptyset\), then the principal obtains a value \(0\), in which case the price of anarchy is infinite. It remains to consider the case where under $p$ and $\bar{c}$, the agent chooses a singleton. We write this singleton as \(\{e\}\). Since \(\{e\}\) is chosen under \(\bar{c}\),
\[
        p(\{e\})-(1-\eta)B
        =
        u_{\bar{c}}(\{e\})
        \ge
        u_{\bar{c}}(\emptyset)
        =
        p(\emptyset).
\]
Thus,
\begin{equation}
\label{eq:singleton-nearly-full}
        p(\{e\})\ge p(\emptyset)+(1-\eta)B.
\end{equation}

We now consider the low-cost instance \(\underbar{c}\). The full set \(A\) has cost $\underbar{c}(A)=m\cdot \frac{B}{m}=B$, so it is budget feasible. Thus,
\[
        \OPT(\underbar{c})=v(A)=1.
\]
For the singleton set \(\{e\}\) identified above, using
\eqref{eq:singleton-nearly-full}, we have
\begin{equation}
\label{eq:singleton-low-utility}
    u_{\underbar{c}}(\{e\})
        = p(\{e\})-\frac{B}{m} \ge
        p(\emptyset)+(1-\eta)B-\frac{B}{m} >
        B-\frac{2B}{m},
\end{equation}
where the last inequality holds since $\eta < \frac{1}{m}$. On the other hand, every set \(S\) with \(|S|\ge 2\) satisfies
\[
        u_{\underbar{c}}(S)
        =
        p(S)-|S|\frac{B}{m}
        \le
        B-|S|\frac{B}{m}
        \le
        B-\frac{2B}{m}.
\]
It thus follows that the agent will choose a singleton under \(\underbar{c}\). Since every singleton has value \(1/m\), the principal obtains value \(1/m\). Thus,
\[
        \PoAPNE(p;\underbar{c})
        =
        \frac{\OPT(\underbar{c})}{1/m}
        =
        \frac{1}{1/m}
        =
        m.
\]
This completes the proof.
\end{proof}


\subsubsection{Randomized threshold-prize rule}
Given the lower-bound results for deterministic rules, even in the additive-value setting, it is then interesting to investigate whether \emph{randomized} compensation rules would have improved PoA bounds. We consider randomized compensation rules that are distributions over deterministic compensation
rules. The random seed is drawn and publicly announced before the agent chooses
a subset. The performance of such a rule is the expected principal's value over
this public randomization and the induced platform-favoring best-response of the
agent. Our main result shows that there is a randomized compensation rule with PoA of $O(\log m)$ even for subadditive values.

Throughout this section, we assume the value function $v$ is normalized, monotone, and subadditive: $v(S \cup S') \leq v(S)+v(S'), \forall S, S' \subseteq A$. We let $U \defeq v(A)$ and recall that $\OPT$ is the full-information benchmark. If $\OPT=0$, the problem is trivial. So we can assume that $U \ge \OPT >0$. We set the parameter
\[
  L \defeq \lceil \log_2 m\rceil.
\]
We propose the following \emph{randomized threshold-prize rule}. It first samples an index $r\in\{0,1,\ldots,L\}$ uniformly at random and publicly
announces it. It then sets a threshold
\[
  \tau_r \defeq \frac{U}{2^r}
\]
and selects the deterministic payment rule
\[
  p_r(S) \defeq
  \begin{cases}
  B & \text{if } v(S) \geq \tau_r,\\
  0 & \text{if } v(S)<\tau_r.
  \end{cases}
\]
This rule is clearly cost oblivious and budget feasible. The idea is that since $\OPT \in [U/m, U]$, we can use a geometric scaled grid of size $O(\log m)$ to guess the range of $\OPT$ and get a value of $O(\OPT)$ with probability at least $\Omega(1/\log m)$.

\begin{theorem}
\label{thm:single-agent-upper-basic}
Assume that $v$ is normalized, monotone, and subadditive. Suppose further that $c_e\leq B$ for
all actions $e\in A$. The randomized threshold-prize rule satisfies
\[
  \E[v(S)]\geq \frac{1}{2(\lceil \log_2 m \rceil+1)}\OPT.
\]
\end{theorem}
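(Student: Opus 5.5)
The plan is to exhibit a single index $r^*\in\{0,\ldots,L\}$ whose branch alone guarantees $v(S)\ge \OPT/2$. Since each index is drawn with probability $1/(L+1)$ and values are nonnegative, this immediately gives
\[
\E[v(S)]\ge \frac{1}{L+1}\cdot\frac{\OPT}{2},
\]
which is the claimed bound.

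\textbf{Step 1: locating $\OPT$ in the grid.} First, I would show that $U/m\le \OPT\le U$. The upper bound is monotonicity. For the lower bound, subadditivity gives $U=v(A)\le \sum_{e\in A}v(\{e\})\le m\max_e v(\{e\})$. Each singleton is budget feasible because $c_e\le B$, so $\max_e v(\{e\})\le\OPT$. Since $2^L\ge m$, the thresholds $\tau_r=U/2^r$ run from $U$ down to $U/2^L\le U/m\le\OPT$. Hence there is a smallest index $r^*$ with $\tau_{r^*}\le \OPT$. If $r^*=0$, then $\tau_0=U\ge\OPT$, so $\tau_0=\OPT$. Otherwise minimality gives $\tau_{r^*-1}>\OPT$, so $\tau_{r^*}>\OPT/2$. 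In either case,
\[
\OPT/2<\tau_{r^*}\le\OPT .
\]

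\textbf{Step 2: analysing the branch $r^*$.} Let $T^*$ be an optimal feasible set, so $c(T^*)\le B$ and $v(T^*)=\OPT\ge\tau_{r^*}$. Choosing $T^*$ earns payoff $B-c(T^*)\ge 0$. Now consider any set $S$ with $v(S)<\tau_{r^*}$. It earns payoff $-c(S)\le 0$, with equality only if $c(S)=0$, which forces $S=\emptyset$ because all costs are positive. Two cases remain.
\begin{itemize}
\item If $B-c(T^*)>0$, the best response strictly beats every set with $v(S)<\tau_{r^*}$.
\item If $B-c(T^*)=0$, the best payoff is $0$. The only competitor below threshold is $\emptyset$, and $T^*$ has strictly larger principal value, so platform-favoring tie-breaking rules $\emptyset$ out.
\end{itemize}
Either way the chosen set satisfies $v(S)\ge\tau_{r^*}>\OPT/2$.

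\textbf{Main obstacle.} The delicate step is this tie-breaking case. There the agent is indifferent between $T^*$ (payoff $0$) and the empty set (payoff $0$). It also matters that $\OPT>0$ makes $v(T^*)>v(\emptyset)$, and that zero-cost nonempty sets are excluded by the assumption $c_e>0$. Everything else is routine once the grid is shown to bracket $\OPT$ within a factor of $2$.
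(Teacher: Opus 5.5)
Your proposal is correct and follows the paper's proof essentially step for step. You bracket $\OPT$ in $[U/m,U]$ via subadditivity and singleton feasibility, take the smallest index $r^*$ with $\tau_{r^*}\le\OPT$ so that $\tau_{r^*}\ge\OPT/2$, and argue that in branch $r^*$ the agent chooses a set clearing the threshold, which happens with probability $1/(L+1)$. Your explicit tie-breaking analysis is more careful than the paper's one-line appeal to platform-favoring tie-breaking, though positivity of costs is not actually needed there: when the optimal payoff is $0$, the set $T^*$ is itself a payoff maximizer, so tie-breaking already gives $v(S)\ge\OPT$.
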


\begin{proof}
Let $S^* \in\argmax \{v(S): c(S) \leq B \}$ be an optimal feasible set and
$\OPT \defeq v(S^*)$. Since every singleton
is feasible (by assumption), $\OPT \geq v(\{e\})$ for every element $e$. By subadditivity,
\[
  U=v(A)\leq \sum_{e\in A} v(\{e\})\leq m\OPT.
\]
Moreover, monotonicity yields $\OPT \leq U$. Thus,
\[
  \frac{U}{m}\leq \OPT\leq U.
\]
Because $2^L\geq m$, the smallest threshold satisfies $\tau_L\leq U/m\leq \OPT$.
Now, let $r^*$ be the smallest index such that $\tau_{r^*}\leq \OPT$. If $r^*=0$,
then $\tau_{r^*}=U\geq \OPT$, so $\tau_{r^*} = \OPT$. If $r^*>0$, then $\tau_{r^*-1}>\OPT$, so
\[
  \tau_{r^*}=\frac{1}{2}\tau_{r^*-1}>\frac{\OPT}{2}.
\]
In either case,
\[
  \tau_{r^*}\geq \frac{\OPT}{2}
  \text{ and }
  \tau_{r^*}\leq \OPT.
\]

Consider now the event $r=r^*$. The optimal set $S^*$ exceeds the payment threshold because
$v(S^*) = \OPT \geq \tau_{r^*}$, so the agent can obtain payoff
\[
  p_{r^*}(S^*)-c(S^*) = B - c(S^*) \geq 0.
\]
As a result, under platform-favoring tie-breaking, the agent chooses a qualifying set in this branch. In this case, the principal obtains value at least
\[
  \tau_{r^*}\geq \frac{\OPT}{2}.
\]
Finally, the branch $r^*$ occurs with probability $1/(L+1)$. Since values are nonnegative in
any other event, we conclude that
\[
  \E[v(S)]\geq \frac{1}{L+1}\cdot \frac{\OPT}{2}. \qedhere
\]
\end{proof}

We will now provide an improvement to~\Cref{thm:single-agent-upper-basic} under a large-market condition. Namely, we assume that the principal knows an upper bound
\[
  c_e \leq \lambda B\qquad \forall e\in A
\]
for some $\lambda\in(0,1]$. We set
\[
  k \defeq \min\left\{m,\left\lfloor \frac{1}{\lambda}\right\rfloor\right\}
  \text{ and }
  q \defeq \left\lceil\frac{m}{k}\right\rceil.
\]
By the cost bound, every set comprising at most $k$ actions is budget feasible. We will use a similar threshold-prize rule, but now set
\[
  L_\lambda \defeq \lceil \log_2 q\rceil
  \text{ and }
  \tau_r=\frac{U}{2^r}.
\]
where $r \in \{0, 1, \dots, L_\lambda \}$ is again sampled uniformly at random.

\begin{theorem}
\label{thm:single-agent-upper-large-market}
Assume that $v$ is normalized, monotone, and subadditive. Suppose further that
$c_e \leq \lambda B$ for all $e \in A$. The randomized threshold-prize rule
with $L_\lambda=\lceil \log_2 q\rceil$ satisfies
\[
  \E[v(S)]\geq
  \frac{1}{2(L_\lambda+1)}\OPT,
  \text{ where }
  q=\left\lceil\frac{m}{\min\{m,\lfloor1/\lambda\rfloor\}}\right\rceil.
\]
In particular, the approximation factor is
\[
  O\left(\log q+1\right)
  =O\left(\log(m \lambda+2)\right).
\]
\end{theorem}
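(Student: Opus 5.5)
The plan is to mirror the proof of \Cref{thm:single-agent-upper-basic}, replacing the range bound $U/m \le \OPT \le U$ by the sharper $U/q \le \OPT \le U$. Once this sandwich is in place, the rest of the argument carries over verbatim with $L_\lambda$ in place of $L$.

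\textbf{Step 1 (sharper lower bound on $\OPT$).} Partition $A$ into $q = \lceil m/k \rceil$ blocks $A_1,\ldots,A_q$, each of size at most $k$. Since $c_e \le \lambda B$ and $k \le \lfloor 1/\lambda \rfloor$, each block has cost $c(A_j) \le k\lambda B \le B$. Every block is therefore budget feasible, so $v(A_j) \le \OPT$. Subadditivity then gives
\[
U = v(A) \le \sum_{j=1}^q v(A_j) \le q\,\OPT.
\]
Monotonicity gives $\OPT \le U$. When $k = m$ we have $q = 1$ and $\OPT = U$, consistent with the bound.

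\textbf{Step 2 (a good threshold exists).} Since $2^{L_\lambda} \ge q$, the smallest threshold satisfies $\tau_{L_\lambda} \le U/q \le \OPT$. Let $r^*$ be the smallest index with $\tau_{r^*} \le \OPT$. The same case analysis as before applies: if $r^* = 0$, then $\tau_0 = U \ge \OPT$ forces $\tau_{r^*} = \OPT$; otherwise $\tau_{r^*} = \tau_{r^*-1}/2 > \OPT/2$. In both cases $\OPT/2 \le \tau_{r^*} \le \OPT$.

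\textbf{Step 3 (the good branch pays off).} In branch $r^*$, the optimal set $S^*$ satisfies $v(S^*) \ge \tau_{r^*}$. It therefore earns payoff $B - c(S^*) \ge 0$, which weakly beats any non-qualifying set, whose payoff is at most $0 - c(S) \le 0$. Among payoff maximizers, platform-favoring tie-breaking (or strict preference) ensures the chosen set qualifies: a non-qualifying set can attain payoff $0$ only as $\emptyset$, and $S^*$ ties with it at best while having higher value. Hence the chosen set has value at least $\tau_{r^*} \ge \OPT/2$. This branch occurs with probability $1/(L_\lambda+1)$, and values are nonnegative in all other branches, so $\E[v(S)] \ge \OPT/(2(L_\lambda+1))$.

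\textbf{Step 4 (asymptotics).} For the final estimate, note that if $\lambda m < 1$ then $q = 1$. Otherwise $k \ge 1/(2\lambda)$, so $q \le 2\lambda m + 1$. In either case $L_\lambda + 1 = O(\log(m\lambda + 2))$.

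The only new ingredient relative to \Cref{thm:single-agent-upper-basic} is the block-partition lower bound in Step 1. I expect no real obstacle beyond checking the tie-breaking case in Step 3 carefully.
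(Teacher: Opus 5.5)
Your proposal is correct and matches the paper's proof essentially verbatim. The paper also partitions $A$ into $q$ budget-feasible blocks of size at most $k$ to get $U/q \le \OPT \le U$ via subadditivity, then reruns the threshold argument of \Cref{thm:single-agent-upper-basic} with $m$ replaced by $q$. Your Step 4 correctly supplies the short check that $O(\log q + 1) = O(\log(m\lambda+2))$, which the paper leaves implicit.
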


\begin{proof}
We partition $A$ arbitrarily into $q$ blocks $A_1, \ldots, A_q$, each of size at most $k$. Since every action has cost at most $\lambda B$ (by assumption) and $k\leq 1/\lambda$, it follows that every block is budget feasible. Therefore,
\[
  \OPT \geq \max_{\ell\in[q]} v(A_\ell).
\]
By subadditivity,
\[
  U=v(A) \leq \sum_{\ell=1}^q v(A_\ell)
  \leq q \max_{\ell \in [q]} v(A_\ell)
  \leq q \OPT.
\]
Together with $\OPT\leq U$, this gives
\[
  \frac{U}{q}\leq \OPT\leq U.
\]
The rest of the proof is analogous to the proof of~\Cref{thm:single-agent-upper-basic}, with $m$ replaced by $q$.
\end{proof}

\begin{remark}[Extreme case of~\Cref{thm:single-agent-upper-large-market}]
If $m\lambda\leq 1$, then $k=m$ and $q=1$. In this extreme case, the whole action set $A$ is budget feasible, so $A$ is optimal by monotonicity.
\end{remark}

\subsubsection{A matching lower bound for additive values}

We now show that the logarithmic dependence obtained in~\Cref{thm:single-agent-upper-basic} is unavoidable for cost-oblivious rules, even when
$v$ is additive.

In what follows, we normalize $B=1$. We will consider the additive value function
\[
  v(S)=|S|.
\]
For a scalar $t > 0$, consider the uniform cost vector
\[
  c_e=t \quad \forall e\in A.
\]
The optimum subject to the budget constraint reads
\[
  \OPT(t) = \max\{|S|: t|S| \leq 1\} = \min\left\{m,\left\lfloor\frac{1}{t}\right\rfloor\right\}.
\]

Now, we consider a cost-oblivious payment rule $p:2^A\to[0,1]$. We define
\[
  p_k=\max\{p(S): |S|=k\},\qquad k=0,1,\ldots,m,
\]
where by convention $p_0 = p(\emptyset)$. At a cost level $t$, the agent will choose a cardinality
\[
  D_p(t) \in \argmax_{k \in \{0,\ldots,m\}} \{p_k-t k\},
\]
where ties are broken in favor of the largest $k$, which is the platform-favoring
choice for $v(S)=|S|$. The proof of our lower bound will make use of the following bound on the demand integral.

\begin{lemma}[Demand integral bound]
\label{lem:single-agent-demand-integral}
For every deterministic cost-oblivious payment rule $p$ and every interval
$0<a<b$,
\[
  \int_a^b D_p(t) dt\leq 1.
\]
\end{lemma}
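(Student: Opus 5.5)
Our plan is to treat $D_p$ as the demand curve of an agent facing the menu $\{(k,p_k)\}_{k=0}^m$ and to show that it is the negative derivative of a convex function whose total variation is bounded by the payment range $[0,1]$. Define the indirect utility
\[
  \Pi(t) \defeq \max_{k\in\{0,\ldots,m\}}\{p_k - tk\}, \qquad t\geq 0.
\]
Each $p_k - tk$ is affine in $t$, so $\Pi$ is a maximum of finitely many affine functions. It is therefore convex, piecewise linear, and nonincreasing, since all slopes $-k$ are nonpositive.

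The first step is an envelope argument. At every $t$ where $\Pi$ is differentiable, $\Pi'(t) = -D_p(t)$. More generally, for any maximizer $k$ at $t$, the number $-k$ is a subgradient of $\Pi$ at $t$. The subgradient inequality gives, for $s<t$,
\[
  \Pi(s) \geq p_k - sk = \Pi(t) + k(t-s).
\]
Because of the largest-$k$ tie-breaking, $-D_p(t)$ equals the left derivative $\Pi'_-(t)$, so $D_p$ is nonincreasing and measurable. Away from the finitely many kink points it agrees with $-\Pi'$.

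The second step integrates this identity. Since $\Pi$ is Lipschitz (piecewise linear), it is absolutely continuous, and
\[
  \int_a^b D_p(t)\,dt = -\int_a^b \Pi'(t)\,dt = \Pi(a) - \Pi(b).
\]

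The third step bounds this difference using only budget feasibility and nonnegativity of payments. Taking $k=0$ in the maximum gives $\Pi(b) \geq p_0 \geq 0$. Moreover, for every $k$ and every $a>0$ we have $p_k - ak \leq p_k \leq 1$, so $\Pi(a) \leq 1$. Hence $\int_a^b D_p \leq 1 - 0 = 1$. This bound does not depend on $a$ or $b$; in fact one can even use $\Pi(a)-\Pi(b)\leq \max_k p_k - p_0 \leq 1$.

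We do not expect a real obstacle here. The only point needing care is that the tie-breaking convention changes $D_p$ at finitely many points, which has measure zero and does not affect the integral. To avoid calculus subtleties entirely, one can instead argue discretely: $D_p$ is a nonincreasing step function, and on each interval where the maximizing index is constant, the contribution $k\cdot(\text{length})$ equals the drop in $\Pi$ across that interval; summing these drops telescopes to $\Pi(a)-\Pi(b)$.
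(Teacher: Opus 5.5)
Your proposal is correct and follows essentially the same route as the paper: define $\Pi(t)=\max_k\{p_k-tk\}$, note it is convex and piecewise linear with $\Pi'=-D_p$ away from finitely many breakpoints, so $\int_a^b D_p(t)\,dt=\Pi(a)-\Pi(b)$, and then bound $\Pi(a)\le 1$ and $\Pi(b)\ge p_0\ge 0$. Identifying $-D_p$ with the left derivative under largest-$k$ tie-breaking treats the breakpoints a bit more carefully than the paper does, but the integral does not depend on it.
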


\begin{proof}
We define
\[
  g(t) \defeq \max_{k\in\{0,\ldots,m\}}\{p_k - t k\}.
\]
The function $g$ is the pointwise maximum of finitely many affine functions, so it
is convex and piecewise linear. At every point $t$ at which the maximizer is
unique, $g'(t) = - D_p(t)$. Since the set of breakpoints is finite, we have
\[
  \int_a^b D_p(t) dt
  =g(a)-g(b).
\]
Since $0\leq p_k\leq 1$ for all $k$, we have $g(a)\leq 1$. Since $k=0$ gives payoff
$0$, we also have $g(b)\geq 0$, and the claim follows.
\end{proof}

\begin{theorem}[Logarithmic lower bound for additive values]
\label{thm:single-agent-lower-basic}
For any randomized cost-oblivious payment rule, there exists a uniform
positive cost vector $c_e = t \leq 1$ such that, for the additive function
$v(S)=|S|$, the expected selected value is at most an
$O(1/\log m)$ fraction of $\OPT(t)$.
\end{theorem}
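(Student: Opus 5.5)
The plan is an averaging argument over a log-uniform prior on the cost level $t$, combined with \Cref{lem:single-agent-demand-integral}. A randomized cost-oblivious rule is a distribution over deterministic rules $p$, fixed before $t$ is revealed. By linearity, it suffices to show that for every deterministic $p$,
\[
  \int_{1/m}^{1} \frac{\E[v(S)]}{\OPT(t)}\,\frac{dt}{t} = \int_{1/m}^{1}\frac{D_p(t)}{\OPT(t)}\,\frac{dt}{t} \le C
\]
for an absolute constant $C$. The same bound then holds after averaging over the randomness of the rule. Since $\int_{1/m}^1 dt/t=\ln m$, some $t\in[1/m,1]$ satisfies $\E[v(S)]/\OPT(t)\le C/\ln m$, which is exactly the claim. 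Note that $c_e=t\le 1$ holds on this range.

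To bound the integral, I will use that for $t\in[1/m,1]$ we have $\OPT(t)=\lfloor 1/t\rfloor\ge \frac{1}{2t}$, so $\frac{1}{t\,\OPT(t)}\le 2$. The integral is therefore at most $2\int_{1/m}^1 D_p(t)\,dt$. That is not good enough on its own: \Cref{lem:single-agent-demand-integral} bounds this by $2$ regardless of the interval, which is fine, but only as a plain integral. Checking more carefully, $\frac{D_p(t)}{\OPT(t)}\cdot\frac1t\le 2D_p(t)$ is exactly what is needed, so the weighted integral is at most $2\int_{1/m}^1 D_p(t)\,dt\le 2$ with $C=2$. The log-uniform weight $dt/t$ exactly cancels the $1/\OPT(t)\approx t$ factor, which is why the demand-integral lemma applies directly.

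The subtle points are these.
\begin{enumerate}
\item Platform-favoring tie-breaking. The lemma already uses $D_p$ with largest-$k$ ties, and ties occur only at finitely many $t$, so they do not affect the integral.
\item Endpoints. The range $t\in[1/m,1]$ ensures $\OPT(t)=\lfloor1/t\rfloor\le m$ and $\lfloor 1/t\rfloor\ge 1/(2t)$ whenever $1/t\ge1$.
\item Randomization. Let $\E_p$ denote the expectation over the rule's seed. Applying Fubini to $\E_p\int D_p(t)/(t\,\OPT(t))\,dt\le 2$ gives existence of a single $t$ with $\E_p[D_p(t)]\le \frac{2}{\ln m}\OPT(t)$.
\end{enumerate}

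I expect the main obstacle to be the measure-theoretic point that $D_p(t)$ is measurable, which holds because it is piecewise constant. The existence of a good $t$ then follows because the infimum of the ratio over the interval is at most its average.
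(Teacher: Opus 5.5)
Your proof is correct and is essentially the paper's argument: both use $\lfloor 1/t\rfloor \ge 1/(2t)$ to turn the target guarantee into a lower bound on $\int \E[D_p(t)]\,dt$, and then cap that integral with \Cref{lem:single-agent-demand-integral} applied to each realized deterministic rule. The paper argues by contradiction (assume a $\rho$-guarantee on $[1/m,1/2]$ and integrate), whereas you average under a log-uniform prior on $[1/m,1]$; the wider range is valid since $\lfloor x\rfloor \ge x/2$ for all $x\ge 1$, and it yields the slightly cleaner constant $2/\ln m$ instead of $2/\ln(m/2)$.
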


\begin{proof}
Consider a randomized payment rule, which is a
distribution over deterministic payment rules $p_\omega$. For a realized rule
$p_\omega$ and uniform cost $t$, let $D_\omega(t)$ denote the selected
cardinality under platform-favoring tie-breaking. The expected principal value is $\E_\omega[D_\omega(t)]$.

If the randomized rule guarantees a $\rho$-fraction of the optimum for every
uniform cost $t \in [1/m,1/2]$, then for every such $t$,
\[
  \E_\omega[D_\omega(t)] \geq \rho\left\lfloor\frac{1}{t}\right\rfloor.
\]
Since $t \leq 1/2$,
\[
  \left\lfloor\frac{1}{t} \right\rfloor\geq \frac{1}{2t}.
\]
Therefore,
\[
  \E_\omega[D_\omega(t)]\geq \frac{\rho}{2t}
  \qquad \forall t \in[1/m,1/2].
\]
Integrating over the cost $t$ and applying \Cref{lem:single-agent-demand-integral} to each
realized deterministic rule,
\[
  1 \geq
  \E_\omega\left[\int_{1/m}^{1/2} D_\omega(t) dt \right]
  =
  \int_{1/m}^{1/2} \E_\omega[D_\omega(t)] dt
  \geq
  \frac{\rho}{2} \int_{1/m}^{1/2}\frac{dt}{t}.
\]
Thus,
\[
  1\geq \frac{\rho}{2}\log\left(\frac{m}{2}\right),
\]
so
\[
  \rho\leq \frac{2}{\log(m/2)}.
\]
We conclude that the approximation factor is at least $\Omega(\log m)$.
\end{proof}

An analogous argument gives the matching dependence under the large-market parameter of~\Cref{thm:single-agent-upper-large-market}.

\begin{corollary}
\label{cor:single-agent-lower-large-market}
Let $\lambda\in[2/m,1/2]$. For any randomized cost-oblivious payment
rule, there exists a uniform positive cost vector $c_e = t$ with $t \leq \lambda$ such that, for the additive function $v(S)=|S|$, the expected
selected value is at most an $O(1/\log(m\lambda))$ fraction of the optimum.
\end{corollary}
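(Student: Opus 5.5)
The plan is to repeat the demand-integral argument of \Cref{thm:single-agent-lower-basic}, but integrate only over costs $t \in [1/(2k'),\lambda]$ for a suitable $k'$ rather than over $[1/m,1/2]$. This keeps every cost at most $\lambda$ while still giving a logarithmic range of scales. We again normalize $B=1$ and take $v(S)=|S|$ with uniform costs $c_e=t$. Then $\OPT(t)=\min\{m,\lfloor 1/t\rfloor\}$, and the agent's chosen cardinality under a realized deterministic rule $p_\omega$ is $D_\omega(t)$, with platform-favoring tie-breaking toward larger cardinality.

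\textbf{Step 1 (choose the interval).} We restrict to $t\in[1/m,\lambda]$. This interval is nonempty and has ratio $m\lambda\ge 2$ because $\lambda\ge 2/m$. For every such $t$ we have $t\le 1/2$, so $\lfloor 1/t\rfloor\ge 1/(2t)$. Also $1/t\le m$, so $\OPT(t)\ge 1/(2t)$.

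\textbf{Step 2 (integrate the guarantee).} Suppose the randomized rule guarantees a $\rho$-fraction of the optimum for all such $t$. Then
\[
\E_\omega[D_\omega(t)]\ge \frac{\rho}{2t}.
\]
We integrate this over $[1/m,\lambda]$, exchange expectation and integral (Tonelli, since everything is nonnegative), and apply \Cref{lem:single-agent-demand-integral} to each $p_\omega$ with $a=1/m$, $b=\lambda$. This gives
\[
1\ \ge\ \E_\omega\!\left[\int_{1/m}^{\lambda} D_\omega(t)\,dt\right]\ \ge\ \frac{\rho}{2}\log(m\lambda),
\]
and hence $\rho\le 2/\log(m\lambda)$.

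\textbf{Step 3 (conclude, and the main care point).} Since $\rho\le 2/\log(m\lambda)$, some $t\in[1/m,\lambda]$ (so $0<t\le\lambda$) must have expected selected value at most an $O(1/\log(m\lambda))$ fraction of $\OPT(t)$. The only delicate case is $m\lambda$ close to $2$, where $\log(m\lambda)$ is a small constant. There the claim is a constant-factor statement and is trivially true, because the expected value is at most $\OPT$. I would therefore write the bound as $O(1/\log(m\lambda))$ and absorb small cases into the constant. No new idea beyond \Cref{thm:single-agent-lower-basic} is needed; the main thing to check is that the floor bound $\lfloor 1/t\rfloor\ge 1/(2t)$ uses only $t\le 1/2$, which holds since $\lambda\le 1/2$.
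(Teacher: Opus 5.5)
Your proposal is correct and matches the paper's proof: integrate $\E_\omega[D_\omega(t)]\ge \rho/(2t)$ over $t\in[1/m,\lambda]$, apply the demand-integral lemma to each realized rule, and conclude $\rho\le 2/\log(m\lambda)$. Two small remarks: the interval $[1/(2k'),\lambda]$ in your opening sentence is inconsistent with Step 1, which correctly uses $[1/m,\lambda]$. Also, the small-$m\lambda$ discussion in Step 3 is harmless but unnecessary, because the integral bound already holds for every $m\lambda\ge 2$.
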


\begin{proof}
We follow the proof of~\Cref{thm:single-agent-lower-basic}, but
we now integrate over $t \in [1/m,\lambda]$. If a $\rho$-fraction approximation held
for every such $t$, then
\[
  1
  \geq
  \int_{1/m}^{\lambda} \E_\omega[D_\omega(t)] dt
  \geq
  \frac{\rho}{2}\int_{1/m}^{\lambda} \frac{dt}{t}
  =
  \frac{\rho}{2}\log(m\lambda),
\]
so $\rho \leq 2 / \log(m \lambda)$.
\end{proof}

\subsection{Multiple agents}
\label{sec:multi-agent-combinatorial}

We finally consider the multi-agent version of the combinatorial-action model. Here there is a finite ground set $A$ of atomic actions, partitioned across $n$ agents: $A=A_1\sqcup A_2 \sqcup \cdots \sqcup A_n$, where $A_i$ is the set of actions controlled by agent $i$. Let $m=|A|$. Agent
$i$ chooses an arbitrary subset $S_i\subseteq A_i$, and the realized action set is $S=\bigcup_{i=1}^n S_i$. Each atomic action $e\in A$ has a private cost $c_e \ge 0$. As before, it is assumed that costs are additive, so
agent $i$'s cost from choosing $S_i$ is $c(S_i) = \sum_{e\in S_i}c_e$. The principal has a normalized, monotone, submodular value function
$v:2^A\to \R_{\geq 0}$. A payment rule assigns a nonnegative payment $p_i(S_1,\ldots,S_n)$ to each agent. It is budget feasible if
\[
        \sum_{i=1}^n p_i(S_1,\ldots,S_n)\le B
        \qquad\text{for every profile }(S_1,\ldots,S_n).
\]
The utility of agent $i$ is $p_i(S_1,\ldots,S_n)-c(S_i)$. The full-information benchmark is
\[
        \OPT
        =
        \max\left\{v(S): S \subseteq A,\ \sum_{e \in S} c_e \le B \right\}.
\]

As in the single-agent model, we use principal-favoring tie-breaking: whenever
an agent has multiple payoff-maximizing choices against the other agents' choices, it chooses one that maximizes the principal's value. This assumption is only used in the singleton-prize branch below, and only at exact indifference when an action costs exactly $B$.

Since the multi-agent setting generalizes the single-agent setting, we know that it is impossible to beat $\Omega(\log m)$ (\Cref{thm:single-agent-lower-basic}). However, whether it is possible to achieve a PoA of $O(\log m)$ in the multi-agent setting is not straightforward. The multi-agent setting is harder because the payment rule must not only incentivize high-quality actions within each agent's action set, but also manage competition among multiple agents, all without knowing the private costs. As the main result in this section, we show in \Cref{thm:multi-agent-upper} that it is still possible to achieve $O(\log m)$.

\subsubsection{A randomized payment rule}

The randomized rule combines ideas from both the single-agent combinatorial and multi-agent binary-action settings. It has two types of branches. The singleton prize branch is similar to that in \Cref{thm:no-large-market-randomized} to guarantee a lower bound of value. The second type of branch is a \emph{capped marginal-contribution branch}, which incentivizes high-quality contributions from the agents. Specifically, we first sample a target value on a geometric scale intended to approximate $\OPT$. Given such a target value $R$, the mechanism then uses the marginal contribution payment based on the \emph{capped payment potential} $F_R(S):= B \min \{\frac{v(S)}{R},1\}$. Here, the cap $R$ incentivizes agents to contribute value at least $R$, while the marginal contribution structure ensures sufficient competition among agents. We show that when $R$ is a constant-factor lower estimate of $\OPT$, this branch achieves constant PoA. Randomization over the geometric scale thus gives a PoA of $O(\log m)$. We provide a detailed proof below. 


\paragraph{The singleton-prize branch.} First, we let
\[
        M=\max_{e\in A}v(\{e\})
        \text{ and }
        e^*\in\argmax_{e\in A}v(\{e\}).
\]
Let $i(e^*)$ denote the owner of $e^*$. In the singleton-prize branch, the
principal pays
\[
        p_i^{\mathrm{sing}}(S_1,\ldots,S_n)
        =
        \begin{cases}
        B & \text{if } i=i(e^*) \text{ and }e^*\in S_i,\\
        0 & \text{otherwise.}
        \end{cases}
\]
This branch is clearly budget feasible. The following lemma follows directly.

\begin{lemma}[Singleton branch]
\label{lem:multi-singleton-branch}
If every individual action is affordable, $c_e\le B$ for all $e\in A$, then in every pure Nash equilibrium of the singleton-prize branch, the
realized set has value at least $M$.
\end{lemma}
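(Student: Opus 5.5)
The proof is a contradiction argument that mirrors \Cref{lem:no-large-market-singleton}. Fix a pure Nash equilibrium $(S_1,\ldots,S_n)$ of the singleton-prize branch, with realized set $S=\bigcup_i S_i$, and write $i^*\defeq i(e^*)$ for the owner of $e^*$. The plan is to show that $e^*\in S$. Once we have that, monotonicity immediately gives $v(S)\ge v(\{e^*\})=M$.

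Suppose instead that $e^*\notin S_{i^*}$. Under the singleton-prize rule, $i^*$ then receives payment $0$, and every other agent receives $0$ regardless of what they choose. In particular, every agent $i\neq i^*$ has a best response of $\emptyset$ (any nonempty choice with positive cost is strictly worse). We do not even need to use this; the only deviation we consider is the one by agent $i^*$ to $S_{i^*}'\defeq\{e^*\}$. Against the fixed choices of the others, this deviation earns $B-c_{e^*}\ge 0$, by the affordability assumption $c_{e^*}\le B$.

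The current utility of $i^*$ is $0-c(S_{i^*})\le 0$, since costs are nonnegative. There are two cases.

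\emph{Strict improvement.} If $B-c_{e^*}>-c(S_{i^*})$, the deviation is strictly profitable, which contradicts the equilibrium condition.

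\emph{Tie.} Otherwise we must have $c_{e^*}=B$ and $c(S_{i^*})=0$, so the two choices give equal utility. Here we invoke principal-favoring tie-breaking. The principal's value from the deviation is $v(S')$, where $S'=\{e^*\}\cup\bigcup_{i\ne i^*}S_i$, and monotonicity gives $v(S')\ge M$.

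The one delicate point sits in the tie case. If $v(S)\ge M$ already holds, we are done anyway. If instead $v(S)<M$, then $v(S')\ge M>v(S)$, so the deviation strictly increases the principal's value, and tie-breaking forces $i^*$ to take a choice achieving at least that value, which is again a contradiction. So the argument splits naturally on whether $v(S)\ge M$, exactly as in \Cref{lem:no-large-market-singleton}.

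One further check is that tie-breaking is interpreted over the agent's whole best-response set. That set may contain choices other than $\{e^*\}$, but each such choice must have utility $B-c_{e^*}$, must therefore include $e^*$ (otherwise the payment is $0$), and so also has value at least $M$. Hence every equilibrium satisfies $v(S)\ge M$.
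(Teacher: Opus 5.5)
Your argument is correct and matches the reasoning the paper leaves implicit (the paper only says the lemma ``follows directly''), and it is the same as in \Cref{lem:no-large-market-singleton}: the owner of $e^*$ can always deviate to $\{e^*\}$ for utility $B-c_{e^*}\ge 0 \ge -c(S_{i^*})$, and principal-favoring tie-breaking handles the case where these are equal. One correction to your final ``further check'': in the tie case $B-c_{e^*}=0$, so zero-cost choices that omit $e^*$ (including $S_{i^*}$ itself) are also best responses, and it is false that every best response contains $e^*$. That paragraph is not needed, because tie-breaking selects a value-maximizing best response, whose value is at least $v(S')\ge M$.
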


\paragraph{The capped marginal-potential branches.}
Next, for a scale parameter $R>0$, we define the \emph{capped payment potential}
\[
        F_R(S)=B\min\left\{\frac{v(S)}{R},1\right\}.
\]
The branch with scale $R$ pays each agent its final marginal contribution to
$F_R$; namely
\[
        p_i^R(S_1,\ldots,S_n)=F_R(S)-F_R(S_{-i}),
\]
where $S=\bigcup_{j=1}^n S_j$ and $S_{-i}=\bigcup_{i' \ne i} S_{i'}$. In what follows, we show that the capped marginal-potential compensation rule is budget feasible and that the induced game among agents always admits a pure Nash equilibrium. We first show that the capped potential $F_R$ is submodular.

\begin{lemma}[Capped potential is submodular]
\label{lem:capped-potential-submodular}
For every $R>0$, the function $F_R$ is normalized, monotone, and submodular.
\end{lemma}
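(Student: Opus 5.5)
The plan is to write $F_R = B\cdot g\circ v$ with $g(x) \defeq \min\{x/R,1\}$, where $g$ is nondecreasing, concave, and satisfies $g(0)=0$. Normalization then follows from $F_R(\emptyset)=B\,g(v(\emptyset))=B\,g(0)=0$. Monotonicity follows because $v$ is monotone and $g$ is nondecreasing, so $S\subseteq S'$ gives $v(S)\le v(S')$ and hence $F_R(S)\le F_R(S')$.

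For submodularity, I will verify the diminishing-returns condition of \Cref{ass:blanket} directly. Fix $S\subseteq S'\subseteq A$ and $e\notin S'$. Write
\[
x=v(S),\quad y=v(S'),\quad \delta=v(S\cup\{e\})-v(S),\quad \delta'=v(S'\cup\{e\})-v(S').
\]
Monotonicity and submodularity of $v$ give $x\le y$ and $0\le \delta'\le \delta$. The goal is
\[
g(x+\delta)-g(x)\;\ge\; g(y+\delta')-g(y).
\]
I will establish it in two steps:
\[
g(y+\delta')-g(y)\;\le\; g(x+\delta')-g(x)\;\le\; g(x+\delta)-g(x).
\]
The first inequality is the concavity of $g$: for a fixed increment $\delta'\ge0$, the increment $g(z+\delta')-g(z)$ is nonincreasing in $z$, and we apply this with $x\le y$. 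The second inequality holds because $g$ is nondecreasing and $\delta'\le\delta$.

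I do not expect any real obstacle here; the only care needed is the concavity-of-increments fact for the piecewise-linear $g$. One can also check it concretely: $g(z+d)-g(z)=\min\{z+d,R\}/R-\min\{z,R\}/R$, which equals the length of $[z,z+d]\cap[0,R]$ divided by $R$, and this overlap length is visibly nonincreasing in $z$ for $z\ge0$. Multiplying through by $B>0$ then gives submodularity of $F_R$.
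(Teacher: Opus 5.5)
Your proposal is correct and follows the same route as the paper: write $F_R$ as a nondecreasing concave function of $v$ and combine this with the monotonicity and diminishing returns of $v$. The only difference is that you spell out the two-step chain. First, concavity moves the base point from $v(S')$ down to $v(S)$. Second, monotonicity of $g$ enlarges the increment from $\delta'$ to $\delta$. The paper compresses both steps into a single ``so it follows.''
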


\begin{proof}
Normalization and monotonicity are immediate since $v$ is assumed to be normalized and monotone. To prove submodularity, let $S \subseteq S' \subseteq A$ and
$e \notin S'$. Since $v$ is monotone and submodular,
\[
        v(S) \le v(S') \text{ and }
        v(S\cup\{e\})-v(S) \ge v(S' \cup\{e\})-v(S') \ge 0.
\]
Now, let $\phi(x)=B\min\{x/R,1\}$. The function $\phi$ is nondecreasing and concave, so it follows that
\[
        \phi(v(S) + v(S\cup\{e\})-v(S)  ) - \phi(v(S))
        \ge
        \phi(v(S') + v(S' \cup\{e\})-v(S') )-\phi( v(S') ).
\]
This is exactly
\[
        F_R(S\cup\{e\})-F_R(S)
        \ge
        F_R(S' \cup\{e\})-F_R(S'),
\]
so $F_R$ is submodular.
\end{proof}

The capped potential then satisfies budget feasibility.

\begin{lemma}[Budget feasibility]
\label{lem:capped-branch-budget}
For every $R>0$, the capped marginal-potential rule is budget feasible: $\sum_{i=1}^n p^R_i(S_1,\ldots,S_n)\le B$ for any $(S_1, \dots, S_n)$.
\end{lemma}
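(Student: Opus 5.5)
The plan mirrors \Cref{lem:sum-marginals}, with $F_R$ in place of $v$ and agents' blocks in place of single agents. Fix a profile $(S_1,\ldots,S_n)$ and let $S=\bigcup_i S_i$. Because the blocks $A_1,\ldots,A_n$ are disjoint, we have $S_i\subseteq A_i$. Hence $S_{-i}=S\setminus S_i$, and each payment equals $p^R_i=F_R(S)-F_R(S\setminus S_i)$. Nonnegativity follows from monotonicity of $F_R$ (\Cref{lem:capped-potential-submodular}). It remains to show $\sum_i \bigl(F_R(S)-F_R(S\setminus S_i)\bigr)\le F_R(S)\le B$; the last inequality holds by the cap in the definition of $F_R$.

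For the sum, I will telescope along agents. Let $T_0=\emptyset$ and $T_k=S_1\cup\cdots\cup S_k$, so that $T_n=S$. Since $T_{k-1}\subseteq S\setminus S_k$, the sets $T_{k-1}$ and $S\setminus S_k$ are both disjoint from $S_k$. I then need the set-version of diminishing returns:
\[
F_R(T_{k-1}\cup S_k)-F_R(T_{k-1})\ \ge\ F_R\bigl((S\setminus S_k)\cup S_k\bigr)-F_R(S\setminus S_k)=p^R_k.
\]
This extends the elementwise submodularity of \Cref{lem:capped-potential-submodular} to the addition of a whole block. To get it, add the elements of $S_k$ one at a time to both sets, in the same order. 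At each step the smaller set is still contained in the larger one, so each elementwise increment for the small set is at least the corresponding increment for the large set. Summing these increments gives the displayed inequality.

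Summing over $k=1,\ldots,n$ telescopes the left-hand sides to $F_R(S)-F_R(\emptyset)=F_R(S)$, since $F_R$ is normalized. This yields $\sum_k p^R_k\le F_R(S)\le B$.

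The only mildly delicate step is the block-wise submodularity inequality. It is standard and follows from the elementwise version by the chaining argument above, so no real obstacle is expected.
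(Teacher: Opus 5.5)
Your proposal is correct and follows essentially the same route as the paper. Both arguments telescope over agents' blocks, bound each payment $F_R(S)-F_R(S\setminus S_k)$ by the prefix increment $F_R(T_{k-1}\cup S_k)-F_R(T_{k-1})$ using submodularity of $F_R$, and finish with $F_R(S)\le B$. You additionally spell out the block-wise diminishing-returns step and the identity $S_{-i}=S\setminus S_i$ (which follows from the disjointness of the $A_i$), both of which the paper uses implicitly.
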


\begin{proof}
Let $S=\bigcup_{i=1}^n S_i$. We order the nonempty blocks as $S_{i_1},\ldots,S_{i_k}$, and let
$T_\ell=S_{i_1}\cup\cdots\cup S_{i_{\ell-1}}$. By submodularity,
\[
        F_R(S)-F_R(S\setminus S_{i_\ell})
        \le
        F_R(T_\ell\cup S_{i_\ell})-F_R(T_\ell).
\]
Summing over $\ell$ telescopes:
\[
        \sum_{\ell=1}^k\bigl(F_R(S)-F_R(S_{-i_\ell})\bigr)
        \le
        F_R(S)-F_R(\emptyset)
        =
        F_R(S)
        \le B,
\]
so the total payment is at most $B$.
\end{proof}

\begin{lemma}[Pure equilibrium existence]
\label{lem:capped-branch-pne-existence}
For every $R>0$, the capped marginal-potential branch admits a pure Nash
equilibrium.
\end{lemma}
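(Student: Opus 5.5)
The plan is to show that the capped marginal-potential branch induces an \emph{exact potential game}, with potential
\[
        \Psi(S_1,\ldots,S_n) \defeq F_R(S) - \sum_{i=1}^n c(S_i), \qquad S=\bigcup_{i=1}^n S_i .
\]
This is the same mechanism that drove \Cref{thm:existence}, but it is cleaner here. The payment is the exact marginal contribution to a single set function $F_R$, so no multiplicative correction is needed.

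\textbf{Step 1 (exact potential identity).} Fix an agent $i$ and the other agents' choices, so $S_{-i}$ is fixed. Because the blocks $A_1,\ldots,A_n$ partition $A$, the set $S_{-i}$ does not depend on $S_i$. Agent $i$'s utility is therefore
\[
        u_i(S_i; S_{-i}) = F_R(S_i\cup S_{-i}) - F_R(S_{-i}) - c(S_i).
\]
The term $-F_R(S_{-i})$ and the costs $c(S_j)$ for $j\neq i$ are constant in $S_i$. Hence for any two choices $S_i, S_i'\subseteq A_i$,
\[
        u_i(S_i';S_{-i})-u_i(S_i;S_{-i}) = \Psi(S_i',S_{-i})-\Psi(S_i,S_{-i}).
\]
So every unilateral deviation changes $\Psi$ by exactly the deviating agent's utility change.

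\textbf{Step 2 (existence via maximization).} The strategy space $\prod_i 2^{A_i}$ is finite, so $\Psi$ attains a maximum. At any maximizer, no agent has a strictly profitable deviation, and so every maximizer is a pure Nash equilibrium.

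\textbf{Step 3 (robustness to principal-favoring tie-breaking).} If the equilibrium notion includes principal-favoring tie-breaking, I would instead maximize the lexicographic pair $(\Psi, v(S))$ over all profiles. Consider any deviation by agent $i$ at such a maximizer.
\begin{itemize}
    \item If the deviation is strictly profitable, it strictly increases $\Psi$, which is impossible.
    \item If the deviation is a tie, then $\Psi$ is unchanged. If it also strictly increased $v$, it would increase the second coordinate, again a contradiction.
\end{itemize}
So the lexicographic maximizer is stable under both strict improvements and principal-favoring ties.

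\textbf{Expected difficulty.} The only point needing care is Step 1, specifically that $S_{-i}$ is genuinely independent of agent $i$'s choice. This follows from the partition of $A$ into disjoint blocks. It also ensures that the total cost $c(S)$ equals $\sum_i c(S_i)$ with no double counting. Neither submodularity (\Cref{lem:capped-potential-submodular}) nor budget feasibility is needed for existence; they matter only for \Cref{lem:capped-branch-budget} and the later PoA analysis.
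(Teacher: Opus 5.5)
Your proposal is correct and matches the paper's proof. The paper uses the same exact potential $\Phi_R = F_R(S) - \sum_{e\in S} c_e$, shows that a unilateral deviation changes it by exactly the deviator's utility change, and concludes existence from finiteness. Your Step 3, maximizing $(\Psi, v(S))$ lexicographically to handle principal-favoring tie-breaking, is a valid refinement that the paper does not spell out.
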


\begin{proof}
We define
\[
        \Phi_R(S_1,\ldots,S_n) \defeq F_R(S)-\sum_{e\in S}c_e.
\]
If agent $i$ changes from $S_i$ to $S'_i\subseteq A_i$, holding all other agents
fixed, then the change in its utility is
\[
\begin{aligned}
\left[F_R(S_{-i}\cup S'_i)-F_R(S_{-i}) - c(S'_i)\right]&
-
\left[F_R(S)-F_R(S_{-i})-c(S_i)\right]\\
&\qquad =
F_R(S_{-i}\cup S'_i)-F_R(S)-c(S'_i)+c(S_i),
\end{aligned}
\]
which is exactly the change in $\Phi_R$. As a result, this is a finite potential game.
\end{proof}

We are now ready to justify the use of capped potentials: if one identifies a suitable scale $R$, then every pure Nash equilibrium of the capped branch has a high value unless the optimum is essentially captured by a large singleton.

\begin{lemma}[Fixed-scale guarantee]
\label{lem:multi-fixed-scale}
Assume that every individual action is affordable. Let $S$ be any pure Nash
equilibrium of the capped marginal-potential branch with scale $R$. If $M=\max_{e\in A}v(\{e\})$, then
\[
        v(S) \ge \min\{R-M,\OPT-R\}.
\]
\end{lemma}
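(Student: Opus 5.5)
The plan is a case split on whether $v(S)$ is already large, followed by single-action deviations combined with the submodular covering argument from the proof of~\Cref{thm:poa-pne}. Fix a pure Nash equilibrium $(S_1,\ldots,S_n)$ of the capped branch with scale $R$, write $S=\bigcup_i S_i$, and let $S^*$ be an optimal budget-feasible set, so that $v(S^*)=\OPT$ and $\sum_{e\in S^*}c_e\le B$.

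\emph{Case 1: $v(S)\ge R-M$.} Then $v(S)\ge \min\{R-M,\OPT-R\}$ and there is nothing to prove.

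\emph{Case 2: $v(S)<R-M$.} In particular $v(S)<R$, so $F_R(S)=Bv(S)/R$ and the potential is uncapped at $S$. Take any action $e\in S^*\setminus S$, owned by agent $i$, and consider the unilateral deviation from $S_i$ to $S_i\cup\{e\}$. Agent $i$'s payment is $F_R(\cdot)-F_R(S_{-i})$ and $S_{-i}$ is unchanged, so the change in its utility is
\[
F_R(S\cup\{e\})-F_R(S)-c_e .
\]
Submodularity and monotonicity give $\Delta_e\defeq v(S\cup\{e\})-v(S)\le v(\{e\})\le M$. Hence $v(S\cup\{e\})\le v(S)+M<R$, and the cap is not active after the deviation either. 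The utility change is therefore exactly $B\Delta_e/R-c_e$. The equilibrium condition makes this at most $0$, which gives
\[
\Delta_e\le \frac{R\,c_e}{B}\qquad\text{for every } e\in S^*\setminus S .
\]

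We then sum over $e\in S^*\setminus S$, exactly as in the proof of~\Cref{thm:poa-pne}. Monotonicity and submodularity give
\[
\OPT\le v(S\cup S^*)\le v(S)+\sum_{e\in S^*\setminus S}\Delta_e\le v(S)+\frac{R}{B}\sum_{e\in S^*}c_e\le v(S)+R .
\]
So $v(S)\ge \OPT-R\ge\min\{R-M,\OPT-R\}$, which completes this case.

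The step that needs care is the choice of deviation. If an agent instead adds all of $S^*\cap A_i$ at once, the cap may bind, and summing over agents then yields only a vacuous bound. Restricting to single-action deviations, and using the assumption $v(S)<R-M$ to keep both $S$ and $S\cup\{e\}$ strictly below the cap, is what makes $F_R$ behave linearly. The affordability assumption is not needed beyond the feasibility of $S^*$; I expect it to matter only when this lemma is later combined with the singleton branch (\Cref{lem:multi-singleton-branch}).
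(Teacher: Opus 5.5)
Your proof is correct and matches the paper's argument essentially step for step: the case split on $v(S)\ge R-M$, single-action deviations $S_i\to S_i\cup\{e\}$ for $e\in S^*\setminus S$ kept strictly below the cap via $v(S\cup\{e\})-v(S)\le v(\{e\})\le M$, and the submodular covering sum giving $\OPT\le v(S)+R$. Your side remark is also accurate: the proof of this lemma never uses the affordability hypothesis, which matters only when the lemma is combined with the singleton-prize branch.
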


\begin{proof}
If $v(S) \ge R-M$, the conclusion is immediate, so in the remainder of the analysis we can assume instead that $v(S) < R-M$. Let $S^*$ be an optimal budget-feasible set, so $v(S^*)=\OPT$ and
$\sum_{e\in S^*}c_e\le B$.

We fix any action $e \in S^*\setminus S$, which belongs to some agent $i$. This agent can unilaterally add $e$ to its chosen set. Since $S$ is a pure Nash
equilibrium, this deviation cannot be profitable, so
\begin{equation}
    \label{eq:diff-F}
    F_R(S\cup\{e\})-F_R(S)\le c_e.
\end{equation}
By submodularity and normalization,
\[
        v(S\cup\{e\})-v(S)\le v(\{e\})\le M.
\]
Since $v(S)<R-M$, we have $v(S\cup\{e\})<R$. Therefore, adding $e$ does not hit the cap in $F_R$, so
\[
        F_R(S\cup\{e\})-F_R(S)=\frac{B}{R}\bigl(v(S\cup\{e\})-v(S)\bigr).
\]
Combining with~\eqref{eq:diff-F},
\[
        v(S\cup\{e\}) - v(S) \le \frac{R}{B} c_e.
\]
Using monotonicity and submodularity,
\[
\begin{aligned}
        \OPT
        =
        v(S^*)
        &\le v(S\cup S^*)\\
        &\le v(S)+\sum_{e\in S^*\setminus S}\bigl(v(S\cup\{e\})-v(S)\bigr)\\
        &\le v(S)+\frac{R}{B}\sum_{e\in S^*}c_e\\
        &\le v(S)+R.
\end{aligned}
\]
Rearranging, we get $v(S) \ge \OPT-R$. Combining the two cases gives the claim.
\end{proof}

This lemma implies that identifying a good scale $R$ suffices to upper bound the price of anarchy, as we point out below.

\begin{corollary}[A good scale gives a constant PoA]
\label{cor:multi-good-scale}
If
\[
        M<\frac{\OPT}{4} \text{ and } \frac{\OPT}{2}<R\le \frac{3\OPT}{4},
\]
then every pure Nash equilibrium of the $R$-branch satisfies $v(S)\ge \OPT/4$. If instead $M \ge \OPT/4$, then the singleton-prize branch has value at least
$\OPT/4$ in every pure Nash equilibrium.
\end{corollary}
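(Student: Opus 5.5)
The corollary follows directly from \Cref{lem:multi-fixed-scale} and \Cref{lem:multi-singleton-branch}. We treat the two cases in turn.

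\emph{Case $M<\OPT/4$.} Fix any pure Nash equilibrium $S$ of the $R$-branch. By \Cref{lem:multi-fixed-scale},
\[
v(S)\ge \min\{R-M,\ \OPT-R\}.
\]
Since $R\le 3\OPT/4$, the second term satisfies $\OPT-R\ge \OPT/4$. For the first term, $R>\OPT/2$ and $M<\OPT/4$ give $R-M>\OPT/2-\OPT/4=\OPT/4$. Both terms are therefore at least $\OPT/4$, so $v(S)\ge\OPT/4$.

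The only care needed is that \Cref{lem:multi-fixed-scale} assumes every individual action is affordable. This is the standing hypothesis of the section (``whenever every action is individually feasible''), so the corollary should be read under it as well, and I would state that explicitly.

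\emph{Case $M\ge\OPT/4$.} By \Cref{lem:multi-singleton-branch}, again using $c_e\le B$ for all $e$, every pure Nash equilibrium of the singleton-prize branch has value at least $M\ge \OPT/4$.

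If I wanted to justify \Cref{lem:multi-singleton-branch} inline, the argument is short. Suppose $e^*\notin S_{i(e^*)}$. Then agent $i(e^*)$ can add $e^*$ and receive $B\ge c_{e^*}$, and every other payment it could get is zero. So including $e^*$ is either strictly profitable or a tie, and principal-favoring tie-breaking settles the tie, since value weakly increases by monotonicity and strictly increases whenever $v(S)<M$. Hence $e^*\in S$ and $v(S)\ge v(\{e^*\})=M$.

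The only mildly delicate point in the whole argument is the boundary handling: the strict inequalities $R>\OPT/2$ and $M<\OPT/4$ are exactly what push $R-M$ above $\OPT/4$, and the weak inequality $R\le 3\OPT/4$ gives $\OPT-R\ge\OPT/4$. There is no real obstacle beyond checking these.
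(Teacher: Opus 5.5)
Your proof is correct and is exactly the paper's argument: in the first case you apply the fixed-scale lemma and check $R-M>\OPT/4$ and $\OPT-R\ge \OPT/4$, and in the second you cite the singleton-branch lemma, both under the standing assumption $c_e\le B$. One small nit in your optional inline justification of the singleton lemma: in the exact-tie case $c_{e^*}=B$, principal-favoring tie-breaking only gives $v(S)\ge M$ (the owner may still omit $e^*$ when it adds no value), not necessarily $e^*\in S$, but $v(S)\ge M$ is all you need.
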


\begin{proof}
If $M<\OPT/4$ and $\OPT/2<R\le 3\OPT/4$, then
\[
        R-M>\frac{\OPT}{2}-\frac{\OPT}{4} = \frac{\OPT}{4} \text{ and } \OPT-R\ge \frac{\OPT}{4},
\]
so~\Cref{lem:multi-fixed-scale} gives $v(S)\ge \OPT/4$. The singleton case follows
from~\Cref{lem:multi-singleton-branch}.
\end{proof}

\subsubsection{Unknown scale and the logarithmic approximation}

The challenge, of course, is that the principal does not know $\OPT$ since costs are assumed to be private. We now show that one can obtain a logarithmic approximation by randomly guessing the value scale. This mirrors the algorithm behind~\Cref{thm:single-agent-upper-basic} concerning the single-agent (combinatorial) setting.

First, if every $c_e \le B$ for
all $e\in A$, then every singleton is feasible and, by monotonicity and
subadditivity of $v$,
\[
        \frac{v(A)}{m}\le \OPT\le v(A).
\]
More broadly, if the principal is given a large-market parameter $\lambda\in(0,1]$ such that $c_e \le \lambda B$ for all $e \in A$, we can define
\begin{equation}
    \label{eq:q}
    k = \min\left\{m,\left\lfloor \frac{1}{\lambda}\right\rfloor\right\}
        \text{ and }
        q = \left\lceil \frac{m}{k}\right\rceil.
\end{equation}
We can then partition $A$ arbitrarily into $q$ blocks, each of size at most
$k$. Every block has total cost at most $B$. Moreover, since $v$ is subadditive,
\[
        v(A)\le \sum_{\ell=1}^{q}v(A_\ell).
\]
This means that some feasible block has value at least $v(A)/q$, so
\begin{equation*}
    \frac{v(A)}{q} \le \OPT\le v(A).
\end{equation*}
In what follows, in the general case where a large-market parameter is not given, we can set $q = m$. We can then tune the upper bound as
\begin{equation}\label{eq:L}
    L = \left\lceil \log_{3/2}\left(\frac{4q}{3}\right)\right\rceil,
\end{equation}
so that the scales are given by
\begin{equation}
    \label{eq:scales}
    R_r=\frac{v(A)}{(3/2)^r},
        \quad
        r=0,1,\ldots,L.
\end{equation}
In this context, we consider a randomized payment rule that chooses uniformly from the following $L+2$ branches:
\[
        \text{singleton-prize branch},
        R_0\text{-branch},R_1\text{-branch},\ldots,R_L\text{-branch}.
\]
The scales constructed in~\eqref{eq:scales} have the following property.
\begin{lemma}[Scale coverage]
\label{lem:multi-scale-coverage}
There exists $r\in\{0,1,\ldots,L\}$ such that
\[
        \frac{\OPT}{2} \le R_r \le \frac{3\OPT}{4}.
\]
\end{lemma}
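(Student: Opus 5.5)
We use the two-sided bound $v(A)/q \le \OPT \le v(A)$ established just before the statement, with $q=m$ when no large-market parameter is given. The scales $R_r = v(A)/(3/2)^r$ form a geometric sequence with ratio $3/2$, and the target window $[\OPT/2, 3\OPT/4]$ also has endpoint ratio exactly $3/2$. A geometric grid with ratio $3/2$ therefore cannot jump over the window, provided it starts above the window and ends below it.

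\textbf{Step 1: the first scale lies above the window.} Since $\OPT \le v(A)$, we have
\[
R_0 = v(A) \ge \OPT > \tfrac{3}{4}\OPT.
\]

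\textbf{Step 2: the last scale lies at or below the upper endpoint.} By the choice of $L$ in \eqref{eq:L}, $(3/2)^L \ge 4q/3$. Combined with $v(A) \le q\,\OPT$, this gives
\[
R_L = \frac{v(A)}{(3/2)^L} \le \frac{3v(A)}{4q} \le \frac{3\OPT}{4}.
\]

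\textbf{Step 3: choose the index.} Since $R_0 > 3\OPT/4 \ge R_L$, the index
\[
r \defeq \min\{r' : R_{r'} \le 3\OPT/4\}
\]
exists, and $r \ge 1$. By minimality, $R_{r-1} > 3\OPT/4$. Hence
\[
R_r = \tfrac{2}{3} R_{r-1} > \tfrac{2}{3}\cdot\tfrac{3}{4}\OPT = \tfrac{\OPT}{2}.
\]
Together with $R_r \le 3\OPT/4$, this gives the claimed containment, in fact with a strict lower inequality. That strictness matches the hypothesis $\OPT/2 < R$ in \Cref{cor:multi-good-scale}.

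\textbf{Degenerate case.} If $\OPT = 0$, then $v(A) \le q\cdot 0 = 0$, so every $R_r = 0$ and the claim holds trivially, although the scales are then not positive. Otherwise $\OPT > 0$ and the argument above applies.

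The only delicate point is Step 2: the exponent in $L$ is calibrated precisely so that the smallest scale reaches below $3\OPT/4$ rather than merely below $\OPT$. The rest is the standard geometric-grid argument.
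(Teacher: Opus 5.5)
Your proof is correct and is essentially the paper's argument: both take the smallest index $r$ with $R_r \le 3\OPT/4$, use $(3/2)^L \ge 4q/3$ together with $v(A) \le q\,\OPT$ to show this index exists, and use minimality plus the ratio $3/2$ to get $R_r > \OPT/2$, with $\OPT = 0$ handled separately. Your remark that the lower bound is strict, which is what \Cref{cor:multi-good-scale} actually requires, also holds in the paper's proof, since it derives the same strict inequality.
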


\begin{proof}
Since $\OPT\le v(A)$, the initial scale $R_0=v(A)$ is at least $\OPT$. Since $L \ge \log_{3/2}(4q/3)$, we have
\[
        R_L=\frac{v(A)}{(3/2)^L}\le \frac{3v(A)}{4q}\le \frac{3\OPT}{4},
\]
where the last inequality uses $v(A)/q\le \OPT$. Let $r$ be the smallest index
such that $R_r\le 3\OPT/4$. If $r > 0$, we have $R_r\le 3\OPT/4$ and $R_{r-1} > 3 \OPT/4$. Thus,
\[
        R_r=\frac{2}{3}R_{r-1}>\frac{\OPT}{2}.
\]
The case $r=0$ can only arise if $\OPT=0$, in which case the theorem follows trivially.
\end{proof}

We are now ready to state our main guarantee.

\begin{theorem}[Logarithmic approximation]
\label{thm:multi-agent-upper}
Assume that $v$ is normalized, monotone, and submodular, and every individual
action is affordable. The randomized payment rule described above is cost oblivious and budget feasible. For every cost vector and every choice of pure Nash equilibrium in each realized branch,
\[
        \E[v(S)] \ge
        \frac{\OPT}{4(L+2)} \geq \frac{\OPT}{O(\log q+1)},
\]
where $L$ and $q$ are defined in~\eqref{eq:L} and~\eqref{eq:q}, respectively. In particular, the PoA is always at most $O(\log m)$.
\end{theorem}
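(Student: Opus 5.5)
The proof assembles the branch-level guarantees already established. \textbf{Cost obliviousness and budget feasibility.} Every branch is defined from $v$, $B$, and the partition $A=A_1\sqcup\cdots\sqcup A_n$ alone; the scales $R_r$ depend only on $v(A)$ and $q$, and $q$ depends only on the public parameter $\lambda$ (or we set $q=m$). So the mixture is cost oblivious. Budget feasibility holds branch by branch: the singleton-prize branch pays at most $B$ to one agent, and each $R_r$-branch is budget feasible by \Cref{lem:capped-branch-budget}. Pure equilibria exist in every $R_r$-branch by \Cref{lem:capped-branch-pne-existence}. In the singleton branch, an equilibrium exists because only agent $i(e^*)$ has nonconstant payment and everyone else simply plays the empty set. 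Hence the per-branch guarantees are well defined.

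\textbf{Case split.} If $\OPT=0$, there is nothing to prove, so assume $\OPT>0$. We split on $M=\max_e v(\{e\})$.
\begin{itemize}
\item If $M\ge \OPT/4$, the singleton-prize branch is drawn with probability $1/(L+2)$. By \Cref{lem:multi-singleton-branch} (this uses $c_e\le B$ and principal-favoring tie-breaking), every equilibrium there has value at least $M\ge\OPT/4$.
\item If $M<\OPT/4$, \Cref{lem:multi-scale-coverage} provides an index $r$ with $\OPT/2\le R_r\le 3\OPT/4$. \Cref{cor:multi-good-scale} is stated for $\OPT/2<R_r$, so I will recheck \Cref{lem:multi-fixed-scale} directly at $R_r\ge \OPT/2$. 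It gives $v(S)\ge\min\{R_r-M,\OPT-R_r\}$. Here $R_r-M>\OPT/2-\OPT/4=\OPT/4$ and $\OPT-R_r\ge\OPT/4$, so the bound $\OPT/4$ still holds at the boundary. The proof of \Cref{lem:multi-scale-coverage} in fact yields the strict inequality $R_r>\OPT/2$ when $r>0$, and $r=0$ is excluded because $\OPT>0$. This branch is drawn with probability $1/(L+2)$.
\end{itemize}

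\textbf{Conclusion.} All values are nonnegative, so every other branch contributes at least zero. In both cases $\E[v(S)]\ge \OPT/(4(L+2))$ for every selection of equilibria in every branch.

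\textbf{Asymptotics.} From \eqref{eq:L}, $L=\lceil\log_{3/2}(4q/3)\rceil=O(\log q+1)$, which gives the stated form. Since $q\le m$, the PoA is $O(\log m)$.

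\textbf{Main obstacle.} The only delicate point is confirming the preconditions of the branch lemmas. The sandwich $v(A)/q\le \OPT\le v(A)$ underlying \Cref{lem:multi-scale-coverage} needs every block of the partition to be feasible. That holds by affordability when $q=m$ (blocks are singletons) and by the $\lambda$ bound in general. It also needs subadditivity, which follows from submodularity.
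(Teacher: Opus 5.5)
Your proposal is correct and follows the paper's proof essentially verbatim. The steps match: budget feasibility branch by branch, then the case split on $M$ versus $\OPT/4$, using \Cref{lem:multi-singleton-branch} in one case and \Cref{lem:multi-scale-coverage} with \Cref{cor:multi-good-scale} in the other, with the relevant branch drawn with probability $1/(L+2)$. Your extra check of the boundary case $R_r=\OPT/2$ is sound (the corollary assumes a strict inequality); as you note, it is moot anyway, because $\OPT>0$ forces $r\ge 1$ and hence $R_r>\OPT/2$.
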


\begin{proof}
The singleton-prize branch is clearly budget feasible, while the capped branches are budget feasible by~\Cref{lem:capped-branch-budget}. The fact that the payment rule is cost oblivious is immediate from the definitions.

Now, let $M=\max_{e\in A}v(\{e\})$. If $M\ge \OPT/4$, then the singleton-prize branch---which is chosen with probability $1/(L+2)$---has equilibrium value at
least $\OPT/4$. Since all other branches have nonnegative values, we have
\[
        \E[v(S)]\ge \frac{1}{L+2}\cdot \frac{\OPT}{4}.
\]
If $M<\OPT/4$, then by~\Cref{lem:multi-scale-coverage} there exists a scale $R_r$ satisfying
\[
        \frac{\OPT}{2} \le R_r\le \frac{3\OPT}{4},
\]
so in that branch every pure Nash equilibrium has value at least $\OPT/4$ by~\Cref{cor:multi-good-scale}. Since this branch is chosen with probability $1/(L+2)$, the same lower bound on the expected value follows.
\end{proof}

As we saw earlier, this logarithmic dependence is unavoidable even in the single-agent model (\Cref{thm:single-agent-lower-basic}).

\section{Conclusions and future research}
\label{sec:conclusions}

We introduced compensation design, a framework for incentivizing high-quality contributions in decentralized systems. Unlike budget-feasible mechanism design, compensation design does not hinge on centralized allocation. Instead, in the spirit of utility design, participation is shaped through a (cost-oblivious) payment rule. Our results paint a comprehensive picture of what is possible in this model. Most notably, we established that a simple marginal contribution payment rule attains a sharp constant-factor price of anarchy for monotone submodular value functions under the large-market assumption, and this is so even with respect to coarse correlated equilibria. Moreover, there are obstacles to extending these positive results beyond monotone submodular objectives. Finally, logarithmic price of anarchy guarantees are possible when agents have combinatorial action sets.

Several directions remain open for future research. Our oracle lower bound for XOS valuations applies to value-query access; it would be interesting to understand what is possible in compensation design with stronger oracle access, such as \emph{demand oracles}. Moreover, our model treats the budget as a fixed, exogenous parameter. In some cases, the budget can be adjusted dynamically based on participation and the quality of the contributions generated~\citep{Deng26:Forging}. Extending compensation design to such settings is an important direction.

\section*{Acknowledgments}

For some of the proofs, the authors used Gemini 3.1 Pro and ChatGPT 5.5 Pro to generate candidate ideas or formalize proof sketches, which the authors then verified and edited.

\bibliography{ref}

\clearpage

\appendix

\section{Omitted proofs}
\label{appendix:proofs}

This section contains the proofs omitted from the main body.

\efficShap*
\begin{proof}
The identity $\sum_{i\in S}\phi_i(S)=v(S)$ is the standard efficiency property of the Shapley value. For the inequality, monotonicity and submodularity imply that every marginal contribution appearing in~\eqref{eq:Shapley} is at least the final marginal contribution $v(S)-v(S\setminus\{i\})$. Hence,
\[
    \phi_i(S)
        \geq
        \bigl(v(S)-v(S\setminus\{i\})\bigr)
        \sum_{T\subseteq S\setminus\{i\}}
        \frac{|T|!(|S|-|T|-1)!}{|S|!}
        =
        v(S)-v(S\setminus\{i\}),
\]
where the last equality uses that the Shapley coefficients sum to one. The two displayed properties imply budget feasibility and $1$-marginal-covering for~\eqref{eq:shapley-payment}.
\end{proof}

\lamsplit*

\begin{proof}
    Let \(f(x)=x/(1-x)\). We first note that the constraint \(\sum_i x_i\le 1\) is tight at any maximum. Indeed, if \(\sum_i x_i<1\), then we can
improve the objective \(\sum_i f(x_i)\) by increasing some coordinate (or adding a new coordinate) that is strictly below \(\lambda\).

Moreover, we claim that, at a maximum $\vec{x}$, at most one coordinate can lie
strictly between \(0\) and \(\lambda\). For the sake of contradiction, suppose that there are two coordinates with values $a$ and $b$ that satisfy \(0<a\le b<\lambda\). We replace the pair $(a, b)$ with $(a - \epsilon, b + \epsilon)$ for $\epsilon = \min\{a,\lambda-b\}$. The total mass remains unchanged and the new pair still belongs to \([0,\lambda]^2\). 

Now, we define
\[
        g(t)=f(a-t)+f(b+t) \quad t \in [0, \epsilon].
\]
Since
\[
        f'(x)=\frac{1}{(1-x)^2}
\]
is strictly increasing on \([0,1)\) and \(b+t > a-t\) for all
\(t\in (0,\epsilon]\), we have $g'(t)= -f'(a-t)+f'(b+t)>0$. Thus, $g(\epsilon) > g(0)$, which contradicts the fact that the original point $\vec{x}$ is a maximum. We conclude that at most one nonzero coordinate can be strictly smaller than $\lambda$. In other words, every maximizer is of the form
\[
        \underbrace{\lambda,\ldots,\lambda}_{q \text{ times}},\ 1 - q \lambda,\ 0,\ldots,0,
\]
where $q = \lfloor 1/\lambda \rfloor$. This completes the proof.
\end{proof}

\section{Other cost-oblivious and anonymous payment rules}
\label{appendix:Shapley}

This section examines other cost-oblivious and anonymous payment rules beyond the marginal contribution rule defined in~\Cref{secsec:marginalcontr}.

\subsection{Marginal covering suffices for bounded PoA}
\label{appendix:marginalcovering}

We first extend~\Cref{thm:poa-pne} under the marginal-covering condition introduced in~\Cref{def:marg-cov}.

\begin{theorem}[PoA bound for marginal-covering rules]
\label{thm:marginal-covering}
Consider any $\gamma$-marginal-covering budget-share rule with shares $y_i(S)$ and $\lambda = \max_{i \in N} c_i / B < \gamma$. For every pure Nash equilibrium $S$,\footnote{This does not presuppose the existence of a pure Nash equilibrium. If no pure Nash equilibrium exists, the theorem becomes vacuous.}
\[
        \OPT
        \le
        (1+\costcon_\gamma(\lambda)) v(S),
\]
where
\[
        \costcon_\gamma(\lambda)
        =
        \max\left\{
        \sum_i \frac{ x_i }{\gamma - x_i}:
        0 \le x_i \le \lambda,\ \sum_i x_i \le 1
        \right\}.
\]
In particular,
\[
        \PoAPNE \le
        1+\frac{1}{\gamma-
        \lambda}.
\]
\end{theorem}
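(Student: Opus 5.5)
The plan mirrors the proof of \Cref{thm:poa-pne}, replacing the exact marginal-contribution payment with the $\gamma$-marginal-covering lower bound. Fix a pure Nash equilibrium $S$ and an optimal budget-feasible set $S^*$. The degenerate cases are handled first. If $S=N$, then $v(S)\ge\OPT$ by monotonicity. If $v(S)=0$ and some $i$ has $v(\{i\})>0$ with $c_i\le B$, then $v(S\cup\{i\})>0$ and the covering condition gives
\[
  y_i(S\cup\{i\})\ \ge\ \gamma\,\frac{v(S\cup\{i\})-v(S)}{v(S\cup\{i\})}=\gamma .
\]
Thus $p_i(S\cup\{i\})\ge\gamma B>\lambda B\ge c_i$, so opting in is strictly profitable, a contradiction. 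Hence every $i$ with $c_i\le B$ (in particular every $i\in S^*$) has $v(\{i\})=0$, and submodularity gives $\OPT=0$. So we may assume $v(S)>0$.

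For each $i\in S^*\setminus S$, the equilibrium condition says opting in is not strictly profitable, so $B\,y_i(S\cup\{i\})\le c_i$. Write $d_i\defeq v(S\cup\{i\})-v(S)$ and $x_i\defeq c_i/B$. The covering property then yields
\[
  \gamma\,\frac{d_i}{v(S)+d_i}\ \le\ x_i .
\]
Since $x_i\le\lambda<\gamma$, this rearranges to
\[
  d_i\ \le\ \frac{x_i}{\gamma-x_i}\,v(S).
\]
By monotonicity and submodularity, exactly as in \Cref{thm:poa-pne},
\[
  \OPT\ \le\ v(S)+\sum_{i\in S^*\setminus S} d_i\ \le\ v(S)\Bigl(1+\sum_{i\in S^*\setminus S}\frac{x_i}{\gamma-x_i}\Bigr).
\]
The vector $(x_i)$ satisfies $0\le x_i\le\lambda$ and $\sum_i x_i\le 1$ by budget feasibility of $S^*$, so the sum is at most $\costcon_\gamma(\lambda)$. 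This gives the main inequality.

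For the ``in particular'' bound, note that $x\mapsto x/(\gamma-x)\le x/(\gamma-\lambda)$ on $[0,\lambda]$. Hence $\costcon_\gamma(\lambda)\le \sum_i x_i/(\gamma-\lambda)\le 1/(\gamma-\lambda)$.

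The only point needing care is the degenerate case $v(S)=0$. There the strict inequality $\lambda<\gamma$ is what makes the entry deviation strictly profitable, and this is also what keeps the denominators $\gamma-x_i$ positive. The rest is routine.
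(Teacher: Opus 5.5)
Your proposal is correct and follows the paper's proof essentially verbatim. The steps match: the non-entry condition $B\,y_i(S\cup\{i\})\le c_i$ combined with $\gamma$-marginal covering gives $v(S\cup\{i\})-v(S)\le \frac{x_i}{\gamma-x_i}\,v(S)$, and summing over $S^*\setminus S$ via submodularity proceeds exactly as in \Cref{thm:poa-pne}. Your explicit treatment of the degenerate case $v(S)=0$, where $\lambda<\gamma$ makes entry strictly profitable for any agent with positive singleton value, is handled correctly; the paper leaves this implicit in ``the rest of the argument is analogous.''
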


\begin{proof}
Let $S$ be a pure Nash equilibrium. If $i \notin S$, we have
\[
        B y_i(S\cup\{i\}) \le c_i.
\]
By the $\gamma$-marginal-covering condition (\Cref{def:marg-cov}),
\[
        \frac{c_i}{B} \geq y_i (S \cup \{i \})
        \ge
        \gamma\frac{v(S\cup\{i\})-v(S)}{v(S\cup\{ i \})}.
\]
Using the fact that $\lambda < \gamma$ and rearranging,
\[
        v(S\cup\{i\})-v(S) \le \frac{c_i/B }{\gamma-c_i/B } v(S).
\]
The rest of the argument is analogous to the proof of~\Cref{thm:poa-pne}.
\end{proof}

\subsection{Equal splitting can have unbounded PoA}

Finally, for completeness, we point out that the simple equal-split rule---whereby $p_i(S) = B/|S|$ for any $i \in S$---can have unbounded price of anarchy.

\begin{proposition}[Equal splitting has unbounded PoA]
\label{prop:equal-split}
For any $\lambda \in (0, 1)$, the equal-split payment rule can have unbounded PoA with respect to pure Nash equilibria even when the value function is additive and $\max_{i \in N} c_i / B \le \lambda$.
\end{proposition}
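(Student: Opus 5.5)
The plan is to build, for any fixed $\lambda\in(0,1)$, an additive instance in which a small group of zero-cost agents sits in the coalition and soaks up so much of the equal-split payment that no valuable agent can afford to join. Equal splitting pays $B/|S|$ to every member of $S$, regardless of value. It is also cost oblivious, so what matters is only how many agents are already present. We set $B=1$ and pick an integer $k\ge 1$ with $1/(k+1)<\lambda$, so that a share of $1/(k+1)$ lies strictly below $\lambda$.

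The agents are as follows.
\begin{itemize}
\item $k$ low-value agents $a_1,\dots,a_k$, each with cost $0$ and additive value $\eta>0$, where $\eta$ will be taken arbitrarily small.
\item One high-value agent $h$ with cost $c_h=\lambda$ and value $v_h=1$.
\end{itemize}
Then $\max_i c_i/B\le\lambda$. The set $\{a_1,\dots,a_k,h\}$ has total cost $\lambda\le 1$, so it is feasible and $\OPT\ge 1$.

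Next we check that $S=\{a_1,\dots,a_k\}$ is a pure Nash equilibrium.
\begin{itemize}
\item Each $a_j\in S$ receives $1/k\ge 0$ at cost $0$, so opting out is not profitable.
\item If $h$ opts in, it receives $1/(k+1)<\lambda=c_h$, so its utility is strictly negative and it strictly prefers to stay out.
\end{itemize}
No other agents exist, so $S$ is an equilibrium. This holds strictly for $h$, so tie-breaking plays no role.

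The ratio is therefore $\OPT/v(S)\ge 1/(k\eta)$, which diverges as $\eta\to 0$. That gives unbounded PoA for every fixed $\lambda$ while keeping the value function additive.

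Since the costs are constrained only above by $\lambda$, zero costs are allowed. If strictly positive costs were wanted, we would give each $a_j$ cost $\epsilon<1/k$ and nothing would change. No step is a serious obstacle. The only point needing care is choosing $k$ so that $1/(k+1)<\lambda$ holds strictly, which keeps $h$ strictly out of equilibrium.
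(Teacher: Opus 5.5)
Your proof is correct and is essentially the paper's construction: $k$ zero-cost, low-value agents occupy the coalition, so the high-value agent's equal share $1/(k+1)$ falls strictly below its cost $\lambda$. The paper fixes the low values at $1/k^2$ and sends the high agent's value $M\to\infty$, whereas you fix $v_h=1$ and send $\eta\to 0$, which is the same up to scaling; your condition $1/(k+1)<\lambda$ is slightly weaker than the paper's $k>1/\lambda$, and both suffice.
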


\begin{proof}
We set $B=1$ and choose any integer $k > 1/\lambda$. We construct an instance as follows. There are $k$ low-value agents $1, 2,\dots, k$ each with cost zero and (additive) value $1/k^2$. There is an additional high-value agent $h$ with cost $\lambda$ and (additive) value $M$, where $M \gg 1$.

We first claim that $S = \{1, 2, \dots, k \}$ is a pure Nash equilibrium. Indeed, each low-value agent receives a positive payment and does not incur any cost, so there is no incentive to opt out. If $h$ opts in, its payment under the equal-split rule will be $\frac{1}{k+1} < \lambda = c_h$, so opting out is a best response for that agent. We conclude that $S$ is a pure Nash equilibrium, and $v(S) = 1/k$. However, selecting the entire set of agents is budget feasible, so $\OPT \geq M$. This completes the proof.
\end{proof}

\section{Targeted implementation}
\label{appendix:targeted-implementation}

The main challenge in compensation design is that the agents' costs are private. Otherwise, there is a simple approach based on what we refer to as \emph{targeted implementation}, which in fact succeeds even under non-monotone (submodular) objectives. The idea here is that one can implement a target set by excluding agents outside of it. This section spells out this approach.

Specifically, let $T\subseteq N$ be a target set; one can think of $T$ as the output of an
approximation algorithm for non-monotone submodular maximization subject to a
knapsack constraint. Suppose further there are strict bonuses $b_i > 0$ such that
\begin{equation}
        \sum_{i\in T}(c_i+b_i)\le B;
        \label{eq:target-budget}
\end{equation}
we will further justify this assumption in~\Cref{sec:strictbonuses}. One can then define the payment rule
\begin{equation}
        p_i(S)
        =
        \begin{cases}
        c_i+b_i & \text{if } i\in S\cap T,\\
        0 & \text{otherwise.}
        \end{cases}
        \label{eq:targeted-rule}
\end{equation}
This rule is budget feasible by~\eqref{eq:target-budget}.

\begin{theorem}[Targeted implementation with outsider exclusion]
\label{thm:targeted-implementation}
Suppose that every non-target agent has strictly positive cost: $c_i > 0$ for all $i \notin T$. Under the payment rule~\eqref{eq:targeted-rule}, every pure Nash equilibrium is
exactly $T$. Moreover, every coarse correlated
equilibrium is supported on the single set $T$.
\end{theorem}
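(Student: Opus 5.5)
The plan is to first show that, for every agent, one of its two actions is strictly dominant; both claims then follow at once. Since utilities depend on the other agents only through the payment rule~\eqref{eq:targeted-rule}, and this rule depends on $S$ only through whether $i\in S\cap T$, each agent's utility is actually independent of the others' actions.

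For a target agent $i\in T$, opting in gives utility $p_i(S)-c_i = c_i+b_i-c_i = b_i>0$ regardless of $\vec a_{-i}$, whereas opting out gives $0$. So $\optin$ is strictly dominant. For a non-target agent $i\notin T$, opting in gives $0-c_i<0$ by the assumption $c_i>0$, whereas opting out gives $0$. So $\optout$ is strictly dominant.

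Given strict dominance, the pure equilibrium claim is immediate. In any pure Nash equilibrium each agent plays its strictly dominant action, so $S=T$. Conversely, $T$ is an equilibrium, since no agent can improve on a strictly dominant action.

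For the CCE claim, I would use the fixed deviations in \Cref{def:CCE}. For $i\in T$, compare $\mu$ with the deviation ``always opt in''. The difference in expected utility equals $b_i\Pr_{S\sim\mu}[i\notin S]$. The CCE condition (with $\epsilon=0$) forces this to be $\le 0$, so $\Pr_\mu[i\notin S]=0$. For $i\notin T$, compare with ``always opt out''. The difference is $c_i\Pr_\mu[i\in S]$, which must be $\le 0$, so $\Pr_\mu[i\in S]=0$. A union bound over the finitely many agents then gives $\Pr_\mu[S=T]=1$.

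The only step requiring any care is getting the signs right in the CCE comparison, together with noting that strictness relies on $b_i>0$ and $c_i>0$. Beyond that, there is no real obstacle.
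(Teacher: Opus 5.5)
Your proposal is correct and follows the paper's own argument: both proofs note that opting in is strictly dominant for each $i\in T$ (utility $b_i>0$ versus $0$) and opting out is strictly dominant for each $i\notin T$ (utility $-c_i<0$ versus $0$). Your CCE step, using the fixed deviations to get $b_i\Pr_{S\sim\mu}[i\notin S]\le 0$ and $c_i\Pr_{S\sim\mu}[i\in S]\le 0$ and then a union bound, just spells out what the paper leaves as ``follow readily.''
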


\begin{proof}
Every target agent $i\in T$ has a strict dominant action to opt in: opting in secures a utility $b_i>0$, while opting out gives zero utility. Moreover, every non-target
agent $i \notin T$ has a strict dominant action to opt out: opting in gives
utility $-c_i < 0$, while opting out gives zero utility. Thus, the unique dominant
action profile is $T$, and the claims for pure Nash equilibria and coarse correlated equilibria follow readily.
\end{proof}

In particular, if the target set $T$ has a certain approximation ratio relative to the optimum, the payment rule introduced above inherits that.

\begin{corollary}[Algorithmic PoA for non-monotone objectives]
\label{cor:nonmonotone-algorithmic-poa}
Suppose $v(T)\ge \alpha\OPT$ and~\eqref{eq:target-budget} holds. If all
non-target agents have positive costs, then the rule
in~\eqref{eq:targeted-rule} satisfies
\[
        \PoAPNE
        =
        \PoACCE
        \le \frac{1}{\alpha}.
\]
\end{corollary}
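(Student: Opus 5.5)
The plan is to combine \Cref{thm:targeted-implementation} with the definition of the price of anarchy in~\eqref{eq:PoA}. We will first check that the hypotheses of that theorem hold. The rule~\eqref{eq:targeted-rule} is built from the target set $T$ and bonuses $b_i>0$ satisfying~\eqref{eq:target-budget}, and every non-target agent is assumed to have strictly positive cost. Under these conditions \Cref{thm:targeted-implementation} states that the unique pure Nash equilibrium is $T$ and that every coarse correlated equilibrium is the point mass on $T$. We also note that the rule is budget feasible: for any realized $S$, the payments total $\sum_{i\in S\cap T}(c_i+b_i)\le\sum_{i\in T}(c_i+b_i)\le B$.

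With this in hand, the two equilibrium sets are identical as distributions over profiles. The infimum in the denominator of~\eqref{eq:PoA} is therefore $v(T)$ for both $\solcon=\mathsf{PNE}$ and $\solcon=\mathsf{CCE}$, which gives $\PoAPNE=\PoACCE=\OPT/v(T)$. The assumption $v(T)\ge\alpha\OPT$ then yields $\OPT/v(T)\le 1/\alpha$, and this finishes the argument.

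The only delicate step is the degenerate case. If $v(T)=0$, then $v(T)\ge \alpha\OPT$ with $\alpha>0$ forces $\OPT=0$, and we would adopt the convention that the ratio is then trivially bounded (equal to $1$, or vacuous). Beyond this there is no real obstacle: the whole content of the bound lies in the dominance argument of \Cref{thm:targeted-implementation}, and the corollary only transfers the approximation guarantee of $T$ to the equilibrium.
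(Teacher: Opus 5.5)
Your proposal is correct and matches the paper's reasoning: the corollary follows immediately from \Cref{thm:targeted-implementation}. That theorem makes both equilibrium sets collapse to the point mass on $T$, so $\PoAPNE=\PoACCE=\OPT/v(T)\le 1/\alpha$, and your treatment of the degenerate case $v(T)=0$ (which forces $\OPT=0$ when $\alpha>0$) is a harmless detail the paper leaves implicit.
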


\subsection{Large market slack for strict bonuses}
\label{sec:strictbonuses}

To account for strict bonuses when identifying the target set, an obvious approach is to work with a reduced budget. The following lemma bounds the loss from reducing the budget even without monotonicity.

\begin{lemma}[Reduced-budget comparison]
\label{lem:nonmonotone-reduced-budget}
Let $v$ be normalized, nonnegative, and submodular, but not necessarily
monotone. If $\OPT_B=\max\left\{v(S):\sum_{i\in S} c_i \le B\right\}$ and $\max_{i \in N} c_i\le \lambda B$, then for any $\delta\ge 0$ with
$\delta+\lambda<1$,
\[
        \OPT_{(1-\delta)B}
        \ge
        (1-\delta-\lambda)\OPT_B.
\]
\end{lemma}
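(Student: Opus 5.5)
We plan to take an optimal set $S^*$ for budget $B$, so that $v(S^*)=\OPT_B$ and $C\defeq\sum_{i\in S^*}c_i\le B$, and to exhibit a \emph{random} subset $R\subseteq S^*$ that always fits in the reduced budget $(1-\delta)B$ and satisfies $\E[v(R)]\ge(1-\delta-\lambda)v(S^*)$. Some realization $R$ then achieves at least the expectation, so $\OPT_{(1-\delta)B}\ge\E[v(R)]$. If $C\le(1-\delta)B$ the claim is trivial, since $S^*$ itself is feasible and $v(S^*)\ge(1-\delta-\lambda)v(S^*)$ by nonnegativity. So we assume $C>(1-\delta)B$.

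\textbf{Construction of $R$ (cyclic window).} List the elements of $S^*$ in an arbitrary order. Lay them out as consecutive arcs of lengths $c_i$ around a circle of circumference $C$. Draw a point $\theta$ uniformly on the circle, and let $R$ consist of those elements whose entire arc lies inside the arc $[\theta,\theta+(1-\delta)B]$. Because $(1-\delta)B<C$, this window does not wrap around onto itself, so the arcs it contains are disjoint and $c(R)\le(1-\delta)B$ deterministically.

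Next we bound the inclusion probabilities. Element $i$ lies in $R$ exactly when $\theta$ falls in an arc of length $(1-\delta)B-c_i\ge(1-\delta-\lambda)B\ge 0$ preceding the start of $i$'s arc. Hence
\[
\Pr[i\in R]=\frac{(1-\delta)B-c_i}{C}\ge\frac{(1-\delta-\lambda)B}{B}=1-\delta-\lambda\eqqcolon p,
\]
using $c_i\le\lambda B$ and $C\le B$. Elements of cost zero are handled by the same formula.

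\textbf{Non-monotone step.} Since $v$ is not monotone, these marginal probabilities do not immediately yield a value bound, and this is the main obstacle. We plan to use the standard sampling lemma of Feige--Mirrokni--Vondr\'ak / Buchbinder et al.: if $g$ is nonnegative submodular and $X$ is a random set containing each element with probability at most $q$ (with arbitrary correlations), then $\E[g(X)]\ge(1-q)g(\emptyset)$. We apply it to $g(X)\defeq v(S^*\setminus X)$ on the ground set $S^*$, which is nonnegative and submodular, with $X\defeq S^*\setminus R$. Each element lies in $X$ with probability at most $1-p$, so
\[
\E[v(R)]=\E[g(X)]\ge p\,g(\emptyset)=(1-\delta-\lambda)\,v(S^*).
\]
The only delicate point is that this lemma allows arbitrary dependence among the events $\{i\in R\}$, which is essential because the window construction makes them highly correlated. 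We would briefly recall its proof: order the elements, telescope $g(X)-g(\emptyset)$ over prefixes, use submodularity to compare each marginal with the marginal of the full prefix, and sum using $\Pr[i\in X]\le q$ together with the nonnegativity of $g$. With the lemma in place, the bound on $\E[v(R)]$ above completes the proof.
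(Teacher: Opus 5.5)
Your proof is correct, and it takes a genuinely different route from the paper.

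\textbf{The paper's route.} The paper argues deterministically. Since each $S^*\setminus\{i\}$ is also $B$-feasible, optimality gives $v(S^*)\ge v(S^*\setminus\{i\})$. By submodularity, the marginals $a_i$ of the elements of $S^*$ along a fixed order are then nonnegative, sum to $v(S^*)$, and satisfy $v(P)\ge\sum_{i\in P}a_i$ for every $P\subseteq S^*$. It then takes the longest prefix in nonincreasing density $a_i/c_i$ that fits in $(1-\delta)B$. That prefix has cost at least $(1-\delta-\lambda)B$, so it captures a $(1-\delta-\lambda)$ fraction of $\sum_i a_i$, as in fractional knapsack.

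\textbf{Your route.} You decouple the two requirements. The cyclic window delivers feasibility together with a uniform lower bound $1-\delta-\lambda$ on inclusion probabilities. The sampling lemma, applied to $X\mapsto v(S^*\setminus X)$, absorbs the non-monotonicity as a black box.

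\textbf{What each buys.} Your argument never uses optimality of $S^*$ and avoids the marginal/density bookkeeping, including the zero-cost cases. It therefore shows directly that every $B$-feasible set $S$ contains a $(1-\delta)B$-feasible subset of value at least $(1-\delta-\lambda)v(S)$, and it carries over to any rounding scheme with uniform marginals. The paper's argument is elementary, deterministic, and self-contained. Although it uses optimality to make the $a_i$ nonnegative, its density step in fact survives signed weights, so it too extends to arbitrary feasible sets.

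\textbf{One caveat about the lemma's proof.} Your probabilities $\Pr[i\in X]=\bigl(C-(1-\delta)B+c_i\bigr)/C$ differ across elements. You must therefore order the elements by nonincreasing $\Pr[i\in X]$, so that the Abel summation leaves nonnegative coefficients in front of the prefix values of $g\ge 0$; an arbitrary order, as your sketch suggests, does not suffice. Alternatively, thin each $i\in R$ independently down to retention probability exactly $1-\delta-\lambda$. This preserves feasibility, and the equal-probability version of the lemma then works with any order.
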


\begin{proof}
Let $S^*$ be an optimal set with respect to the budget $B$. If $\sum_{i\in S^*}c_i \le (1-\delta)B$, then $S^*$ itself is feasible for the smaller budget and the claim is immediate.

We can thus assume that $\sum_{i\in S^*}c_i>(1-\delta)B$. Because
$S^*\setminus\{i\}$ is feasible under $B$, optimality of $S^*$ implies
\[
        v(S^*)\ge v(S^*\setminus\{i\})
        \quad \text{for every }i\in S^*.
\]
This means that each final marginal contribution of an item in $S^*$ is nonnegative. By
submodularity, this implies that for any $S \subseteq S^*\setminus\{i\}$,
\[
        v(S \cup\{i\})-v(S)
        \ge
        v(S^*)-v(S^*\setminus\{i\})
        \ge 0.
\]

We now fix an arbitrary order of the elements in $S^*$, and let $a_i$ denote the marginal
contribution of $i$ in this order. Then, $a_i \ge0$ and
\[
        \sum_{i\in S^*} a_i = v(S^*).
\]
Moreover, for every $S \subseteq S^*$, submodularity gives
\begin{equation}
    \label{eq:sub-ais}
    v(S) \ge \sum_{i\in S} a_i.
\end{equation}
We now sort the elements of $S^*$ by nonincreasing density $a_i/c_i$, with zero-cost
positive-value items placed first. Let $P$ be the longest prefix with total cost
at most $(1-\delta)B$. Since each element has cost at most $\lambda B$ and the total cost of $S^*$ exceeds $(1-\delta)B$, the prefix $P$ has cost at least $(1-\delta)B-\lambda B=(1-\delta-\lambda)B$. The density ordering implies
\[
        \sum_{i\in P}a_i
        \ge
        \frac{\sum_{i\in P}c_i}{\sum_{i\in S^*}c_i}
        \sum_{i\in S^*}a_i
        \ge
        (1-\delta-\lambda)v(S^*),
\]
where we used $\sum_{i\in S^*}c_i\le B$. Combining with~\eqref{eq:sub-ais},
\[
        v(P)\ge \sum_{i\in P}a_i
        \ge
        (1-\delta-\lambda)v(S^*)
        =
        (1-\delta-\lambda)\OPT_B.
\]
Since $P$ is feasible for the budget $(1-\delta)B$, the claim follows.
\end{proof}

\begin{corollary}[Reduced-budget implementation]
\label{cor:nonmonotone-reduced-budget-implementation}
Let $T$ be an $\alpha$-approximate solution for the non-monotone submodular
knapsack problem under budget $(1-\delta)B$, so that $v(T)\ge \alpha\OPT_{(1-\delta)B}$. Suppose further that all non-target agents have strictly positive costs and that strict bonuses are such that $\sum_{i\in T}b_i\le \delta B$, so that they fit into the budget. Then the rule~\eqref{eq:targeted-rule} satisfies
\[
        \PoAPNE, \PoACCE
        \le
        \frac{1}{\alpha(1-\delta-\lambda)}.
\]
\end{corollary}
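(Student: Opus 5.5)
The plan is to reduce to \Cref{cor:nonmonotone-algorithmic-poa} and \Cref{lem:nonmonotone-reduced-budget}, so that essentially no new game-theoretic reasoning is needed.

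First, check that the hypothesis of \Cref{thm:targeted-implementation} and \Cref{cor:nonmonotone-algorithmic-poa}, namely the budget condition~\eqref{eq:target-budget}, holds. Since $T$ is a solution to the knapsack problem with budget $(1-\delta)B$, we have $\sum_{i\in T} c_i \le (1-\delta)B$. Together with $\sum_{i\in T} b_i\le \delta B$, this gives $\sum_{i\in T}(c_i+b_i)\le B$. All non-target agents have strictly positive costs by assumption, and the bonuses are strict, so \Cref{thm:targeted-implementation} applies. Every pure Nash equilibrium therefore equals $T$, and every CCE is supported on $T$. Hence both $\PoAPNE$ and $\PoACCE$ equal $\OPT_B / v(T)$ (or are trivially bounded when $\OPT_B=0$).

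It remains to lower bound $v(T)$ in terms of $\OPT_B$. By the approximation guarantee, $v(T)\ge \alpha\,\OPT_{(1-\delta)B}$. By \Cref{lem:nonmonotone-reduced-budget}, which requires $\delta+\lambda<1$ (implicit, since otherwise the claimed bound is vacuous or undefined), we have $\OPT_{(1-\delta)B}\ge (1-\delta-\lambda)\OPT_B$. Chaining these gives $v(T)\ge \alpha(1-\delta-\lambda)\OPT_B$, and dividing yields the stated bound.

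There is no real obstacle here. The only points needing care are the edge cases: if $\delta+\lambda\ge 1$ the bound is read as vacuous, and if $\OPT_B=0$ the ratio is trivially fine. One should also confirm that the benchmark $\OPT$ in the PoA definition is the budget-$B$ optimum $\OPT_B$, which is the case.
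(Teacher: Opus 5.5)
Your proposal is correct and is exactly the intended argument. The paper states this corollary without a separate proof, as the immediate combination of \Cref{thm:targeted-implementation} (equivalently \Cref{cor:nonmonotone-algorithmic-poa}, after checking $\sum_{i\in T}(c_i+b_i)\le (1-\delta)B+\delta B=B$) with \Cref{lem:nonmonotone-reduced-budget}, which gives $v(T)\ge \alpha\OPT_{(1-\delta)B}\ge \alpha(1-\delta-\lambda)\OPT_B$; your handling of the implicit requirement $\delta+\lambda<1$ and of the case $\OPT_B=0$ is appropriate.
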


The corollary can be paired with any available approximation algorithm for non-monotone submodular maximization under a knapsack constraint; the resulting approximation factor enters through the parameter $\alpha$.

\end{document}